\def\R{\mathbb{R}}
\def\P{\mathbb{P}}
\def\E{\mathbb{E}}
\def\N{\mathbb{N}}
\def\Z{\mathbb{Z}}
\def\T{\mathbb{T}}
\def\CC{\mathcal{C}}
\def\PP{\mathcal{P}}
\def\TT{\mathcal{T}}
\def\L{\mathcal{L}}
\def\A{\mathcal{A}}
\DeclareMathOperator{\var}{Var}
\DeclareMathOperator{\Bin}{Bin}
\newtheorem{theorem}{Theorem}[section]
\newtheorem{proposition}[theorem]{Proposition}
\newtheorem{lemma}[theorem]{Lemma}
\newtheorem{remark}{Remark}[section]
\newtheorem{corollary}[theorem]{Corollary}
\def\eps{\varepsilon}
\def\eqd{\,{\buildrel d \over =}\,}
\newcommand{\diam}{\text{diam}}
\newcommand\adj[1]{\sim_{#1}}
\newcommand\even{\text{even}}
\newcommand\odd{\text{odd}}
\newcommand\Hom{\operatorname{Hom}}
\newcommand\Lip{\operatorname{Lip}}
\newcommand\col{\operatorname{Col}}
\newcommand\MCut{\operatorname{MCut}}
\newcommand\OMCut{\operatorname{OMCut}}
\newcommand\LS{\operatorname{LS}}
\newcommand\PLS{\operatorname{PLS}}
\newcommand\comp{\operatorname{comp}}
\newcommand\subcut{\operatorname{subcut}}
\newcommand\VCut{\operatorname{VCut}}
\newcommand\GC{G^\boxtimes}
\newcommand\range{\operatorname{Range}}
\newcommand\unifin{\in_R}
\newcommand\Out{\text{Out}}
\newcommand\In{\text{In}}
\DeclareMathOperator{\vol}{Vol}
\DeclareMathOperator{\triv}{Triv}
\DeclareMathOperator{\shift}{Shift}
\newcommand\Ei{E_{\operatorname{in}}}
\newcommand\Eonee{E_{1,\operatorname{e}}}
\newcommand\Omegaw{\Omega_{\operatorname{w}}}
\title{High-Dimensional Lipschitz Functions are Typically Flat}
\author{Ron Peled\thanks{School of Mathematical Sciences, Tel Aviv University; e-mail: peledron@post.tau.ac.il; webpage: http://www.math.tau.ac.il/{\raise.17ex\hbox{$\scriptstyle\sim$}}peledron/\ . Research supported by NSF Grant OISE 0730136.}}
\begin{document}
\maketitle

\abstract{A \emph{homomorphism height function} on the $d$-dimensional torus $\Z_n^d$ is a function on the vertices of the torus taking integer values and constrained to have adjacent vertices
take adjacent integer values. A \emph{Lipschitz height function} is defined similarly but may also take equal values on adjacent vertices. For each of these models, we consider the uniform distribution over all such functions with predetermined values at some fixed vertices (boundary conditions).
Our main result is that in high dimensions and with zero boundary values, the random function obtained is typically very flat, having bounded variance at any fixed
vertex and taking at most $C(\log n)^{1/d}$ values with high probability. This result matches, up to constants, a lower
bound of Benjamini, Yadin and Yehudayoff. Our results extend to any dimension $d\ge 2$, if one replaces the torus $\Z_n^d$ by an enhanced version of it, the torus $\Z_n^d\times\Z_2^{d_0}$ for some fixed $d_0$. Consequently, we establish one side of a conjectured roughening transition in 2 dimensions. The full transition is established for a class of tori with non-equal side lengths, including, for example, the $n\times\lfloor\frac{1}{10}\log n\rfloor$ torus. In another case of interest, we find that when the dimension $d$ is taken to infinity while $n$ remains fixed, the random function takes at most $r$ values with high probability, where $r=5$ for the homomorphism model and $r=4$ for the Lipschitz model. Suitable generalizations are obtained when $n$ grows with $d$. Our results have consequences also for the related model of uniform 3-coloring and establish that for certain boundary conditions, a uniformly sampled proper 3-coloring of $\Z_n^d$ will be nearly constant on either the even or odd sub-lattice.

Our proofs are based on the construction of a combinatorial
transformation suitable to the homomorphism model and on a careful
analysis of the properties of a class of cutsets which we term
\emph{odd cutsets}. For the Lipschitz model, our results rely also
on a bijection of Yadin. This work generalizes results of Galvin and
Kahn, refutes a conjecture of Benjamini, Yadin and Yehudayoff and
answers a question of Benjamini, H\"aggstr\"om and Mossel.}

\renewcommand{\thefootnote}{\fnsymbol{footnote}}
\footnotetext{\emph{Key words and phrases.} Random Lipschitz functions, homomorphism height functions, proper 3-colorings, random graph homomorphism, antiferromagnetic Potts model, localization, rigidity, roughening transition, odd cutsets, Koteck\'y conjecture.}
\footnotetext{\emph{2010 Mathematics Subject Classification:} 82B20, 82B26, 82B41, 60C05, 60D05, 05A16.}
\renewcommand{\thefootnote}{\arabic{footnote}}

\newpage
\tableofcontents

\newpage
\section{Introduction}
In this paper we study the models of homomorphism and Lipschitz height functions. Given a graph $G$ and function $f:V[G]\to\Z$, where $V[G]$ is the vertex set of $G$ and $\Z$ is the set of integers, we call $f$ a \emph{homomorphism height function} if $|f(v)-f(w)|=1$ whenever $v$ and $w$ are adjacent in $G$ (and thus $f$ is a graph homomorphism of $G$ to $\Z$). We call it a \emph{Lipschitz height function} if $|f(v)-f(w)|\le 1$ whenever $v$ and $w$ are adjacent in $G$. Homomorphism height functions are a sub-class of Lipschitz height functions and $G$ admits them if and only if it is bipartite. We are interested in the typical properties of random height functions chosen uniformly at random from the set of homomorphism, or Lipschitz, functions satisfying specified boundary conditions. This model was introduced by Benjamini, H\"aggstr\"om and Mossel \cite{BHM00} (when $G$ is a tree, the model was investigated earlier, see \cite{BP94}) and further investigated in \cite{BS00}, \cite{K01}, \cite{LNR03}, \cite{G03}, \cite{BYY07} and \cite{E09}. To define the model precisely, we assume that $G$ is finite, connected and bipartite and take a subset $\emptyset\neq B\subseteq V[G]$ and function $\mu:B\to\Z$. We then restrict attention to the sets $\Hom(G,B,\mu)$ and $\Lip(G,B,\mu)$ of homomorphism and Lipschitz height functions $f$, respectively, for which $f(b)=\mu(b)$ for all $b\in B$.
The pair $(B,\mu)$ is called the boundary condition, or BC. The special case when $B$ is a singleton and $\mu$ equals zero on $B$ is of particular interest and we term it - a \emph{one-point BC}. Assuming $\Hom(G,B,\mu)\neq\emptyset$, we denote by $f\unifin\Hom(G,B,\mu)$ a function sampled uniformly at random from $\Hom(G,B,\mu)$. Such an $f$ is called the \emph{random height function} for the homomorphism model with boundary condition $(B,\mu)$. Assuming $\Lip(G,B,\mu)\neq\emptyset$, we similarly define $f\unifin\Lip(G,B,\mu)$, the random height function for the Lipschitz model. Our main object of study are the fluctuations of the random height function $f$ around its mean, as realized for example by $\var(f(v))$ for vertices $v\in V[G]$, by the number of values $f$ takes, or by a global structure $f$ may exhibit.

\begin{figure}[p!]
\centering
{\includegraphics[width=\textwidth,viewport=245 80 805 665,clip]{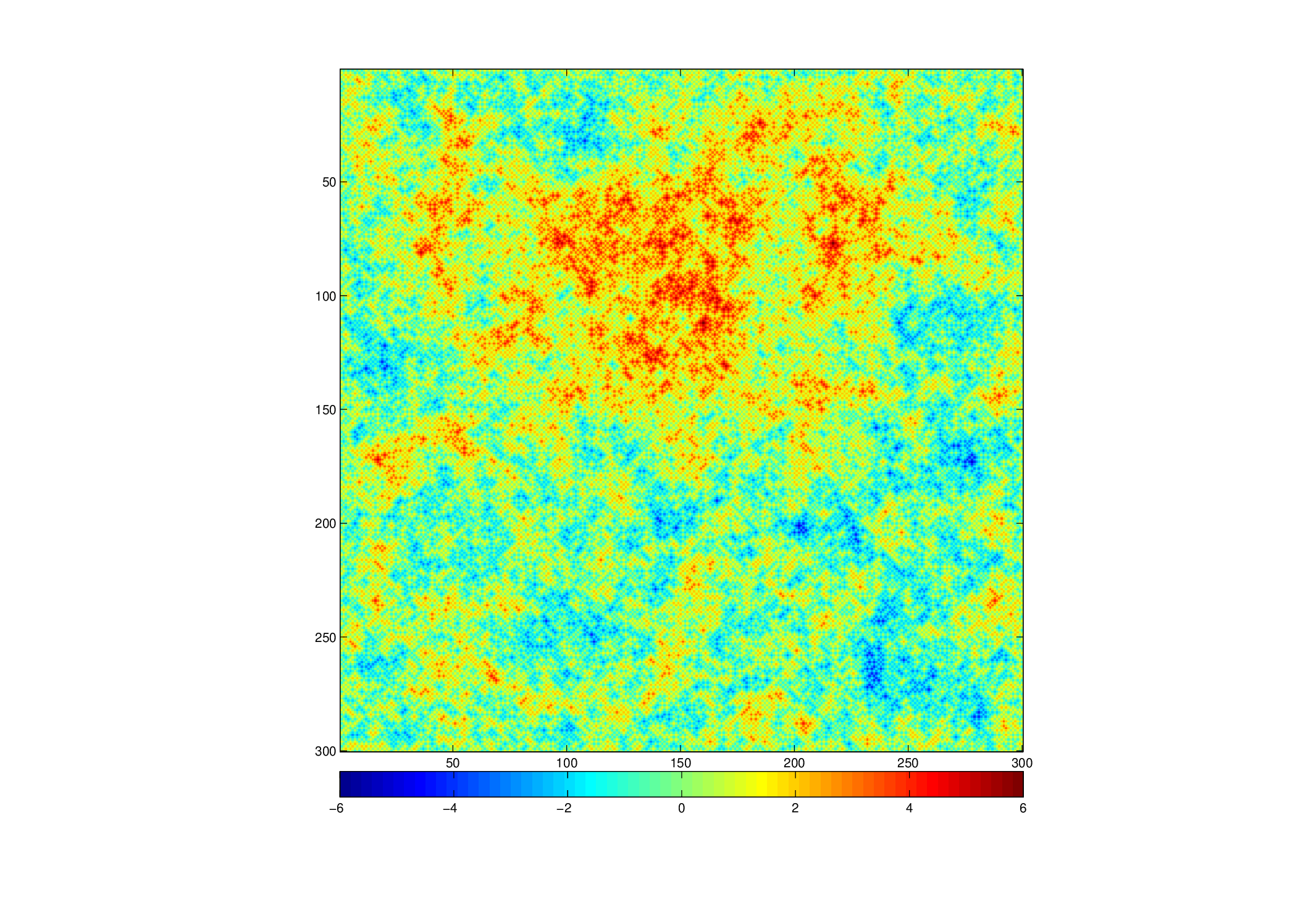}}
{\it
\caption{A sample of the random homomorphism height function on a $300\times300$ torus with boundary values set to 0 on every second vertex (see \eqref{zero_BC_def}). In this sample, values range from $-5$ to $6$. Sampled using coupling from the past \cite{PW96}.\label{2d_300_fig}}
}
\end{figure}

\begin{figure}[p!]
\centering
{\includegraphics[width=0.95\textwidth,viewport=185 11 860 713,clip]{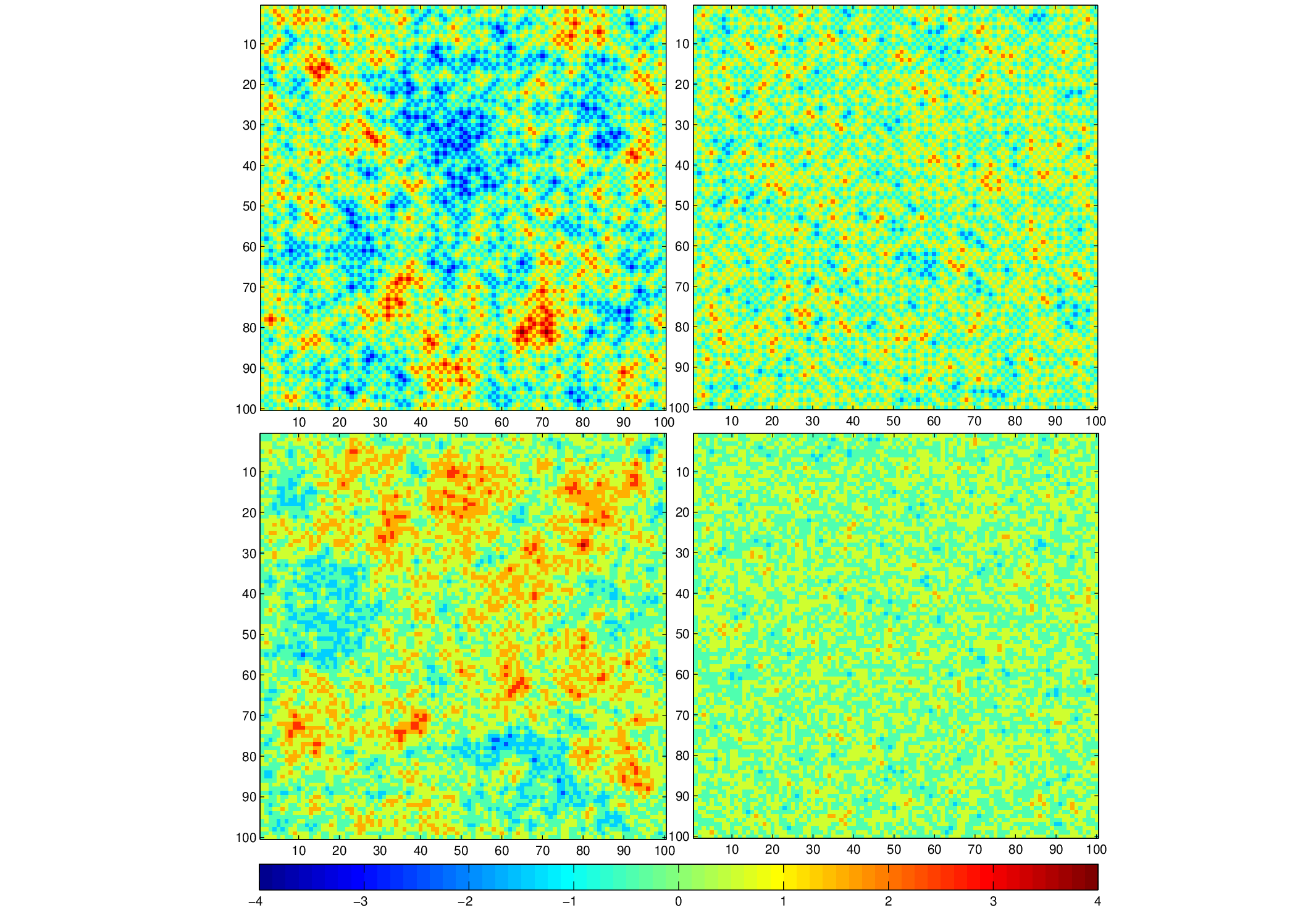}}
{\it
\caption{Top row: samples of the random homomorphism height function on a $100\times100$ torus (top left) and on the middle slice (at height 50) of a $100\times100\times100$ torus (top right), both with boundary values set to 0 on every second vertex (see \eqref{zero_BC_def}). Bottom row: samples of the random Lipschitz height function on a $100\times100$ torus (bottom left) and on the middle slice (at height 50) of a $100\times100\times100$ torus (bottom right), both constrained to have boundary values in the set $\{-\frac{1}{2},\frac{1}{2}\}$ (see zero-one BC in Section~\ref{Lipschitz_functions_section}) so that the values taken are in the set $\Z+\frac{1}{2}$ (the purpose of this shift is to obtain a more symmetric picture).
Sampled using coupling from the past \cite{PW96}.\label{2d_3d_100_fig}}
}
\end{figure}

We concentrate attention on the special case in which $G=\Z_n^d =
(\Z/n\Z)^d$, a cube with side length $n$ in the hyper-cubic lattice
$\Z^d$ with periodic boundary conditions (a torus). In this case,
the above height functions are strongly related to models of
statistical mechanics, e.g., simple random walk, the square ice
model and the uniform 3-coloring model (the anti-ferromagnetic
3-state Potts model at zero temperature). The height functions are
also examples of discrete surface models with nearest neighbor
interactions and it is of interest to compare them with other
surface models of this kind such as the discrete Gaussian free
field, lozenge and domino tilings and solid-on-solid models, see
,e.g., \cite{FS81}, \cite{G88}, \cite{NS97}, \cite{Ke01} and
\cite{S05} for details of these other models. By such comparison,
one may expect that the random height function (for both the
homomorphism and Lipschitz models) in dimension 2 will exhibit some
\emph{roughness}, meaning for example that when $G=\Z_n^2$ with the
one-point BC $(B,\mu)$, the variance of the height at a fixed vertex
$v$ will grow with the distance of $v$ from $B$. In contrast, when
the dimension $d$ is 3 or higher, one may expect that when
$G=\Z_n^d$ with the one-point BC, the random height function will be
\emph{localized}, having variance at each vertex bounded uniformly
in the side length of the torus. Numerical simulations appear to
support these expectations, see Figures~\ref{2d_300_fig} and
\ref{2d_3d_100_fig} for samples of the random height functions on
$\Z_{300}^2$, $\Z_{100}^2$ and $\Z_{100}^3$. However, none of these
predictions has been confirmed rigorously prior to this work. In
this paper we give a proof of the high-dimensional case of the above
predictions, when the dimension $d$ is above a certain threshold
$d_0$. Furthermore, we introduce the graph $G=\Z_n^d\times\Z_2^m$ (a
torus with $d$ sides of length $n$ and $m$ sides of length $2$)
which, for fixed $m$, is just an enhanced version of the torus
$\Z_n^d$, and prove that for a fixed large $m$ and any $d\ge 2$, the
random height function on $G$ is localized. More precisely, letting
$\range(f)$ be the number of values taken by $f$, we have the
following theorem.

\begin{theorem}\label{homogeneous_torus_thm}
There exist $d_0\in\N$, $C_d,c_d>0$ such that the following holds. If
\begin{itemize}
\item $G=\Z_n^d$ for even $n$ and $d\ge d_0$, or
\item $G=\Z_n^d\times\Z_2^{d_0}$ for even $n$ and $d\ge 2$,
\end{itemize}
then for all boundary conditions $(B,\mu)$ with non-positive boundary values, that is $\mu(b)\le 0$ for $b\in B$, if
\begin{itemize}
\item $\Hom(G,B,\mu)\neq \emptyset$ and $f$ is sampled uniformly from $\Hom(G,B,\mu)$, or
\item $\Lip(G,B,\mu)\neq\emptyset$ and $f$ is sampled uniformly from $\Lip(G,B,\mu)$,
\end{itemize}
then
\begin{equation*}
\P(f(x)\ge t)\le \exp\left(-c_d\, t^d\right)\qquad\forall t\ge 3\text{ and }x\in V[G].
\end{equation*}
If, in addition, we have zero boundary values, that is $\mu(b)=0$ for all $b\in B$, then
\begin{equation*}
\P(\range(f)\ge C_d\log^{1/d} n)\le n^{-4d}
\end{equation*}
and if $(B,\mu)$ is the one-point boundary condition then
\begin{equation}\label{range_two_sided_bound}
\P(c_d \log^{1/d} n\le \range(f)\le C_d\log^{1/d} n)\ge 1-n^{-3d}.
\end{equation}
\end{theorem}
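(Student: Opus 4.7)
The central claim is the tail bound $\P(f(x) \geq t) \leq \exp(-c_d t^d)$. Granting this, the upper bound on $\range(f)$ in the zero-BC case follows by applying the bound also to $-f$ (since the zero-BC measure is symmetric under negation) and union bounding over $x \in V[G]$, while the matching lower bound in \eqref{range_two_sided_bound} for the one-point BC is the cited result of Benjamini, Yadin and Yehudayoff. I would prove the tail bound by a Peierls-type contour analysis whose crucial new ingredient is the odd cutsets promised in the abstract, whose enumeration is sharp enough to beat the Peierls energy gain provided $d \geq d_0$ (or once the $\Z_2^{d_0}$ factor is attached).

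Fix $x$ and suppose $f(x) \geq t$. For each $k \in \{1, \ldots, t\}$ consider the super-level set $A_k := \{v : f(v) \geq k\}$. The non-positivity of $\mu$ ensures $B \cap A_k = \emptyset$, and $x \in A_k$, so the boundary $\Gamma_k$ of $A_k$ is a cutset separating $x$ from $B$, with $f \equiv k$ on the interior side of $\Gamma_k$. The Lipschitz property then forces $\Gamma_k$ to lie at graph distance at least $t - k$ from $A_t$, so $A_k$ contains a lattice ball of radius $t-k$ about $x$, and the lattice isoperimetric inequality yields $|\Gamma_k| \geq c_d (t-k)^{d-1}$. Summing,
\begin{equation*}
\sum_{k=1}^{t} |\Gamma_k| \geq c_d \, t^d,
\end{equation*}
which is the source of the $t^d$ exponent.

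For each cutset $\Gamma_k$ I would construct a shift map $T_k$ that subtracts $2$ from $f$ on the interior $A_k$. This produces a new valid homomorphism (resp.\ Lipschitz) configuration because the parity-preserving shift by $2$ keeps neighboring differences inside the allowed set, and because recording $\Gamma_k$ as an odd cutset --- retaining only its trace on one sublattice parity, as in the paper's later construction --- makes the shift unambiguous and invertible from knowledge of the new configuration together with the odd cutset data. The map is injective on configurations having $\Gamma_k$ as their level-$k$ cutset, and its image enjoys extra local freedom along the boundary; a careful accounting shows that the fraction of configurations exhibiting a nested profile $(\Gamma_1, \ldots, \Gamma_t)$ is at most $\exp(-c \sum_k |\Gamma_k|)$, each boundary vertex contributing at least a factor $2$ in local choices after the shift.

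It remains to sum over nested cutset profiles, and here the technical heart of the argument enters. The naive enumeration of lattice cutsets of size $L$ gives $\exp(O(L))$ possibilities, which fails to beat the Peierls gain in low dimension. The innovation would be to establish that odd cutsets admit the much sharper enumeration $\exp(O(L/d))$; combined with the Peierls factor this yields the net estimate $\exp(-c' \sum_k |\Gamma_k|) \leq \exp(-c_d t^d)$ once $d \geq d_0$. The $\Z_n^d \times \Z_2^{d_0}$ case is handled by recording in the encoding which of the $d_0$ binary coordinates each boundary plaquette crosses, so the extra directions effectively raise the dimension seen by the enumeration and permit the argument for any $d \geq 2$. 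The Lipschitz assertions then follow from the homomorphism ones via the bijection of Yadin mentioned in the abstract. The chief obstacle is the odd cutset enumeration bound itself: producing a combinatorial encoding of odd cutsets that uses only $O(L/d)$ bits per boundary vertex while remaining compatible with the shift map requires a careful and nontrivial definition of \emph{odd cutset} reconciling the competing demands of geometry (enough information to reconstruct the interface) and entropy (few enough possibilities to sum over).
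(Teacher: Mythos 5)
Your high-level strategy is in the right spirit: prove a pointwise tail bound $\P(f(x)\ge t)\le\exp(-c_d t^d)$, obtain the range bound by symmetry and a union bound, cite Benjamini--Yadin--Yehudayoff for the matching lower bound, prove the tail bound by a Peierls-type argument using odd cutsets, and transfer to the Lipschitz case via the Yadin bijection. The nested-level-set decomposition and the isoperimetric accounting that produces $\sum_k|\Gamma_k|\ge c_d t^d$ also match the paper. However, two of the load-bearing steps have genuine gaps.

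\textbf{The shift map.} You propose to subtract $2$ from $f$ on the interior of a level set and claim that ``each boundary vertex contribut[es] at least a factor $2$ in local choices after the shift.'' This is not true. If $v\in E_1(\Gamma)$ (so $f(v)=1$) and $g:=f-2$ on the interior, then $g(v)=-1$, and the interior neighbours of $v$ have $g$-value $-2$ whenever they had $f$-value $0$ (a possibility you cannot preclude). Any such neighbour blocks flipping $g(v)$ to $+1$. So the set of flippable boundary vertices after the subtract-$2$ map is configuration-dependent and may be tiny; the map is, in the worst case, one-to-one with no multiplicative gain, and $\tau(T)\le 1$ in the notation of the paper's Lemma~\ref{transformation_lemma}. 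The paper's shift is $S(f)(v)=f(v+e_1)-1$ for $v$ in the interior, i.e.\ a shift by one lattice unit in direction $e_1$ composed with a subtraction of $1$. This is essential: Lemma~\ref{flip_possible_lemma} shows that after this shift every neighbour of every vertex in $E_{1,1}(\Gamma)$ has value $0$, and Proposition~\ref{equal_edgenum_in_all_dir} shows $|E_{1,1}|=L/\Delta(G)$ deterministically. The factor $2^{L/2d}$ arises only from the lattice shift, not from a parity-preserving constant subtraction.

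\textbf{The enumeration of odd cutsets.} You posit as ``the technical heart'' that odd cutsets admit the enumeration $\exp(O(L/d))$. The paper does not prove this, and in fact explicitly lists whether $|\OMCut_L|$ is of order $\exp(CL/d)$ rather than $\exp(CL\log d/d)$ as an open problem (Section~\ref{open_questions_sec}, item on odd cutsets). The remark in Section~\ref{proof_sketch_readers_guide_ack_sec} that the number of odd cutsets is at least $\exp((1+\eps_d)L/(2d))$ means a naive Peierls comparison against $2^{L/2d}=\exp((\log 2)L/(2d))$ would fail regardless. What the paper actually proves and uses is weaker but sufficient: Theorem~\ref{count_cutsets_thm} bounds the number of cutsets with small boundary roughness $R$ by $\exp(CR\log^2 d/d)$, and Theorem~\ref{interior_approximation_theorem} bounds the number of \emph{interior approximations} (not full cutsets) by $2\exp(CL\log^2 d/d^{3/2})$. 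The argument then splits $\Omega_{x,L}$ into the event that the cutset has many exposed vertices (handled by $T_1$ and the roughness count) and its complement (handled by $T_2$, which also pins the values on exposed vertices, and the interior-approximation count). Without this two-case decomposition, the entropy of the cutsets genuinely beats the Peierls gain, and your argument cannot close.

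So while the skeleton is recognisably the paper's, the two quantitative inputs you rely on are respectively incorrect (the factor-$2$-per-boundary-vertex gain from a constant subtraction) and unproved/likely false in the form you need (a raw $\exp(O(L/d))$ count of odd cutsets). Fixing them requires the lattice-shift construction and the roughness/interior-approximation dichotomy, which are the paper's two main technical contributions.
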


Thus the situation resembles that of percolation and the lace
expansion \cite{S04}. One expects the results to hold starting from
a certain low dimension, but the proofs are available either for
large enough dimension, or in any dimension, but for an enhanced
version of the graph (in the case of percolation, the enhanced
version is the spread-out lattice). We remark that the lower bound
on the range in \eqref{range_two_sided_bound} follows from a theorem
of Benjamini, Yadin and Yehudayoff \cite{BYY07} (see
Theorem~\ref{BYY_thm} below) and our upper bound matches it up to
constants. We remark also that Yadin has found a bijection between
the Lipschitz model on a graph $G$ and the homomorphism model on
$G\times\Z_2$ (Theorem~\ref{Yadin_bijection_thm}). Our proof of
Theorem~\ref{homogeneous_torus_thm} uses this bijection by
establishing the theorem first for the homomorphism model and then
deducing the Lipschitz case via the bijection. Thus, although the
requirement that $n$ be even is essential only for the homomorphism
model (to make $G$ bipartite), we require it also for the Lipschitz
model for our proof to apply.

The careful reader may have noticed that while we expect the random height function to be rough in two dimensions, the theorem above states that it is localized for the \emph{enhanced} two-dimensional torus $\Z_n^2\times\Z_2^{d_0}$. Thus, if our expectation is true, the fluctuations of the random height function in two dimensions are quite sensitive to the local features of the graph; small enhancements may change the model from a rough to a localized regime. Analogous phenomenon (in terms of temperature) have been observed in Solid-On-Solid models taking integer values \cite{FS81} and are termed \emph{roughening transitions}. Our work establishes only one side of this transition since we do not show that the random height function in two dimensions is indeed rough, however, we are able to establish the full transition on a class of tori with non-equal side lengths including, for example, the $n\times\lfloor\frac{1}{10}\log n\rfloor$ torus. As a result we refute a conjecture of \cite{BYY07} and are able to answer a question of \cite{BHM00}. In \cite{BYY07} it was conjectured that on any graph $G$, the typical ranges of the random homomorphism and Lipschitz height functions are of the same order of magnitude. In \cite{BHM00} it was asked whether local changes to the graph (in the sense of rough isometries) can affect the typical range of the random height function by more than a constant factor. Thus the transition we establish provides, via the Yadin bijection, a refutation of the conjecture of \cite{BYY07} and an affirmative answer to \cite{BHM00}'s question. More details are provided in Section~\ref{roughening_transition_sec} below.

As mentioned above, the homomorphism model is strongly related to
the uniform 3-coloring model. Let us introduce this model in more
detail and explain how our results apply to it. For a graph $G$,
$\emptyset\neq B\subseteq V[G]$ and $\nu:B\to\{0,1,2\}$, we let
$\col(G,B,\nu)$ be the set of all proper 3-colorings (with colors
$0,1,2$) of $V[G]$ taking the values $\nu$ on $B$. We are interested
in the structure of a uniformly sampled coloring from
$\col(G,B,\nu)$. Suppose now that $f\in\Hom(G,B,\mu)$ for some BC
$(B,\mu)$. We note trivially that the map $f\mapsto (f\bmod 3)$
sends $\Hom(G,B,\mu)$ into $\col(G,B,\mu\bmod 3)$. Specializing to
the case $G=\Z_n^d$, it can be shown that this map becomes a
bijection for certain boundary conditions $(B,\mu)$. In these cases,
our results apply and give an understanding of the structure of the
uniform 3-coloring. We illustrate this here with one example (see
Section~\ref{3_coloring_ice_section} for more details). For
$G=\Z_n^d$, the zero BC are boundary conditions which, in some
coordinate system which turns $\Z_n^d$ into a box, put zero at every
second vertex on the boundary of this box. See \eqref{zero_BC_def}
for a precise definition and Figures~\ref{2d_300_fig} and
\ref{2d_3d_100_fig} for a sample from these boundary conditions. For
this BC, the set $B$ is contained in one of the two bipartite
classes of $G$, we call this class the even sub-lattice and denote
it by $V^\even$. We then find that in high dimensions, a uniformly
sampled 3-coloring with the zero BC will take the color zero on most
of the even sub-lattice, as follows.
\begin{theorem}\label{coloring_intro_thm}
There exist $d_0\in\N$, $c>0$ such that for all $d\ge d_0$, if $G=\Z_n^d$ for even $n$ and $g$ is a uniformly sampled coloring from $\col(G,B,\mu)$ with the zero BC $(B,\mu)$ then
\begin{equation*}
\frac{\E|\{v\in V^\even\ \big|\ g(v)\neq 0\}|}{|V^\even|}\le \exp\left(-\frac{cd}{\log^2 d}\right).
\end{equation*}
\end{theorem}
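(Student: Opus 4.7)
My plan is to recast the $3$-coloring statement as one about the homomorphism height function via the bijection noted just before the theorem, and then exploit the fact that $f$ is concentrated at small heights.

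\textbf{Step 1 (reduction via the bijection).} For the zero BC one has $B \subseteq V^\even$ and $\mu \equiv 0$, and the map $f \mapsto f \bmod 3$ is a bijection from $\Hom(G,B,\mu)$ onto $\col(G,B,\mu \bmod 3)$; pushing forward the uniform measure, $g = f \bmod 3$ where $f$ is uniform in $\Hom(G,B,\mu)$. Since $f$ is a graph homomorphism with $f|_B \equiv 0$ and $B \subseteq V^\even$, we have $f(v) \in 2\Z$ for every $v \in V^\even$, so for such $v$,
$$\{g(v) \ne 0\} \subseteq \{f(v) \ne 0\}.$$
Noting also that $g|_B \equiv 0$, by linearity of expectation it suffices to show $\P(f(v) \ne 0) \le \exp(-cd/\log^2 d)$ uniformly in $v \in V^\even \setminus B$.

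\textbf{Step 2 (symmetry).} The involution $f \mapsto -f$ preserves $\Hom(G,B,\mu)$ and its uniform measure for the zero BC, so $\P(f(v) \ge 2) = \P(f(v) \le -2)$; combined with $f(v) \in 2\Z$ this yields $\P(f(v) \ne 0) = 2\P(f(v) \ge 2)$. Theorem~\ref{homogeneous_torus_thm} bounds $\P(f(v) \ge t)$ only for $t \ge 3$, so the required estimate does not follow from its statement and must instead be extracted from the odd-cutset machinery in its proof, run at level $\tfrac{3}{2}$.

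\textbf{Step 3 (level-$\tfrac{3}{2}$ odd-cutset bound; main obstacle).} The event $\{f(v) \ge 2\}$ forces an odd cutset in the level set $\{u : f(u) \ge 1\}$ that surrounds $v$ and separates it from the component containing $B$. Any such cutset in $\Z_n^d$ has size at least $2d$ by the vertex-isoperimetric bound for a single vertex, and the probability that a prescribed odd cutset of size $m$ is induced by $f$ is at most $\exp(-cm/\log^2 d)$ by the shift-plus-entropy estimate underpinning the proof of Theorem~\ref{homogeneous_torus_thm} (a refinement of Galvin-Kahn). Summing over all cutsets around $v$ of each size $m \ge 2d$ then yields $\P(f(v) \ge 2) \le \exp(-cd/\log^2 d)$, which combined with Steps 1--2 closes the argument. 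The main obstacle is precisely this level-$\tfrac{3}{2}$ estimate: at small heights the bound is driven by minimum-size cutsets of size $\sim 2d$, so the $\log^2 d$ entropy cost of enumerating cutset shapes is not absorbed by any ``energetic'' $t^d$ factor, in contrast to the deep-height regime $t \ge 3$ handled in Theorem~\ref{homogeneous_torus_thm}.
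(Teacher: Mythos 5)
Steps 1 and 2 are fine and mirror what the paper does (reduce to $f\unifin\Hom(G,B,\mu)$, use $f(v)\in 2\Z$ on $V^\even$ and the $f\mapsto -f$ symmetry to reduce to bounding $\P(f(v)\ge 2)$). The problem is Step~3, which is both the wrong route into the paper's machinery and internally inconsistent.

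First, there is no need to ``run the odd-cutset machinery at level $\tfrac32$'' by hand. You correctly notice that Theorem~\ref{homogeneous_torus_thm} only treats $t\ge 3$, but the relevant result is the \emph{third} part of Theorem~\ref{height_at_point_thm}: when $B$ has full projection (which the zero BC does), it gives $\P(f(x)\ge t)\le \exp(-cV(t-1)/\log^2 d)$ already for $t\ge 2$. Setting $t=2$ and using $V(1)=\Delta(G)+1\ge d$ yields exactly $\P(f(v)\ge 2)\le\exp(-cd/\log^2 d)$; this is precisely how the paper proves Corollary~\ref{even_lattice_zero_cor}, from which Theorem~\ref{coloring_intro_thm} follows via the mod-3 bijection. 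So the case $t=2$ is not an obstacle to be worked around --- the paper already isolates the exact structural hypothesis (full projection) under which it goes through.

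Second, the quantitative claims in Step~3 do not hold, and the fact that they produce the right exponent is a coincidence of two compensating errors. (i) The lower bound ``any such cutset has size at least $2d$ by the vertex-isoperimetric bound for a single vertex'' is the wrong isoperimetry. If you only knew $|\LS|\ge \Delta(G)=2d$, the level-set theorem (Theorem~\ref{main_thm}) would give $\P(|\LS|=L)\le d\exp(-cL/(d\log^2 d))$, which at $L=2d$ is $\approx d$ and useless. The point of the full-projection hypothesis is that $f(v)\ge 2$ forces the level set $\LS(f,v,B)$ to separate the entire ball $B_1(v)$ from $B$, and by Theorem~\ref{I_r_E_bound_thm} such a cutset must wind around the torus and has size at least $s_1=\Delta(G)|E_1(v)|\asymp d^2$, not $2d$. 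This extra factor of $d$ is exactly what cancels the $1/d$ in the exponent of the level-set theorem and produces $\exp(-cd/\log^2 d)$. (ii) The stated per-cutset-size probability $\exp(-cm/\log^2 d)$ is not what the shift-plus-entropy argument yields: the Peierls step alone gives $\P(\LS=\Gamma)\le 2^{-m/\Delta(G)}=\exp(-cm/d)$ for a fixed $\Gamma$, and the entropy of odd cutsets only makes this worse under the sum; after combining everything, Theorem~\ref{main_thm} gives $\P(|\LS|=m)\le d\exp(-cm/(d\log^2 d))$, with the $1/d$ in the exponent. Writing $\exp(-cm/\log^2 d)$ overstates the decay by a factor of $d$ in the exponent, which exactly compensates for the underestimate of the cutset size in (i). So the Step~3 sketch reaches the right answer for the wrong reasons, and omits the actual key input, namely the full-projection isoperimetric bound of Theorem~\ref{I_r_E_bound_thm}.
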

As in Theorem~\ref{homogeneous_torus_thm}, the theorem also applies to the graph $G=\Z_n^d\times\Z_2^{d_0}$, with appropriate BC, for any $d\ge 2$, sufficiently large $d_0$ and even $n$, see Section~\ref{3_coloring_ice_section} for more details.

One of the main existing results about the homomorphism model is the result of Galvin \cite{G03}, improving an earlier result of Kahn \cite{K01} who proved a conjecture of \cite{BHM00}. Galvin studied the model when $G=\Z_2^d$, the hyper-cube graph, for large dimensions $d$. He proved that with high probability, the random homomorphism height function, with the one-point BC, takes at most 5 values! He furthermore calculated the asymptotic (strictly positive) probabilities for taking exactly 3,4 and 5 values. We cite (the first part of) Galvin's result here.
\begin{theorem}(Galvin \cite{G03})\label{Galvin_thm}
If $G=\Z_2^d$ and $f$ is sampled uniformly from $\Hom(G,B,\mu)$, with the one-point boundary condition $(B,\mu)$, then
\begin{equation*}
\P(\range(f)>5)\le \exp(-\Omega(d))\qquad\text{as $d\to\infty$}.
\end{equation*}
\end{theorem}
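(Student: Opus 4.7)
The plan is to reduce the range bound to a one-sided height bound and then to control large heights via a cutset transformation. Since the one-point boundary condition $f(v_0)=0$ is preserved under $f\mapsto -f$, one has $\P(\max f\ge 3)=\P(\min f\le -3)$. If neither event occurs then $f$ takes values only in $\{-2,-1,0,1,2\}$, and using that the image of $f$ is a contiguous interval of $\Z$ by connectedness of $G$, this forces $\range(f)\le 5$. A union bound therefore reduces the theorem to showing $\P(\max f\ge 3)\le \exp(-\Omega(d))$.

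For this step I would associate to each $f$ with $\max f\ge 3$ the high set $A(f):=\{v\in V[G]:f(v)\ge 3\}$ and its edge boundary $\Gamma(f):=\partial A(f)$, a cutset separating $v_0$ from $A(f)$. The homomorphism constraint forces every boundary edge $\{u,v\}$ with $v\in A$ to satisfy $f(u)=2$, $f(v)=3$, so the shift $Tf:=f-2\cdot\mathbf{1}_{A(f)}$ is itself a valid element of $\Hom(G,B,\mu)$ with $Tf(v_0)=0$ and strictly smaller maximum, and $f$ can be recovered from the pair $(Tf,\Gamma(f))$. The naive count that follows yields only $\P(\Gamma(f)=\Gamma)\le 1$, which is useless; the crucial refinement, in the spirit of Kahn's arguments for independent sets, is to exploit additional flexibility of the configuration near the cutset (local entropy along $\Gamma$, or iteration of the shift across different level sets) to upgrade this to a Peierls-type estimate $\P(\Gamma(f)=\Gamma)\le \lambda^{|\Gamma|}$ for some absolute $\lambda\in(0,1)$.

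The final step is to sum $\lambda^{|\Gamma|}$ over all candidate cutsets $\Gamma$ in $\Z_2^d$. Two hypercube-specific inputs drive the estimate: the edge-isoperimetric inequality, which forces $|\Gamma|\ge d$ for any non-empty proper cutset; and a Peierls-type enumeration bounding the number of cutsets of size $k$ anchored at a given edge by $D^k$ with $D$ depending only mildly on $d$. Provided $\lambda D<1$, the summation yields $\sum_\Gamma \lambda^{|\Gamma|}\le \exp(-\Omega(d))$, completing the proof. The main obstacle, and the heart of Galvin's (and Kahn's) contribution, is precisely this cutset enumeration: a naive $2^{|E|}$ bound on the number of edge subsets is hopelessly loose, and establishing the required count of cutsets in the hypercube (or, as the present paper develops in a much more general setting, of their odd refinements) requires a subtle combinatorial analysis. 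Without such a count the geometric series diverges and the argument collapses.
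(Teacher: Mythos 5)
Your reduction to the one-sided tail $\P(\max f\ge 3)$ is sound, but the two ingredients you defer to --- the source of per-edge gain $\lambda<1$, and the cutset enumeration --- are exactly where the proof lives, and as sketched the plan would not close. The height shift $Tf=f-2\cdot\mathbf{1}_{A(f)}$ produces a single uniquely determined image with no residual freedom, so the injection $(Tf,\Gamma)\mapsto f$ gives only $\P(\Gamma(f)=\Gamma)\le 1$, as you note. The paper's gain comes from a genuinely different transformation: a \emph{spatial} shift $S(f)(v)=f(v+e_1)-1$ on the interior component $\comp(\Gamma,x)$ of the outermost height-$1$ level set $\Gamma=\LS(f,x,B)$. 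After this shift every interior vertex $v$ with $\{v,v+e_1\}\in\Gamma$ (the set $E_{1,1}(\Gamma)$, of size exactly $|\Gamma|/\Delta(G)$ by the odd-cutset identity of Proposition~\ref{equal_edgenum_in_all_dir}) has all its neighbors sent to $0$, so it may be freely set to $\pm1$, yielding $2^{|\Gamma|/\Delta(G)}$ valid homomorphisms from a single $f$. That multiplicity --- obtained by shifting in a lattice direction rather than in height --- is what plays the role of $\lambda^{-|\Gamma|}$.

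The more serious gap is the cutset count. The bound you posit --- at most $D^k$ cutsets of size $k$ through a given edge with $\lambda D<1$ --- does not exist: the number of minimal edge cutsets of size $L$ around a vertex in $\Z^d$ grows like $\exp(\Theta(L\log d/d))$ \cite{BB07,LM98}, while the per-edge gain from the spatial shift is only $2^{1/\Delta(G)}=\exp(\Theta(1/d))$, so for large $d$ one has $\lambda D>1$ and the Peierls sum diverges. The paper remarks explicitly that even the number of \emph{odd} cutsets around the origin exceeds $2^{L/2d}$, so charging the multiplicity of the shift against a raw cutset count is doomed. The argument instead exploits the odd structure of level sets (interior boundary in one parity class) together with a two-regime analysis indexed by the number of exposed vertices $|E_{1,e}(\Gamma)|$: when exposed vertices form most of the interior boundary the roughness functional $R_\Gamma$ is small and Theorem~\ref{count_cutsets_thm} shows such cutsets are few; otherwise one augments the shift with a flip operation ($T_2$ in Section~\ref{shift_flip_transformation_sec}) and counts not cutsets but \emph{interior approximations} to cutsets (Theorem~\ref{interior_approximation_theorem}), which groups many cutsets into a single class and recovers the needed gain. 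These regimes are combined in Section~\ref{proof_of_transform_theorem_sec}. What your sketch is missing is therefore not a routine combinatorial lemma but the central structural insight, and without it the Peierls series cannot be made to converge.
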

Our techniques are flexible enough to provide a significant generalization of Galvin's theorem.
\begin{theorem}\label{homogeneous_tori_finite_range_thm}
For any integer $k\ge 2$, there exist $d_0(k)$ and $c_k>0$ such that the following holds. If $G=\Z_n^d$ with $d\ge d_0$ and even $n\le \exp\left(\frac{c_kd^{k-1}}{\log^2 d}\right)$ then:
\begin{itemize}
\item For $f$ sampled uniformly from $\Hom(G,B,\mu)$, with the one-point boundary condition $(B,\mu)$,
\begin{equation*}
\P(\range(f)>2k+1)\le \exp\left(-\frac{c_kd^k}{\log^2 d}\right).
\end{equation*}
\item For $f$ sampled uniformly from $\Lip(G,B,\mu)$, with the one-point boundary condition $(B,\mu)$,
\begin{equation*}
\P(\range(f)>2k)\le \exp\left(-\frac{c_kd^k}{\log^2 d}\right).
\end{equation*}
\end{itemize}
\end{theorem}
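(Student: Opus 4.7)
The plan is to handle the homomorphism case first and then deduce the Lipschitz bound via Yadin's bijection, which identifies $\Lip(\Z_n^d)$ with $\Hom(\Z_n^d\times\Z_2)$ and translates range statements faithfully (accounting for the off-by-one between the $2k+1$ and $2k$ thresholds). For the homomorphism model with the one-point BC $f(x_0)=0$, the event $\{\range(f)>2k+1\}$ forces $\max f\ge k+1$ or $\min f\le -(k+1)$, and by the $f\mapsto -f$ symmetry of the zero boundary condition it suffices to bound $\P(\max f\ge k+1)$.

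Fix $y\in V[G]$ and condition on the event $\{f(y)\ge k+1\}$. Since homomorphism height functions are $1$-Lipschitz with respect to graph distance, any $z$ with $\dist(z,y)\le k$ satisfies $f(z)\ge 1$, and therefore the upper level set
\begin{equation*}
L_1:=\{v\in V[G]:f(v)\ge 1\}
\end{equation*}
contains the graph-distance ball $B(y,k)$. Provided $n>2k$, the ball does not wrap in the torus, and elementary counting of $\ell^1$-balls in $\Z^d$ gives $|B(y,k)|\ge a_k d^k$ for $d$ large, the leading contribution coming from the $\binom{d}{k}2^k$ choices of $k$ distinct signed coordinate directions.

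The boundary of $L_1$ is an odd cutset separating $L_1$ from the base point $x_0$, and the combinatorial shift transformation underlying the proof of Theorem~\ref{homogeneous_torus_thm} is expected to furnish a Peierls-type estimate of the schematic form
\begin{equation*}
\P\bigl(y\in L_1,\ |L_1|\ge m\bigr)\le\exp\!\left(-\frac{c\,m}{\log^2 d}\right),
\end{equation*}
where the $1/\log^2 d$ factor records the entropy cost of encoding an admissible shift on the torus and the sum over candidate odd cutsets of volume at least $m$ is absorbed by the structural sparsity of odd cutsets. Inserting $m=|B(y,k)|\ge a_k d^k$ and then union-bounding over the $|V[G]|=n^d$ possible choices of $y$ yields
\begin{equation*}
\P(\max f\ge k+1)\le n^d\exp\!\left(-\frac{c'_k\,d^k}{\log^2 d}\right),
\end{equation*}
and the hypothesis $n\le\exp(c_k d^{k-1}/\log^2 d)$ is precisely calibrated so that $d\log n\le\tfrac12 c'_k d^k/\log^2 d$, absorbing the union-bound factor into the Peierls gain and producing the claimed bound.

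The main obstacle is establishing the displayed Peierls estimate with the correct $1/\log^2 d$ dependence, uniformly in $m$ across the relevant range of volumes, which requires adapting the Galvin--Kahn shift transformation from the hypercube $\Z_2^d$ (where cutsets are combinatorially rigid and the per-vertex shift gain is $\Omega(1)$) to $\Z_n^d$ for $n$ as large as $\exp(c_k d^{k-1}/\log^2 d)$. On the torus, odd cutsets can be geometrically intricate, so converting a per-cutset probability bound into a bound on the volume of $L_1$ requires a delicate structural analysis of odd cutsets — the central new technical input of the paper — rather than a naive enumeration over subsets. Once the homomorphism bound is in hand, the Lipschitz case follows via Yadin's bijection at essentially no additional cost.
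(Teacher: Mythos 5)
Your reduction from $\{\range(f)>2k+1\}$ to $\{\max f\ge k+1\}$ is correct as a set inclusion, but it is lossy in a way that cannot be repaired by the estimate you invoke, and the Peierls estimate you write down is not one the machinery of the paper provides. The paper's level-set theorem (Theorem~\ref{main_thm}) gives a bound in terms of $L=|\LS(f,y,B)|$, the number of edges in the \emph{outermost} height-$1$ level set around $y$, namely $\P(f\in\Omega_{y,L})\le d\exp(-cL/(d\log^2 d))$. Your displayed estimate $\P(y\in L_1,|L_1|\ge m)\le\exp(-cm/\log^2 d)$ replaces $L$ by the volume $m=|L_1|$ and drops the $1/d$. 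This is not a rewriting of the same statement: the outermost level set $\LS(f,y,B)$ may consist of only the $\Delta(G)=O(d)$ edges incident to $x_0$ (this happens whenever the component of $x_0$ in $\{f\le 0\}$ is just $\{x_0\}$), no matter how large $f(y)$ is, so there is no useful lower bound on $L$ coming from $f(y)\ge k+1$, and the volume $|L_1|$ plays no role in the Peierls argument.

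Even if you replace your proposed estimate by what the paper actually proves --- the nested level-set argument of Proposition~\ref{height_and_level_set_prop} combined with the isoperimetric bound $I_r$ of Theorem~\ref{I_r_bound_thm} --- the single-point analysis gives Theorem~\ref{height_at_point_thm}: $\P(f(y)\ge t)\le\exp(-cV(\lceil t/2\rceil-1)/(\min(t,d)\log^2 d))$. For $t=k+1$ this is $\exp(-c\,d^{\lfloor k/2\rfloor}/\log^2 d)$, not $\exp(-c\,d^k/\log^2 d)$. The reason: each of the level sets $\LS_1,\dots,\LS_{k+1}$ around $y$ is sandwiched between a ball around $y$ and a ball around $x_0$ with radii summing to $k$, so only the middle one can be forced to have size $\gtrsim s_{\lfloor k/2\rfloor}$, and $\sum_i I_{\min(i-1,t-i)}\sim dV(\lfloor k/2\rfloor)$. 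After the union bound over $n^d$ choices of $y$, this only covers $n\le\exp(c_k d^{\lfloor k/2\rfloor-1}/\log^2 d)$, which falls far short of the range $n\le\exp(c_k d^{k-1}/\log^2 d)$ in the theorem.

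What the paper does instead, in Proposition~\ref{range_bound_prop}, is union-bound over 4-tuples $(x,y,t,s)$ with $t-s=2k+1$, then \emph{condition on $f(y)=s$}, promoting $y$ to an additional boundary point. After shifting, one considers the level sets $\LS_i(f'',x,B')$ for $i\in[m-k,m]$: each such $\LS_i$ separates $B_{m-i}(x)$ from $B_{i-\mu''(y)-1}(y)$, and crucially $i-\mu''(y)-1\ge k\ge m-i$, so each $\LS_i$ is sandwiched between \emph{two} balls of radius $\ge m-i$. This gives $|\LS_i|\ge I_{m-i}$ for \emph{every} $j=m-i\in\{0,\dots,k\}$, whence $\sum_i|\LS_i|\ge\sum_{j=0}^k I_j\gtrsim\Delta(G)V(k)$, and after dividing by $d\log^2 d$ one obtains the full $\exp(-c_kV(k)/\log^2 d)$. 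It is this two-point sandwiching, not a volume-based Peierls estimate around a single maximum, that produces the exponent $d^k$. The Lipschitz half of your plan (reduction via the Yadin bijection and Corollaries~\ref{Lip_zero_one_cor}, \ref{Lip_one_point_cor}) is fine, but it rests on the homomorphism bound that your argument does not establish.
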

The case $k=2$, $G=\Z_2^d$ and $f\unifin\Hom(G,B,\mu)$ recovers the theorem of Galvin with an improved probability bound. Moreover, the theorem shows that the same phenomenon holds also when $G=\Z_n^d$ with $n\le \exp\left(\frac{c d}{\log^2 d}\right)$ and a similar phenomenon holds with $5$ replaced by $7, 9, 11$, etc., when the torus has larger side-length. Furthermore, we are able to treat random Lipschitz height functions and find that they exhibit even stronger concentration, taking at most $2k$ values with high probability in the situations when random homomorphism height functions take at most $2k+1$ values. Our results are in fact even more general, applying for more general tori and boundary conditions, see Theorems~\ref{finite_range_thm} and \ref{Lipschitz_finite_range_thm} below.

Our understanding of the typical structure of the random height
function in high dimensions extends beyond the understanding of its
height at fixed points and its range. Theorem~\ref{main_thm}, which
lies at the heart of all our other proofs, shows that for a random
homomorphism height function with, say, a one-point BC, the
probability that a level set of length $L$ surrounds a given vertex
is exponentially small in $L$ (see Figures~\ref{level_sets_fig} and
\ref{shift_transform_fig} for illustration of level sets). Thus,
with high probability, the height function will not have any level
sets longer than the logarithm of the size of the graph
(Corollary~\ref{level_set_length_cor}). We believe that the
structure of the typical homomorphism height function is that on
either the even or odd sub-lattice, it takes predominantly one
value. Furthermore, in places where this pattern is broken, an
occurrence which is exponentially rare in the boundary length of the
break-up, the function ``switches phase'' and predominantly takes a
different value on the other sub-lattice. This structure then
continues recursively inside each such break-up. We believe our
results can be used to make this picture precise, but do not pursue
this in this work. Instead, we content ourselves with proving
elements of the full picture such as the above-mentioned
level set theorem and such as showing that under certain boundary conditions, with high probability the function takes predominantly one value on one of the sub-lattices (Corollary~\ref{even_lattice_zero_cor}). We also believe that for certain (sequences of) boundary conditions, the homomorphism model has a thermodynamic limit and we indicate how our theorems may be used to prove this fact, see Section~\ref{thermodynamic_limit_sec} below.

We expect a similar typical structure for random Lipschitz height
functions. Indeed, this will follow from the Yadin bijection (see
Section~\ref{Lipschitz_functions_section}) by establishing the
typical structure of homomorphism height functions described above.
We expect that for, say, the one-point BC, the function takes
predominantly two consecutive values everywhere, where again, in
places where this pattern is broken, an occurrence which is
exponentially rare in the boundary length of the break-up, the
function switches to take predominantly two different consecutive
values and the structure repeats inside.

As indicated above, our theorems require the understanding of the random height functions (homomorphism or Lipschitz) on tori of varying side lengths such as $\Z_n^d$, $\Z_n^d\times\Z_2^m$ and $\Z_n\times\Z_{\lfloor\frac{1}{10}\log n\rfloor}\times\Z_2^m$. To be able to deal with all these cases under a unified framework, we shall consider in the sequel tori with general side lengths: $n_d\ge n_{d-1}\ge \ldots\ge n_1\ge 2$. However, as one may expect, the above picture, in which the random height function is localized, does not hold for all choices of side-lengths,
even when $d$ is large. For example, if $n_d=n$ and $n_i=2$ for all $1\le i\le d-1$, the torus is essentially one-dimensional and for large enough $n$, a random height function on it (with the one-point BC) will resemble a simple random walk bridge. We distinguish two cases: when $n_d\le \exp\left(c_d\prod_{i=1}^{d-1} n_i\right)$ and when $n_d\ge \exp\left(C_d\prod_{i=1}^{d-1} n_i\right)$ for some specific $C_d, c_d>0$ (see \eqref{non_linearity_cond} and \eqref{linear_cond} below), which we term a \emph{non-linear torus} and \emph{linear torus} respectively. We are then able to show that on non-linear tori in high dimensions (with, say, the one-point BC), the random height function is localized, having essentially the same features described above for $\Z_n^d$ in high dimensions, whereas on linear tori in all dimensions (with the one-point BC), the random height function is rough, resembling a simple random walk bridge. The results presented above are, perhaps, the most interesting special cases of these general results.

The main tool in our proofs is the analysis of a special class of
cutsets which we term \emph{odd cutsets}. These are minimal edge
cutsets on the torus which have all their interior vertex boundary
on the odd sub-lattice (see Section~\ref{preliminaries_sec} and
Definition \eqref{OMCut_def} for precise definitions). The cutsets
appear naturally in our model as the level sets of homomorphism
height functions (see Figures~\ref{level_sets_fig} and
\ref{shift_transform_fig} for examples). We find that such cutsets
have many special properties not shared by standard minimal cutsets
(see Sections~\ref{preliminaries_sec} and
\ref{odd_cutsets_structure_sec}) and believe that they may be of use
in the analysis of other models as well. Our main structure theorems
for odd cutsets, Theorems~\ref{count_cutsets_thm} and
\ref{interior_approximation_theorem}, provide information on the
regularity of their boundaries and on a certain way of approximating
them. Understanding such cutsets better and, in particular,
improving the bounds of these theorems (see also the open questions
in Section~\ref{open_questions_sec}) is the main ``bottleneck'' in
reducing the minimal dimension $d_0$ above which our theorems apply.

We end the introduction with some historical comments.
Theorem~\ref{coloring_intro_thm} on the existence of multiple Gibbs
states for the 3-coloring model on $\Z^d$ was conjectured by
Koteck\'y circa 1985, although the explicit conjecture seems not to
have appeared in print (see, e.g., \cite{K85} for context and
\cite{GKRS12} for additional details). The conjecture was made in
the stronger form that there are 6 distinct Gibbs states with
\emph{maximal entropy}, specified by a predominance of one color on
one of the sublattices. Theorem~\ref{coloring_intro_thm} was first
announced by Galvin, Kahn, Randall and Sorkin at a 2002 Newton
Institute programme and was later discussed by Kahn in talks and
communications with Koteck\'y and others. The theorem was first
mentioned in print in the 2007 work of Galvin and Randall
\cite{GR07}. The present author who was unaware of these
developments until late into his work published the present work on
the arXiv in May 2010. Galvin, Kahn, Randall and Sorkin published
their work on the arXiv in October 2012 \cite{GKRS12}, in which they
establish a version of Theorem~\ref{coloring_intro_thm}, showed in addition that the resulting Gibbs
states have maximal entropy and proved torpid mixing results for
related dynamics. Though similar in spirit, the approach of
\cite{GKRS12} is different from the present argument in that it
stays within the world of colorings and does not exploit the
connection with height functions. Finally, we remark that the ideas
of using cutsets with the ``odd'' property and approximating them
have been used in several previous works, e.g., in~\cite{BC+99, G08,
GK04, S87}.

\section{Results and discussion}
We begin this section with several definitions which are required for the statement of our main theorems. We then state our main theorems for homomorphism and Lipschitz height functions, together with a discussion of the above-mentioned roughening transition, the relation of the homomorphism model with proper 3-colorings and square ice, and the thermodynamic limit for the homomorphism model. We conclude this section with proof sketches for our main theorems, a reader's guide and acknowledgments.
\subsection{Definitions}\label{def_section}
For integer $n\ge 2$, let $\Z_n$ be the $n$-cycle graph. In our
convention, $\Z_n$ is a simple graph with vertices $\{0,1,\ldots,
n-1\}$ such that $i$ is adjacent to $i+1$ and $i-1$ modulo $n$. For
\emph{even} integers
\begin{equation}\label{torus_side_lengths}
n_d\ge n_{d-1}\ge \cdots \ge n_1\ge 2
\end{equation}
we let $G:=\Z_{n_1}\times\cdots\times \Z_{n_d}$ be the
$d$-dimensional torus with side lengths $n_1,\ldots, n_d$ (our
$\times$ refers to the Cartesian product of graphs, also denoted
$\square$ in certain literature). Henceforth, a torus will always
refer to a graph $G$ as above (and in particular, we will always
assume that the $n_i$ are even).
When needed, we shall assume a bipartition $(V^{\even}, V^{\odd})$ is chosen on $G$ and a natural coordinate system placed on it, using its product structure, so that
\begin{equation}
V[G]=\{(x_1,\ldots, x_d)\ |\ 0\le x_i\le n_i-1\text{ for $1\le i\le d$}\}\label{G_coordinates}.
\end{equation}

For an integer $r\ge 0$, we define the volume of a ball of radius $r$ by
\begin{equation*}
  \vol(r):=\left|\left\{w\in V[G]\ \big|\ d_G(v,w)\le r\right\}\right|
\end{equation*}
where $d_G$ is the graph distance in $G$ ($\vol(r)$ does not depend on the choice of $v\in V[G]$).

As explained in the introduction, we distinguish two types of tori. We call $G$ \emph{non-linear} if
\begin{equation}\label{non_linearity_cond}
n_d\le \exp\left(\frac{1}{d\log^3 d}\prod_{i=1}^{d-1} n_i\right)
\end{equation}
and we call it \emph{$\lambda$-linear} for some $\lambda>0$ if
\begin{equation}\label{linear_cond}
n_d\ge \exp\left(\frac{1}{\lambda}\prod_{i=1}^{d-1} n_i\right).
\end{equation}

Recalling that a pair $(B,\mu)$ with $\emptyset\neq B\subseteq V[G]$ and $\mu:B\to\Z$ is called a boundary condition, we say that $\mu$ is \emph{non-positive} if $\mu(b)\le 0$ for all $b\in B$ and $\mu$ is \emph{zero} if $\mu(b)=0$ for all $b\in B$. We call $(B,\mu)$ a \emph{legal (homomorphism) boundary condition} if $\Hom(G,B,\mu)\neq\emptyset$ and $\mu$ takes even values on $V^\even$ and odd values on $V^\odd$. We call it a \emph{legal Lipschitz boundary condition} if $\Lip(G,B,\mu)\neq\emptyset$.

We remark that our theorems below apply also to a slightly weaker definition of non-linear torus, when the $\frac{1}{\log^3 d}$ is replaced by $\frac{c}{\log^2 d}$ for a small enough $c>0$. Definition \eqref{non_linearity_cond} was chosen to simplify some of the notation. However, we note that in this definition and all our theorems below where a power of $\log d$ appears, it may well be the case that these $\log$ factors are an artifact of our proof and the theorems remain true without them.

\subsection{Homomorphism Height Functions}\label{hom_main_results_sec}
In this section we concentrate our attention on the homomorphism
height function model and its properties. The results will then be
extended to the Lipschitz height function model via the Yadin
bijection in Section~\ref{Lipschitz_functions_section}.

\subsubsection{Height and Range}\label{height_and_range_thm_sec}

We say that a set $B\subseteq V[G]$ has \emph{full projection} if,
in the coordinate system \eqref{G_coordinates}, there exists $1\le
i_0\le d$ such that every line of the form $\{(x_1,\ldots, x_d)\ |\
0\le x_{i_0}\le n_{i_0}-1\}$, for fixed $x_1,\ldots,
x_{i_0-1},x_{i_0+1},\ldots, x_d$, intersects $B$. Our next theorem
shows that on non-linear tori in high dimensions, the height of a
uniform homomorphism at a fixed vertex has very light tails.
\begin{theorem}\label{height_at_point_thm}
There exist $d_0\in\N$, $c>0$ such that for all $d\ge d_0$, non-linear tori $G$, legal boundary conditions $(B,\mu)$ with non-positive $\mu$ and $x\in V[G]$, if $f\unifin\Hom(G,B,\mu)$ then
\begin{equation*}
\P(f(x)\ge t)\le \exp\left(-\frac{c\vol(\lceil t/2\rceil-1)}{\min(t,d)\log^2 d}\right)\qquad\text{for all $t\ge 3$}.
\end{equation*}
Furthermore, if $t\ge 3$ satisfies $\vol(\lceil t/2\rceil-1)\le\frac{1}{3}n_d$ then
\begin{equation*}
\P(f(x)\ge t)\le \exp\left(-\frac{c\vol(\lceil t/2\rceil-1)}{\log^2 d}\right).
\end{equation*}
Finally, if $B$ has full projection then
\begin{equation*}
\P(f(x)\ge t)\le \exp\left(-\frac{c\vol(t-1)}{\log^2 d}\right)\qquad\text{for all $t\ge 2$}.
\end{equation*}
\end{theorem}

As an immediate corollary of the third part of the theorem, we obtain that if our boundary condition has full projection and zero $\mu$, then the random height function is zero on most of the even sub-lattice (see also Section~\ref{thermodynamic_limit_sec}).
\begin{corollary}\label{even_lattice_zero_cor}
Under the assumptions of Theorem~\ref{height_at_point_thm}, there exists $c>0$ such that if $B$ has full projection and $\mu$ is zero then
\begin{equation*}
\frac{\E|\{v\in V^\even\ |\ f(v)\neq 0\}|}{|V^\even|}\le \exp\left(-\frac{c d}{\log^2 d}\right).
\end{equation*}
\end{corollary}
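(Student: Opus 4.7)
The plan is to apply the third part of Theorem~\ref{height_at_point_thm} pointwise at each even vertex and then sum using linearity of expectation. First, I would write
\[
\E|\{v \in V^{\even}\ |\ f(v)\neq 0\}| = \sum_{v\in V^{\even}} \P(f(v)\neq 0),
\]
and observe that since $(B,\mu)$ is legal and $\mu$ is zero, any $f\in\Hom(G,B,\mu)$ takes even integer values on $V^{\even}$. Therefore $\{f(v)\neq 0\} = \{f(v)\ge 2\}\cup \{f(v)\le -2\}$ for $v\in V^{\even}$, and it suffices to bound each of these events.

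Next, since $\mu$ is zero it is in particular non-positive, and by hypothesis $B$ has full projection, so the third part of Theorem~\ref{height_at_point_thm} applies with $t=2$. This yields
\[
\P(f(v)\ge 2)\le \exp\left(-\frac{c V(1)}{\log^2 d}\right) = \exp\left(-\frac{c(2d+1)}{\log^2 d}\right)
\]
for every $v\in V[G]$. To handle the opposite tail $\P(f(v)\le -2)$, I would use the natural $\Z$-symmetry of the model: since $\mu\equiv 0$, the map $f\mapsto -f$ is a bijection of $\Hom(G,B,\mu)$ onto itself, so $-f$ is again uniform on $\Hom(G,B,\mu)$. Consequently $\P(f(v)\le -2)=\P(-f(v)\ge 2)=\P(f(v)\ge 2)$, and the same bound applies.

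Combining the two inequalities gives $\P(f(v)\neq 0)\le 2\exp(-c(2d+1)/\log^2 d)$, which, for $d\ge d_0$ with $d_0$ chosen large enough, is at most $\exp(-c'd/\log^2 d)$ for a smaller constant $c'>0$ (the factor $2$ is absorbed into the exponent). Since this bound is uniform in $v$, dividing the resulting sum by $|V^{\even}|$ produces exactly the claimed estimate. There is no real obstacle: all the combinatorial work has been done in Theorem~\ref{height_at_point_thm}, and the only points to verify are the parity observation, the $f\mapsto -f$ symmetry (which requires $\mu\equiv 0$, not merely non-positive), and the routine absorption of the factor of $2$ into the constant.
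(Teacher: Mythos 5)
Your proof is correct and follows essentially the same route as the paper: apply the third part of Theorem~\ref{height_at_point_thm} with $t=2$ at each even vertex, handle the lower tail via the $f\mapsto -f$ symmetry afforded by $\mu\equiv 0$, and sum. One small inaccuracy: $V(1)=\Delta(G)+1$, which equals $2d+1$ only when no $n_i=2$; but since $\Delta(G)\ge d$ always, the bound $V(1)\ge d$ suffices and the argument is unaffected.
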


A particularly important example of a full projection BC with zero $\mu$ is the \emph{zero BC}:
\begin{equation}\label{zero_BC_def}
B:=\{(x_1,\ldots, x_d)\in V^\even\ |\ \exists i\text{ s.t. $x_i\in\{0,n_i-1\}$}\},\quad \mu(b):=0\text{ for all }b\in B.
\end{equation}
Uniformly sampled homomorphisms with this boundary condition on $\Z_{300}^2$, $\Z_{100}^2$ and on $\Z_{100}^3$ are depicted in Figures~\ref{2d_300_fig} and \ref{2d_3d_100_fig} (only a slice of the torus is depicted in the 3-dimensional case) and suggest that the corollary holds in dimension 3 and fails in dimension 2.

We proceed to analyze the range of the uniform homomorphism on high-dimensional non-linear tori.

\begin{theorem}\label{range_thm}
There exist $d_0\in\N$, $C,c>0$ such that for all $d\ge d_0$, non-linear tori $G$ and legal boundary conditions $(B,\mu)$ with zero $\mu$, if we set
\begin{equation*}
k:=\min\left\{ m\in\N\ \big|\ \vol(m)\ge C\log^2
d\cdot\log|V[G]|\right\}
\end{equation*}
and let $f\unifin\Hom(G,B,\mu)$, then
\begin{equation*}
\P(\range(f)>2k+1)\le \exp\left(-\frac{c\vol(k)}{\log^2 d}\right)\le \frac{1}{|V[G]|^4}.
\end{equation*}
\end{theorem}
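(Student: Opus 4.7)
My plan is to reduce the range event to a one-sided tail bound on $f(x)$ and close the argument by combining Theorem~\ref{height_at_point_thm} with a union bound over $x\in V[G]$.

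Since $\mu\equiv 0$ there is $b\in B$ with $f(b)=0$, and since $f$ is a homomorphism on the connected graph $G$, the image $f(V[G])$ is a consecutive interval of integers; hence $\range(f)=\max f-\min f+1$. Combined with $\min f\le 0\le \max f$, the event $\{\range(f)>2k+1\}$ forces $\max f\ge k+1$ or $\min f\le -(k+1)$. Because $\mu\equiv 0$, the sign flip $f\mapsto -f$ is a measure-preserving involution on $\Hom(G,B,\mu)$, so the two events are equiprobable and it suffices to bound $\P(\max f\ge k+1)\le\sum_{x\in V[G]}\P(f(x)\ge k+1)$.

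For the pointwise tail I want $V(k)/\log^2 d$ in the exponent, so I plan to invoke the middle (no $\min(t,d)$) form of Theorem~\ref{height_at_point_thm}, noting that with $t=2k+1$ the quantity $\lceil t/2\rceil-1$ is exactly $k$. Its precondition $V(\lceil t/2\rceil-1)\le n_d/3$ follows from the definition of $k$, which makes $V(k)$ only polylogarithmic in $|V[G]|$, combined with the non-linearity of $G$ in high dimension, which makes $n_d$ sufficiently large relative to $\log|V[G]|$.

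The main obstacle is the factor-of-two gap: $\range(f)>2k+1$ only forces $\max f\ge k+1$, not $\max f\ge 2k+1$, so a direct application of Theorem~\ref{height_at_point_thm} with $t=k+1$ gives only $V(\lceil(k+1)/2\rceil-1)\approx V(k/2)$ in the exponent rather than $V(k)$. I would close this via the geometric content behind the height estimate (the forthcoming odd-cutset level-set Theorem~\ref{main_thm}): if $f(x)\ge k+1$, then the Lipschitz property forces every vertex within graph distance $k$ of $x$ to satisfy $f\ge 1$, so the odd cutset separating $\{f\ge 1\}$ from $\{f\le 0\}$ encloses a region of volume at least $V(k)$ around $x$, and the level-set estimate then yields
\begin{equation*}
\P(f(x)\ge k+1)\le \exp\!\left(-\frac{cV(k)}{\log^2 d}\right)
\end{equation*}
uniformly in $x$. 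A union bound over $x\in V[G]$ and the sign symmetry give
\begin{equation*}
\P(\range(f)>2k+1)\le 2|V[G]|\exp\!\left(-\frac{cV(k)}{\log^2 d}\right),
\end{equation*}
and choosing $C$ in the definition of $k$ large enough that $cV(k)/\log^2 d\ge 5\log|V[G]|$ absorbs the $2|V[G]|$ factor into the exponent at a slightly smaller constant $c'>0$, producing both inequalities stated in the theorem.
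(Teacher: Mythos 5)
Your reduction to $\max f\ge k+1$ via $\min f\le 0\le\max f$ and sign symmetry is fine, and you are right that the naive one-sided application of Theorem~\ref{height_at_point_thm} with $t=k+1$ produces only $V(\lceil(k+1)/2\rceil-1)\approx V(k/2)$ in the exponent. But the step you propose to close that factor-of-two gap does not work, and the gap is fatal. Theorem~\ref{main_thm} (and the whole level-set machinery) controls the \emph{number of edges} in $\LS_1(f,x,B)$, not the volume that $\LS_1$ encloses. A cutset enclosing a ball of radius $k$ can still have as few as $I_k(x,b)\approx s_k/d$ edges --- far below $V(k)$ --- so "encloses a region of volume $\ge V(k)$" gives you no purchase on $|\LS_1|$ beyond the isoperimetric lower bound. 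Even the multi-level refinement you would get by iterating the level-set theorem over all heights $1,\dots,k+1$ (this is Proposition~\ref{height_and_level_set_prop}) gives, via \eqref{L_i_t_I_r_relation}, only $\sum_i L_{i,k+1}\gtrsim\sum_{r=0}^{\lfloor k/2\rfloor}s_r\approx V(k/2)$: each level set $\LS_i$ separates a ball of radius $k+1-i$ around $x$ from a ball of radius $i-1$ around the single boundary point, and the minimum of those two radii never exceeds $k/2$. And $V(k/2)$ really is too small: by minimality of $k$, $V(k-1)<C\log^2 d\log|V[G]|$, so for $k=2$ (which Proposition~\ref{tori_size_prop} shows is the generic small-$n_d$ case) the exponent collapses to $V(1)\approx d$, which cannot dominate the $\log|V[G]|$ term coming from the union bound.

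The paper's argument (Proposition~\ref{range_bound_prop}) recovers the full $V(k)$ by conditioning on a \emph{pair} of vertices rather than one. It decomposes $\{\range(f)>2k+1\}$ over $4$-tuples $(x,y,t,s)$ with $t-s=2k+1$, then conditions on $f(y)=s$ and promotes $y$ to a boundary point with value $s$. In the conditioned model the relevant level sets around $x$ must separate $B_{m-i}(x)$ from $B_{i-1-g(y)}(y)$, and because $y$ is now a boundary vertex with value $2k+1$ below $f(x)$, for $i$ in a range of length about $k$ the minimum of the two radii runs through $0,1,\dots,k$. Summing gives $\sum_i L_i\gtrsim\sum_{r=0}^{k}I_r\gtrsim dV(k)$, and after dividing by $d\log^2 d$ this is $V(k)/\log^2 d$. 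The price is a union bound over $\approx|V[G]|^4$ tuples and a factor $d^{6k+3}$, both absorbed using Proposition~\ref{ball_volume_estimate} and a suitable choice of $C$. This two-point conditioning is the essential ingredient your proposal is missing; without it you are confined to $V(k/2)$ and cannot reach the claimed bound.
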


We remark that the theorem remains true if we change the power of $|V[G]|$ in the probability bound to any larger power; the current statement was chosen for simplicity.
We note also that the conclusion of the theorem implies $\E\range(f)\le 4k$, say, since $\range(f)$ is deterministically bounded by $|V[G]|$.

A result of an opposite nature was obtained in \cite{BYY07}. The result there is for an arbitrary graph $G$ and we present below a version of it specialized to tori (this is the line before last in the proof of Theorem 2.1 there).
\begin{theorem}(Benjamini, Yadin, Yehudayoff \cite{BYY07})\label{BYY_thm}
For a torus $G$, if $f\unifin\Hom(G,B,\mu)$ with a one-point BC $(B,\mu)$ and if $r\ge0$ is an integer for which $\vol(r)\le \eps\log_2 |V[G]|$ for some $0<\eps<1$ then
\begin{equation*}
\P(\range(f)\le r)\le e^2\exp\left(-\frac{|V[G]|^{1-\eps}}{\eps^2\log_2^2|V[G]|}\right).
\end{equation*}
\end{theorem}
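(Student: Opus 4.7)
The plan is to compare counts. Setting $N=|\Hom(G,\{v_0\},0)|$ and $N_r=|\{f\in\Hom(G,\{v_0\},0):\range(f)\le r\}|$, we have $\P(\range(f)\le r)=N_r/N$, so I aim to show $N/N_r$ is large via a multiplicative injection: for each $f$ with $\range(f)\le r$ and each $S\subseteq[m]$, produce a new homomorphism $f_S\in\Hom(G,\{v_0\},0)$, yielding $N\ge 2^m N_r$ and hence $\P(\range(f)\le r)\le 2^{-m}$.

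First I would pack disjoint balls $B(v_1,r+1),\ldots,B(v_m,r+1)$ inside $V[G]\setminus B(v_0,r+1)$ by a greedy argument, obtaining $m\ge c|V[G]|/V(2r+2)$. Bounding $V(2r+2)$ by a polynomial in $V(r)$ and using $V(r)\le\varepsilon\log_2|V[G]|$ gives $m$ of order $|V[G]|^{1-\varepsilon}/(\varepsilon^2\log_2^2|V[G]|)$, matching exactly the exponent in the target bound; the $e^2$ prefactor absorbs a union bound over the at-most $r+1$ translates $[a,a+r]\ni 0$ of the range interval together with the slack in the volume estimate. Next, for each $i$ I would construct a local ``bump'' operation producing, from $f|_{B(v_i,r+1)}$, a new homomorphism on $B(v_i,r+1)$ agreeing with $f$ on $\partial B(v_i,r+1)$ but with the value at $v_i$ pushed outside $[-r,r]$. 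Combining bumps over $i\in S$ yields $f_S$; the disjointness of the balls and the boundary-preservation of each bump ensure that $f_S$ remains a valid homomorphism with $f_S(v_0)=0$. Injectivity of $(f,S)\mapsto f_S$ follows because $S$ is recoverable from $f_S$ as $\{i:|f_S(v_i)|>r\}$ (using $\range(f)\le r\Rightarrow |f(v_i)|\le r$ when $i\notin S$), and $f$ can then be reconstructed by undoing each bump.

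The main technical obstacle is the explicit construction of the bump. A naive additive pyramid perturbation fails since its $\pm 2$ edge-jumps may combine with $f$'s $\pm 1$ edge-jumps to give forbidden differences across some edges of $B(v_i,r+1)$. The correct construction must be adaptive, depending on the values $f|_{\partial B(v_i,r+1)}$, and must exploit the bipartite structure of $G$ — for instance via a reflection of $f$ about a well-chosen level on $B(v_i,r)$, combined with parity-corrected flips in a thin annulus that smoothly restore the homomorphism condition across $\partial B(v_i,r+1)$. The hypothesis $\range(f)\le r$ ensures that $f|_{B(v_i,r+1)}$ does not saturate the ball's full diameter $2r+2$, providing the slack needed for such a construction to succeed; verifying the homomorphism property on every edge, and in particular on the boundary-crossing edges, is the principal delicate point.
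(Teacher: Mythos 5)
The paper does not prove this theorem; it quotes it from Theorem~2.1 of \cite{BYY07}, so there is no in-paper proof to compare against. Your counting skeleton --- pack $m$ pairwise-disjoint balls $B_{r+1}(v_i)$ away from $v_0$ and, for each $S\subseteq[m]$, turn a low-range $f$ into a modified $f_S\in\Hom(G,B,\mu)$, so that an injective $(f,S)\mapsto f_S$ gives $\P(\range(f)\le r)\le 2^{-m}$ --- is a reasonable plan, and on tori the volume bookkeeping is essentially sound (e.g.\ $V(2r+2)\le V(r+1)^2\le V(1)^2V(r)^2\le V(r)^4$ since $V(1)\le V(r)$ for $r\ge1$). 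The genuine gap is the bump itself and, crucially, the injectivity of $(f,S)\mapsto f_S$, which you assert follows because ``$f$ can then be reconstructed by undoing each bump.''

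Any bump that replaces $f$ on $B_r(v_i)$ by a \emph{canonical} interior extension determined only by the boundary data $f|_{S_{r+1}(v_i)}$ --- for instance the pointwise-maximal homomorphism extension $g^+$, which does enjoy locality, boundary agreement, and $g^+(v_i)>f(v_i)$ whenever $\range(f)\le r$ --- necessarily forgets the original $f|_{B_r(v_i)}$: it cannot be undone, and $(f,S)\mapsto f_S$ is not injective. What the argument actually needs is, for every range-$\le r$ boundary condition on $S_{r+1}(v_i)$, an explicit injection from its range-$\le r$ interior extensions to its range-$>r$ ones, canonical enough that the bumps at different centres compose; producing this is the real content of the proof and you have not done it. Your suggested reflection $f\mapsto 2h-f$ is an involution, but it is a homomorphism across the gluing sphere only when $f\equiv h$ there, which a range-$\le r$ function need not satisfy; ``parity-corrected flips in a thin annulus'' names the obstacle rather than removing it. (The level-set reflections $R_{w,t}$ in Section~\ref{linear_tori_sec} are local involutions of the right flavour, but they act on an entire level-set component, which can leave $B_r(v_i)$ even when $\range(f)\le r$.) There is also a smaller slip: the recovery rule $S=\{i:|f_S(v_i)|>r\}$ fails, since with $\range(f)\le r$ and $f(v_0)=0$ the bumped value $\min_{u\in S_{r+1}(v_i)}f(u)+r+1$ can be as small as $2$, hence not $>r$ once $r\ge2$; the right detector is that $f_S|_{B_{r+1}(v_i)}$ has range $>r$ --- a minor fix, but moot until an invertible bump is in hand.
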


Comparing Theorems~\ref{range_thm} and \ref{BYY_thm} we see that the bound on the range given by the former is of the right order of magnitude for one-point BC.
\begin{corollary}\label{range_sharpness_cor}
There exist $d_0\in\N$, $C_d,c_d>0$ such that for all $d\ge d_0$, non-linear tori $G$ and the one-point BC $(B,\mu)$, if $f\unifin\Hom(G,B,\mu)$ then
\begin{equation*}
\P(c_d r \le \range(f) \le C_d r)\ge 1-\frac{1}{|V[G]|^3},
\end{equation*}
where $r:=\min\left\{ m\in\N\ \big|\ \vol(m)\ge \log|V[G]|\right\}$.
\end{corollary}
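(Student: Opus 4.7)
The plan is to derive this corollary as a near-immediate consequence of the two theorems stated just above it: Theorem~\ref{range_thm} supplies the upper bound on $\range(f)$ while Theorem~\ref{BYY_thm} supplies the lower bound, applied with the one-point BC (which is automatically a legal homomorphism boundary condition on the bipartite torus). Apart from a small deterministic comparison of radii, no new probabilistic argument is required, and everything comes down to showing that the volume thresholds appearing in the two theorems are both within a $d$-dependent factor of the radius $r$ defined in the corollary.

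For the upper bound, let $k$ be the radius defined in Theorem~\ref{range_thm}, so that $V(k) \ge C\log^2 d \cdot \log|V[G]|$. Since $r$ satisfies $V(r) \ge \log|V[G]|$, I would show that there exists $\alpha_d$ depending only on $d$ for which $V(\alpha_d r) \ge C\log^2 d \cdot V(r)$; this yields $k \le \alpha_d r$, and Theorem~\ref{range_thm} then gives $\P(\range(f) > (2\alpha_d + 1)r) \le |V[G]|^{-4}$, so the value $C_d := 2\alpha_d + 1$ works.

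For the lower bound, I would apply Theorem~\ref{BYY_thm} with $\eps = \tfrac12$ and let $r'$ be the largest integer with $V(r') \le \tfrac12 \log_2|V[G]|$. Provided $|V[G]|$ exceeds some absolute constant (which we may assume, since otherwise the bound $1-|V[G]|^{-3}$ is vacuous), the theorem gives $\P(\range(f) \le r') \le \tfrac12 |V[G]|^{-3}$. It then suffices to find $\beta_d$ depending only on $d$ with $V(\lceil r/\beta_d \rceil) \le \tfrac{1}{2\log 2}\log|V[G]|$, so that $r' \ge \lceil r/\beta_d\rceil$ and one may take $c_d := 1/\beta_d$. A union bound with the upper-bound event then delivers the claimed $1-|V[G]|^{-3}$ probability.

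Both comparisons reduce to a doubling-type statement for the ball-volume function $V$ on the $d$-dimensional torus: for any fixed $\lambda > 1$, the ratio $V(\lambda m)/V(m)$ is bounded above and below by quantities depending only on $d$ (growing roughly as $\lambda^d$) so long as $m$ is not already in the ``saturated'' regime where $V(m)=|V[G]|$. This is classical for the $\ell_1$-ball in $\Z^d$, via the exact formula $V_{\Z^d}(m)=\sum_{j=0}^{\min(m,d)} 2^j\binom{d}{j}\binom{m}{j}$, and it carries over to the torus as long as $2m$ is less than the smallest side length. I expect the main obstacle to be the careful bookkeeping near the wrap-around regime (when the ball has already saturated, so $V$ no longer grows) and in the degenerate case of very small $r$: there the quantities $r$, $k$, $r'$ are all trivially within a $d$-dependent factor of either $1$ or $\diam(G)=\sum_i n_i/2$, and one verifies the target inequalities by direct case analysis rather than by a uniform volume ratio.
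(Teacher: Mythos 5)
Your overall plan is exactly the paper's: derive the upper bound from Theorem~\ref{range_thm}, the lower bound from Theorem~\ref{BYY_thm} (applied to the one-point BC), and then bridge the gap between the resulting radii and $r$ via a volume-doubling statement, handling degenerate cases separately. The only real difference from the paper is in how you propose to justify the doubling step, and there your argument as stated would not go through in the required generality.

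You justify the comparison by reducing to the exact formula for $\ell_1$-balls in $\Z^d$ and then claiming this carries over to the torus ``as long as $2m$ is less than the smallest side length.'' But the tori that appear here need not have a large smallest side length: a non-linear torus is perfectly allowed to have many coordinates of length exactly $2$, e.g.\ $G=\Z_n\times\Z_2^{d-1}$. For such $G$ your side condition forces $m=0$, so the $\ell_1$-ball comparison gives nothing; moreover the heuristic that $V(\lambda m)/V(m)$ grows ``roughly as $\lambda^d$'' is simply false there, since once $m\gtrsim d$ the ball saturates all the short directions and $V(m)$ grows only linearly in $m$, so the ratio is of order $\lambda$, not $\lambda^d$. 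Fortunately the argument only needs the ratio to be unbounded, and the paper supplies precisely the right tool: Proposition~\ref{linear_volume_growth} gives $V(2r+1)\ge 2V(r)$ for all $r\le (n_d-3)/4$, a condition on the \emph{largest} side length rather than the smallest. It is proved by exhibiting an explicit injection $B_r(v_0)\hookrightarrow B_{2r+1}(v_0)\setminus B_r(v_0)$ that shifts only in the $e_d$ direction, which is why no assumption on the small side lengths is needed. Replacing your $\ell_1$-ball heuristic by this lemma, together with the observation that $V(dn_d)=|V[G]|\ge\log^3 d\cdot\log|V[G]|$ (so the iterated doubling stays in the regime where the lemma applies) and the trivial fact $\range(f)\ge 1$ to absorb the case $k_1=0$, turns your sketch into the paper's proof. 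You also misstate the reason one may assume $|V[G]|$ large: the bound $1-|V[G]|^{-3}$ is not vacuous for small graphs, but $|V[G]|\ge 2^d$ always, which is what actually justifies the assumption once $d\ge d_0$.
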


As noted in the introduction, our techniques are sufficiently flexible to recover and extend the result of Galvin, Theorem~\ref{Galvin_thm} above. For homomorphism height functions, we have the following result.

\begin{theorem}\label{finite_range_thm}
For any integer $k\ge 2$, there exist $d_0(k)$ and $c_k>0$ such that for all $d\ge d_0(k)$, non-linear tori $G$ and legal boundary conditions $(B,\mu)$ with zero $\mu$, if $|V[G]|\le \exp\left(\frac{c_kd^{k}}{\log^2 d}\right)$ and $f\unifin\Hom(G,B,\mu)$ then
\begin{equation*}
\P(\range(f)>2k+1)\le \exp\left(-\frac{c_kd^k}{\log^2 d}\right).
\end{equation*}
\end{theorem}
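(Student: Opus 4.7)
The plan is to apply the level-set machinery of the paper (Theorem \ref{main_thm} together with the odd-cutset structure results, Theorems \ref{count_cutsets_thm} and \ref{interior_approximation_theorem}) at the specific scale $k$ prescribed by the hypothesis, following the strategy underlying Theorem \ref{range_thm} but with $k$ given rather than chosen automatically as the minimal $k_0$ with $V(k_0) \geq C\log^2 d \log|V(G)|$. The key point is that the hypothesis $|V(G)| \leq \exp(c_k d^k/\log^2 d)$, combined with the trivial lower bound $V(k) \geq \binom{d}{k} \geq (d/k)^k$ valid for $d \geq 2k$, makes the level-set exponent $-cV(k)/\log^2 d$ dominate the $\log|V(G)|$ factor from the vertex union bound.

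First, by the symmetry $f \mapsto -f$ (valid since $\mu \equiv 0$), it suffices to control $\P(\max f \geq k+1)$: the event $\{\range(f) > 2k+1\}$ is contained in $\{\max f \geq k+1\} \cup \{\min f \leq -(k+1)\}$ and the two events have equal probability. For a fixed vertex $x_0 \in V(G)$, the event $\{f(x_0) \geq k+1\}$ implies, via the $1$-Lipschitz property of the homomorphism $f$, that the entire ball $B_k(x_0) = \{v : d_G(x_0,v) \leq k\}$ lies in $\{v : f(v) \geq 1\}$. Consequently, the boundary $\Gamma$ of the connected component of $\{f \geq 1\}$ containing $x_0$ is an odd cutset whose interior encloses $B_k(x_0)$, so that its interior has at least $V(k)$ vertices; by the vertex-isoperimetric structure of balls in the high-dimensional torus, the edge-length $|\Gamma|$ of such a cutset is then itself at least of order $V(k)$.

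Next, invoking Theorem \ref{main_thm} together with the cutset counting estimate Theorem \ref{count_cutsets_thm}, the probability that $x_0$ is enclosed by such a large odd cutset is at most $\exp(-cV(k)/\log^2 d)$: the level-set factor $\exp(-cL/\log^2 d)$ per cutset of length $L$ dominates the counting factor $\exp(c'L/\log^2 d)$ once $d \geq d_0(k)$, since $c > c'$ in the high-dimensional regime. A union bound over $x_0 \in V(G)$ then gives
\begin{equation*}
\P(\range(f) > 2k+1) \leq 2|V(G)|\exp\!\left(-\frac{cV(k)}{\log^2 d}\right) \leq 2\exp\!\left(\frac{(c_k - c/k^k)\,d^k}{\log^2 d}\right);
\end{equation*}
choosing $c_k \leq c/(2k^k)$ and $d_0(k)$ large enough to absorb the constant factor yields the desired bound $\exp(-c_k d^k/\log^2 d)$ after adjusting $c_k$.

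The main obstacle I anticipate is the third step: translating the length-based level-set probability bound into an exponent proportional to $V(k)$ when only the volume, rather than the length, of $\Gamma$ is directly controlled. Cutsets are not in general sphere-shaped, so the isoperimetric inequality $|\Gamma| \gtrsim V(k)$ is delicate in high dimensions and must be extracted using the odd-cutset regularity provided by Theorems \ref{count_cutsets_thm} and \ref{interior_approximation_theorem}. These results form the technical heart of the paper and are what ultimately constrain both the minimum admissible dimension $d_0(k)$ and the permissible constant $c_k$.
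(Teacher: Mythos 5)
There is a genuine gap, and it is concentrated precisely where you flagged your own concern, but it is more fundamental than a technicality about cutsets not being sphere-shaped: the single-vertex union bound you propose is structurally incapable of producing a $V(k)$ exponent when $B$ is small (e.g.\ a one-point BC, which the theorem must cover).

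Your central claim is that the height-$1$ level set $\Gamma$ around $x_0$ has $|\Gamma|\gtrsim V(k)$ because its interior contains $B_k(x_0)$. This is false. The cutset $\Gamma$ separates $B_k(x_0)$ from $B$, and any isoperimetric lower bound on $|\Gamma|$ is governed by the \emph{smaller} of the two sides; when $B$ is a singleton, $\Gamma$ can consist of just the $\Delta(G)=O(d)$ edges incident to that one vertex, no matter how large the interior is. Even if you repair this by considering \emph{all} level sets $\LS_i(f,x_0,B)$ for $1\le i\le k+1$ and applying Theorem~\ref{main_thm} inductively (the mechanism behind Theorem~\ref{height_at_point_thm}), you only gain $V(\lceil k/2\rceil)$ in the exponent, not $V(k)$: for a general boundary set, $\LS_i$ is sandwiched between $B_{k+1-i}(x_0)$ and $B_{i-1}(B)$, so the best single sphere radius you ever control is $\min(k+1-i,\,i-1)\le \lfloor k/2\rfloor$. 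That matches the exponent $V(\lceil t/2\rceil-1)$ in the first part of Theorem~\ref{height_at_point_thm}; the $V(t-1)$ bound requires full projection, which zero BC need not have. Since $V(\lceil k/2\rceil)\sim d^{k/2}$, the one-vertex union bound gives only $\exp(-c\,d^{k/2}/\log^2 d)$, far short of the stated $\exp(-c_k d^k/\log^2 d)$.

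The paper's actual argument (Proposition~\ref{range_bound_prop}) fixes exactly this by union-bounding over \emph{pairs} $(x,y)$ with $f(x)=t$, $f(y)=s$, $t-s=2k+1$, and then \emph{conditioning} on $f(y)=s$ so that $y$ is absorbed into the boundary as a new constrained vertex. After this conditioning, each level set $\LS_i$ with $m-k\le i\le m$ around $x$ is forced to separate $B_{m-i}(x)$ from a ball of at least the same radius around $y$, yielding $|\LS_i|\ge I_{m-i}$; summing over $i$ gives $\sum_{r=0}^k I_r\gtrsim dV(k)$ via Proposition~\ref{sum_s_r_prop}, and this, not a single cutset, is where the $V(k)$ (hence $d^k$) exponent comes from. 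That pair-conditioning step is the idea your proposal is missing, and it cannot be recovered from Theorems~\ref{count_cutsets_thm} or~\ref{interior_approximation_theorem} alone, which only feed into the level-set theorem and have no bearing on how many level sets you can force to be large.
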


The case $k=2$, $G=\Z_2^d$ and the one-point BC recovers the theorem of Galvin with an improved probability bound and shows that when $G$ is the hyper-cube, the range is at most 5 with high probability as $d\to\infty$. Moreover, the theorem shows that the same phenomenon holds for any boundary condition with zero $\mu$ and in any non-linear torus in which the side lengths are at most $2^{d^{1-\eps}}$, say, for some fixed $\eps>0$.
Furthermore, a similar phenomenon holds with $5$ replaced by $7, 9, 11$, etc.

The results presented above show that on non-linear tori in high dimensions, the random homomorphism height function is very localized. In contrast, the following theorem shows that for linear tori, the situation is drastically different and the fluctuations of the random height function resemble more those of a simple random walk - the one-dimensional case.

\begin{theorem}\label{linear_torus_thm}
For all $0<\lambda<\frac{1}{2\log 2}$, there exist $\alpha=\alpha(\lambda)>0$ and $C=C(\lambda)>0$ such that for all dimensions $d\ge 2$ and all $\lambda$-linear tori $G$, if $f\unifin\Hom(G,B,\mu)$ with the one-point BC $(B,\mu)$ then
\begin{equation}\label{linear_tori_thm_bound}
\P(\range(f)\le |V[G]|^\alpha)\le \frac{C}{|V[G]|^{\alpha}}.
\end{equation}
\end{theorem}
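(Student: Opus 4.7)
The plan is to project $f$ onto a single axis line in the long direction and compare the induced law on cycle homomorphisms to a uniform simple random walk bridge. Write $G=H\times\Z_n$ where $H:=\Z_{n_1}\times\cdots\times\Z_{n_{d-1}}$, $n:=n_d$ and $N:=|V[H]|$, and note that the $\lambda$-linearity condition \eqref{linear_cond} is equivalent to $N\le\lambda\log n$. Let $b=(x_0,0)$ be the boundary vertex and define $F:\Z_n\to\Z$ by $F(j):=f(x_0,j)$. Then $F$ is a cycle homomorphism on $\Z_n$ with $F(0)=0$, i.e.\ an element of the set $\CC$ of closed length-$n$ walks on $\Z$ of $\pm1$ steps starting at $0$. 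Since $\range(f)\ge\range(F)$, it suffices to show $\P(\range(F)\le R)\le C/|V[G]|^\alpha$ for $R:=\lfloor|V[G]|^\alpha\rfloor$.

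Writing the marginal of $F$ as $\P(F=F_0)=N(F_0)/Z_G$ with $N(F_0):=|\{f\in\Hom(G,\{b\},0):f(x_0,\cdot)=F_0\}|$ and $Z_G:=\sum_{F_0\in\CC}N(F_0)$, I introduce transfer matrices $T^+,T^-$ on the finite state space $\Hom(H,\{x_0\},0)$ that encode the valid slice-to-slice transitions of $f$ when the axis-line step is $+1$ or $-1$, respectively. A direct check gives
\[
N(F_0)=\operatorname{tr}\bigl(T^{\Delta F_0(0)}T^{\Delta F_0(1)}\cdots T^{\Delta F_0(n-1)}\bigr).
\]
The involution $g\mapsto -g$ interchanges $T^+$ and $T^-$, so they share operator norm $\lambda_{\max}$; by sub-multiplicativity, $\max_{F_0\in\CC}N(F_0)\le|\Hom(H,\{x_0\},0)|\cdot\lambda_{\max}^n\le 2^N\lambda_{\max}^n$. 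A Lyapunov-type argument for the random product $\prod_j T^{\epsilon_j}$ with i.i.d.\ $\epsilon_j\in\{\pm 1\}$ yields $Z_G\ge c\binom{n}{n/2}\lambda_{\max}^n/\operatorname{poly}(n)$, so the ratio $\max N(F_0)/(Z_G/|\CC|)$ is at most $C\cdot 2^N\cdot\operatorname{poly}(n)\le C\, n^{\lambda\log 2+o(1)}$. Combined with the classical reflection-principle estimate $\P_{\mathrm{unif}(\CC)}(\range(F_0)\le R)\le C\exp(-cn/R^2)$ for $R\le\sqrt n$, this yields
\[
\P(\range(F)\le R)\le C\, n^{\lambda\log 2+o(1)}\exp(-cn/R^2).
\]
With $R:=|V[G]|^\alpha\le n^{\alpha(1+o(1))}$ and the hypothesis $\lambda<1/(2\log 2)$ providing room to pick $\alpha=\alpha(\lambda)\in(0,1/2)$ so that $cn^{1-2\alpha-o(1)}$ dominates the polynomial overhead, the right-hand side is at most $C/|V[G]|^\alpha$, as required.

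\textbf{Main obstacle.} The delicate step is the lower bound on $Z_G$: showing that a \emph{typical} random product $\prod_j T^{\epsilon_j}$ with equal numbers of $T^+$ and $T^-$ achieves operator norm comparable to $\lambda_{\max}^n$ up to subexponential corrections, so that the axis-line marginal is within a polynomial factor of a uniform bridge. The shared operator norm of $T^+$ and $T^-$, coming from the $g\mapsto -g$ symmetry on $\Hom(H,\{x_0\},0)$, is the structural ingredient that makes a Lyapunov-type estimate tractable. The precise threshold $\lambda<1/(2\log 2)$ ensures $2^N<\sqrt n$, so that the polynomial inflation coming from the transfer-matrix dimension is absorbed by the bridge tail at the scale $R=n^\alpha\le\sqrt n$.
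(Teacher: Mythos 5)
Your approach — projecting onto an axis line and controlling the marginal law on cycle homomorphisms via a transfer-matrix representation with matrices $T^\pm$ on $\Hom(H,\{x_0\},0)$ — is a genuinely different route from the paper's. The paper proceeds combinatorially: Lemma~\ref{low_reduction_lemma} reduces to functions hitting a single level 0 macroscopically often; the ``building transformation'' $B_x$ of Lemma~\ref{Omega_0_w_lemma} shows such functions typically have $\gg n^\gamma$ walls (full constant cross-sections across two adjacent slices); Lemma~\ref{Omega_b_low_w_lemma} establishes a balanced-walls condition; and Lemma~\ref{low_with_walls_lemma} compares the wall-sign process to a random walk bridge. Your route replaces the wall machinery by an operator-norm argument and a Lyapunov-type lower bound on the partition function.

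However, there is a genuine gap at exactly the step you flag as the ``main obstacle,'' and I believe it is more serious than delicate. The bound $Z_G\ge c\binom{n}{n/2}\lambda_{\max}^n/\mathrm{poly}(n)$ is equivalent to asserting that the random product $\prod_j T^{\epsilon_j}$ over a uniform bridge has trace (not merely operator norm) within a polynomial factor of $\lambda_{\max}^n$. The $g\mapsto -g$ symmetry only gives $\|T^+\|=\|T^-\|$; it does not force the top Lyapunov exponent of the random product to equal $\log\lambda_{\max}$. For non-commuting $T^+$ and $T^-$, the Lyapunov exponent is generically strictly smaller, so the inequality you want does not follow from the symmetry and sub-multiplicativity alone; it would need to be extracted from the structure of the homomorphism model itself, which is precisely where the paper invests all its work. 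Worse, if the claimed lower bound on $Z_G$ held, your final estimate $\P(\range(F)\le R)\le C\,n^{\lambda\log 2}\exp(-cn/R^2)$ would beat $|V[G]|^{-\alpha}$ for \emph{every} $\lambda>0$ and every $\alpha<1/2$, since the exponential term $\exp(-cn^{1-2\alpha})$ swamps any fixed power of $n$. Your stated reason for the hypothesis ($2^N<\sqrt{n}$ so the polynomial inflation is ``absorbed'') is not actually required by the arithmetic as written, so the threshold $\lambda<\tfrac{1}{2\log 2}$ plays no visible role in your argument. This is a strong signal that something is off: in the paper's proof the threshold enters essentially, via the factor $m\,2^{2m-1}$ in Lemma~\ref{Omega_0_w_lemma} and condition \eqref{beta_gamma_lambda_cond}, and the paper leaves the behavior for intermediate tori open. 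A proof whose final inequality never uses the hypothesis that constrains the theorem should be regarded as incomplete at best. Concretely: you need either a proof that the Lyapunov exponent of the conditioned product equals $\log\lambda_{\max}$ up to polynomial corrections (and to identify where $\lambda<\tfrac{1}{2\log 2}$ is needed for it), or a direct comparison of $N(F_0)$ across low-range and typical bridges — which is essentially what the paper's wall-shifting maps $S_x$ and reflection maps $T^{i,j}$ accomplish.
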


As a final remark to this section, we note that not all possible tori fall under our definitions of non-linear and linear tori. The remaining cases are left as open questions, see Section~\ref{open_questions_sec}.

\subsubsection{Level sets}\label{level_set_structure_sec}

Our understanding of the height and range of the random homomorphism height function on a non-linear torus stems from a detailed analysis of the level sets of such functions. To explain this further, we introduce a few more definitions.
Fixing a legal boundary condition $(B,\mu)$ for a non-positive $\mu$, $x\in V[G]$ and $f\in\Hom(G,B,\mu)$, we denote by $A$ the union of those connected components of $\{v\in V[G]\ |\ f(v)\le 0\}$ which contain points of $B$, and by $A^c_x$ the connected component of $x$ in $V[G]\setminus A$ (defined to be empty if $x\in A$). We then define
\begin{equation*}
\LS(f,x,B):=\begin{cases}\text{set of all edges between $A$ and $A^c_x$}&x\notin A\\\emptyset&x\in A\end{cases}.
\end{equation*}
$\LS(f,x,B)$ is the outermost height 1 level set of $f$ around $x$ when coming from $B$. The level sets (around all vertices $x$) are depicted in Figure~\ref{level_sets_fig} for height functions on two-dimensional tori. For an integer $L\ge 1$, we let $\Omega_{x,L}$ (implicitly $\Omega_{x,L,B,\mu}$) be the set of $f\in \Hom(G,B,\mu)$ for which $|\LS(f,x,B)|=L$. Similarly, for $x_1,\ldots, x_k\in V[G]$ and integers $L_1,\ldots, L_k\ge 1$ we let $\Omega_{(x_1,\ldots, x_k),(L_1,\ldots, L_k)}$ be the set of $f\in\cap_{i=1}^k \Omega_{x_i,L_i}$ satisfying that $\LS(f,x_i,B)\cap \LS(f,x_j,B)=\emptyset$ for all $i\neq j$ (one can show that these level sets are either identical or disjoint). The following theorem is at the heart of our analysis of random homomorphism height functions.
\begin{theorem}\label{main_thm}
There exist $d_0\in\N$, $c>0$ such that for all $d\ge d_0$,
$k\in\N$, non-linear tori $G$, legal boundary conditions $(B,\mu)$
with non-positive $\mu$, vertices $x_1,\ldots,x_k\in V[G]$ and
integers $L_1,\ldots, L_k\ge 1$ we have that if
$f\unifin\Hom(G,B,\mu)$ then
\begin{equation*}
\P(f\in\Omega_{(x_1,\ldots, x_k),(L_1,\ldots, L_k)})\le d^k\exp\left(-\frac{c\sum_{i=1}^k L_i}{d\log^2 d}\right).
\end{equation*}
\end{theorem}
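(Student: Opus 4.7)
My plan is a Peierls-type entropy argument tailored to the odd-cutset structure of level sets. For any $f\in\Omega_{(x_1,\ldots,x_k),(L_1,\ldots,L_k)}$, the legality of $(B,\mu)$ forces $f$ to take even values on $V^\even$ and odd values on $V^\odd$; consequently, across each edge of $\Gamma_i:=\LS(f,x_i,B)$ the value on the $A$-side (even) is $0$ and on the $A^c_{x_i}$-side (odd) is $1$, so $\Gamma_i$ is a minimal edge cutset separating $B$ from $x_i$ with its entire interior vertex boundary in $V^\odd$, i.e., an odd minimal cutset of size $L_i$. The first step is a union bound
\[
\P(f\in\Omega_{(x_1,\ldots,x_k),(L_1,\ldots,L_k)}) \le \sum_{(\Gamma_1,\ldots,\Gamma_k)} \P(\LS(f,x_i,B)=\Gamma_i\text{ for all }i),
\]
ranging over pairwise disjoint admissible odd minimal cutset tuples. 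The cutset enumeration result (Theorem~\ref{count_cutsets_thm}) will bound the number of such tuples by roughly $\prod_i d\cdot\exp(CL_i\log d/d)$, producing the $d^k$ prefactor and absorbing a subleading part of the exponent.

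For a fixed admissible tuple $(\Gamma_1,\ldots,\Gamma_k)$, I would bound the conditional probability by a shift-and-flip injection into $\Hom(G,B,\mu)$. Define the shift $T(f)(v):=f(v)-2$ for $v\in\bigcup_i A^c_{x_i}$ and $T(f)(v):=f(v)$ otherwise. The homomorphism property is preserved because across $\Gamma_i$ the neighboring pair $(0,1)$ becomes $(0,-1)$, and each $A^c_{x_i}$ is rigidly shifted, so $T(f)\in\Hom(G,B,\mu)$ with the same boundary values. The shift alone is insufficient---$T$ is only injective once the $\Gamma_i$ are specified---so I augment it with independent local flips: for each odd vertex $v$ on the interior vertex boundary of some $\Gamma_i$ whose $2d$ even neighbors all have shifted value $0$, one may freely toggle $T(f)(v)$ between $-1$ and $+1$ while staying in $\Hom(G,B,\mu)$. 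Selecting a set $S$ of such flippable vertices at pairwise graph distance at least $3$ yields an injection
\[
\{f\,:\,\LS(f,x_i,B)=\Gamma_i\text{ for all }i\}\times\{0,1\}^S\hookrightarrow\Hom(G,B,\mu),
\]
since both the cutsets and the toggling pattern can be read off the augmented image, giving a conditional probability at most $2^{-|S|}$.

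The decisive step, and main obstacle, is producing $|S|\ge c\sum_i L_i/(d\log^2 d)$ flippable, well-separated vertices. The plan is to invoke the interior approximation Theorem~\ref{interior_approximation_theorem} to replace each $\Gamma_i$ by a more regular cutset whose interior vertex boundary $\widetilde V_i$ has size $\Omega(L_i/(d\log d))$---a discrete isoperimetric consequence of the fact that all interior boundary vertices of an odd cutset lie on $V^\odd$ and hence share even neighbors only sparsely---and then extract $S_i\subseteq\widetilde V_i$ by greedy packing at distance three, costing one further $\log d$ factor. One must also verify that most $v\in S_i$ genuinely satisfy the flippability condition after the shift, i.e., that their even neighbors indeed all carry shifted value $0$; this is where the non-linear torus hypothesis enters, guaranteeing that the relevant cutset excursions are topologically simple so that the surrounding low-region values are $0$. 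Combining the cutset enumeration factor $\prod_i d\cdot\exp(CL_i\log d/d)$ with the per-tuple shift gain $2^{-|S|}$ and summing over partitions of $L_i$ among the enumerated cutsets then yields the target bound $d^k\exp(-c\sum_iL_i/(d\log^2 d))$.
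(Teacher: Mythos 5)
Your shift transformation is fundamentally different from the paper's and, as stated, it breaks at the critical point. You propose $T(f)(v):=f(v)-2$ on $\bigcup_i A^c_{x_i}$. For a vertex $v\in E_1(\Gamma_i)$ (interior vertex boundary) to be ``flippable'' after this shift, every neighbor of $v$ \emph{inside} $A^c_{x_i}$ must have original value $2$ (so that the shifted value is $0$). Nothing in the structure of the cutset guarantees this; the function $f$ may well dip back to $0$ on many even vertices adjacent to $E_1$, leaving almost no flippable sites. The paper circumvents this with a genuinely nontrivial device: a \emph{lattice} shift $S(f)(v)=f(v+e_1)-1$ for $v\in\comp(\Gamma,x)$. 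Combined with the odd-cutset structure (inclusion \eqref{parity_neighborhood_contained}, Proposition~\ref{equal_edgenum_in_all_dir} and Lemma~\ref{flip_possible_lemma}), this guarantees that \emph{every} $v$ with $\{v,v+e_1\}\in\Gamma$ becomes flippable after the shift, and that there are exactly $L/\Delta(G)$ of them, for purely combinatorial reasons independent of $f$. Your constant shift cannot replicate this, so the lower bound $|S|\ge c\sum L_i/(d\log^2 d)$ in your decisive step has no justification.

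The second gap is that your overall architecture --- union bound over all admissible cutset tuples, times a conditional-probability gain --- is exactly the Peierls argument that the paper explicitly identifies as failing. The number of odd cutsets of size $L$ around a point is at least $\exp((1+\eps_d)L/(2d))$, whereas the flip gain from the pure shift transformation is at most $2^{L/(2d)}$; the entropy term strictly dominates. Your proposed gain $c\sum L_i/(d\log^2 d)$ is smaller still, and your proposed enumeration cost $\exp(CL_i\log d/d)$ already exceeds it, so your final bound would have the wrong sign in the exponent. The paper's actual resolution is a dichotomy: on the event that the level set has nearly the maximal number of ``exposed'' vertices it uses the pure shift $T_1$ and controls the enumeration by Theorem~\ref{count_cutsets_thm} (few such cutsets, because their roughness parameter $R_\Gamma$ is small), while on the complementary event it uses the shift-and-flip $T_2$ which normalizes all exposed vertices to $1$, so that the cutset can be recovered from a small \emph{interior approximation} (Theorems~\ref{interior_approximation_theorem} and \ref{count_level_sets_thm}), at the cost of a smaller image $2^{L/\Delta(G)-m}$. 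Crucially, the argument is never a plain union bound over cutsets; it is an inequality $|\Omega|/|T(\Omega)|$ for a set-valued transformation (Lemma~\ref{transformation_lemma}), applied after partitioning $\Omega_{x,L}$ according to the number of exposed and flipped vertices. Your misreading of Theorem~\ref{interior_approximation_theorem} as ``replacing $\Gamma_i$ by a more regular cutset'' is a symptom of not seeing this structure: that theorem supplies a small family of approximating \emph{sets} that group cutsets, not modified cutsets. Finally, the non-linearity of the torus is used to absorb the $n_d^{\lfloor L/\alpha\rfloor}$ factor in the cutset-counting bounds, not to ensure topological simplicity of excursions.
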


\begin{figure}[t!]
\centering
{\includegraphics[width=\textwidth,viewport=30 70 990 575,clip]{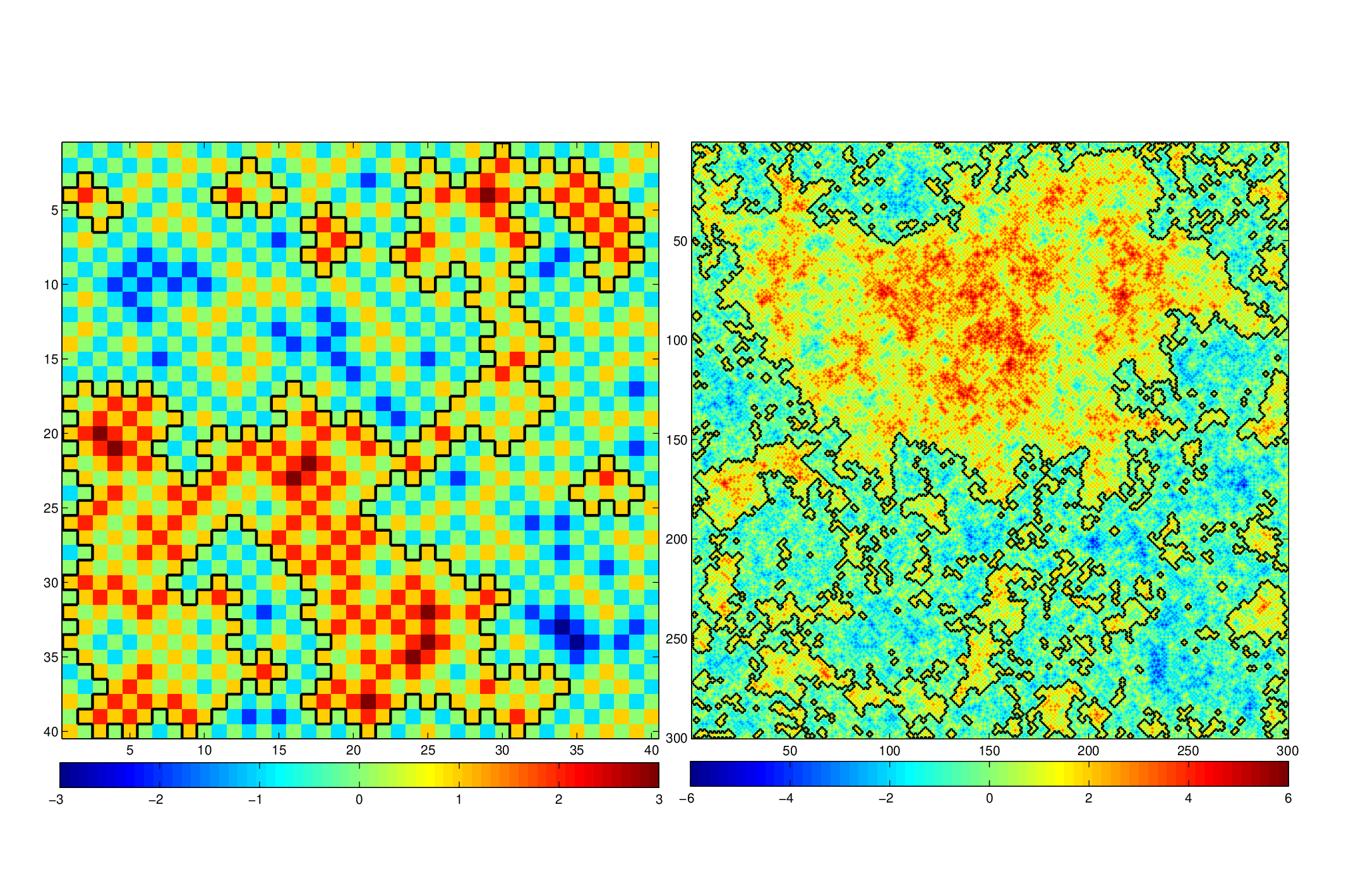}}
{\it
\caption{The outermost height 1 level sets of two samples of homomorphism height functions, the left on a 40 x 40 torus and the right on a 300 x 300 torus, both with zero boundary conditions (dual edges to the level sets are marked in black). Trivial level sets - those surrounding a single vertex - have been removed to obtain a less cluttered picture. Unlike these pictures, it is expected that in 3 dimensions and higher, the length of the longest level set is only logarithmic in the side of the torus. This is proven in sufficiently high dimensions in Corollary~\ref{level_set_length_cor}. Picture produced by Steven M. Heilman. \label{level_sets_fig}}
}
\end{figure}

This theorem is used in Section~\ref{isoperim_height_range_Lip_sec} below to prove the height and range theorems of Section~\ref{height_and_range_thm_sec}. The underlying idea is that we may define, in an analogous way to $\LS(f,x,B)$, also the outermost height $i$ level set of $f$ around a point. Then one can apply the above theorem inductively and conclude that the chance that surrounding a given point, for each $i$, the outermost level set of height $i$ has length $L_i$ is exponentially small in the \emph{sum} of these $L_i$'s. Thus, one may conclude that it is very unlikely that $f$ is large at any given point. See the proof sketches in Section~\ref{proof_sketch_readers_guide_ack_sec} for more details.

As a corollary, we obtain that the largest level set of a random homomorphism height function is at most logarithmic in size with high probability.
\begin{corollary}\label{level_set_length_cor}
Under the assumptions of Theorem~\ref{main_thm}, there exists $C>0$ such that
\begin{equation*}
\P\left(\max_{x\in V[G]} |\LS(f,x,B)|>Cd\log^2 d\cdot
\log|V[G]|\right)\le\frac{1}{|V[G]|^4}.
\end{equation*}
\end{corollary}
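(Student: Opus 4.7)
The plan is a straightforward union bound, using Theorem~\ref{main_thm} with $k=1$ applied at each single vertex. Fix $x\in V[G]$ and an integer $L\ge 1$. Applying Theorem~\ref{main_thm} to the one-tuple $(x)$ with the single length $L$, we obtain
\begin{equation*}
\P(|\LS(f,x,B)|=L)\le d\exp\!\left(-\frac{cL}{d\log^2 d}\right),
\end{equation*}
where $c>0$ is the constant from that theorem. Summing this geometric-like tail over $L\ge L_0$ for any threshold $L_0\ge 1$ yields, for some absolute $C_1>0$,
\begin{equation*}
\P(|\LS(f,x,B)|\ge L_0)\le C_1 d^2\log^2 d\,\exp\!\left(-\frac{cL_0}{d\log^2 d}\right).
\end{equation*}

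Next I would take a union bound over $x\in V[G]$, so that
\begin{equation*}
\P\!\left(\max_{x\in V[G]}|\LS(f,x,B)|\ge L_0\right)\le C_1 d^2\log^2 d\,|V[G]|\exp\!\left(-\frac{cL_0}{d\log^2 d}\right).
\end{equation*}
I would then set $L_0:=\lceil C d\log^2 d\log|V[G]|\rceil$ for $C$ a sufficiently large constant depending only on $c$. Since $n_i\ge 2$ for all $i$, we have $|V[G]|\ge 2^d$, hence $\log|V[G]|\ge d\log 2$, which is enough to dominate the prefactor $C_1 d^2\log^2 d$ by a further $|V[G]|$ when $d$ is large: indeed, $\log(C_1 d^2\log^2 d)=O(\log d)=O(\log|V[G]|)$. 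Choosing $C$ large enough that $\frac{cC}{1}\ge 6$ (after absorbing these lower order terms) then forces the exponential factor to be at most $|V[G]|^{-5}$, which gives the claimed $|V[G]|^{-4}$ bound.

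I do not anticipate any real obstacle: Theorem~\ref{main_thm} already encodes all of the combinatorial work, and the only delicate point in the assembly is verifying that $\log d$ and $\log\log d$ prefactors coming from $d^2\log^2 d$ in the union bound are absorbed by the $\log|V[G]|$ in the definition of $L_0$, which follows immediately from $|V[G]|\ge 2^d$. The constant $C$ will depend on $c$ (and implicitly on $d_0$) but not on the particular torus or boundary condition, as required.
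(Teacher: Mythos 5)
Your proposal is correct and takes essentially the same approach as the paper: the paper's own justification is a single sentence stating that the corollary follows directly from Theorem~\ref{main_thm} by a union bound, which is exactly the union bound over $x$ and over $L\ge L_0$ that you carry out, with the prefactors absorbed using $|V[G]|\ge 2^d$.
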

The corollary follows directly from Theorem~\ref{main_thm} by a union bound. Figure~\ref{level_sets_fig} presents some evidence that the corollary is false on $\Z_n^2$, but we expect it to hold on $\Z_n^d$ for all $d\ge 3$, as Figure~\ref{2d_3d_100_fig} suggests.

\subsubsection{Roughening transition}\label{roughening_transition_sec}

As explained in the introduction, we expect the random homomorphism height function on $\Z_n^2$ to be rough. Indeed, as is the case for some similarly defined models (e.g., the height function of the dimer model, see \cite{Ke01}), we expect that if $f\unifin\Hom(\Z_n^2,B,\mu)$ for a one-point BC $(B,\mu)$, then $f$ converges weakly to the Gaussian free field, and has $\var(f(v))=\Theta(\log n)$ for generic vertices $v$ and $\E(\range(f))=\Theta(\log n)$ as $n\to\infty$. In contrast, if we take $f\unifin\Hom(\Z_n^2\times\Z_2^{d_0},B,\mu)$ for some large, but fixed, $d_0$ and the one-point BC, then Theorem~\ref{homogeneous_torus_thm} implies that $\var(f(v))=O(1)$ and $\E(\range(f))=\Theta(\sqrt{\log n})$ as $n\to\infty$. Thus we expect a transition in the roughness of the random height function on the graphs $\Z_n^2\times\Z_2^{m}$ as $m$ increases from $0$ to the fixed value $d_0$. Analogous transitions (in terms of temperature) have been observed in Solid-On-Solid models taking integer values (see \cite{FS81}) and are termed \emph{roughening transitions}. We emphasize that we view the passage from the graph $\Z_n^2$ to the graph $\Z_n^2\times\Z_2^{d_0}$ as being a finite enhancement, replacing each vertex of the graph $\Z_n^2$ by a fixed-dimensional hypercube, which leaves the graph essentially two-dimensional as $n$ grows. Analogous enhancements have been used in the study of the mean-field behavior of statistical physics models where one considers the spread-out lattice \cite{S04}, the lattice $\Z^d$ with added long-range connections up to a fixed distance.

Our work establishes only one side of the above transition as we do not show that the random homomorphism height function on $\Z_n^2$ is indeed rough, however, we are able to establish the full transition on a class of tori with non-equal side lengths. Indeed, we may take as our starting point any sequence of $\lambda$-linear tori $G_n$ (see \eqref{linear_cond}) for $\lambda<\frac{1}{2\log 2}$ and side lengths satisfying $n_d=n$ and $\prod_{i=1}^{d-1} n_i\ge c\log n$ for some $c>0$. As a concrete example, one may take $G_n=\Z_n\times\Z_{\lfloor\frac{1}{10}\log n\rfloor}$.
We then note that by Theorem~\ref{linear_torus_thm}, we have some $\alpha,C>0$ such that if $f_1\unifin\Hom(G_n,B,\mu)$ for the one-point BC $(B,\mu)$ then
\begin{equation}\label{f_1_range}
\P(\range(f_1)\le n^\alpha)\le Cn^{-\alpha}\qquad\text{for all $n$.}
\end{equation}
We now let $G_{n,m}:=G_n\times \Z_2^m$ for $m\ge 0$ (so that $G_{n,0}=G_n$) and observe that for some large $m_0$, fixed and independent of $n$, we have that $G_{n,m}$ is non-linear (see \eqref{non_linearity_cond}) for all $m\ge m_0$. Thus, fixing a sufficiently large $m\ge m_0$, still independent of $n$, we may apply Corollary~\ref{range_sharpness_cor} to $f_2\unifin\Hom(G_{n,m},B,\mu)$ with the one-point BC $(B,\mu)$ and obtain
\begin{equation}\label{f_2_range}
\P(c_m \sqrt{\log n} \le \range(f_2) \le C_m \sqrt{\log n})\ge 1-\frac{1}{(2^m n)^3}\qquad\text{for all $n$.}
\end{equation}
Putting together \eqref{f_1_range} and \eqref{f_2_range} we obtain
\begin{equation}\label{poly_range_ratio}
\frac{\E\range(f_1)}{\E\range(f_2)}\ge c n^\beta\qquad\text{for all $n$}
\end{equation}
and some $\beta,c>0$. We call this transition a roughening transition.

We conclude this section by observing that the roughening transition just described
answers a question posed in \cite{BHM00} and refutes a conjecture of \cite{BYY07}. We first define the concept of rough-isometry (or quasi-isometry) of graphs. We say that two graphs $H$ and $H'$ are rough-isometric with constant $C>0$ if there exists $T:V[H]\to V[H']$ such that
\begin{equation}\label{rough_isometry_def}
\frac{1}{C}d_H(v,w)-C\le d_{H'}(T(v),T(w))\le Cd_H(v,w)+C
\end{equation}
for every two vertices $v,w\in V[H]$ and $d_{H'}(v,T(V[H]))\le C$ for every $v\in V[H']$. It was asked in \cite{BHM00} whether there exists a pair of sequences $H_n$ and $H_n'$ of (finite, connected and bipartite) graphs such that $H_n$ is rough-isometric to $H_n'$ with some constant $C>0$, independent of $n$, and $\frac{\E(\range(f_1))}{\E(\range(f_2))}\to\infty$ as $n\to\infty$, where $f_1\unifin\Hom(H_n,B,\mu)$ and $f_2\unifin\Hom(H_n',B,\mu)$ for the one-point BC. Noting that for $G_{n,m}$ defined above, $G_n$ is rough-isometric to $G_{n,m}$ with some constant $C_m$, we may fix an $m$ for which \eqref{poly_range_ratio} holds and obtain an affirmative answer to the question of \cite{BHM00} with a polynomial (in the size of the graphs) rate of convergence to infinity.

Lastly, in \cite{BYY07} it was conjectured that for any sequence of
(finite, connected and bipartite) graphs $H_n$ having maximal degree
$C$ (independent of $n$) and $|V[H_n]|\to\infty$, we have
$\frac{\E(\range(f_1))}{\E(\range(f_2))}=\Theta(1)$ where now
$f_1\unifin\Hom(H_n,B,\mu)$ and $f_2\unifin\Lip(H_n,B,\mu)$, both
with a one-point BC, where the $\Theta(1)$ may depend on $C$. We
note that \eqref{poly_range_ratio} implies that for some $m_1$
(independent of $n$), if we take $f_1\unifin\Hom(G_{n,m_1},B,\mu)$
and $g_1\unifin\Hom(G_{n,m_1+1},B,\mu)$, both with a one-point BC,
we have
\begin{equation}\label{poly_range_ratio_m_1}
\frac{\E\range(f_1)}{\E\range(g_1)}\ge c n^\gamma\qquad\text{for
infinitely many $n$}
\end{equation}
and some $\gamma,c>0$. Here, we need to restrict to infinitely many
$n$ since \eqref{poly_range_ratio} does not guarantee that the
change of behavior between the ranges of $G_{n, k}$ and $G_{n, k+1}$
occurs at the same $k$ for all $n$, only that such a $k$ exists and
is at most some $m$ which is independent of $n$. The Yadin bijection
implies (see Section~\ref{Lipschitz_functions_section} and
Corollary~\ref{Lip_one_point_cor} below) that if we define
$f_2\unifin\Lip(G_{n,m_1},B,\mu)$, with a one-point BC, then
$\E\range(f_2)=\E\range(g_1)-1$. Thus \eqref{poly_range_ratio_m_1}
shows that a subsequence of $H_n:=G_{n,m_1}$ refutes the conjecture,
giving a polynomially large (in $|V[H_n]|$) ratio between the
expected ranges. We remark that it may still be true that this ratio
of expected ranges is uniformly bounded \emph{below} for every
sequence of graphs $H_n$, as in the conjecture.

\subsubsection{Relation to the 3-coloring and square ice models}\label{3_coloring_ice_section}
For a graph $G$, $\emptyset\neq B\subseteq V[G]$ and $\nu:B\to\{0,1,2\}$, let $\col(G,B,\nu)$ be the set of all proper 3-colorings (with colors $0,1,2$) taking the values $\nu$ on $B$. Suppose now that $f\in\Hom(G,B,\mu)$ for some BC $(B,\mu)$. We note trivially that
\begin{equation}\label{mod3_map}
f\mapsto f\bmod 3
\end{equation}
sends $\Hom(G,B,\mu)$ into $\col(G,B,\mu\bmod 3)$. The situation becomes more interesting when $G$ is a box in $\Z^d$ (with non-periodic boundary). I.e., letting $P_n$ be the path graph on $n$ vertices, $G=P_{n_1}\times\cdots\times P_{n_d}$ for some $(n_i)\subseteq\N$. In this case, one may check that the above mapping is in fact a bijection between $\Hom(G,B,\mu)$ and $\col(G,B,\mu\bmod 3)$ for the one-point BC $(B,\mu)$ (In \cite{G03} this observation, for $G=\Z_2^d$, is attributed to Randall, but it may well go back farther). From this fact, it follows directly that for general BC $(B,\mu)$, $\Hom(G,B,\mu)$ is in bijection with $\col(G,B,\mu\bmod 3)$ by \eqref{mod3_map} if and only if
\begin{equation}\label{3col_graph_condition}
\begin{split}
&\text{For any $\mu':B\to\Z$ satisfying $\mu-\mu'\equiv 0\bmod3$ we either}\\
&\text{have $\mu-\mu'$ constant or $\Hom(G,B,\mu')=\emptyset$.}
\end{split}
\end{equation}

In our theorems, however, the graph $G$ is always a torus (that is, with periodic boundary) with even side-lengths. If it were the case that $\Hom(G,B,\mu)$ was in bijection with $\col(G,B,\mu)$ via \eqref{mod3_map} then we could apply our theorems to obtain information about a uniformly sampled coloring in $\col(G,B,\mu)$. However, even in very simple examples this may fail. Indeed, taking $G=\Z_6$ with the one-point BC $(B,\mu)$, the coloring $(0,1,2,0,1,2)$ does not correspond to any function in $\Hom(G,B,\mu)$ via \eqref{mod3_map}. We do not attempt here to find conditions under which \eqref{mod3_map} is a bijection and instead give just one example. Letting $G'$ be the box in $\Z^d$ with the same dimensions as $G$, we note that for the the zero BC $(B,\mu)$ (defined in \eqref{zero_BC_def}) we have that $\Hom(G,B,\mu)=\Hom(G',B,\mu)$, $\col(G,B,\mu)=\col(G',B,\mu)$ and condition \eqref{3col_graph_condition} holds for $\Hom(G',B,\mu)$. Thus, the map \eqref{mod3_map} is a bijection of $\Hom(G,B,\mu)$ and $\col(G,B,\mu)$ for the zero BC $(B,\mu)$. As one application of this fact, we deduce from Corollary \ref{even_lattice_zero_cor} that under some conditions, a uniformly chosen 3-coloring takes the same color on most of the even sub-lattice. The following theorem makes this statement precise.
\begin{theorem}\label{coloring_thm}
There exist $d_0\in\N$, $c>0$ such that for all $d\ge d_0$ and non-linear tori $G$, if $g$ is a uniformly sampled coloring from $\col(G,B,\mu)$ with the zero BC (defined in \eqref{zero_BC_def}) then
\begin{equation*}
\frac{\E|\{v\in V^\even\ \big|\ g(v)\neq 0\}|}{|V^\even|}\le \exp\left(-\frac{cd}{\log^2 d}\right).
\end{equation*}
\end{theorem}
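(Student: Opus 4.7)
The proof proposal is to leverage two ingredients already assembled in the excerpt: the bijection between $\Hom(G,B,\mu)$ and $\col(G,B,\mu)$ under the zero BC, and the full-projection version of Corollary~\ref{even_lattice_zero_cor}. The underlying idea is that the zero BC has been engineered precisely so that the reduction mod 3 map from homomorphism height functions to 3-colorings is a bijection; once this is in hand, a statement about $g$ taking the value $0$ on $V^\even$ reduces to a statement about the lifted homomorphism $f$ taking the value $0$ on $V^\even$, which is already controlled.

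Concretely, the first step is to invoke the discussion immediately preceding the theorem: for the zero BC $(B,\mu)$ on the torus $G$, one has $\Hom(G,B,\mu)=\Hom(G',B,\mu)$ and $\col(G,B,\mu)=\col(G',B,\mu)$, where $G'$ is the box in $\Z^d$ with the same side lengths as $G$, and condition \eqref{3col_graph_condition} holds on $G'$. Consequently, the map $f\mapsto f\bmod 3$ is a bijection from $\Hom(G,B,\mu)$ onto $\col(G,B,\mu)$. In particular, if $f\unifin\Hom(G,B,\mu)$, then $g:=f\bmod 3$ is distributed as a uniform sample from $\col(G,B,\mu)$, so we may couple $f$ and $g$ in this way.

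The second step is to observe that the zero BC has full projection: by \eqref{zero_BC_def}, $B$ contains every vertex of $V^\even$ with some coordinate in $\{0,n_i-1\}$, so every line parallel to (say) the $i_0=1$ axis must intersect $B$ at its endpoints. Hence Corollary~\ref{even_lattice_zero_cor} applies and yields a constant $c>0$ such that
\begin{equation*}
\frac{\E|\{v\in V^\even\ |\ f(v)\neq 0\}|}{|V^\even|}\le \exp\left(-\frac{cd}{\log^2 d}\right).
\end{equation*}

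The final step is trivial: in the coupling, $g(v)=f(v)\bmod 3$, so $f(v)=0$ forces $g(v)=0$, which gives the pointwise inclusion $\{v\in V^\even:g(v)\neq 0\}\subseteq\{v\in V^\even:f(v)\neq 0\}$. Taking expectations and dividing by $|V^\even|$ transfers the bound from $f$ to $g$, completing the proof. There is no genuine obstacle here; the only substantive work has already been done in Corollary~\ref{even_lattice_zero_cor} and the bijection lemma, and the present theorem is a clean translation through the mod 3 map.
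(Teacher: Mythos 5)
Your proposal is correct and follows exactly the paper's proof: invoke the mod-3 bijection between $\Hom(G,B,\mu)$ and $\col(G,B,\mu)$ for the zero BC, observe that the zero BC has full projection so Corollary~\ref{even_lattice_zero_cor} applies, and transfer the bound through the coupling $g=f\bmod 3$. The paper states this more tersely but the reasoning is identical.
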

We remark that the above theorem is meaningless for a torus for which one of the side-lengths is 2, since then the zero BC already assigns the value 0 to all vertices in $V^\even$ of such a torus. However, one can check simply that one may modify the zero BC to exclude those dimensions for which the side-length is $2$ and still deduce from the above discussion that $\Hom(G,B,\mu)$ and $\col(G,B,\mu)$ are in bijection and thus the above theorem holds. Explicitly, this modified BC will be $(B,\mu)$ with $B:=\{(x_1,\ldots, x_d)\in V^\even\ |\ \exists i\text{ s.t. $n_i\neq 2$ and $x_i\in\{0,n_i-1\}$}\}$ and $\mu\equiv 0$.

We also discuss briefly the square ice model. Square ice, also called the 6-vertex model, is a model defined on a 2-dimensional torus (or on a square in $\Z^2$ with some boundary conditions). A configuration of square ice is a choice of orientation for each edge satisfying that every vertex has exactly 2 incoming edges and 2 outgoing edges (so that each vertex is in one of 6 states). One then assigns weights to each of the 6 states and samples a configuration from a Gibbs measure with the assigned weights, see e.g. \cite{B82} for details. In particular, if all the weights are equal, one samples a configuration uniformly at random. We call this case \emph{uniform square ice}. It is well known that square ice configurations are in bijection with proper 3-colorings of the underlying torus (where in the bijection, one colors the dual torus). Hence, for certain boundary conditions, they correspond to homomorphism height functions by the bijection described above. Unfortunately, our work does not apply to the most interesting case of the $n\times n$ torus and hence does not shed further light on uniform square ice on it.

\subsubsection{Thermodynamic Limit}\label{thermodynamic_limit_sec}

Consider $G_n:=\Z_n^d$ (for even $n$), with the zero BC $(B_n,\mu_n)$ (see \eqref{zero_BC_def}) and let $f\unifin\Hom(G_n,B_n,\mu_n)$. We think of $G_n$ as embedded in $\Z^d$ as $[-n/2,n/2-1]^d$ (with the zero BC on the boundary of this box) and say that the distribution of $f$ converges weakly as $n\to\infty$ if for every finite $S\subseteq\Z^d$, the distribution of $f$ restricted to $S$ converges. In this case, we call the limiting measure the \emph{thermodynamic limit} of the homomorphism model with zero BC in dimension $d$. We believe, but do not prove, that for sufficiently high dimension, the homomorphism model has a thermodynamic limit with zero BC. We next outline a strategy which can possibly be used to prove this claim. Making this strategy rigorous is left for future research. Fix a dimension $d$ large enough for the following arguments and a finite set $S\subseteq \Z^d$. Consider $f\unifin\Hom(G_{n_1},B_{n_1},\mu_{n_1})$ and independently $g\unifin\Hom(G_{n_2},B_{n_2},\mu_{n_2})$ for some $n_1\ge n_2$ with $n_2$ large enough so that $S\subseteq V[G_{n_2}]$. Let $Z_f:=\{v\in V^\even\ |\ f(v)=0\}$ and $Z_g:=\{v\in V^\even\ |\ g(v)=0\}$. Let also $Z:=Z_f\cap Z_g\subseteq V[G_{n_2}]$. Finally let $\Omega$ be the event that every path from $S$ to the boundary of the cube $[-n_2/2,n_2/2-1]^d$ intersects $Z$. We observe that conditioned on $\Omega$, the distribution of $f$ restricted to $S$ coincides with the distribution of $g$ restricted to $S$ (see Lemma~\ref{level_set_cond_lemma} for a similar statement). Hence, the total variation distance of the distribution of $f$ restricted to $S$ from the distribution of $g$ restricted to $S$ is at most $\P(\Omega^c)$. Thus it will be sufficient to show that as $n_1,n_2\to\infty$, $\P(\Omega)\to 1$. This can be seen as a percolation question, in which $Z$ is the set of closed sites (explicitly, all sites in $V^\odd$ are open and a site in $V^\even$ is open if and only if $f(v)\neq 0$ or $g(v)\neq 0$). In this terminology, what we need to show is that the probability that the set $S$ is connected to distance $n$ (taking $n_1,n_2$ much larger than $n$) by a path of open sites decays to $0$ with $n$. The reason for this is heuristically clear, Theorem~\ref{height_at_point_thm} shows us that for $v\in V^\even$ we have $\P(v\text{ is open})\le \exp(-cd/\log^2 d)$ for some $c>0$ (since the zero BC has full projection) whereas the critical probability for independent percolation on $\Z^d$ is only polynomially small in $d$. The main difficulty in completing this argument is to show that this percolation model is indeed subcritical although there are dependencies between the different sites.

We now turn to the case of $G_n$ with the one-point BC $(B,\mu)$, embedded in $\Z^d$ as before with $B=\{\vec{0}\}$ (where $\vec{0}$ is the origin of $\Z^d$). We believe, but do not prove, that in sufficiently high dimension the homomorphism model also has a thermodynamic limit with the one-point BC. We expect this thermodynamic limit to have the following form: There is a distribution $\L$ on the integers, symmetric around zero with rapidly decaying tails ($\L$ is the ``average height'' of the limiting distribution) such that in order to obtain a sample from the thermodynamic limit with the one-point BC, one samples an integer $h$ from $\L$ and a height function $f$ from the thermodynamic limit with zero BC conditioned to have $f(\vec{0})=-h$, and then returns $f+h$ as the sample from the thermodynamic limit with the one-point BC.

\subsection{Lipschitz Height Functions}\label{Lipschitz_functions_section}
In this section we show how to extend the results described in the previous section to Lipschitz height functions. The possibility and ease of this extension are a direct consequence of a bijection discovered by Ariel Yadin \cite{Y09}. We start by describing this bijection.

Let $G$ be a finite, connected and bipartite graph. For $\emptyset\neq B\subseteq V[G]$ and $\mu:B\to\Z$, we recall that $(B,\mu)$ is a \emph{Lipschitz legal boundary condition} if $\Lip(G,B,\mu)\neq\emptyset$. We let $G_2:=G\times\Z_2$. We note that $G_2$ is also bipartite and fix on it a bipartition $(V^\even_2, V^\odd_2)$. We think of $G_2$ as two copies of the graph $G$ with edges between the two copies of each vertex and denote the two vertices in $G_2$ corresponding to the vertex $v\in G$ by $(v,0)$ and $(v,1)$. The labeling is chosen so that $(v,0)\in V^\even_2$ and $(v,1)\in V^\odd_2$. Note that if $v,w\in V[G]$ and $v\adj{G} w$ then $(v,i)\adj{G_2}(w,1-i)$ for $i\in\{0,1\}$. We remind that for $\emptyset\neq B'_2\subseteq V[G_2]$ and $\mu'_2:B'_2\to\Z$, the pair $(B'_2,\mu'_2)$ is called a (homomorphism) legal boundary condition if $\Hom(G_2,B'_2,\mu'_2)\neq\emptyset$ and $\mu'_2$ takes even values on $V^\even_2$ and odd values on $V^\odd_2$. Finally, fixing a boundary condition $(B,\mu)$ on $G$, we set $B_2:=\{(v,i)\ |\ v\in B, i\in\{0,1\}\}$ and define $\mu_2:B_2\to\Z$ by
\begin{equation*}
\mu_2(v,i):=\begin{cases} \mu(v)& i=\mu(v)\bmod 2\\ \mu(v)-1&i\neq \mu(v)\bmod 2\end{cases}.
\end{equation*}
\begin{theorem}(Yadin Bijection \cite{Y09})\label{Yadin_bijection_thm}
\begin{enumerate}
\item $(B,\mu)$ is a Lipschitz legal boundary condition if and only if $(B_2,\mu_2)$ is a homomorphism legal boundary condition.
\item If $(B,\mu)$ is a Lipschitz legal boundary condition then the mapping $T:\Hom(G_2,B_2,\mu_2)\to\Lip(G,B,\mu)$ defined by
\begin{equation}\label{Yadin_bijection_def}
T(f)(v):=\max(f((v,0)), f((v,1)))
\end{equation}
is a bijection. Furthermore, in this case
\begin{equation}\label{Yadin_bijection_range_prop}
\range(T(f))=\range(f)-1
\end{equation}
for all $f\in\Hom(G_2,B_2,\mu_2)$.
\end{enumerate}
\end{theorem}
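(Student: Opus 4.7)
The plan is to construct an explicit two-sided inverse $S : \Lip(G,B,\mu) \to \Hom(G_2,B_2,\mu_2)$ for $T$; once $S$ is in hand, parts 1 and 2 are immediate and part 3 reduces to a short connectedness argument. Before constructing $S$, I would verify the parity requirement in the definition of a legal homomorphism BC: a direct case split on the parity of $\mu(v)$ shows that $\mu_2(v,0)$ is always even and $\mu_2(v,1)$ always odd, matching the convention $V^\even_2 \ni (v,0)$, $V^\odd_2 \ni (v,1)$.

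For $g \in \Lip(G,B,\mu)$, define
\[
S(g)(v,i) := \begin{cases} g(v) & \text{if } i \equiv g(v) \pmod{2},\\ g(v)-1 & \text{otherwise.}\end{cases}
\]
The ``vertical'' edges $(v,0)\adj{G_2}(v,1)$ are fine since the two assigned values always differ by exactly $1$. For a ``horizontal'' edge $(v,i)\adj{G_2}(w,1-i)$ with $v \adj{G} w$, I would split on whether $g(v)$ and $g(w)$ have equal parities: if they do, the Lipschitz bound $|g(v)-g(w)|\le 1$ forces $g(v)=g(w)$, and a direct check of the four combinations of cases gives $|S(g)(v,i)-S(g)(w,1-i)|=1$; if they differ, then $|g(v)-g(w)|=1$ and again the difference comes out to $1$. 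Combined with the obvious boundary check ($S(g)(v,i) = \mu_2(v,i)$ for $v \in B$ by definition of $\mu_2$), this shows $S$ lands in $\Hom(G_2,B_2,\mu_2)$.

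The identity $T \circ S = \mathrm{id}$ is immediate because one of $S(g)(v,0), S(g)(v,1)$ equals $g(v)$ and the other $g(v)-1$. Conversely, for $f \in \Hom(G_2,B_2,\mu_2)$, the bipartition forces $f(v,i)$ to have parity $i$, so $f(v,i) = T(f)(v)$ exactly when $i \equiv T(f)(v) \pmod{2}$, giving $S \circ T = \mathrm{id}$. Checking $T(f) \in \Lip(G,B,\mu)$ is essentially the horizontal case analysis from the previous paragraph run in reverse (using $|f(v,0)-f(v,1)|=1$ and adjacency across copies), together with $\max(\mu_2(v,0),\mu_2(v,1)) = \mu(v)$ for the boundary. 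This completes part 2, and part 1 is an immediate corollary.

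For the range identity, the image $f(V[G_2])$ of any homomorphism $f$ on the connected graph $G_2$ is an interval $\{a, a+1, \ldots, b\}$ of $\Z$. If $f(v,i) = a$, its vertical partner must equal $a+1$, so $T(f)(v) = a+1$; hence $\range(T(f)) \subseteq \{a+1,\ldots,b\}$. For the reverse inclusion, fix $k \in \{a+1,\ldots,b\}$ and take any path in $G_2$ from an $f$-preimage of $k-1$ to one of $k$; somewhere along this path an edge $u_1u_2$ must carry the values $\{k-1,k\}$. If that edge is vertical we are done immediately. If it is horizontal, say between $(v_1,i)$ and $(v_2,1-i)$ with $v_1 \adj{G} v_2$, then inspection of the two vertical partners shows that the only alternative to ``$T(f)(v_1) = k$ or $T(f)(v_2) = k$'' is $f(v_1,1-i)=k+1$ and $f(v_2,i)=k-2$; but these two vertices are $G_2$-adjacent, contradicting the homomorphism property. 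Hence $k \in \range(T(f))$, so $\range(T(f)) = b-a = \range(f)-1$.

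The only real obstacle is the bookkeeping of parities and sign choices in the horizontal-edge case analysis (and its contrapositive used for $T(f)\in\Lip$ and for the range contradiction); each step is conceptually simple but needs to be done carefully to cover every combination of adjacencies in $G_2 = G \times \Z_2$.
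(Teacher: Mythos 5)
Your proposal is correct and takes essentially the same route as the paper: construct the explicit inverse $S$ given by $S(g)(v,i)=g(v)$ if $i\equiv g(v)\pmod 2$ and $g(v)-1$ otherwise, verify $S$ and $T$ are mutually inverse, and read off the range shift from the max and min of $f$ versus $T(f)$. The only cosmetic difference is that your path argument for surjectivity of the range is slightly longer than needed — once you know $T(f)$ is Lipschitz on the connected graph $G$ its image is automatically an interval, so hitting the endpoints $a+1$ and $b$ already suffices.
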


We note that there is no boundary condition $(B,\mu)$ on $G$ (with $B\neq\emptyset)$ for which the corresponding $(B_2,\mu_2)$ has $\mu_2(b)=0$ for all $b\in B_2$. To remedy this, we generalize slightly the definition of $\Lip(G,B,\mu)$. Given $\emptyset\neq B\subseteq V[G]$ and a set $\Psi$ of functions $\mu:B\to\Z$ we let $\Lip(G,B,\Psi):=\cup_{\mu\in\Psi} \Lip(G,B,\mu)$. We say that $\Psi$ is \emph{zero-one} if $\Psi$ is the set of all functions of the form $\mu:B\to\{0,1\}$ and we say that $(B,\Psi)$ is a \emph{Lipschitz legal BC} if $\Lip(G,B,\Psi)\neq\emptyset$. As usual, we write $g\unifin\Lip(G,B,\Psi)$ when $g$ is sampled uniformly from $\Lip(G,B,\Psi)$. We then obtain the following corollaries from the Yadin bijection.
\begin{corollary}\label{Lip_zero_one_cor}
For every Lipschitz legal BC $(B,\Psi)$ with zero-one $\Psi$, the Yadin bijection $T$ defined in \eqref{Yadin_bijection_def} maps $\Hom(G_2,B_2',\mu_2')$, where $B_2':=\{(v,0)\ |\ v\in B\}$ and $\mu_2'$ is zero, bijectively to $\Lip(G,B,\Psi)$, with the relation \eqref{Yadin_bijection_range_prop} holding for all $f\in\Hom(G_2,B_2',\mu_2')$.
\end{corollary}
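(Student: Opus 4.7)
The plan is to reduce the statement to a bookkeeping exercise on top of Theorem~\ref{Yadin_bijection_thm} by observing that the boundary condition $\mu_2$ associated with any $\mu\in\Psi$ is constant on the ``even copy'' $\{(v,0)\mid v\in B\}$.

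\smallskip
\noindent\textbf{Step 1 (Key computation).} First I would unpack the definition of $\mu_2$ on the two possible values of $\mu(v)\in\{0,1\}$. If $\mu(v)=0$, then $\mu(v)\bmod 2=0$, so $\mu_2(v,0)=\mu(v)=0$ and $\mu_2(v,1)=\mu(v)-1=-1$. If $\mu(v)=1$, then $\mu(v)\bmod 2=1$, so $\mu_2(v,0)=\mu(v)-1=0$ and $\mu_2(v,1)=\mu(v)=1$. In both cases $\mu_2(v,0)=0$, and $\mu_2(v,1)\in\{-1,1\}$ records the value of $\mu(v)$.

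\smallskip
\noindent\textbf{Step 2 (Disjoint union on the homomorphism side).} Any $f\in\Hom(G_2,B_2',\mu_2')$ satisfies $f(v,0)=0$ for all $v\in B$. Since $(v,0)\adj{G_2}(v,1)$ and $f$ is a graph homomorphism to $\Z$, necessarily $f(v,1)\in\{-1,1\}$. Defining $\mu_f:B\to\{0,1\}$ by $\mu_f(v):=0$ when $f(v,1)=-1$ and $\mu_f(v):=1$ when $f(v,1)=1$, Step~1 shows that $f$ lies in $\Hom(G_2,B_2,(\mu_f)_2)$ where $(B_2,(\mu_f)_2)$ is the homomorphism boundary condition associated with $\mu_f\in\Psi$. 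Conversely each $\Hom(G_2,B_2,\mu_2)$ for $\mu\in\Psi$ is contained in $\Hom(G_2,B_2',\mu_2')$. Hence
\begin{equation*}
\Hom(G_2,B_2',\mu_2') \;=\; \bigsqcup_{\mu\in\Psi}\Hom(G_2,B_2,\mu_2),
\end{equation*}
the union being disjoint because $\mu_f$ is determined by $f|_{B_2}$.

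\smallskip
\noindent\textbf{Step 3 (Disjoint union on the Lipschitz side and conclusion).} On the Lipschitz side, every $g\in\Lip(G,B,\Psi)$ has a unique restriction $g|_B\in\Psi$, so $\Lip(G,B,\Psi)=\bigsqcup_{\mu\in\Psi}\Lip(G,B,\mu)$. Legality of $(B,\Psi)$ gives at least one $\mu\in\Psi$ with $\Lip(G,B,\mu)\neq\emptyset$, and by Theorem~\ref{Yadin_bijection_thm}(1) the corresponding $(B_2,\mu_2)$ is a legal homomorphism boundary condition, so $\Hom(G_2,B_2',\mu_2')\neq\emptyset$. Applying Theorem~\ref{Yadin_bijection_thm}(2) piece by piece, the map $T$ of \eqref{Yadin_bijection_def} restricts to a bijection $\Hom(G_2,B_2,\mu_2)\to\Lip(G,B,\mu)$ for each $\mu\in\Psi$, and assembling these bijections across the disjoint unions of Steps~2 and~3 yields a bijection $T:\Hom(G_2,B_2',\mu_2')\to\Lip(G,B,\Psi)$. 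The range identity \eqref{Yadin_bijection_range_prop} carries over to each $f\in\Hom(G_2,B_2',\mu_2')$ since it holds on each piece $\Hom(G_2,B_2,\mu_2)$.

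\smallskip
There is essentially no obstacle here: the only substantive point is the case analysis in Step~1, which shows that the zero-one nature of $\Psi$ is precisely what makes the boundary values of $\mu_2$ vanish on the even copy and thereby lets a single boundary condition $\mu_2'\equiv 0$ on $B_2'$ ``absorb'' the entire family $\Psi$ via the Yadin bijection.
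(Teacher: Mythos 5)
Your proposal is correct, but it takes a genuinely different route from the paper's. The paper proves the corollary by direct verification: it takes the explicit inverse transformation $S$ of \eqref{inverse_Yadin_bijection_from_proof} and simply checks the four identities $T(f)\in\Lip(G,B,\Psi)$, $S(T(f))=f$, $S(g)\in\Hom(G_2,B_2',\mu_2')$, $T(S(g))=g$, then records \eqref{Yadin_bijection_range_prop} as in \eqref{verifying_range_relation}. You instead partition both sides into $\Psi$-indexed pieces, $\Hom(G_2,B_2',\mu_2')=\bigsqcup_{\mu\in\Psi}\Hom(G_2,B_2,\mu_2)$ and $\Lip(G,B,\Psi)=\bigsqcup_{\mu\in\Psi}\Lip(G,B,\mu)$, and invoke Theorem~\ref{Yadin_bijection_thm} on each piece. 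Both proofs hinge on the same key observation (your Step~1): for zero-one $\mu$ the associated $\mu_2$ is identically zero on the even copy $\{(v,0):v\in B\}$ and records $\mu$ through $\mu_2(v,1)\in\{-1,1\}$. Your decomposition approach makes the structure more transparent and explains \emph{why} a single boundary condition $\mu_2'\equiv 0$ on $B_2'$ can absorb all of $\Psi$; the paper's is shorter and does not require defining the decomposition. One small point you should make explicit: Theorem~\ref{Yadin_bijection_thm}(2) is only stated for Lipschitz-legal $(B,\mu)$, so for those $\mu\in\Psi$ with $\Lip(G,B,\mu)=\emptyset$ you need Theorem~\ref{Yadin_bijection_thm}(1) (together with the fact that the parity requirement on $\mu_2$ is automatic, as your Step~1 shows) to conclude $\Hom(G_2,B_2,\mu_2)=\emptyset$ as well, so that these pieces contribute nothing to either side of the disjoint union. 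With that clarification the argument is complete.
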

\begin{corollary}\label{Lip_one_point_cor}
Let $g\unifin\Lip(G,B,\mu)$ and $f\unifin\Hom(G_2,B'_2,\mu'_2)$ where $(B,\mu)$ and $(B'_2,\mu'_2)$ are one-point BCs on $G$ and $G_2$ respectively. Then $\range(g)\eqd\range(f)-1$.
\end{corollary}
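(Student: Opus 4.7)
The plan is to chain together three equalities in distribution: one from the Yadin bijection, one from a parity/negation symmetry, and one from translation of the basepoint.

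First I would unfold the definitions. Write $B=\{b_0\}$ with $\mu(b_0)=0$, so $B_2=\{(b_0,0),(b_0,1)\}$ with $(b_0,0)\in V^\even_2$ and $(b_0,1)\in V^\odd_2$. Since $\mu(b_0)\bmod 2=0$, the definition of $\mu_2$ gives $\mu_2(b_0,0)=0$ and $\mu_2(b_0,1)=-1$. By Theorem~\ref{Yadin_bijection_thm}, the map $T$ pushes the uniform measure on $\Hom(G_2,B_2,\mu_2)$ to the uniform measure on $\Lip(G,B,\mu)$, so if $\tilde f\unifin\Hom(G_2,B_2,\mu_2)$ then $g\eqd T(\tilde f)$ and $\range(g)=\range(\tilde f)-1$ by \eqref{Yadin_bijection_range_prop}. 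It therefore suffices to show $\range(\tilde f)\eqd\range(f)$.

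Next I would pass from the two-point BC on $G_2$ to a one-point BC at $(b_0,0)$. Let $f^*\unifin\Hom(G_2,\{(b_0,0)\},0)$. Because $(b_0,0)$ and $(b_0,1)$ are adjacent and $f^*((b_0,0))=0$, the value $f^*((b_0,1))$ lies in $\{-1,+1\}$. The negation $f\mapsto -f$ is an involution of $\Hom(G_2,\{(b_0,0)\},0)$ which preserves $\range$ and swaps the two subsets $\{f((b_0,1))=-1\}$ and $\{f((b_0,1))=+1\}$. Hence these subsets have equal cardinality, and the conditional law of $\range(f^*)$ given either value of $f^*((b_0,1))$ equals the unconditional law of $\range(f^*)$. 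The subset $\{f((b_0,1))=-1\}$ is exactly $\Hom(G_2,B_2,\mu_2)$, so conditioning on this event reproduces the uniform law on that set; thus $\range(\tilde f)\eqd\range(f^*)$.

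Finally I would move the basepoint from $(b_0,0)$ to the vertex $b'_0$ of the one-point BC $(B'_2,\mu'_2)$ on $G_2$. Both $(b_0,0)$ and $b'_0$ must lie in $V^\even_2$, since a one-point BC with value $0$ is legal only at an even vertex. For any $h\in\Hom(G_2,\{(b_0,0)\},0)$, the value $h(b'_0)$ is an even integer, so $h\mapsto h-h(b'_0)$ is a well-defined integer-valued translation; it is clearly a range-preserving bijection $\Hom(G_2,\{(b_0,0)\},0)\to\Hom(G_2,\{b'_0\},0)$ with inverse $h'\mapsto h'-h'((b_0,0))$. Consequently $\range(f^*)\eqd\range(f)$, and stringing the three distributional identities together yields $\range(g)=\range(\tilde f)-1\eqd\range(f)-1$.

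I do not expect a genuine obstacle here; the only points that require verification are that the subset of $\Hom(G_2,\{(b_0,0)\},0)$ obtained by fixing $f((b_0,1))=-1$ coincides with $\Hom(G_2,B_2,\mu_2)$, and that the one-point BC forces the basepoint into $V^\even_2$ so that the translation $h\mapsto h-h(b'_0)$ preserves integer-valuedness. Both are immediate from the bipartition conventions set up in Section~\ref{Lipschitz_functions_section}.
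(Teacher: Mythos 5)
Your proof is correct and follows essentially the same route as the paper: apply the Yadin bijection, then use the negation symmetry of $\Hom(G_2,\{(b_0,0)\},0)$ to identify the range law under the two-point BC with that under the one-point BC at $(b_0,0)$, then move the basepoint. The only cosmetic difference is in the last step, where you shift the function by the constant $h(b'_0)$ while the paper applies a translation of the torus carrying one basepoint to the other; both are elementary range-preserving bijections that rely on the two basepoints lying in the same bipartition class.
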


Using the bijection and its corollaries, we deduce analogs of the theorems of Section~\ref{hom_main_results_sec}. We start with an analogue of Theorem~\ref{height_at_point_thm} on the height of a uniform height function.
\begin{theorem}\label{Lipschitz_height_at_point_thm}
There exist $d_0\in\N$, $c>0$ such that for all $d\ge d_0$, non-linear tori $G$, Lipschitz legal boundary conditions $(B,\mu)$ with non-positive $\mu$ and $x\in V[G]$, if $g\unifin\Lip(G,B,\mu)$ then
\begin{equation*}
\P(g(x)\ge t)\le \exp\left(-\frac{c\vol(\lceil t/2\rceil-1)}{\min(t,d)\log^2 d}\right)\qquad\text{for all $t\ge 3$}.
\end{equation*}
Furthermore, if $t\ge 3$ satisfies $\vol(\lceil t/2\rceil-1)\le\frac{1}{6}n_d$ then
\begin{equation*}
\P(g(x)\ge t)\le \exp\left(-\frac{c\vol(\lceil t/2\rceil-1)}{\log^2 d}\right).
\end{equation*}
Finally, if $B$ has full projection then
\begin{equation*}
\P(g(x)\ge t)\le \exp\left(-\frac{c\vol(t-1)}{\log^2 d}\right)\qquad\text{for all $t\ge 2$}.
\end{equation*}
\end{theorem}

Corollary~\ref{even_lattice_zero_cor} and the bijection now imply that for special boundary conditions, the random Lipschitz function is highly concentrated, taking only two values on most of the torus. We demonstrate this for one specific BC (see \eqref{G_coordinates} for the coordinate system),
\begin{equation}\label{Lipschitz_zero_BC}
 B^\square:=\{(x_1,\ldots, x_d)\in V[G]\ \big|\ \exists i\text{ s.t. $x_i\in\{0,n_i-1\}$}\}.
\end{equation}

\begin{corollary}\label{two_values_Lip_cor}
There exist $d_0\in\N$, $c>0$ such that for all $d\ge d_0$, non-linear tori $G$, if $g\unifin\Lip(G,B^\square,\Psi)$ with zero-one $\Psi$ then
\begin{equation*}
\frac{\E|\{v\in V[G]\ |\ g(v)\notin \{0,1\}\}|}{|V[G]|}\le \exp\left(-\frac{c d}{\log^2 d}\right).
\end{equation*}
\end{corollary}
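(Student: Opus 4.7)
The plan is to deduce this corollary from its homomorphism analog, Corollary~\ref{even_lattice_zero_cor}, by transferring the problem to $G_2 := G \times \Z_2$ via the Yadin bijection of Corollary~\ref{Lip_zero_one_cor}. Concretely, realize $g \unifin \Lip(G, B^\square, \Psi)$ as $g = T(f)$ with $f \unifin \Hom(G_2, B_2', 0)$, where $B_2' := \{(v, 0) : v \in B^\square\}$ and $T(f)(v) = \max(f((v, 0)), f((v, 1)))$. Since $(v, 0) \in V^\even_2$ and $(v, 1) \in V^\odd_2$ are adjacent in $G_2$, $f((v, 0))$ is even, $f((v, 1))$ is odd, and they differ by one. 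A short case analysis then gives $g(v) \in \{0, 1\}$ if and only if $f((v, 0)) = 0$. Because $v \mapsto (v, 0)$ is a bijection from $V[G]$ onto $V^\even_2$,
\begin{equation*}
\frac{\E|\{v \in V[G] : g(v) \notin \{0, 1\}\}|}{|V[G]|} = \frac{\E|\{w \in V^\even_2 : f(w) \ne 0\}|}{|V^\even_2|},
\end{equation*}
and the task reduces to bounding the right-hand side by Corollary~\ref{even_lattice_zero_cor} applied on $G_2$.

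Two hypotheses of that corollary must then be verified: that $G_2$ is a non-linear torus of sufficiently high dimension, and that $B_2'$ has full projection in $G_2$. The first follows from a routine computation from \eqref{non_linearity_cond}: $G_2$ has dimension $d + 1$ and side lengths $(2, n_1, \ldots, n_d)$ in sorted order, and inserting an extra factor of $2$ into the right-hand product of \eqref{non_linearity_cond} makes the non-linearity condition for $G_2$ strictly weaker than that for $G$ once $d$ is large enough. The second hypothesis is the main subtlety. In the natural Cartesian coordinates on $G \times \Z_2$, the paper's convention $(v, 0) \in V^\even_2$ forces $(v, 0)$ to sit at $(x_1, \ldots, x_d, y)$ with $y \equiv x_1 + \cdots + x_d \pmod{2}$, so
\begin{equation*}
B_2' = \{(x_1, \ldots, x_d, y) : (x_1, \ldots, x_d) \in B^\square,\ y \equiv x_1 + \cdots + x_d \pmod{2}\}.
\end{equation*}
For any direction $i_0 \in \{1, \ldots, d\}$ and any line $\ell$ in that direction with the remaining coordinates $(a_j)_{j \ne i_0}$ and $y$ fixed, set $p := (y - \sum_{j \ne i_0} a_j) \bmod 2$. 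If no $a_j$ lies in $\{0, n_j - 1\}$, the endpoints $\ast \in \{0, n_{i_0} - 1\}$ both belong to $B^\square$ and, since $n_{i_0}$ is even, realize opposite parities of $\ast$, so one of them matches $p$; if some $a_j \in \{0, n_j - 1\}$, then every $\ast$ yields a point of $B^\square$, and both parities of $\ast$ are again available among $\{0, \ldots, n_{i_0} - 1\}$. Either way $\ell$ meets $B_2'$, proving full projection.

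With both hypotheses in hand, Corollary~\ref{even_lattice_zero_cor} applied on $G_2$ yields a bound of the form $\exp(-c'(d + 1)/\log^2(d + 1))$, which is at most $\exp(-cd/\log^2 d)$ for appropriate constants and all $d$ beyond a threshold $d_0$. The main obstacle is expected to be precisely the full-projection verification just sketched: a naive reading of the Yadin setup would identify $B_2'$ with $B^\square \times \{0\}$ in a single $\Z_2$-layer, in which case full projection fails in the $\Z_2$-direction and Corollary~\ref{even_lattice_zero_cor} cannot be applied; the resolution is that the bipartition-friendly labeling disperses $B_2'$ across both layers, and the evenness of each $n_{i_0}$ then guarantees that both parities of the varying coordinate are represented on the boundary.
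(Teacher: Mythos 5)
Your proof is correct and follows essentially the same route as the paper's: transfer $g$ to a uniform homomorphism $f$ on $G_2$ via Corollary~\ref{Lip_zero_one_cor}, reduce the count of $v$ with $g(v)\notin\{0,1\}$ to the count of $w\in V^\even_2$ with $f(w)\neq 0$, and apply Corollary~\ref{even_lattice_zero_cor}. You spell out the full-projection and non-linearity checks that the paper states without proof; in particular your observation that the bipartition-respecting Yadin labeling spreads $B_2'$ across both $\Z_2$-layers in the natural product coordinates, together with the evenness of the $n_i$, is exactly the reason full projection holds.
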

This phenomena can be observed in Figure~\ref{2d_3d_100_fig} where a slice of a sample of a Lipschitz function on $\Z_{100}^3$ with these boundary conditions (shifted by $\frac{1}{2}$ for symmetry) is depicted.

We continue with a theorem about the range of a random Lipschitz function.
\begin{theorem}\label{Lipschitz_range_thm}
There exist $d_0\in\N$, $C>0$ such that for all $d\ge d_0$ and non-linear tori $G$, if we set
\begin{equation*}
k:=\min\left\{ m\in\N\ \big|\ \vol(m)\ge C\log^2 d\log|V[G]|\right\}
\end{equation*}
and let $g\unifin\Lip(G,B,\Psi)$ for Lipschitz legal BC $(B,\Psi)$ with zero-one $\Psi$, or let $g\unifin\Lip(G,B,\mu)$ for a one-point BC $(B,\mu)$, then
\begin{equation*}
\P(\range(g)>2k)\le \frac{1}{|V[G]|^4}.
\end{equation*}
\end{theorem}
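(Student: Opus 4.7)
The plan is to reduce the Lipschitz statement to the already-established homomorphism range bound, Theorem~\ref{range_thm}, applied to the doubled graph $G_2:=G\times\Z_2$. The Yadin bijection is the workhorse: for zero-one $\Psi$, Corollary~\ref{Lip_zero_one_cor} identifies $g\unifin\Lip(G,B,\Psi)$ with $T(f)$ for $f\unifin\Hom(G_2,B_2',\mu_2')$ where $\mu_2'$ is zero, and Corollary~\ref{Lip_one_point_cor} gives $\range(g)\eqd\range(f)-1$ in the one-point case. In both cases $\range(g)=\range(f)-1$ in distribution, so it suffices to bound $\range(f)$.

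Next I would check that the hypotheses of Theorem~\ref{range_thm} are met on $G_2$, which has dimension $d+1$, side lengths $(2,n_1,\ldots,n_d)$, and $|V[G_2]|=2|V[G]|$. The boundary conditions produced by the Yadin construction are legal with zero $\mu$ in both cases. The non-linearity of $G_2$ follows from that of $G$: since the defining inequality \eqref{non_linearity_cond} for $G_2$ reads $n_d\le\exp\bigl(\tfrac{2}{(d+1)\log^3(d+1)}\prod_{i=1}^{d-1}n_i\bigr)$, and for $d$ large enough $\tfrac{2}{(d+1)\log^3(d+1)}\ge \tfrac{1}{d\log^3 d}$, this follows from the non-linearity of $G$. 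Taking $d_0$ large enough makes the dimension requirement of Theorem~\ref{range_thm} hold for $G_2$.

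Then I would compare the relevant range parameters. Let $V_{G_2}$ denote the ball-volume function of $G_2$, and let $k':=\min\{m\in\N:V_{G_2}(m)\ge C'\log^2(d+1)\log|V[G_2]|\}$ be the quantity appearing in the application of Theorem~\ref{range_thm} to $G_2$, where $C'$ is the constant from that theorem. Since $V_{G_2}(m)\ge V(m)$ and $\log|V[G_2]|=\log|V[G]|+\log 2$, while $\log^2(d+1)\le 2\log^2 d$ for $d$ large, we have
\begin{equation*}
V_{G_2}(k)\ge V(k)\ge C\log^2 d\log|V[G]|\ge C'\log^2(d+1)\log|V[G_2]|
\end{equation*}
provided we take the constant $C$ in the Lipschitz statement sufficiently larger than $C'$. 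This forces $k'\le k$. Theorem~\ref{range_thm} then yields $\range(f)\le 2k'+1\le 2k+1$ with probability at least $1-1/|V[G_2]|^4\ge 1-1/|V[G]|^4$, so $\range(g)=\range(f)-1\le 2k$ with the same probability.

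The argument is essentially bookkeeping; there is no new probabilistic content beyond what already went into Theorem~\ref{range_thm}. The only mildly delicate step is verifying that the parameter $k$ from the Lipschitz statement on $G$ dominates the parameter $k'$ from the homomorphism statement on $G_2$, which I handle by an appropriate choice of the absolute constant $C$. That comparison, together with preservation of non-linearity under $G\mapsto G\times\Z_2$ for large $d$, is where I would spend the most care; once those are in place the conclusion is immediate from the Yadin bijection and the identity $\range(T(f))=\range(f)-1$.
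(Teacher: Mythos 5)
Your argument is correct and follows essentially the same route as the paper: lift to $G_2=G\times\Z_2$ via the Yadin bijection (Corollaries~\ref{Lip_zero_one_cor} and \ref{Lip_one_point_cor}), apply Theorem~\ref{range_thm} to $G_2$, and compare the ball-volume thresholds for $G$ and $G_2$ using $V_G(m)\le V_{G_2}(m)$ together with $|V[G_2]|=2|V[G]|$ and $|V[G]|\ge 2^d$ to show that the $k$ of the Lipschitz statement dominates the $k_2$ of the homomorphism statement on $G_2$. Your tracking of the shift in dimension ($\log^2(d+1)$ versus $\log^2 d$) and the preservation of non-linearity under $G\mapsto G\times\Z_2$ for large $d$ is slightly more explicit than the paper's but lands in the same place.
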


Corollary~\ref{range_sharpness_cor} and the bijection show that our range bounds are sharp for one-point BCs.
\begin{corollary}\label{two_sided_Lipschitz_range_cor}
There exist $d_0\in\N$, $C_d,c_d>0$ such that for all $d\ge d_0$, non-linear tori $G$ and the one-point BC $(B,\mu)$, if $g\unifin\Lip(G,B,\mu)$ then
\begin{equation*}
\P(c_d r \le \range(g) \le C_d r)\ge 1-\frac{1}{|V[G]|^3},
\end{equation*}
where $r:=\min\left\{ m\in\N\ \big|\ \vol(m)\ge \log|V[G]|\right\}$.
\end{corollary}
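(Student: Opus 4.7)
The plan is to reduce to the already-established homomorphism case via the Yadin bijection and then do a short volume-comparison between $G$ and $G_2:=G\times\Z_2$. By Corollary~\ref{Lip_one_point_cor}, if $g\unifin\Lip(G,B,\mu)$ for a one-point BC $(B,\mu)$ and $f\unifin\Hom(G_2,B'_2,\mu'_2)$ for a one-point BC on $G_2$, then $\range(g)\eqd\range(f)-1$. Thus it suffices to prove two-sided bounds of the form $c_d r\le \range(f)\le C_d r$ with high probability, with $r$ defined by the ball volume function $V_G$ of $G$.

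First I would check that $G_2$ satisfies the hypotheses of Corollary~\ref{range_sharpness_cor} in dimension $d+1$. The torus $G_2$ has side lengths $n_1,\ldots,n_{d-1},n_d,2$; after reordering, its longest side is $n_d$ and the product of the remaining sides is $2\prod_{i=1}^{d-1}n_i$. Since $G$ is non-linear, $n_d\le\exp\bigl(\tfrac{1}{d\log^3 d}\prod_{i=1}^{d-1}n_i\bigr)$, and for $d$ large enough this is comfortably below the bound $\exp\bigl(\tfrac{2}{(d+1)\log^3(d+1)}\prod_{i=1}^{d-1}n_i\bigr)$ required for $G_2$ to be non-linear in dimension $d+1$. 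Hence Corollary~\ref{range_sharpness_cor} applies to $f$ and yields $\P(c_{d+1} r_{G_2}\le\range(f)\le C_{d+1} r_{G_2})\ge 1-|V[G_2]|^{-3}$, where $r_{G_2}:=\min\{m\in\N\mid V_{G_2}(m)\ge\log|V[G_2]|\}$.

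It remains to see that $r_{G_2}$ and $r:=r_G$ differ only by a multiplicative constant depending on $d$, so that the constants can be absorbed into $c_d,C_d$. The key observation is the identity $V_{G_2}(m)=V_G(m)+V_G(m-1)$ (with $V_G(-1):=0$), valid for all $m\ge 0$, which gives $V_G(m)\le V_{G_2}(m)\le 2V_G(m)$. Combined with $\log|V[G_2]|=\log|V[G]|+\log 2$, this immediately yields $r_{G_2}\le r_G$. In the other direction, in a high-dimensional torus the ball volume grows rapidly: since the sphere of radius $m$ contains at least $\binom{d}{1}\cdot 1$ new vertices for each unit increase in $m$ (up to the regime $m\le n_1/2$), one has $V_G(m+1)\ge (1+c/d)V_G(m)$ or a similar polynomial-in-$d$ gain, from which one deduces $r_G\le r_{G_2}+O_d(1)$, and hence $r_G$ and $r_{G_2}$ are comparable up to a factor depending only on $d$.

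Putting the pieces together, $c_{d+1}r_{G_2}\le\range(f)\le C_{d+1}r_{G_2}$ becomes $c_d r\le \range(f)\le C_d r$ after replacing the constants, and subtracting $1$ (absorbed again into the constants) gives the same bound for $\range(g)$. The probability bound transfers as $|V[G_2]|=2|V[G]|$, so $|V[G_2]|^{-3}\le |V[G]|^{-3}$. The main obstacle is just the bookkeeping in comparing $r_{G_2}$ with $r_G$; this is purely a volume estimate on non-linear tori of dimensions $d$ and $d+1$ and is where the dimension-dependent constants enter. No new random-height-function arguments are required beyond invoking Corollary~\ref{range_sharpness_cor} and the Yadin bijection.
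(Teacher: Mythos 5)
Your proposal follows essentially the same route as the paper's proof: apply Corollary~\ref{range_sharpness_cor} to $G_2=G\times\Z_2$ with a one-point BC, transfer to $\range(g)$ via Corollary~\ref{Lip_one_point_cor}, and then compare $r_{G_2}$ with $r_G$ using the ball-volume relations. Two small points where the paper is a bit more careful than your sketch: the ``$-1$'' in $\range(g)=\range(f)-1$ cannot simply be ``absorbed into the constants'' when $c_{d+1}r_{G_2}<1$ --- the paper handles this by invoking $\range(g)\ge 1$ a.s.\ to deduce $\range(g)\ge\frac{c}{2}r_2$; and for the reverse comparison $r_G\le C_d\, r_{G_2}$, the paper uses $V_{G_2}(m)\le 2V_G(m)$ (eq.~\eqref{volume_relation}) together with Proposition~\ref{linear_volume_growth} ($V(2r+1)\ge 2V(r)$ for $r\le(n_d-3)/4$), which gives the clean bound $r_G\le 2r_{G_2}+1$; your proposed multiplicative growth estimate $V_G(m+1)\ge(1+c/d)V_G(m)$ is not actually valid on all non-linear tori in the relevant range (e.g.\ on $\Z_n\times\Z_2^{d-1}$ volume growth becomes linear once $m$ exceeds $d$), so one should use the doubling estimate instead.
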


We also obtain an analogue of Theorem~\ref{finite_range_thm}.
\begin{theorem}\label{Lipschitz_finite_range_thm}
For any integer $k\ge 2$, there exist $d_0(k)$ and $c_k>0$ such that for all $d\ge d_0(k)$ and non-linear tori $G$ with $|V[G]|\le \exp\left(\frac{c_kd^{k}}{\log^2 d}\right)$, if we let $g\unifin\Lip(G,B,\Psi)$ for Lipschitz legal BC $(B,\Psi)$ with zero-one $\Psi$, or let $g\unifin\Lip(G,B,\mu)$ for a one-point BC $(B,\mu)$, then
\begin{equation*}
\P(\range(g)>2k)\le \exp\left(-\frac{c_kd^k}{\log^2 d}\right).
\end{equation*}
\end{theorem}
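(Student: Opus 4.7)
The plan is to reduce the theorem to the corresponding homomorphism result, Theorem~\ref{finite_range_thm}, via the Yadin bijection. Set $G_2 := G \times \Z_2$, a torus of dimension $d+1$ with side lengths $2, n_1, \ldots, n_d$. In the zero-one case, Corollary~\ref{Lip_zero_one_cor} furnishes a bijection $T \colon \Hom(G_2, B'_2, \mu'_2) \to \Lip(G, B, \Psi)$ with $\mu'_2 \equiv 0$, while in the one-point case, Corollary~\ref{Lip_one_point_cor} yields $\range(g) \eqd \range(f) - 1$ for $f \unifin \Hom(G_2, B'_2, \mu'_2)$ with a one-point BC on $G_2$. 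In both situations $f$ is sampled uniformly from $\Hom(G_2, B'_2, \mu'_2)$ for a BC with zero $\mu'_2$ (note that a one-point BC has $\mu$ identically zero on its singleton support), and
\[
\P(\range(g) > 2k) \;=\; \P(\range(f) > 2k+1).
\]

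Next I would apply Theorem~\ref{finite_range_thm} to $f$ on $G_2$ at ambient dimension $d' := d+1$ with the same $k$. This requires two hypothesis checks. First, the size bound $|V[G_2]| = 2|V[G]| \le \exp\bigl(c_k^{\mathrm{thm}} (d+1)^k / \log^2(d+1)\bigr)$ follows from the assumed bound $|V[G]| \le \exp(c_k d^k / \log^2 d)$ provided we choose the constant $c_k$ in our statement strictly smaller than the $c_k^{\mathrm{thm}}$ supplied by Theorem~\ref{finite_range_thm} and take $d_0(k)$ large enough to absorb the factor $2$. Second, $G_2$ must itself satisfy the non-linearity condition \eqref{non_linearity_cond} for dimension $d+1$, which reads
\[
n_d \;\le\; \exp\!\left(\frac{2 \prod_{i=1}^{d-1} n_i}{(d+1)\log^3(d+1)}\right).
\]
This follows from the non-linearity of $G$, namely $n_d \le \exp\bigl(\prod_{i=1}^{d-1} n_i / (d \log^3 d)\bigr)$, because $2 d \log^3 d \ge (d+1) \log^3(d+1)$ for all sufficiently large $d$.

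With the hypotheses verified, Theorem~\ref{finite_range_thm} applied to $f$ on $G_2$ yields
\[
\P(\range(f) > 2k+1) \;\le\; \exp\!\left(-\frac{c_k^{\mathrm{thm}} (d+1)^k}{\log^2(d+1)}\right) \;\le\; \exp\!\left(-\frac{c_k d^k}{\log^2 d}\right),
\]
after again absorbing a constant factor into $c_k$, which gives the desired bound on $\P(\range(g) > 2k)$.

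The argument performs no new combinatorial work; the entire content lies in the Yadin bijection plus the already-established homomorphism finite-range theorem. The only real obstacle, which is very mild, is bookkeeping: verifying that the non-linearity condition, the $|V[G]|$ size constraint, and the "zero $\mu$" hypothesis all survive the passage from $G$ to $G_2 = G \times \Z_2$, and adjusting $c_k$ and $d_0(k)$ to absorb the resulting loss of a constant factor and one extra dimension.
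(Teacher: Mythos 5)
Your proof is correct and follows exactly the same route as the paper: reduce to the homomorphism result (Theorem~\ref{finite_range_thm}) on $G_2 = G\times\Z_2$ via Corollaries~\ref{Lip_zero_one_cor} and \ref{Lip_one_point_cor}, then verify that the non-linearity condition, the $|V[G]|$ size bound, and the zero-$\mu$ hypothesis all transfer at the cost of a constant and one dimension. The paper's proof is the same one-line reduction; your write-up simply makes explicit the bookkeeping the paper leaves implicit.
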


Note that the range of $g$ is one less than the range of $f$ in the corresponding Theorem~\ref{finite_range_thm}. This follows from \eqref{Yadin_bijection_range_prop}.

Perhaps surprisingly, if we take in the above theorem $g\unifin\Lip(G,B,\mu)$ for Lipschitz legal BC $(B,\mu)$ with zero $\mu$ (instead of $(B,\Psi)$ with zero-one $\Psi$), then we do not expect the theorem to remain true in general. Indeed, if $G=\Z_2^d$ and $B=\{(x_1,\ldots, x_d)\ \big|\ \sum_{i=1}^d x_i = \lfloor \frac{d}{2}\rfloor\}$, say, then the boundary conditions divide the torus into two, roughly equal, connected components. Now if, as the theorem suggests, the typical random function will take 4 values on each of these components, then, by symmetry of the distribution of the function on each component under taking negations, there would be positive probability (bounded away from $0$ with $d$) that these 4 values would not be the same on both components, thus leading to the function taking at least 5 values overall (with probability bounded away from $0$ with $d$). In other words, we expect that for some boundary sets $B$, a random Lipschitz function can be more concentrated when its boundary values consist of zeros and ones than when they consist only of zeros.

Theorem~\ref{linear_torus_thm} concerning the behavior of the random height function on linear tori also has an analogue for Lipschitz functions.
\begin{theorem}\label{Lipschitz_linear_torus_thm}
For all $0<\lambda<\frac{1}{4\log 2}$, there exist $\alpha=\alpha(\lambda)>0$ and $C=C(\lambda)>0$ such that for all dimensions $d\ge 2$ and all $\lambda$-linear tori $G$, if $g\unifin\Lip(G,B,\mu)$ with the one-point BC $(B,\mu)$ then
\begin{equation*}
\P(\range(g)\le |V[G]|^\alpha)\le C|V[G]|^{-\alpha}.
\end{equation*}
\end{theorem}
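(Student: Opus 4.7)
The plan is to reduce to the homomorphism statement, Theorem~\ref{linear_torus_thm}, via the Yadin bijection. Write $G_2:=G\times\Z_2$. Since $G$ has even side lengths $n_d\ge\cdots\ge n_1\ge 2$, the graph $G_2$ is again a torus in the sense of \eqref{torus_side_lengths}, with one extra side of length $2$ prepended. If $(B,\mu)$ is a one-point BC on $G$, then by Corollary~\ref{Lip_one_point_cor}, for any one-point BC $(B_2',\mu_2')$ on $G_2$, a sample $f\unifin\Hom(G_2,B_2',\mu_2')$ satisfies $\range(g)\eqd\range(f)-1$, where $g\unifin\Lip(G,B,\mu)$. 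Thus the Lipschitz range question on $G$ is transferred to a homomorphism range question on $G_2$.

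Next I would verify that $G_2$ inherits a linearity property of the right form. The largest side length of $G_2$ is still $n_d$, while the product of all the other side lengths is $2\prod_{i=1}^{d-1} n_i$. Therefore, if $G$ is $\lambda$-linear in the sense of \eqref{linear_cond}, the inequality $n_d\ge\exp(\frac{1}{\lambda}\prod_{i=1}^{d-1} n_i)$ rewrites as $n_d\ge\exp(\frac{1}{2\lambda}\cdot 2\prod_{i=1}^{d-1} n_i)$, which says exactly that $G_2$ is $2\lambda$-linear. The hypothesis $\lambda<\frac{1}{4\log 2}$ is precisely what guarantees $2\lambda<\frac{1}{2\log 2}$, so Theorem~\ref{linear_torus_thm} applies to $G_2$ with parameter $2\lambda$. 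This also explains why the threshold $\frac{1}{4\log 2}$ appears in the Lipschitz statement rather than $\frac{1}{2\log 2}$: the single extra coordinate of length $2$ costs us exactly a factor of $2$ in the linearity exponent.

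Applying Theorem~\ref{linear_torus_thm} to $G_2$ yields constants $\alpha_0=\alpha(2\lambda)>0$ and $C_0=C(2\lambda)>0$ such that
\begin{equation*}
\P\bigl(\range(f)\le|V[G_2]|^{\alpha_0}\bigr)\le \frac{C_0}{|V[G_2]|^{\alpha_0}}.
\end{equation*}
Using $|V[G_2]|=2|V[G]|$ and the identity in distribution $\range(g)=\range(f)-1$, I would then choose $\alpha=\alpha_0/2$ (any $\alpha<\alpha_0$ would do) and observe that for $|V[G]|$ larger than a fixed threshold depending only on $\alpha_0$, the inequality $|V[G]|^{\alpha}+1\le(2|V[G]|)^{\alpha_0}=|V[G_2]|^{\alpha_0}$ holds. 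For such $G$,
\begin{equation*}
\P\bigl(\range(g)\le|V[G]|^{\alpha}\bigr)=\P\bigl(\range(f)\le|V[G]|^{\alpha}+1\bigr)\le\frac{C_0}{|V[G_2]|^{\alpha_0}}\le\frac{C_0}{|V[G]|^{\alpha}}.
\end{equation*}
Finally, for the finitely many small tori below the threshold, the bound is trivial after enlarging $C=C(\lambda)$, since probabilities are at most $1$ and $|V[G]|^{\alpha}$ is bounded.

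There is no real obstacle in this argument: it is entirely bookkeeping around Yadin's bijection plus a check of the linearity exponent. The only place one must be careful is in tracking the doubling of the total volume and the halving of the effective linearity parameter, which together dictate the precise constant $\frac{1}{4\log 2}$ in the statement.
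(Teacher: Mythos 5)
Your proposal is correct and follows essentially the same route as the paper's proof: reduce to Theorem~\ref{linear_torus_thm} on $G_2=G\times\Z_2$ via Corollary~\ref{Lip_one_point_cor}, noting that $\lambda$-linearity of $G$ with $\lambda<\frac{1}{4\log 2}$ gives $2\lambda$-linearity of $G_2$ with $2\lambda<\frac{1}{2\log 2}$. The paper states this reduction in one sentence and remarks on taking a ``possibly smaller $\alpha$''; you have simply spelled out the bookkeeping (the factor $|V[G_2]|=2|V[G]|$, the shift $\range(g)=\range(f)-1$, and absorbing small tori into $C$) that the paper leaves implicit.
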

Consequently, the roughening transition discussed in Section~\ref{roughening_transition_sec} occurs also for the Lipschitz height function model.

\subsection{Proof Sketches, Reader's Guide and Acknowledgments}\label{proof_sketch_readers_guide_ack_sec}

As explained in the introduction and previous sections, we first prove our theorems for the homomorphism model and then use the Yadin bijection (Theorem~\ref{Yadin_bijection_thm}) to transfer our results to the Lipschitz model. For simplicity, we will assume throughout this sketch (except in the section on linear tori) that $G=\Z_n^d$ (for even $n\ge 4$ and large $d$), but our proofs remain essentially unchanged for more general non-linear tori in high dimensions. The main ingredient in proving our results for the homomorphism model on non-linear tori is to prove the level set theorem (Theorem~\ref{main_thm}). We start by explaining some key ideas which go into the proof. We will then explain how these ideas are put together. Related ideas have appeared in the work of Galvin and Kahn \cite{GK04}.

{\bf Expanding Transformation:} Given a finite set $U$, a subset $\Omega\subseteq U$ and a transformation $T:\Omega\to \PP(U)$ (where $\PP(U)$ is the power set of $U$), we define two parameters
\begin{equation*}
\label{expanding_transform_param}
\begin{split}
&\Out(T):=\min_{f\in \Omega} |T(f)|,\\
&\In(T):=\max_{g\in U} |\{f\in\Omega\ \big|\ g\in T(f)\}|.
\end{split}
\end{equation*}
We call $\tau(T):=\frac{\Out(T)}{\In(T)}$ the \emph{expansion
factor} of $T$ and call $T$ an \emph{expanding transformation} if
$\tau>1$. It is not difficult to verify that the mere existence of
an expanding transformation $T$ implies that
$\frac{|\Omega|}{|U|}\le \frac{1}{\tau(T)}$. We will apply this idea
to the space $U=\Hom(G,B,\mu)$ (for some BC $(B,\mu)$) in order to
deduce that certain sets of homomorphism height functions
$\Omega\subseteq U$ have small probability.

{\bf Odd Cutsets:} Given $\emptyset\neq B\subseteq V[G]$ and $x\in V[G]$, a minimal edge cutset $\Gamma$ separating $x$ and $B$ is a set of edges of $G$ which separate $x$ and (every vertex of) $B$ and have the property that if any edge is removed from $\Gamma$ then they no longer separate $x$ and $B$. We denote the set of such cutsets by $\MCut(x,B)$. The \emph{interior (vertex) boundary} of such a cutset $\Gamma$ is the set of vertices incident to $\Gamma$ and connected to $x$ in $G$ by a path which does not cross $\Gamma$. We distinguish a special sub-class of $\MCut(x,B)$, the \emph{odd (minimal edge) cutsets}, which we denote by $\OMCut(x,B)$, which are those $\Gamma\in\MCut(x,B)$ whose interior boundary lies completely in the odd bi-partition class of $G$. Such cutsets arise naturally as the level sets of homomorphism height functions (see Figures~\ref{level_sets_fig} and \ref{shift_transform_fig} for an illustration) and their understanding is fundamental to our analysis.

It will be important to distinguish a subset of the interior boundary of an odd cutset $\Gamma$ by the following definition. We say that a vertex in the interior boundary is \emph{exposed} if it is incident to at least $2d-\sqrt{d}$ edges of $\Gamma$. Thus exposed vertices ``see'' the cutset in nearly all directions.

We do not address the question of the number of odd cutsets in this work (see also the open questions in Section~\ref{open_questions_sec}), but use related facts and hence remark to the reader (this fact is neither proved nor used) that in the whole of $\Z^d$, the number of odd cutsets separating the origin from infinity and having at least $L$ edges is at least $2^{(1+\eps_d)L/2d}$ for $d\ge 2$, some $\eps_d>0$ and large $L$. This can be seen by counting those odd cutsets which approximate closely the boundary of a large cube with sides orthogonal to the axes of $\Z^d$.

We shall need two structural results on odd cutsets which we now explain.

{\bf Odd Cutsets with Rough Boundary:} For an odd cutset $\Gamma$ (fixing some $x$ and $B$), we introduce the parameter $R_\Gamma$ to be $\sum_v \min(P_\Gamma(v), 2d-P_\Gamma(v))$ where the sum is over all vertices in the interior boundary of $\Gamma$ and $P_\Gamma(v)$ is the number of $\Gamma$ edges incident to $v$ (the $R$ is for regularity and the $P$ is for plaquette). This parameter is a measure of the regularity of $\Gamma$, with a value significantly smaller than $d$ times the size of the interior boundary indicating some roughness of $\Gamma$. In the first of our structural results, Theorem~\ref{count_cutsets_thm}, we prove that
\begin{equation*}
|\OMCut(x,B,R)|\le \exp\left(\frac{C\log^2 d}{d} R\right)
\end{equation*}
for some $C>0$, where $\OMCut(x,B,R)$ is the set of odd cutsets $\Gamma\in\OMCut(x,B)$ having $R_\Gamma=R$.

We shall not sketch in detail here the way this theorem is proved, but only mention that it proceeds roughly by describing an odd cutset by a ``skeleton'' of it (which, in a certain graph, is a dominating set for the interior boundary of the cutset), and showing that the number of such skeletons is not too large. The odd property of the cutset is fundamentally used (and indeed, the analogous theorem for general cutsets, those in $\MCut$, may well be false).

We will use this theorem in the following way. Consider an odd cutset $\Gamma$ having exactly $L$ edges and at least $\left(1-\frac{\lambda}{\log^2 d}\right)\frac{L}{2d}$ exposed vertices, for some $\lambda>0$. Since an exposed vertex is incident to at least $2d-\sqrt{d}$ edges of $\Gamma$, and these edges are distinct from one exposed vertex to the other, it follows that the exposed vertices alone are ``responsible'' for $\left(1-\frac{\lambda}{\log^2 d}\right)\frac{(2d-\sqrt{d})L}{2d}$ of the $L$ edges of $\Gamma$. Thus the boundary of $\Gamma$ is, in a sense, quite rough, and we may suspect that there are not that many odd cutsets with this property ($L$ edges and $\left(1-\frac{\lambda}{\log^2 d}\right)\frac{L}{2d}$ exposed vertices). Indeed, from the above theorem it is not difficult to deduce that their number is at most $\exp\left(\frac{C\lambda}{d} L\right)$ for some $C>0$, which is the estimate we shall use in the sequel.

{\bf Interior Approximation to Odd Cutsets:} The second of our two structure theorems for odd cutsets, Theorem~\ref{interior_approximation_theorem}, shows that odd cutsets may be approximated well in a certain sense. To explain this, we let $\Gamma$ be an odd cutset (fixing some $x$ and $B$) and say that a set of vertices $E$ is an \emph{interior approximation} to $\Gamma$ if it is contained in the interior of $\Gamma$ (those vertices reachable from $x$ by a path which does not cross $\Gamma$) and contains all the non-exposed vertices in the interior boundary of $\Gamma$. Theorem~\ref{interior_approximation_theorem} then shows that, when $B$ is a singleton, there exists a family of subsets of $V[G]$ of size at most $2\exp\left(\frac{C\log^2 d}{d^{3/2}} L\right)$ for some $C>0$, which contains an interior approximation to every odd cutset of size $L$. Thus, while the total number of such odd cutsets may exceed $2^{(1+\eps_d)L/2d}$ (as remarked above), they may be grouped into sets having the same interior approximation with the number of such sets not exceeding $2\exp\left(\frac{C\log^2 d}{d^{3/2}} L\right)$.

As before, we shall not sketch in detail the proof of this theorem, but mention that it proceeds roughly similar to the proof of our first structural result, by describing an odd cutset by a ``skeleton'' of it, and showing that the number of such skeletons is not too large. The main added ingredient is a classification of the interior boundary of $\Gamma$ into three types of vertices: the exposed vertices, the vertices incident to at most $\sqrt{d}$ edges of $\Gamma$ and the vertices incident to between $\sqrt{d}$ and $2d-\sqrt{d}$ edges of $\Gamma$. It turns out that, compared to our first structural theorem, a much smaller skeleton suffices in this theorem since one is only interested in recovering the vertices of the second and third type (with the third type being much easier to handle than the second). Again, the odd property of the cutset is fundamentally used.

{\bf The Level Set Theorem:} We now explain how the previous ingredients are put together to prove the level set theorem, Theorem~\ref{main_thm}. We shall explain only the case of one level set, $k=1$. The cases in which $k>1$ follow in a simple manner from the proof of this case (see Section~\ref{expanding_transformation_sec}). Fixing a graph $G$, boundary conditions $(B,\mu)$, $x\in V[G]$ and $L$, we aim to show that the set $\Omega_{x,L}$, of homomorphism height functions having a level set of length $L$ around $x$, is very small compared to the whole of $\Hom(G,B,\mu)$. We will do so using the concept of expanding transformation described above. We will construct a $T:\Omega_{x,L}\to\PP(\Hom(G,B,\mu))$ and show that there exists a partition of $\Omega_{x,L}$ into (not too many) subsets such that $T$ is expanding (with a large expansion factor) on each of these subsets.

\begin{figure}[t!]
\centering
{\includegraphics[width=\textwidth,viewport=70 127 515 300,clip]{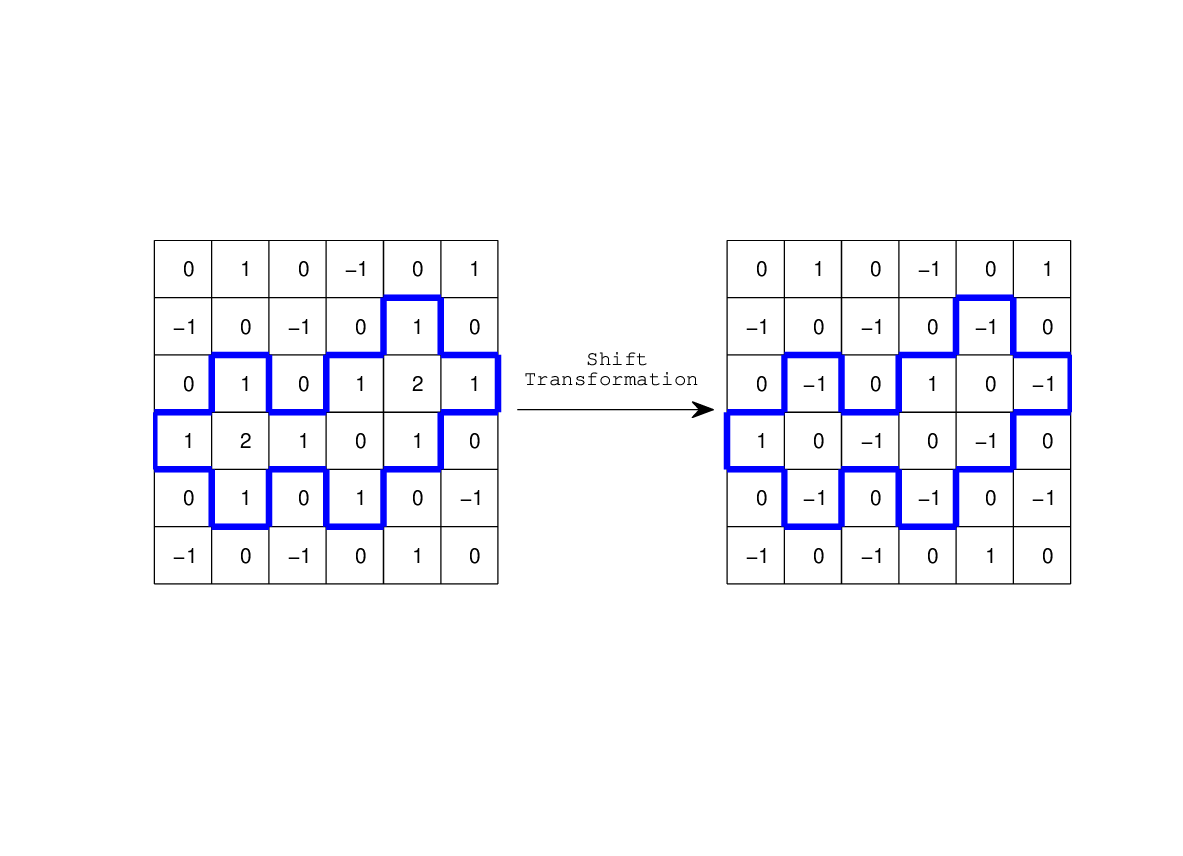}}
{\it
\caption{An illustration of the shift transformation. The function on the left is a homomorphism height function on a 6 x 6 torus with zero boundary conditions and the shaded blue line is a level set of it on which the shift transformation is applied.  The transformation replaces the value at each vertex inside the level set by the value at its neighbor to the right, minus one. The resulting function, depicted on the right, is again a homomorphism height function, and has the property that each vertex which is inside and immediately to the left of the level set is surrounded by zeros. The shaded blue line is drawn on the right function for convenience only, it is not a level set of that function. \label{shift_transform_fig}}
}
\end{figure}

We start the construction of $T$ by defining $\shift:\Omega_{x,L}\to\Hom(G,B,\mu)$, the shift transformation (see Figure~\ref{shift_transform_fig}). For $f\in \Omega_{x,L}$, we denote its level set $\LS(f,x,B)$ by $\Gamma$ and recall that it is an odd cutset separating $x$ and $B$. We let $\CC_1$ be the set of vertices in the interior of $\Gamma$ (so that $x\in\CC_1$) and define $\shift(f)(v)$ to equal $f(v)$ for $v\notin\CC_1$ and to equal $f(v+e_1)-1$ for vertices $v\in\CC_1$, where $v+e_1$ is the vertex located one unit from $v$ in the first coordinate direction (see Figure~\ref{shift_transform_fig}). Informally, on vertices of $\CC_1$, $\shift$ shifts the function $f$ by one lattice space (in the first coordinate direction) and subtracts one from its values. One can then verify that $\shift(f)$ is indeed in $\Hom(G,B,\mu)$ for $f\in\Omega_{x,L}$. The next step is to define the set $E_{1,1}$ of vertices $v$ in the interior of $\Gamma$ for which $(v,v+e_1)\in\Gamma$, and to check that if $v\in E_{1,1}$, then necessarily $\shift(f)(w)=0$ for all neighbors $w$ of $v$ (as in Figure~\ref{shift_transform_fig}). In other words, denoting $g:=\shift(f)$, we observe that for vertices $v\in E_{1,1}$, placing either $+1$ or $-1$ in $g(v)$ results in a valid homomorphism height function. This leads naturally to the definition of $T_1:\Omega_{x,L}\to\PP(\Hom(G,B,\mu))$ as the transformation which replaces each $f$ by the set of all functions formed from $\shift(f)$ by placing $\pm 1$ at the points of $E_{1,1}$. We have that for each $f$, $|T_1(f)|=2^{|E_{1,1}|}$ where $E_{1,1}$ potentially depends on $f$. However, somewhat curiously, odd cutsets have the additional property that exactly $\frac{1}{2d}$ of their edges are of the form $(v,v+e_1)$ for vertices $v$ in their interior. Thus for all $f\in\Omega_{x,L}$, $|T_1(f)|=2^{L/2d}$.

The transformation $T_1$ is a good candidate for our expanding transformation since, as we have just explained, $\Out(T_1)=2^{L/2d}$. However, it is not so simple to bound the parameter $\In(T_1)$ of this transformation. One approach is to note that the transformation $\shift$ is invertible given the level set $\Gamma$, that is, for any $f\in\Omega_{x,L}$, one can reconstruct $f$ from knowing $\shift(f)$ and $\LS(f,x,B)$. The same then holds for $T_1$ for any $f\in\Omega_{x,L}$, one can reconstruct $f$ from knowing any $g\in T_1(f)$ and $\LS(f,x,B)$. Thus $\In(T_1)$ is bounded by the number of possibilities for $\LS(f,x,B)$, which is itself bounded by the number of odd cutsets of length $L$ surrounding $x$. This approach amounts, more or less, to a Peierls-type argument. Unfortunately, as we remarked above, the number of odd cutsets may well exceed $2^{L/2d}$ and thus this approach fails to show that $T_1$ is expanding. Instead, we deduce a more modest result. First, we recall from our first structure theorem for odd cutsets that the number of odd cutsets having $L$ edges and at least $\left(1-\frac{\lambda}{\log^2 d}\right)\frac{L}{2d}$ exposed vertices in their interior boundary is bounded by $\exp\left(\frac{C\lambda}{d} L\right)$ for some $C>0$. Then, we define $\Omega_{x,L,1}\subseteq\Omega_{x,L}$ to be those $f\in \Omega_{x,L}$ whose level set $\LS(f,x,B)$ has at least $\left(1-\frac{\lambda}{\log^2 d}\right)\frac{L}{2d}$ exposed vertices in its interior boundary. We conclude that $\In(T_1)\le \exp\left(\frac{C\lambda}{d} L\right)$ on $\Omega_{x,L,1}$ and thus if $\lambda$ is chosen small enough then $T_1$ is expanding on $\Omega_{x,L,1}$  and $\P(\Omega_{x,L,1})\le 2^{-\frac{cL}{d}}$ for some $c>0$.

It remains to bound $\P(\Omega_{x,L,2})$ for $\Omega_{x,L,2}:=\Omega_{x,L}\setminus\Omega_{x,L,1}$. Our second structure theorem for odd cutsets, Theorem~\ref{interior_approximation_theorem}, motivates a change in the definition of $T_1$. We define the transformation $T_2:\Omega_{x,L}\to\PP(\Hom(G,B,\mu))$ as follows: For each $f$, $T_2(f)$ is the set of functions obtained by modifying each $g\in T_1(f)$ to equal $1$ on all the exposed vertices of $\LS(f,x,B)$. The modification is achieved by noting that $g$ must equal either $1$ or $-1$ on each exposed vertex, identifying for each exposed vertex $v$ for which $g(v)=-1$ the component of it in $G\setminus\{v\ |\ g(v)=0\}$ and negating the values of $g$ on this component. The advantage of $T_2$ over $T_1$ is that it preserves more information on the positions of the exposed vertices of the level set of its input. Its disadvantage is that $|T_2(f)|$ can be much smaller than $2^{L/2d}$ if $f$ has many exposed vertices.

Next, we observe that if $f\in\Omega_{x,L}$ and $h\in T_2(f)$, then knowledge of $h$ and an \emph{interior approximation to $\LS(f,x,B)$} is sufficient to recover $\LS(f,x,B)$ completely. This follows directly from the fact that $\LS(f,x,B)$ is defined solely in terms of the union of components of $B$ in $G\setminus \{v\ |\ f(v)=1\}$. By definition of $T_2$ and interior approximations, this union of components is the same as the union of components of $B$ in $G\setminus (\{v\ |\ h(v)=1\} \cup E)$ where $E$ is an interior approximation to $\LS(f,x,B)$. We would like to use this to bound $\In(T_2)$ by the bound on the number of interior approximations given by Theorem~\ref{interior_approximation_theorem}. However, unlike $T_1$, it is not true that for any $f\in\Omega_{x,L}$, one can reconstruct $f$ from knowing any $h\in T_2(f)$ and $\LS(f,x,B)$. To recover $f$, we need to recover $\LS(f,x,B)$ and, in addition, enumerate on which negations (of the values of $h$ on exposed vertices) were performed in the definition of $T_2(f)$. Potentially, this enumeration factor is as large as $2$ to the power of the number of exposed vertices.

The above discussion shows that the expansion properties of $T_2$ improve when restricted to subsets of $\Omega_{x,L}$ on which the level set of the function has few exposed vertices in its interior boundary. Indeed, we can show that when restricted to (suitable partitions of) the subset of functions having exactly $m$ such exposed vertices, then the expansion factor of $T_2$ is at least $2^{L/2d-m-1}\exp\left(-\frac{C\log^2 d}{d^{3/2}} L\right)$ (this is slightly worse on general non-linear tori). Recalling that $m\le \left(1-\frac{\lambda}{\log^2 d}\right)\frac{L}{2d}$ on $\Omega_{x,L,2}$, we deduce that $T_2$ is expanding on (suitable subsets of) $\Omega_{x,L,2}$ and conclude that $\P(\Omega_{x,L,2})\le d^C\exp\left(-\frac{cL}{d\log^2d}\right)$ for some $C,c>0$, proving the level set theorem.

{\bf Height and Range:} We now explain how
Theorem~\ref{homogeneous_torus_thm} (for the homomorphism model) and
its more general versions, Theorems~\ref{height_at_point_thm} and
\ref{range_thm}, follow from the level set theorem. Fix $t\ge 1$.
Assuming that our boundary values $\mu$ are non-positive, we note
that if our random height function $f$ has $f(x)\ge t$ then, since
the function changes by one between adjacent vertices, we must have
$f(v)\ge 1$ for all vertices $v$ whose (graph) distance from $x$ is
at most $t-1$. Thus the level set $\LS(f,x,B)$ must surround a
(graph) ball of radius $t-1$. If we could deduce from this fact that
$|\LS(f,x,B)|$ is large, when $t$ is large, then we would deduce
from our level set theorem that the event $\{f(x)\ge t\}$ has small
probability. However, $\LS(f,x,B)$ need not be large. For example,
if our boundary set $B$ is a singleton $\{b\}$ then it is possible
that the level set contains only the $2d$ incident edges to $b$. To
overcome this difficulty, we define for each $i\ge 1$ the level set
$\LS_i(f,x,B)$: the outermost height $i$ level set of $f$ around
$x$, which is defined analogously to $\LS(f,x,B)$ (in fact, it
equals $\LS(\tilde{f},x,B)$ for $\tilde{f}:=f-(i-1)$). We then
observe, again using that our boundary values $\mu$ are non-positive
and that the function changes by one between adjacent vertices, that
if $f(x)\ge t$ then $\LS_i(f,x,B)$ must separate a (graph) ball of
radius $i-1$ from a ball of radius $t-i$. In
Section~\ref{general_isoperimetry_section} we develop isoperimetric
estimates which show that these conditions (and a technical
assumption involving $n$, the side-length of the torus) imply that
$\LS_i(f,x,B)$ is at least as large as the size of the boundary of a
ball of radius $\min(i-1, t-i)$. Thus we finally obtain, by taking
$i=\lceil \frac{t}{2}\rceil$ (and assuming that $t\ge 3$), that $f$
has a level set of length at least $c_d t^{d-1}$. Combined with the
level set theorem, this implies that the probability of the event
$\{f(x)\ge t\}$ is at most $\exp(-c_d t^{d-1})$.

Theorem~\ref{homogeneous_torus_thm} states an even stronger fact,
that $\P(f(x)\ge t)\le \exp(-c_d t^d)$. This implies the estimate on
$\P(\range(f)\ge C_d\log^{1/d} n)$ by a union bound. As mentioned in
the introduction, in the case of a one-point BC, the matching lower
bound on $\range(f)$ follows from Theorem~\ref{BYY_thm} of
\cite{BYY07}. To obtain this stronger estimate on $\P(f(x)\ge t)$,
we observe that the level set $\LS_i(f,x,B)$ is defined solely in
terms of the values of $f$ on the exterior of the level set and on
the interior vertex boundary of the level set. Thus, given
$\LS_i(f,x,B)$, the distribution of $f$ in the interior of the level
set equals the distribution of a random homomorphism height
function, on this interior, with boundary values $i$ on the interior
boundary of $\LS_i(f,x,B)$ (this fact is formalized in
Lemma~\ref{level_set_cond_lemma}). This implies that the level set
theorem may be applied inductively, first to $\LS_1(f,x,B)$, then to
$\LS_2(f,x,B)$ given $\LS_1(f,x,B)$ and so on, until applying it to
$\LS_t(f,x,B)$ given $\LS_i(f,x,B)$ for all $1\le i<t$. We conclude
that the probability that $f(x)\ge t$ and, for $1\le i< t$,
$|\LS_i(f,x,B)|=L_i$ is at most $\exp(-c_d\sum_{i=1}^t L_i)$. But
the isoperimetric estimates mentioned in the previous paragraph
imply that if $f(x)\ge t$, then necessarily at least order $t$ of
the level sets $\LS_i(f,x,B)$ have size of order $t^{d-1}$, thus
giving the required estimate $\P(f(x)\ge t)\le \exp(-c_d t^d)$.

{\bf Linear Tori:} Finally, we explain the ideas behind the proof of
Theorem~\ref{linear_torus_thm}, which shows that random homomorphism
height functions on $\lambda$-linear tori, with
$\lambda<\frac{1}{2\log 2}$ and the one-point BC, have large range
with high probability. For concreteness, we focus on the case that
$G=\Z_n\times\Z_{\lfloor \lambda \log n\rfloor}$ for some
$\lambda<\frac{1}{2\log 2}$, but the general case follows similarly.
For such a torus, we introduce the notion of a ``wall'' in the
homomorphism function $f$. A wall consists of two adjacent, roughly
vertical, lines of vertices (crossing the torus in the ``short''
direction) on which $f$ is constant (a different constant on each of
the lines). Intuitively, such walls form since the chance that they
occur for any particular horizontal coordinate is of order
$2^{-2\lambda\log n}$ (since the function $f$ has to change in a
prescribed way on, approximately, $2\lambda\log n$ edges), but there
are $n$ possibilities for this coordinate and hence many walls will
form if $\lambda<\frac{1}{2\log 2}$. Our proof formalizes this
argument. The proof then concludes by comparing the behavior of $f$
on these walls to the behavior of a random walk bridge. Since such
bridges have large range with high probability, we are able to
deduce that $f$ does as well.

{\bf Reader's guide:} The rest of the paper is structured as follows. In Section~\ref{preliminaries_sec}, definitions and preliminary results which will be needed throughout the paper are given. The proof of the level set theorem, Theorem~\ref{main_thm}, is given in Section~\ref{level_set_thm_sec}, which is divided into several parts: Section~\ref{expanding_transformation_sec} introduces the notion of expanding transformation and the properties required of it for our proof. Section~\ref{T_def_section} defines the expanding transformation $T$ we will use. In Section~\ref{odd_cutsets_structure_sec} we state and prove our structure theorems for odd cutsets. Finally, Section~\ref{proof_of_transform_theorem_sec} puts together the previous ingredients to deduce that the transformation $T$ has the required expansion properties. In Section~\ref{isoperim_height_range_Lip_sec} we deduce our theorems for the height and range of homomorphism and Lipschitz height functions from the level set theorem. To this end, isoperimetric estimates for cutsets on tori are developed in Section~\ref{general_isoperimetry_section}. Section~\ref{linear_tori_sec} proves Theorem~\ref{linear_torus_thm} on the typical range of values taken by random homomorphisms on linear tori. Finally, in Section~\ref{open_questions_sec} we conclude with a list of open questions.

{\bf Acknowledgments:} I thank Fedor Nazarov for initial discussions
of the problem, Senya Shlosman for many discussions and much
encouragement, Ariel Yadin for useful discussions and for permission
to use his bijection, Gil Kalai and Michael Khanevsky for useful
discussions on the topology of the torus, G\'abor Pete for referring
me to Tim\'ar's paper \cite{T07}, Steven M. Heilman for producing
Figure~\ref{level_sets_fig} and Omer Angel, Itai Benjamini, Sourav
Chatterjee, Amir Dembo, Aernout van Enter, Frank den Hollander, Dima
Ioffe, Bo'az Klartag, Roman Koteck\'y, Gady Kozma, Nir Lev, Andrea
Montanari, Elchanan Mossel, Pierre Nolin, Scott Sheffield, Allan
Sly, Sasha Sodin, Yvan Velenik and Amir Yehudayoff for useful
conversations on this work. Special thanks is extended to Yinon Spinka and to two anonymous referees for a careful reading of the manuscript and for many suggestions which greatly improved the presentation.

This work began while I was on a Clay liftoff fellowship. It benefited from discussions with participants of the XII'th Brazilian school of probability, and was partially completed during stay at the Institut Henri Poincar\'e - Centre Emile Borel - in its thematic program on Interacting Particle Systems, Statistical Mechanics and Probability Theory. I thank all of these for their hospitality and support.

\section{Preliminaries}\label{preliminaries_sec}
In this section we introduce notation used throughout the paper and
prove some preliminary results that we will need. The first time a
notation is introduced it is highlighted in boldface.

{\bf The torus $G$:} For a torus $G$, with even side lengths
$(n_i)_{i=1}^d$ as in \eqref{torus_side_lengths}, we denote by
$\bm{\Delta(G)}$ the degree of (any vertex in) $G$. We have
$\Delta(G):=2d-\sum_{i=1}^d 1_{(n_i=2)}$ and we will frequently use
that
\begin{equation}\label{Delta_G_bounds}
d\le \Delta(G)\le 2d.
\end{equation}
We shall denote
\begin{equation}\label{alpha_def}
\alpha:=\prod_{i=1}^{d-1} n_i,
\end{equation}
the size of the smallest ``section'' of the torus. We let $\bm{d_G}$
stand for the graph distance and for $v,w\in V[G]$ write
$\bm{v\adj{G} w}$ if $d_G(v,w)=1$. Denote by $\bm{S(v)}:=\{w\in
V[G]\ |\ w\adj{G} v\}$, the set of neighbors of $v$ in $G$ and
$\bm{S(E)}:=\cup_{v\in E} S(v)$ for $E\subseteq V[G]$. By definition
$|S(v)|=\Delta(G)$. As in \eqref{G_coordinates}, we fix a coordinate
system for the torus $G$ so that $V[G]=\{(x_1,\ldots, x_d)\ |\ 0\le
x_i\le n_i-1\}$. For $v=(v_1,\ldots, v_d)\in V[G]$ and $1\le i\le
d$, we denote by $\bm{v+e_i}$ the vertex whose coordinates are
$(v_1, \ldots, v_{i-1}, v_i+1 \pmod{n_i}, v_{i+1},\ldots, v_d)$ and
by $\bm{v-e_i}$ the vertex whose coordinates are $(v_1, \ldots,
v_{i-1}, v_i-1 \pmod{n_i}, v_{i+1},\ldots, v_d)$. We similarly
define $\bm{v+ke_i}$ for all $k\in\Z$. We note that $v+e_i=v-e_i$
iff $n_i=2$. Letting $k=\max\{i\ |\ n_i=2\}$ ($k=0$ if $n_1>2$) we
define $\bm{(f_i)_{i=1}^{\Delta(G)}}$ by $f_i=e_i$ for $1\le i\le d$
and $f_i=-e_{i-d+k}$ for $d<i\le \Delta(G)$. By our definitions
$\{v+f_i\ |\ 1\le i\le\Delta(G)\}=S(v)$.

We note a simple expansion property of $G$.
\begin{proposition}\label{2nd_neighborhood_prop}
Let $v\in V[G]$ and $Q:=\{(i,j)\ |\ 1\le i,j\le\Delta(G)\text{, }f_i\neq -f_j\}$. Then for any $(i,j)\in Q$ we have $|\{(k,\ell)\in Q\ |\ v+f_i+f_j = v+f_k+f_\ell\}|\le 2$.
\end{proposition}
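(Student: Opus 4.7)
The plan is to lift the identity $v + f_i + f_j = v + f_k + f_\ell$ to the covering lattice $\Z^d$ and perform a case analysis on the support of $u := f_i + f_j$ in $\Z^d$. Each $f_m$ is naturally the unit vector $\pm e_{s(m)}$, and by the construction of the list $(f_i)_{i=1}^{\Delta(G)}$ this assignment is injective; in particular, when $n_m = 2$ only the sign $+e_m$ appears in the list, so $+e_m$ and $-e_m$ are not both enumerated. The condition $(i,j) \in Q$ says precisely that $u \neq 0$ in the torus $\prod_m \Z_{n_m}$. I want to show that at most two pairs $(k, \ell) \in Q$ satisfy $f_k + f_\ell \equiv u$ modulo the side lengths.

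The key observation is that $u$ and $f_k + f_\ell$ each have entries in $\{-2,-1,0,1,2\}$, so their componentwise difference lies in $\{-4,\ldots,4\}$ and must be a multiple of $(n_1,\ldots,n_d)$ in each coordinate. Since every $n_m$ is even and at least $2$, this forces tight constraints, with nontrivial slack only in coordinates where $n_m \in \{2,4\}$. If $u$ has two nonzero coordinates $a \neq b$, then any decomposition $(k,\ell)$ must also be two-color with support $\{s(k),s(\ell)\} = \{a,b\}$, since a one-color vector $f_k + f_\ell = \pm 2 e_c$ cannot match $u$ in both coordinates $a$ and $b$ simultaneously (the ``missing'' coordinate projects to $0$, while $u$ projects to $\pm 1$, incompatible modulo $n_m \ge 2$). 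Within such two-color decompositions, the values of $f_k$ and $f_\ell$ in coordinates $a$ and $b$ are forced to match those of $u$ modulo $n_a$ and $n_b$ (for $n_m = 2$ only one direction exists, so the sign constraint is trivial; for $n_m > 2$ the sign is rigidly determined). This uniquely identifies the multiset $\{f_k, f_\ell\}$ as $\{f_i, f_j\}$, yielding only the two decompositions $(i,j)$ and $(j,i)$.

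In the remaining case, $u$ has a single nonzero coordinate $a$. Then $f_i = f_j$, which forces $i = j$ (by injectivity of $m \mapsto f_m$) and $u = \pm 2 e_a$. A two-color decomposition is impossible, since each of its two nonzero projections is $\pm 1$ and cannot vanish modulo $n_m \ge 2$. A one-color decomposition $f_k + f_\ell = \pm 2 e_c$ must have $c = a$ (otherwise the projection to coordinate $a$ fails). Either the sign matches $u$, giving $(k,\ell)=(i,j)$, or it is opposite, which requires $2 \equiv -2 \pmod{n_a}$, forcing $n_a \in \{2,4\}$. The case $n_a = 2$ is ruled out because it would give $f_i = -f_j$, contradicting $(i,j) \in Q$, so the second decomposition appears only in the knife-edge case $n_a = 4$, producing a total of at most two pairs. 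The only point requiring care is the bookkeeping for the degenerate directions where $n_m = 2$, but this is cleanly handled by the indexing convention that collapses both signs into a single $f_m$ exactly in those coordinates.
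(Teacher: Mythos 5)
Your proof is correct and uses essentially the same argument as the paper: a case split on whether $f_i + f_j$ has a single nonzero coordinate (i.e.\ $f_i = f_j$, where the only alternative decomposition is $f_k = f_\ell = -f_i$, which forces $n_a = 4$) or two nonzero coordinates (where the support and signs force $\{f_k, f_\ell\} = \{f_i, f_j\}$, yielding just the two orderings $(i,j)$ and $(j,i)$). Your write-up spells out the mod-$n_m$ bookkeeping more explicitly than the paper's terse version, but the underlying case analysis is the same.
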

\begin{proof}
Let $(i,j), (k,\ell)\in Q$ and suppose that $(i,j)\neq (k,\ell)$ and
\begin{equation}\label{two_step_equality}
v+f_i+f_j=v+f_k+f_\ell.
\end{equation} Suppose first that $f_i=f_j$. If $f_i=e_m$ or $f_i=-e_m$, we must have $n_m=4$ and $f_k=f_\ell = -f_i$ for \eqref{two_step_equality} to hold, proving the proposition in this case. If $f_i\neq f_j$, then $v+f_i+f_j$ differs from $v$ in two coordinates and we must have $k=j$ and $\ell=i$ for \eqref{two_step_equality} to hold, proving the proposition in this case as well.
\end{proof}

We let $\bm{G^{\otimes r}}$ for integer $r>0$ be the graph with the
same vertex set as $G$ and with $u,v\in V[G]$ adjacent if and only
if $1\le d_G(u,v)\le r$. With this notation $G^{\otimes 1}=G$. Note
also that the degree of the vertices in $G^{\otimes r}$ is bounded
above by $\sum_{i=1}^r (2d)^i\le (2d)^{r+1}-1$. We shall need the
following standard counting lemma:
\begin{proposition}\label{span_tree_prop}
Given $v\in V[G]$ and integers $M,r>0$, the number of sets $E\subseteq V[G]$ with $|E|=M$ and $E\cup\{v\}$ connected in $G^{\otimes r}$ does not exceed $(2d)^{2(r+1)M}$.
\end{proposition}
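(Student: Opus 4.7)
The plan is to use a standard spanning-tree encoding argument, counting connected subsets by bounding the number of depth-first-search walks that trace them out.

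First I would observe that a set $E \subseteq V[G]$ with $|E| = M$ and $E \cup \{v\}$ connected in $G^{\otimes r}$ is completely determined by the set $E \cup \{v\}$, which is a connected subset of $V[G^{\otimes r}]$ of size $M+1$ containing $v$. So it suffices to bound the number of such connected subsets.

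Next, for any such connected subset $F = E \cup \{v\}$, fix a spanning tree $T_F$ of the induced subgraph of $G^{\otimes r}$ on $F$ (any one, chosen by some deterministic rule). A depth-first-search traversal of $T_F$ starting at $v$ produces a closed walk in $G^{\otimes r}$ of length exactly $2(|F|-1) = 2M$ which visits every vertex of $F$ and uses only edges of $G^{\otimes r}$. The key point is that the set $F$ can be recovered as the set of vertices visited by this walk, so the map sending $F$ to its DFS walk from $v$ is injective.

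It remains to count closed walks of length $2M$ in $G^{\otimes r}$ starting at $v$. At each of the $2M$ steps, the walk moves to a neighbor in $G^{\otimes r}$, and the maximum degree in $G^{\otimes r}$ is bounded by $\sum_{i=1}^r (2d)^i \le (2d)^{r+1}$ as noted in the excerpt. Hence the number of such walks is at most $\bigl((2d)^{r+1}\bigr)^{2M} = (2d)^{2(r+1)M}$, which gives the stated bound.

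There is no real obstacle here; the only minor subtlety is to pin down the deterministic rule for choosing the spanning tree and the DFS order (for instance, using the coordinate system on $V[G]$ to order the neighbors in $G^{\otimes r}$), so that the map $F \mapsto (\text{DFS walk of } T_F \text{ from } v)$ is well-defined and injective. Once that is fixed, the counting is immediate.
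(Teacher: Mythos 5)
Your approach is essentially identical to the paper's: encode $E$ by a DFS walk of a spanning tree starting at $v$, then count walks using the degree bound in $G^{\otimes r}$. However, there is a small oversight in the claim that $F = E \cup \{v\}$ always has size $M+1$: this is true only when $v \notin E$. When $v \in E$, one has $|F| = M$, the spanning tree has $M-1$ edges, and the DFS walk has length $2(M-1)$, so you would need to count walks of two different lengths, and the sum of the two counts strictly exceeds $(2d)^{2(r+1)M}$ (if only by a negligible factor). The paper sidesteps this by introducing an auxiliary copy $v'$ of $v$ (adjacent to $v$ and to all of $v$'s neighbors), so that $E \cup \{v'\}$ always has exactly $M+1$ vertices regardless of whether $v \in E$, the spanning tree always has exactly $M$ edges, and the walk length is uniformly $2M$; the maximal degree of this augmented graph is still at most $(2d)^{r+1}$. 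So: same idea, but you should either add the doubling trick or handle the $v \in E$ case separately with a slightly looser constant.
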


\begin{proof}
To avoid dealing separately with the cases where $v\in E$ and $v\notin E$, let $G_r$ be the graph $G^{\otimes r}$ with the vertex $v$ doubled in the following sense: $G_r$ has as vertex set the vertex set of $G$ union one additional vertex called $v'$, and has as edges the edges that $G^{\otimes r}$ has, an edge from $v'$ to each of the neighbors of $v$ and an edge between $v$ and $v'$. Note that the maximal degree in $G_r$ is bounded by $(2d)^{r+1}$.

For every $E$ as in the proposition, we note that $E\cup\{v'\}$ is connected in $G_r$ and we fix a spanning tree $T_E$ for it. Starting from $v'$, we can perform a depth first search of $T_E$, starting and ending at $v'$ and passing through each edge exactly twice. Since $T_E$ has exactly $M$ edges, we obtain that the number of possibilities for $T_E$ (and hence for $E$) is upper bounded by the number of walks of length $2M$ in $G_r$ which start at $v'$. This gives the required bound.
\end{proof}

In this paper, a cycle is a closed walk having no repeated vertices
(besides its starting and ending point). An edge cycle is the set of
edges of a cycle. A basic 4-cycle is a cycle of the form $v, v+f_i,
v+f_i+f_j, v+f_j, v$ for some $v\in V[G]$, $f_i\neq f_j$ and
$f_i\neq -f_j$. We let $\bm{{\GC}}$ be the graph with the same
vertex set as $G$ and with $u,v\in V[G]$ adjacent if and only if
they lie on some basic 4-cycle. Let $k:=\min\{1\le i\le d\ |\
n_i>2\}$ ($k=\infty$ if $n_d=2$, i.e., on the hypercube) and for
each $k\le i\le d$ and $v\in V[G]$, let $P_i(v)$ be the cycle
$v,v+e_i,v+2e_i,\ldots,v+n_i e_i$ which starts at $v$ and wraps
around the torus once in the $e_i$ direction. We use without proof
the fact that on the torus, for any $(v_i)_{i=k}^d\subseteq V[G]$,
the edge sets of basic 4-cycles and the edge sets of
$(P_i(v_i))_{i=k}^d$ (these are not needed on the hypercube)
generate the cycle space of $G$ over $Z_2$, i.e., any edge cycle can
be written as the exclusive or of some subset of these edge cycles
(this can be seen by taking the tree whose root is at $O=(0,\ldots,
0)$ and in which the parent of $x\in V[G]\setminus\{O\}$ is $x-e_m$
where $m=\min\{1\le i\le d\ |\ x_i>0\}$ and observing that its
fundamental cycles are in the span of the given generating set). Let
$G^+((v_i)_{i=k}^d)$ be the graph with the same vertex set as $G$
and in which $u,v$ are adjacent if they are adjacent in $\GC$ or
both lie on $P_i(v_i)$ for some $i$ ($G^+=\GC$ on the hypercube). A
clever result of Tim\'ar \cite{T07} showing connectivity of
boundaries of connected sets implies
\begin{theorem}[Special case of Lemma 2 in \cite{T07}]\label{Timar_result}
Letting $k=\min\{1\le i\le d\ |\ n_i>2\}$, for any $(v_i)_{i=k}^d\subseteq V[G]$, $x\in V[G]$ and $G$-connected $\CC\subseteq V[G]$, the set
\begin{equation*}
E_1:=\{\text{connected component of $x$ in $V[G]\setminus\CC$}\}\cap\{v\in V[G]\ |\ d_G(v,\CC)=1\}
\end{equation*}
(i.e., the outer boundary of $\CC$ visible from $x$), is connected in $G^+((v_i)_{i=k}^d)$.
\end{theorem}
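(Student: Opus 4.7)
The plan is to proceed by contradiction, following Tim\'ar's strategy. Suppose $E_1$ is not connected in $G^+((v_i)_{i=k}^d)$, and partition $E_1 = A \sqcup B$ with $A, B$ non-empty and no $G^+$-edge between them. By the definition of $G^+$, this gives (i) no basic 4-cycle of $G$ contains both an $A$-vertex and a $B$-vertex, and (ii) no cycle $P_i(v_i)$ contains both. Write $D$ for the connected component of $x$ in $V[G] \setminus \CC$, so $A \cup B \subseteq E_1 \subseteq D$.

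Pick $a \in A$, $b \in B$, and $G$-neighbors $\tilde{a}, \tilde{b} \in \CC$. Using $G$-connectedness of $\CC$, take a $G$-path $P_\CC \subseteq \CC$ from $\tilde{a}$ to $\tilde{b}$; using $G$-connectedness of $D$, take a $G$-path $P_D \subseteq D$ from $a$ to $b$. Let $W$ denote the edge set of the closed walk $(a, \tilde{a}) \cup P_\CC \cup (\tilde{b}, b) \cup P_D$, viewed as an element of the cycle space of $G$ over $\Z_2$. Define the edge set $F := \{(u, v) : u \in A,\ v \in \CC\}$. Then $|W \cap F| = 1$: the only $F$-edge in $W$ is $(a, \tilde{a})$, since the internal edges of $P_\CC$ and $P_D$ have both endpoints in $\CC$ or both in $D$, respectively, and $(\tilde{b}, b)$ is excluded because $b \in B \neq A$.

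Since (by the paper's setup) the basic 4-cycles together with the $(P_i(v_i))_{i=k}^d$ generate the cycle space of $G$ over $\Z_2$, the contradiction will follow once I check that $|F \cap C|$ is even for every such generator $C$: writing $W$ as a $\Z_2$-sum of generators would then force $|W \cap F|$ to be even, not $1$. For a basic 4-cycle $C_4$ containing an $A$-vertex, (i) rules out any $B$-vertex on $C_4$; combining this with the observation that every $D$-vertex $G$-adjacent to $\CC$ lies in $E_1 = A \sqcup B$, and hence here in $A$, a finite case analysis over the four vertex types (in $\{\CC,\, A,\, I := D \setminus E_1,\, R := V[G] \setminus (\CC \cup D)\}$) yields $|F \cap C_4| \in \{0, 2, 4\}$. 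For a cycle $P_i(v_i)$ containing an $A$-vertex, (ii) excludes $B$-vertices, and the adjacency restrictions that $I$ is not $G$-adjacent to $\CC$ (else it would lie in $E_1$) and $A$ is not $G$-adjacent to $R$ (distinct $V[G] \setminus \CC$-components) force each maximal stretch between consecutive $A$-vertices on $P_i(v_i)$ to lie entirely in $\CC$ (contributing $2$ $F$-edges), entirely in $I$ (contributing $0$), or to have the shape $\CC \cdots \CC\, R \cdots R\, \CC \cdots \CC$ (contributing $2$). Summing, $|F \cap P_i(v_i)|$ is even.

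The main obstacle is precisely the parity claim for the $P_i(v_i)$'s: the augmentation from $\GC$ to $G^+$ via the wraparound cycles is tailor-made for exactly this step, because without condition (ii) a non-contractible cycle could a priori cross the $A$-$B$ partition an odd number of times and the argument would fail. The rest of the proof is bookkeeping built around the above case analyses.
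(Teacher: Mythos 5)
The paper does not prove this statement; it simply invokes Theorem 2 of Tim\'ar's paper \cite{T07} (with a parenthetical indication of how to verify the required cycle-space generation fact). Your proposal therefore cannot be compared to an in-paper argument, but it is a correct, self-contained proof in the spirit of Tim\'ar's cycle-space method.

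The two delicate points both check out. For the closed walk $W$: the segments $P_\CC\subseteq\CC$ and $P_D\subseteq D$ are vertex-disjoint simple paths, and the two connecting edges $(a,\tilde a)$ and $(\tilde b,b)$ are distinct from each other and from every edge of $P_\CC$ or $P_D$; hence the mod-$2$ edge set of $W$ is exactly the union of these edge sets and meets $F$ in precisely the one edge $(a,\tilde a)$. You should state the no-repeated-edge observation explicitly, since it is what justifies passing from the closed walk to its $\Z_2$-edge class without cancellation. For the parity of generators: your constraints --- no $\CC$--$I$, no $A$--$R$, no $I$--$R$ edges, and no $B$-vertex on any generator carrying an $A$-vertex --- do force $|F\cap C_4|\in\{0,2,4\}$ for every basic $4$-cycle $C_4$, and force each arc of a wraparound cycle $P_i(v_i)$ between consecutive $A$-vertices to contribute $0$ or $2$ edges of $F$ (if the first interior vertex of a nonempty arc is in $\CC$ then, since $\CC$--$I$ and $R$--$I$ edges are forbidden, all interior vertices lie in $\CC\cup R$ and the last one is again in $\CC$; if the first interior vertex is in $I$, all interior vertices lie in $I$; empty arcs contribute $0$). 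One minor imprecision: a $\CC$-starting arc need not have the single-block shape $\CC\cdots\CC\,R\cdots R\,\CC\cdots\CC$; it may alternate, e.g.\ $\CC\,R\,\CC\,R\,\CC$, though this does not affect the count.
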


{\bf Vertex Cutsets:} For $x,y\in V[G]$, let $\bm{{\VCut(x,y)}}$ be
the set of all vertex cutsets (not necessarily minimal) separating
$x$ and $y$. I.e., the set of all $E\subseteq V[G]$ such that any
path from $x$ to $y$ must intersect $E$ (possibly at $x$ or $y$).
Recalling the definition of $\alpha$ from \eqref{alpha_def}, we will
need
\begin{proposition}\label{cutset_start_prop}
Let $x,y\in V[G]$ and $M>0$ an integer. If $M<2\alpha$ then there exists a set $A=A(x,y,M)\subseteq V[G]$ with $|A|\le 30M$ such that every $E\in\VCut(x,y)$ with $|E|\le M$ intersects $A$. If $M\ge 2\alpha$, the same is true with a set $A$ satisfying $|A|\le 31M+n_d$.
\end{proposition}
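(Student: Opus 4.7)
My plan is to construct $A$ explicitly, using a case distinction based on the size of $M$ relative to $\alpha$ and relying on isoperimetric bounds for the torus (which are developed in full generality in Section~\ref{general_isoperimetry_section}).

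\textbf{Key isoperimetric fact.} The standard vertex-isoperimetric inequality on the torus $G$ implies that any subset $S \subseteq V[G]$ with $|S|, |V[G] \setminus S| \ge \alpha$ has vertex boundary $|\partial_V S| \ge 2\alpha$ (essentially achieved by taking $S$ to be a union of $e_d$-slices, whose boundary consists of the two neighboring slices due to the torus wrap-around). Consequently, given any $E \in \VCut(x,y)$ with $|E| \le M < 2\alpha$, the components $\CC_x$ and $\CC_y$ of $x$ and $y$ in $V[G] \setminus E$ cannot both have size at least $\alpha$. Thus at least one of them, say $\CC_z$, satisfies $|\CC_z| < \alpha$, and in particular does not wind around the torus in any coordinate direction whose side length exceeds $|\CC_z|$. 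Further isoperimetric analysis gives a stronger bound $|\CC_z| \le \phi(M)$ controlled by $M$ and $d$.

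\textbf{Construction for $M < 2\alpha$.} Fix a shortest path $P$ from $x$ to $y$ in $G$; it must be intersected by any $E \in \VCut(x,y)$. Moreover, if one walks along $P$ from the endpoint $z$ whose component is the smaller one, the first vertex encountered in $E$ lies at distance along $P$ at most $|\CC_z|$. I would take $A$ to be the union of the initial and final length-$O(M)$ segments of $P$ (from the $x$-side and from the $y$-side), supplemented with a controlled number of additional nearby vertices on short axes emanating from $x$ and $y$ to handle the cases where the simple path-segment bound is loose. A careful accounting, exploiting the constraint $|\CC_z| < \alpha$, yields $|A| \le 30M$.

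\textbf{Construction for $M \ge 2\alpha$.} In this regime the cutset $E$ may wind around the torus in the $e_d$-direction, so that both $\CC_x$ and $\CC_y$ are of size $\ge \alpha$ and the smaller-component argument above fails. To handle this, I augment the previous construction with the entire $e_d$-axis through $x$, namely $R_d(x) := \{x + k e_d : 0 \le k < n_d\}$. A winding cutset must intersect every $e_d$-great-cycle, hence $R_d(x)$; a non-winding cutset is already handled by the previous construction applied to the truncated torus obtained by cutting along $R_d(x)$. This gives $|A| \le 31M + n_d$.

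\textbf{Main obstacle.} The principal difficulty is obtaining the linear-in-$M$ bound $|A| \le 30M$ uniformly across all dimensions $d$ and all permissible aspect ratios of the torus. The naive $\Z^d$-isoperimetric bound $|\CC_z| \lesssim M^{d/(d-1)}$ is super-linear in $M$ for small $d$, so the argument must genuinely use both the torus-cross-section constraint $|\CC_z| < \alpha$ and, when needed, replace the single shortest path $P$ by a small bouquet of short coordinate axes from $x$ and $y$ in order to trap any small-$|E|$ component with a set of linear size in $M$.
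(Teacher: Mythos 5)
Your argument rests on a claim that is false, and the error is the crux of the matter. You assert that any $S\subseteq V[G]$ with $|S|,|V[G]\setminus S|\ge\alpha$ has vertex boundary of size at least $2\alpha$, and you deduce that for $|E|\le M<2\alpha$ one of the two components $\CC_x,\CC_y$ must have fewer than $\alpha$ vertices. Neither is true. The slab $\{x:0\le x_d\le k\}$ does have vertex boundary $2\alpha$, but slabs are not minimizers for all volumes: a graph ball in $\Z_{100}^3$ of radius $25$ has $|S|\approx 2\cdot 10^4>\alpha=10^4$ and $|S^c|\gg\alpha$, yet its vertex boundary is of order $10^3\ll 2\alpha$. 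Correspondingly, the vertex boundary of that ball is a cutset $E$ with $|E|\ll 2\alpha$ separating the center $x$ from an exterior point $y$, and \emph{both} components $\CC_x$ and $\CC_y$ exceed $\alpha$ in size. So the dichotomy that drives your construction simply does not hold for the cutsets you most need to trap — topologically trivial ones enclosing an $M^{d/(d-1)}$-size blob around one endpoint.

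Even granting that the smaller component has controlled size, the second step (walk along a shortest path and intersect $E$ within $O(M)$ steps) is not justified: the distance from $z$ to the first vertex of $E$ along a path can be as large as the diameter of $\CC_z$, which for a component of volume $\sim M^{d/(d-1)}$ can far exceed $M$. You flag this obstacle yourself, but the proposed remedy ("a small bouquet of short coordinate axes") is not made concrete and does not resolve it. The paper sidesteps both problems by abandoning the attempt to bound component sizes altogether and arguing via Lemma~\ref{cutset_lower_bound_lem} (a Menger/flow-type bound): it builds connected sets $B$ around $x$ and $B+y$ around $y$, of total size $O(M')$, together with $M'+1$ internally disjoint paths between them, so that any cutset of size $\le M'$ must intersect $B\cup(B+y)$. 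Different ranges of $M'$ (below $d$, below $\frac14\sum_{j<d} n_j$, below $2\alpha$, and above $2\alpha$) use different explicit families of disjoint paths — diagonal translates of a monotone staircase path, translates within a hyperplane slice, and finally the path itself — and the final $A$ is the nested union over $M'\le M$. If you want to repair your proof you would need to replace the isoperimetric argument by such a disjoint-paths (max-flow) argument; the component-size route does not go through.
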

We use the following lemmas:
\begin{lemma}\label{cutset_lower_bound_lem}
Let $x,y\in V[G]$ and $B_x,B_y\subseteq V[G]$ be connected sets with $x\in B_x$ and $y\in B_y$. Suppose there exist $k$ paths between $B_x$ and $B_y$, pairwise disjoint in their interior. Then every $E\in\VCut(x,y)$ either intersects $B_x\cup B_y$ or has $|E|\ge k$.
\end{lemma}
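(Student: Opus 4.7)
The plan is a direct argument by contradiction/case analysis based on Menger-type reasoning. Suppose $E\in\VCut(x,y)$ does not intersect $B_x\cup B_y$; I will show $|E|\ge k$.

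First I would upgrade $E$ from a cutset separating $x,y$ to a cutset separating the whole sets $B_x,B_y$. Take any $u\in B_x$ and $v\in B_y$. Since $B_x$ is connected and contains $x$, there is a path $\pi_x\subseteq B_x$ from $x$ to $u$; similarly a path $\pi_y\subseteq B_y$ from $v$ to $y$. Concatenating $\pi_x$ with any path $\pi$ from $u$ to $v$ and with $\pi_y$ gives a path from $x$ to $y$, which must therefore meet $E$. Because $E$ avoids $B_x\cup B_y$, the intersection cannot lie in $\pi_x$ or $\pi_y$, so it must lie on $\pi$. Thus every path from $B_x$ to $B_y$ meets $E$, i.e.\ $E$ is a vertex cutset between $B_x$ and $B_y$.

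Now let $P_1,\dots,P_k$ be the $k$ paths between $B_x$ and $B_y$ that are pairwise disjoint in their interiors. Each $P_i$ is a path from some vertex of $B_x$ to some vertex of $B_y$, so by the previous step each $P_i$ must contain at least one vertex of $E$. The endpoints of $P_i$ lie in $B_x\cup B_y$, hence are not in $E$ by assumption, so the intersection $P_i\cap E$ lies strictly in the interior of $P_i$. Since the interiors of the $P_i$'s are pairwise disjoint, the sets $P_i\cap E$ are pairwise disjoint, giving
\[
|E|\ \ge\ \sum_{i=1}^k |P_i\cap E|\ \ge\ k,
\]
as required.

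The argument is essentially bookkeeping and I don't anticipate any real obstacle; the only point to watch is the first step, where one has to use the connectedness of $B_x$ and $B_y$ (and the fact that they contain $x$ and $y$ respectively) to boost ``$E$ separates $x$ from $y$'' to ``$E$ separates $B_x$ from $B_y$'', so that the hypothesis on the $k$ internally-disjoint paths can be invoked.
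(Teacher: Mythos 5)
Your proof is correct and takes essentially the same approach as the paper: the paper constructs, for each $P_j$, a walk $Q_j$ from $x$ through $B_x$ to the start of $P_j$, along $P_j$, and through $B_y$ to $y$, then argues that since $E$ must meet $Q_j$ but avoids $B_x\cup B_y$, it must meet the interior of $P_j$. Your reorganization into first showing that $E$ separates $B_x$ from $B_y$ and then applying it to each $P_i$ is a mild repackaging of the same idea.
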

\begin{proof}
Let $P_1,\ldots, P_k$ be paths between $B_x$ and $B_y$, pairwise disjoint in their interior. Let $Q_j$ be a walk from $x$ to $y$ which travels inside $B_x$ to the starting point of $P_j$, then travels along $P_j$ and finally travels inside $B_y$ to $y$. All the $Q_j$ must intersect $E$ by its definition. Hence if $E$ does not intersect $B_x\cup B_y$ then it intersects each $P_j$ in its interior and hence has at least $k$ points.
\end{proof}
\begin{lemma}\label{minimal_cutset_lem}
Let $x,y\in V[G]$. Every $E\in\VCut(x,y)$ satisfies either $E\cap\{x,y\}\neq \emptyset$ or $|E|\ge d$.
\end{lemma}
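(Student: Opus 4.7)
The plan is to reduce the lemma to a statement about the vertex connectivity of the torus and then invoke the preceding Lemma~\ref{cutset_lower_bound_lem}. If $E\cap\{x,y\}\neq\emptyset$ there is nothing to prove, so I would assume $E\cap\{x,y\}=\emptyset$ and aim to show $|E|\ge d$. First I would dispose of the trivial case $x=y$: then the length-zero path from $x$ to $y$ avoids $E$, contradicting the cutset property, so one may assume $x\ne y$.

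Next I would apply Lemma~\ref{cutset_lower_bound_lem} with $B_x=\{x\}$ and $B_y=\{y\}$. Since $E$ meets neither $B_x$ nor $B_y$, that lemma yields $|E|\ge k$ as soon as one exhibits $k$ pairwise internally vertex-disjoint paths between $x$ and $y$ in $G$. The problem therefore reduces to producing $d$ such paths, i.e., to showing $\kappa(G)\ge d$ and applying (in essence) Menger's theorem.

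To produce the paths I would exploit the Cartesian product structure $G=\Z_{n_1}\times\cdots\times\Z_{n_d}$. For each $1\le i\le d$, a path $P_i$ from $x$ to $y$ can be built by first taking an edge from $x$ in the $i$-th coordinate direction (choosing the sign $+e_i$ or $-e_i$ so as to make progress toward $y$), then adjusting every coordinate other than $i$ (in a fixed cyclic order starting from $i+1$) to match $y$, and finally adjusting coordinate $i$. A careful choice of the sign of the initial step, combined with the cyclic ordering, should ensure that any internal vertex of $P_i$ carries a distinguishing signature in its $i$-th coordinate that is absent on $P_j$ for $j\ne i$, yielding internal vertex-disjointness.

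The main obstacle I expect is the disjointness bookkeeping in degenerate configurations, namely coordinates in which $x$ and $y$ already agree (so no adjustment is needed in that direction) and coordinates for which $n_i=2$ (so $+e_i$ and $-e_i$ point to the same vertex and the sign choice is forced). A cleaner, more conceptual alternative that bypasses all this case analysis is to invoke the classical inequality $\kappa(G_1\square G_2)\ge \kappa(G_1)+\kappa(G_2)$ for connected graphs, applied inductively to the factors $\Z_{n_i}$: since $\kappa(\Z_{n_i})\ge 1$ for every $i$, this yields $\kappa(G)\ge d$ directly, and Menger's theorem then supplies the required $d$ internally disjoint $x$--$y$ paths, completing the proof via Lemma~\ref{cutset_lower_bound_lem}.
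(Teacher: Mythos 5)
Your reduction is the same as the paper's: since $E$ avoids $x$ and $y$, Lemma~\ref{cutset_lower_bound_lem} (with $B_x=\{x\}$, $B_y=\{y\}$) reduces the claim to exhibiting $d$ pairwise internally disjoint $x$--$y$ paths. Where you diverge is in how you produce them. The explicit construction you sketch --- take one step in direction $i$, then bring the remaining coordinates to those of $y$ in cyclic order, and finish coordinate $i$ last --- does not actually give disjoint interiors. For $d=2$, $x=(0,0)$, $y=(2,3)$, your recipe yields $P_1\colon (0,0),(1,0),(1,1),(1,2),(1,3),(2,3)$ and $P_2\colon (0,0),(0,1),(1,1),(2,1),(2,2),(2,3)$, which share the internal vertex $(1,1)$. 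The paper avoids this by having $P_j$ finish the $j$th coordinate \emph{completely} before cyclically adjusting the others, so that two paths $P_j$, $P_k$ agree on some coordinate only if exactly one of them has already set it and the other has not; the degenerate case $a_j=0$ is treated separately by stepping out via $e_j$ and returning by $-e_j$ at the very end. Your fallback is valid and genuinely different: the inequality $\kappa(G_1\square G_2)\ge \kappa(G_1)+\kappa(G_2)$ for connected nontrivial factors, applied inductively with $\kappa(\Z_{n_i})\ge 1$, gives $\kappa(G)\ge d$, and the global form of Menger's theorem then supplies the $d$ internally disjoint paths. That route buys you freedom from disjointness bookkeeping but imports a nonelementary black-box theorem, whereas the paper's one-paragraph explicit construction keeps the argument self-contained. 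Either route closes the lemma, but the explicit construction, if you use one, should be the paper's and not the one you sketched.
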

\begin{proof}
The lemma is standard, but we give a proof for completeness. Suppose $E\cap\{x,y\}=\emptyset$, then by the previous lemma it is enough to exhibit $d$ paths from $x$ to $y$, disjoint in their interior. By applying translations and reflections to the torus, we may assume without loss of generality that $x=(0,0,\ldots,0)$ and $y=(a_1,a_2,\ldots, a_d)$ with $0\le a_j\le \frac{n_j}{2}$. For each $1\le j\le d$, if $a_j\neq 0$, define the path $P_j$ as the path from $x$ to $y$ going from $x$ to $x+a_j e_j$ by adding $e_j$ each step, then to $x+a_j e_{j}+a_{j+1}e_{j+1}$ by adding $e_{j+1}$, then to $x+a_je_j+a_{j+1}e_{j+1}+a_{j+2}e_{j+2}$ by adding $e_{j+2}$ and so on until $y$, where all subscripts are interpreted cyclically (so that $e_{d+1}=e_1$, $a_{d+2}=a_2$, etc.). If $a_j=0$, we define the path $P_j$ as going from $x$ to $x+e_j$ then to $x+e_j+a_{j+1}e_{j+1}$ and so on until $x+e_j+\sum_{k=1}^{d-1}a_{j+k}e_{j+k}$ and finally to $y$ (by subtracting $e_j$). It is straightforward to verify that these paths are all disjoint in their interiors.
\end{proof}
\begin{proof}[Proof of Proposition~\ref{cutset_start_prop}]
By applying translations and reflections to the torus, we may assume without loss of generality that $x=(0,0,\ldots,0)$ and $y=(a_1,a_2,\ldots, a_d)$ with $0\le a_j\le \frac{n_j}{2}$. Let $P$ be the path from $x$ to $y$ which goes in straight lines, in the positive coordinate directions, from $(0, 0, 0, \ldots, 0)$ to $(a_1, 0, 0, \ldots, 0)$ to $(a_1, a_2, 0, \ldots, 0)$ and so on up to $y$. We start by supposing that $|E|=M'$ for some $M'\le M$ and divide into cases:
\begin{enumerate}
\item $M'<d$. By Lemma~\ref{minimal_cutset_lem}, letting $A_{M'}^1:=\{x,y\}$ we have $E\cap A^1\neq\emptyset$ and $|A_{M'}^1|=2\le 10M'$.
\item $d\le M'<\frac{1}{4}\sum_{j=1}^{d-1} n_j$. Define
\begin{equation*}
B'=\{z\in V[G]\ |\ \exists 1\le j\le d-1\text{ and }0\le i\le \frac{n_j}{2}-1\text{ s.t. }z=x-ie_j+ie_d\}.
\end{equation*}
We have $|B'|=\frac{1}{2}\sum_{j=1}^{d-1} n_j - (d-2)$ and we check that for any $z_1,z_2\in B'$, $z_1\neq z_2$, the paths $P+z_1$ and $P+z_2$ are disjoint. Indeed, the last statement is equivalent to saying $(P-P)\cap(B'-B')=\{(0,\ldots,0)\}$, but  if we write $z_1-z_2=(i_1,\ldots, i_d)$ if $z_1 = z_2 + \sum_{j=1}^d i_je_j$ and $-\frac{n_j}{2}+1\le i_j\le \frac{n_j}{2}$, then each point in $B'-B'$ has sum of coordinates 0 (using the fact that $n_d\ge n_j$ for all $j$) and cannot have its $j$'th coordinate equal $\frac{n_j}{2}$ for any $j$, while each point in $P-P$ either has its $j$'th coordinate equal to $\frac{n_j}{2}$ for some $j$, or all its coordinates are simultaneously non-negative or non-positive.

Continuing, for any $1\le a\le |B'|$, we may find a connected set $B$ with $x\in B$ such that $|B|\le 2a$ and $|B\cap B'|=a$. Taking such a set for $a=M'+1$ (using that $\frac{1}{4}\sum_{j=1}^{d-1} n_j>d$ by the assumption of this item) and letting $A_{M'}^2:=B\cup (B+y)$, by Lemma~\ref{cutset_lower_bound_lem}, $E$ intersects $A_{M'}^2$ and $|A_{M'}^2|\le 10M'$.

\item $\frac{1}{4}\sum_{j=1}^{d-1} n_j \le M'< 2\prod_{j=1}^{d-1} n_j$. In this case we may find a connected set $B\subseteq \{x\in V[G]\ |\ x_d=0\}$ which contains a path from $x$ to $(a_1, a_2, \ldots, a_{d-1}, 0)$ and such that $\lceil\frac{M'+1}{2}\rceil\le |B|\le 2M'+1$. This set is connected by $2|B|$ disjoint paths to the set $B+(0, \ldots, 0, a_d)$ (the paths are simply straight lines along the last direction, going in both directions around the torus). Letting $A_{M'}^3:=B\cup (B+(0, \ldots, 0, a_d))$, by Lemma~\ref{cutset_lower_bound_lem}, $E$ intersects $A_{M'}^3$ and $|A_{M'}^3|\le 10M'$.

\item $M'\ge 2\prod_{j=1}^{d-1} n_j$. Letting $A_{M'}^4:=P$, the path $A_{M'}^4$ must intersect $E$ by its definition and its length is $\sum_{j=1}^d a_j\le \frac{1}{2}\sum_{j=1}^d n_j\le \prod_{j=1}^{d-1} n_j + n_d \le M+n_d$ (using that $n_j\ge 2$).
\end{enumerate}
Next, for $M'\le M$, let $1\le j(M')\le 4$ be the ``case'' above in which $M'$ is treated. We note that we may choose the $(A_{M'}^{j(M')})_{M'=1}^M$ so that $A_{M''}^{j(M'')}\subseteq A_{M'}^{j(M')}$ whenever $M''\le M'$ and $j(M'')=j(M')$. Hence we may define $A_M:=\cup_{M'=1}^M A_{M'}^{j(M')}$ and have $E\cap A_M\neq\emptyset$ and $|A_M|\le 30M$ if $M<2\prod_{j=1}^{d-1} n_j$ and $|A_M|\le 31M+n_d$ if $M\ge2\prod_{j=1}^{d-1} n_j$, as required.
\end{proof}

{\bf Minimal Edge Cutsets:} For non-empty $X,Y\subseteq V[G]$, let
$\bm{{\MCut(X,Y)}}$ be the set of all minimal edge cutsets
separating $X$ and $Y$. I.e., the set of all $\Gamma\subseteq E[G]$
such that any path from some $x\in X$ to some $y\in Y$ must cross an
edge of $\Gamma$ and any strict subset $\Gamma'\subset \Gamma$ does
not share this property. Note that $\MCut(X,Y)=\MCut(Y,X)$ and that
$\MCut(X,Y)\neq\emptyset$ if and only if $X\cap Y=\emptyset$. For
$x,y\in V[G]$, we shall write $\MCut(x,Y), \MCut(X,y)$ and
$\MCut(x,y)$ instead of $\MCut(\{x\},Y), \MCut(X,\{y\})$ and
$\MCut(\{x\},\{y\})$.

For $\Gamma\in\MCut(X,Y)$ and $v\in V[G]$, define
$\bm{{\comp(\Gamma,v)}}$ to be the connected component of $v$ in $G$
when removing the edges of $\Gamma$, $\bm{P_\Gamma(v)}$ to be the
number of edges in $\Gamma$ incident to $v$ and
$\bm{\Ei(\Gamma,v)}:=\comp(\Gamma,v)\cap\{w\ |\ P_\Gamma(w)>0\}$,
the inner boundary of $\comp(\Gamma,v)$. By definition, for any
$v_1,v_2\in V[G]$ we have that $\comp(\Gamma,v_1)$ and
$\comp(\Gamma,v_2)$ are either disjoint or identical. We define
$\bm{{\subcut(\Gamma, v)}}$ to be all edges between
$\comp(\Gamma,v)$ and its complement. We have
\begin{proposition}\label{subcut_identical_prop}
For any non-empty $X,Y\subseteq V[G]$, $\Gamma\in\MCut(X,Y)$ and $x\in X$ we have $\subcut(\Gamma,x)\subseteq\Gamma$ and $\subcut(\Gamma,x)\in\MCut(x,Y)$. In addition, if $x_1,x_2\in X$ then $\subcut(\Gamma,x_1)$ and $\subcut(\Gamma,x_2)$ are either disjoint or identical.
\end{proposition}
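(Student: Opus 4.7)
My plan is to unpack each of the three claims directly from the definitions, using the minimality of $\Gamma$ only where it is genuinely required.

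For the inclusion $\subcut(\Gamma,x)\subseteq\Gamma$, I would argue purely by definition: any edge between $\comp(\Gamma,x)$ and its complement that did not lie in $\Gamma$ would, together with an intra-component path from $x$, extend $\comp(\Gamma,x)$ beyond itself — a contradiction.

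To prove $\subcut(\Gamma,x)\in\MCut(x,Y)$, I first show that no vertex of $Y$ lies in $\comp(\Gamma,x)$: otherwise the intra-component path from $x$ to that $Y$-vertex would be an $X$-to-$Y$ path in $G\setminus\Gamma$, contradicting $\Gamma\in\MCut(X,Y)$. Hence any path from $x$ to $Y$ must leave $\comp(\Gamma,x)$ and therefore cross $\subcut(\Gamma,x)$, making it a cutset. For minimality, fix $e=(u,v)\in\subcut(\Gamma,x)$ with $u\in\comp(\Gamma,x)$. By minimality of $\Gamma$, the set $\Gamma\setminus\{e\}$ is not a cutset separating $X$ from $Y$, so there exists a path $P$ from some $x'\in X$ to some $y'\in Y$ whose only $\Gamma$-edge is $e$. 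Splitting $P$ at $e$, the initial segment lies in $G\setminus\Gamma$ and ends at an endpoint of $e$; since that endpoint must be $u$ (otherwise $y'$ would be forced into $\comp(\Gamma,x)$, contradicting the previous paragraph), we get $x'\in\comp(\Gamma,x)$. Concatenating an intra-component walk from $x$ to $x'$ with $P$ yields a walk from $x$ to $y'$ avoiding $\subcut(\Gamma,x)\setminus\{e\}$, so $\subcut(\Gamma,x)\setminus\{e\}$ is not a cutset separating $x$ from $Y$.

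For the final claim, suppose $e=(u,v)\in\subcut(\Gamma,x_1)\cap\subcut(\Gamma,x_2)$ with $x_1,x_2\in X$. Each of the two components $\comp(\Gamma,x_i)$ contains exactly one endpoint of $e$. If the same endpoint (say $u$) lies in both, the components intersect and hence coincide, making the two subcuts identical. Otherwise $u\in\comp(\Gamma,x_1)$ and $v\in\comp(\Gamma,x_2)$ with these components disjoint. Applying the minimality of $\Gamma$ as in the previous paragraph, I obtain a path from $x'\in X$ to $y'\in Y$ whose only $\Gamma$-edge is $e$, and the splitting argument forces $x'\in\comp(\Gamma,x_1)$ and $y'\in\comp(\Gamma,x_2)$. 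But $\comp(\Gamma,x_2)$ already contains $x_2\in X$, and by the separation property no $Y$-vertex may share a component of $G\setminus\Gamma$ with an $X$-vertex, contradicting $y'\in Y$.

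The one delicate point — and the step where I expect the argument to feel least automatic — is the observation in the last paragraph: the minimality of $\Gamma$ is exactly what forbids a single edge of $\Gamma$ from bridging two distinct components of $G\setminus\Gamma$ that each contain an $X$-vertex, since such a bridge would, via the minimal-cutset witness path, deposit a $Y$-vertex inside an $X$-containing component. Everything else is essentially a careful reading of the definitions.
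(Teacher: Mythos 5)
Your proof is correct and follows essentially the same route as the paper's: establish the inclusion directly from the definitions, obtain the cutset property and minimality by extracting a witness path from the minimality of $\Gamma$ and showing it must cross $e$ in the direction out of $\comp(\Gamma,x)$, and reuse the same witness-path analysis for the disjoint-or-identical claim. The only cosmetic difference is that you isolate the observation $Y\cap\comp(\Gamma,x)=\emptyset$ as an explicit preliminary step, whereas the paper leaves it implicit, and you phrase the final contradiction as $y'$ landing in an $X$-containing component rather than as an explicit $\Gamma$-avoiding walk from $x_2$ to $y$; these are equivalent.
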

\begin{proof}
Let $\Gamma_x:=\subcut(\Gamma,x)$ and $C_x:=\comp(\Gamma,x)$. By definition of $\Gamma_x$ and $C_x$ we have $\Gamma_x\subseteq \Gamma$. Furthermore, since $\Gamma\in\MCut(X,Y)$, any path from $x$ to a vertex in $Y$ must pass through an edge of $\Gamma_x$. To show that $\Gamma_x$ is minimal, fix $e=\{v,w\}\in\Gamma_x$ with $v\in C_x$. We need to show that there exists a path $P$ from $x$ to some $y\in Y$ whose only intersection with $\subcut(\Gamma,x)$ is at $e$. Since $\Gamma_x\subseteq\Gamma$ and $\Gamma\in\MCut(X,Y)$, there exists $x'\in X$ and a path $P'$ from $x'$ to some $y\in Y$ which only intersects $\Gamma_x$ at $e$. It is not possible that $P'$ crosses $e$ from $w$ to $v$ since by definition of $C_x$, any path from $v$ to some $y\in Y$ must cross $\Gamma_x$ (so $P'$ will have crossed $\Gamma_x$ at least twice). Hence we may take $P$ to be a path from $x$ to $v$ which avoids $\Gamma_x$ and then continues along $P'$ to $y$. This shows $\Gamma_x\in\MCut(x,Y)$.

Now let $x_1,x_2\in X$, $C_{x_1}:=\comp(\Gamma,x_1)$, $C_{x_2}:=\comp(\Gamma,x_2)$, $\Gamma_{x_1}:=\subcut(\Gamma,x_1)$ and $\Gamma_{x_2}:=\subcut(\Gamma,x_2)$. As remarked before the lemma, $C_{x_1}$ and $C_{x_2}$ are either identical or disjoint. If they are identical, then $\Gamma_{x_1}=\Gamma_{x_2}$. We will show that $\Gamma_{x_1}\cap\Gamma_{x_2}\neq \emptyset$ implies $C_{x_1}=C_{x_2}$. Indeed, suppose, to get a contradiction, that $e=\{v,w\}\in(\Gamma_{x_1}\cap \Gamma_{x_2})$, but $C_{x_1}\neq C_{x_2}$. Since $C_{x_1}\cap C_{x_2}=\emptyset$ we have WLOG that $v\in C_{x_1}$ and $w\in C_{x_2}$. But since $\Gamma\in\MCut(X,Y)$, there exists a path $P$ from $x_1$ to some $y\in Y$ intersecting $\Gamma_{x_1}$ only at $e$, and crossing $e$ from $v$ to $w$. Hence we may walk from $x_2$ to $w$ and then along $P$ to $y$ without crossing $\Gamma$ at all, contradicting that $\Gamma\in\MCut(X,Y)$.
\end{proof}
The following proposition puts in a convenient form the simple fact that if a vertex is completely surrounded by a cutset then it forms its own component with respect to it.
\begin{proposition}\label{one_point_cutset_prop}
For any non-empty $X,Y\subseteq V[G]$, $\Gamma\in\MCut(X,Y)$ and $v\in V[G]$ we either have $\Ei(\Gamma,v)=\{v\}$ or $1\le P_\Gamma(w)\le \Delta(G)-1$ for all $w\in \Ei(\Gamma,v)$.
\end{proposition}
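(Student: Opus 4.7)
The lower bound $P_\Gamma(w)\ge 1$ is automatic: by the definition $E_i(\Gamma,v):=\comp(\Gamma,v)\cap\{u\ |\ P_\Gamma(u)>0\}$, membership in the inner boundary already forces $P_\Gamma(w)\ge 1$. So the only real content is the upper bound, and I expect this to follow directly from the definitions of $\comp$, $P_\Gamma$ and $E_i$ without needing to invoke the minimality of $\Gamma$ at all.

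My plan is to argue the contrapositive. Suppose some $w\in E_i(\Gamma,v)$ satisfies $P_\Gamma(w)=\Delta(G)$. Then every edge incident to $w$ lies in $\Gamma$, so $w$ has no neighbours in the subgraph obtained from $G$ by deleting the edges of $\Gamma$; equivalently $\comp(\Gamma,w)=\{w\}$. Because $w\in\comp(\Gamma,v)$, and connected components of $G\setminus\Gamma$ are either identical or disjoint, we must have $\comp(\Gamma,v)=\comp(\Gamma,w)=\{w\}$. In particular $v=w$ and $\comp(\Gamma,v)=\{v\}$.

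It then remains only to intersect with $\{u\ |\ P_\Gamma(u)>0\}$ to conclude $E_i(\Gamma,v)=\{v\}$, which is fine since $P_\Gamma(v)=\Delta(G)>0$. The main (and only) observation is thus the trivial one that a vertex all of whose edges are cut is necessarily a singleton component of $G\setminus\Gamma$; given how brief this is, there is no real obstacle to overcome, and I would write the proof in two or three lines.
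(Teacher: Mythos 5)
Your proof is correct and follows essentially the same route as the paper's: the paper's proof is "if $P_\Gamma(w)=\Delta(G)$ we must have $w=v$ since otherwise any path from $w$ to $v$ would cross $\Gamma$," which is the same observation you make (a fully-surrounded vertex is a singleton component), just phrased via paths rather than via the identity of components. Your extra remarks — that the lower bound is definitional, that minimality of $\Gamma$ is never used, and the explicit final step recovering $E_i(\Gamma,v)=\{v\}$ from $\comp(\Gamma,v)=\{v\}$ — are all accurate and only make the paper's terse two-line proof a bit more explicit.
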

\begin{proof}
Let $w\in \Ei(\Gamma,v)$ and note that by definition $P_\Gamma(w)\ge 1$. If $P_\Gamma(w)=\Delta(G)$ we must have $w=v$ since otherwise any path from $w$ to $v$ will cross $\Gamma$ contradicting the fact that $w\in\comp(\Gamma,v)$.
\end{proof}

The next proposition discusses the connectivity properties of cutsets on the torus.
\begin{proposition}\label{full_proj_prop}
For any $x,y\in V[G]$ and $\Gamma\in\MCut(x,y)$, we have that either $\Ei(\Gamma,x)$ has a unique $\GC$-connected component, or each of its $\GC$-connected components has full projection on at least one direction.
\end{proposition}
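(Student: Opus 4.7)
The plan is to apply Theorem~\ref{Timar_result} with the choice $\CC := V[G]\setminus\comp(\Gamma,x)$, for which Tim\'ar's outer boundary set $E_1$ will coincide exactly with $E_i(\Gamma,x)$, and then to close the gap between $G^+$-connectedness (which Tim\'ar supplies) and $\GC$-connectedness (which the proposition demands) by exploiting the freedom to choose the parameters $(v_i)_{i=k}^d$ so as to avoid any hypothetical $\GC$-component lacking full projection.

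The main obstacle is showing that $V[G]\setminus\comp(\Gamma,x)$ is $G$-connected, which is required in order to apply Tim\'ar's theorem at all. Proposition~\ref{subcut_identical_prop}, combined with minimality of $\Gamma$, forces $\Gamma=\subcut(\Gamma,x)$, so every edge of $\Gamma$ straddles $\comp(\Gamma,x)$ and its complement. Let $D_1,\ldots,D_m$ be the $G$-connected components of $V[G]\setminus\comp(\Gamma,x)$ with $y\in D_1$; the crux is a second minimality argument showing that every $\Gamma$-edge has its exterior endpoint in $D_1$. Indeed, if $e\in\Gamma$ had its exterior endpoint in some $D_j$ with $j\ne 1$, the component of $x$ in $G-(\Gamma\setminus\{e\})$ would just be $\comp(\Gamma,x)\cup D_j$: the only available exit from $\comp(\Gamma,x)$ would be $e$, and once inside $D_j$ no $G$-edge leads to any other $D_{j'}$ (distinct components cannot be $G$-adjacent) or back to $\comp(\Gamma,x)$ (all such edges lie in $\Gamma\setminus\{e\}$). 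Thus $\Gamma\setminus\{e\}$ would still separate $x$ from $y$, contradicting minimality. Since then no $\Gamma$-edge meets any $D_j$ with $j\ne 1$, such a $D_j$ would have no $G$-edge outside itself at all (its internal edges avoid $\Gamma$ entirely), contradicting the $G$-connectedness of $V[G]$. Hence $m=1$.

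With $V[G]\setminus\comp(\Gamma,x)$ now known to be $G$-connected, Theorem~\ref{Timar_result} applied with $\CC:=V[G]\setminus\comp(\Gamma,x)$ and the vertex $x$ identifies its outer boundary $E_1$ with $E_i(\Gamma,x)$: the component of $x$ in $V[G]\setminus\CC$ is the whole of $\comp(\Gamma,x)$, and a vertex of $\comp(\Gamma,x)$ is at $G$-distance one from $\CC$ precisely when it is incident to a $\Gamma$-edge. Consequently $E_i(\Gamma,x)$ is connected in $G^+((v_i)_{i=k}^d)$ for \emph{every} choice of $(v_i)_{i=k}^d$. (When $n_d=2$ one has $k=\infty$, $G^+=\GC$, and the claim is immediate.)

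Finally, suppose for contradiction that some $\GC$-component $C$ of $E_i(\Gamma,x)$ has no full projection. Then for each $i\in\{k,\ldots,d\}$ one can pick $v_i^*$ with $P_i(v_i^*)\cap C=\emptyset$. Every edge of $G^+((v_i^*))$ that is not already a $\GC$-edge has both its endpoints on some $P_i(v_i^*)$ and is therefore disjoint from $C$. Consequently the $G^+((v_i^*))$-neighbours of $C$ inside $E_i(\Gamma,x)$ coincide with its $\GC$-neighbours there, which by maximality of the $\GC$-component $C$ all lie inside $C$. The $G^+((v_i^*))$-connectedness of $E_i(\Gamma,x)$ from the preceding paragraph then forces $C=E_i(\Gamma,x)$, giving a unique $\GC$-component and the desired contrapositive.
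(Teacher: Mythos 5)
Your proposal is correct and follows essentially the same route as the paper: apply Tim\'ar's Theorem~\ref{Timar_result} to identify $E_i(\Gamma,x)$ as a $G^+((v_i))$-connected outer boundary, then observe that a hypothetical $\GC$-component lacking full projection could be avoided by a suitable choice of $(v_i)_{i=k}^d$, forcing that component to be all of $E_i(\Gamma,x)$. The only stylistic difference is that the paper takes $\CC := \comp(\Gamma,y)$, which is $G$-connected by definition, and invokes "minimality of $\Gamma$" in one line to identify the component of $x$ in its complement with $\comp(\Gamma,x)$; you instead take $\CC := V[G]\setminus\comp(\Gamma,x)$ and spell out the minimality argument (via $\Gamma=\subcut(\Gamma,x)$ and the impossibility of stray components $D_j$) showing this set equals $\comp(\Gamma,y)$ and is hence $G$-connected -- the same fact, proved rather than cited.
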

Here we mean that the projection of $E'\subseteq V[G]$ on direction
$1\le i\le d$ is $\{(v_1,\ldots, v_{i-1},v_{i+1},\ldots, v_d)\ |\
\exists v=(v_1,\ldots, v_{i-1},v_i,v_{i+1},\ldots, v_d)\in E'\}$. We
remark that it seems that $\Ei(\Gamma,x)$ as in the proposition may
have at most $2$ $\GC$-connected components. However, this seems
more difficult to prove and we do not need it in the sequel.
\begin{proof}
Set $\CC=\comp(\Gamma,y)$ and $E=\Ei(\Gamma,x)$. Then
\begin{equation*}
E=\{\text{connected component of $x$ in $V[G]\setminus\CC$}\}\cap\{v\in V[G]\ |\ d_G(v,\CC)=1\},
\end{equation*}
by minimality of $\Gamma$. As in Theorem~\ref{Timar_result}, let
$k=\min\{1\le i\le d\ |\ n_i>2\}$. Let $E'$ be a $\GC$-connected
component of $E$. Suppose that $E'$ does not have full projection on
any direction. Then we can pick $(v_i)_{i=k}^d$ such that $E'$ does
not intersect $P_i(v_i)$ for any $i$. Theorem~\ref{Timar_result}
implies that $E$ is connected in $G^+((P_i(v_i)_{i=k}^d))$, but by
our assumption, the connected component of $E'$ in
$G^+((P_i(v_i)_{i=k}^d))$ is $E'$ itself. Hence $E$ has a unique
$\GC$-connected component.
\end{proof}

The next proposition allows to find a point in each $\GC$-connected component of a cutset with relative ease.
\begin{proposition}\label{point_in_cutset_comp_prop}
For any $x,y\in V[G]$ and integer $M>0$, there exists $A=A(x,y,M)\subseteq V[G]$ with $|A|\le 40Mn_d^{1(M\ge\alpha)}$ such that for any $\Gamma\in\MCut(x,y)$ and $\GC$-connected component $E'$ of $\Ei(\Gamma,x)$ with $|E'|\le M$, we have $E'\cap A\neq\emptyset$.

\end{proposition}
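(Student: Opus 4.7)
The plan is to split on whether $M<\alpha$ or $M\ge\alpha$, reducing each case to Proposition~\ref{cutset_start_prop} together with Proposition~\ref{full_proj_prop}. A key preliminary observation is that for any $\Gamma\in\MCut(x,y)$ the set $E_i(\Gamma,x)$ is itself a vertex cutset separating $x$ from $y$: for any path $x=v_0,v_1,\ldots,v_k=y$ in $G$, since $x\in\comp(\Gamma,x)$ and $y\notin\comp(\Gamma,x)$, there is a last index $j$ with $v_j\in\comp(\Gamma,x)$, and the edge $\{v_j,v_{j+1}\}$ then connects two distinct components of $G\setminus\Gamma$ so must lie in $\Gamma$, forcing $v_j\in E_i(\Gamma,x)$. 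Thus $E_i(\Gamma,x)\in\VCut(x,y)$.

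In the case $M<\alpha$, I would argue that $E'$ is forced to be the unique $\GC$-connected component of $E_i(\Gamma,x)$. Indeed, a set with full projection on some direction $i_0$ must contain at least $\prod_{j\neq i_0}n_j\ge\alpha$ vertices (one per line in direction $i_0$), so the assumption $|E'|\le M<\alpha$ rules out the second alternative of Proposition~\ref{full_proj_prop}. Hence $E'=E_i(\Gamma,x)\in\VCut(x,y)$ with $|E'|\le M<2\alpha$, and Proposition~\ref{cutset_start_prop} supplies a set $A:=A(x,y,M)$ of size at most $30M\le 40M$ that meets $E'$.

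For $M\ge\alpha$ I would take $A=A_1\cup A_2$, where $A_1:=A(x,y,M)$ is the set from Proposition~\ref{cutset_start_prop}, of size at most $31M+n_d$, and $A_2:=\bigcup_{i=1}^d L_i$ is the union of one full axis-parallel line $L_i$ in each coordinate direction (for concreteness, those through $x$), so that $|A_2|\le \sum_{i=1}^d n_i\le d\,n_d$. If $E'$ is the unique $\GC$-connected component of $E_i(\Gamma,x)$, then as above $E'=E_i(\Gamma,x)\in\VCut(x,y)$ has size at most $M$, and $A_1$ meets $E'$. Otherwise Proposition~\ref{full_proj_prop} produces a direction $i_0$ on which $E'$ has full projection, so $E'$ meets every line in direction $i_0$ and in particular the line $L_{i_0}\subseteq A_2$.

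The only remaining step is a routine size computation in the second case. From $\alpha\ge 2^{d-1}$ and $M\ge\alpha$ we get $d\le 1+\log_2 M\le M+1$, hence $d\,n_d\le(M+2)n_d$; combined with $31M+n_d\le 32M n_d$ (using $M,n_d\ge 1$) this gives $|A|\le 35M n_d\le 40M n_d$, as desired. I do not expect any serious obstacle: the argument is essentially a clean packaging of Propositions~\ref{cutset_start_prop} and~\ref{full_proj_prop}, with Proposition~\ref{full_proj_prop} doing all the heavy topological lifting and the ``line in each direction'' set $A_2$ handling exactly those small components that are not themselves cutsets.
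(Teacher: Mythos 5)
Your proof is correct and follows essentially the same route as the paper: split on $M<\alpha$ versus $M\ge\alpha$, observe that $E_i(\Gamma,x)\in\VCut(x,y)$, use Proposition~\ref{full_proj_prop} to argue that a small component must be the \emph{only} component (and hence is itself the cutset, handled by Proposition~\ref{cutset_start_prop}), and otherwise catch a full-projection component with a union of coordinate lines through $x$. Your write-up is a bit more explicit than the paper's (it proves $E_i(\Gamma,x)\in\VCut(x,y)$ rather than asserting it, and states clearly why $E'=E_i(\Gamma,x)$ in the small case) and the constant bookkeeping is done slightly differently, but both land within the stated $40Mn_d^{\mathbf{1}(M\ge\alpha)}$ bound.
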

We remark that the proof gives the stronger conclusion that if $|E'|\le M$ then \emph{all} $\GC$-connected components of $\Ei(\Gamma,x)$ intersect $A$, but we shall not need this.
\begin{proof}
Note that $\Ei(\Gamma,x)\in\VCut(x,y)$. We divide into two cases
\begin{enumerate}
\item $M<\alpha$. In this case, we take $A$ to be the set $A(x,y,M)$ of Proposition~\ref{cutset_start_prop}. For any $\Gamma\in\MCut(x,y)$ with $|\Ei(\Gamma,x)|\le M$, by that Proposition, $\Ei(\Gamma,x)\cap A\neq\emptyset$ and $|A|\le 30M$. By Proposition~\ref{full_proj_prop}, $\Ei(\Gamma,x)$ can have at most one $\GC$-connected component since otherwise each of its connected components would have at least $\alpha$ vertices.
\item $M\ge \alpha$. Writing $x=(x_1,\ldots, x_d)$, we set, for $1\le i\le d$, $P_i:=\{(v_1,\ldots, v_d)\ |\ v_j=x_j\text{ for all $j\neq i$}\}$. We then take $A$ to be the union of $A(x,y,M)$ of Proposition~\ref{cutset_start_prop} and $\cup_{i=1}^d P_i$. Note that $|A|\le 31M+n_d+\sum_{i=1}^d n_i\le 32M+2n_d\le 40Mn_d$ (using that $n_i\ge 2$). For any $\Gamma\in\MCut(x,y)$ with $|\Ei(\Gamma,x)|\le M$, we have $\Ei(\Gamma,x)\cap A\neq\emptyset$ by Proposition~\ref{cutset_start_prop}. If $\Ei(\Gamma,x)$ has a unique $\GC$-connected component we are done. Otherwise, by Proposition~\ref{full_proj_prop}, each of its $\GC$-connected components intersects $\cup_{i=1}^d P_i$.\qedhere
\end{enumerate}
\end{proof}

{\bf Odd Minimal Edge Cutsets:} For non-empty sets $X,B\subseteq
V[G]$, we define $\bm{{\OMCut(X,B)}}$, the set of \emph{odd minimal
edge cutsets}, to be those
\begin{equation}\label{OMCut_def}
\text{$\Gamma\in\MCut(X,B)$ satisfying that for any $x\in X$, $\Ei(\Gamma,x)\subseteq V^{\odd}$.}
\end{equation}
Note that that it follows that for any $b\in B$, $\Ei(\Gamma,b)\subseteq V^\even$ and that unlike $\MCut(X,B)$, we generally have $\OMCut(X,B)\neq \OMCut(B,X)$. We remark that ``oddness'' is preserved under taking subcut, that is, if $x\in X$ then $\subcut(\Gamma,x)\in\OMCut(x,B)$ and if $b\in B$ then $\subcut(\Gamma,b)\in\OMCut(X,b)$. This follows simply using Proposition~\ref{subcut_identical_prop}.

Odd minimal cutsets have special properties not shared by the more familiar minimal cutsets (which are not odd) that will be essential to our proofs. Such cutsets arise naturally in our context as follows.
\begin{proposition}\label{level_set_OMCut_prop}
Let $x\in V[G]$, $(B,\mu)$ legal boundary conditions with
non-positive $\mu$ and $f\in\Hom(G,B,\mu)$. If $\LS(f,x,B)\neq
\emptyset$ then $\LS(f,x,B)\in\OMCut(x,B)$.
\end{proposition}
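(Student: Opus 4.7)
Write $\Gamma:=\LS(f,x,B)$, and let $A$ and $A^c_x$ be as in the definition. My plan is to unfold the definitions in three steps: first verify that $\Gamma$ separates $x$ from $B$; then verify minimality edge by edge; and finally check the parity condition on the interior boundary.

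First I would note that since $\mu$ is non-positive, every $b\in B$ satisfies $f(b)=\mu(b)\le 0$, so $b$ lies in its own connected component of $\{v\in V[G]\ |\ f(v)\le 0\}$, and hence $B\subseteq A$. Since $\LS(f,x,B)\neq\emptyset$ forces $x\notin A$, the vertex $x$ lies in the non-empty set $A^c_x$, which is by construction a connected component of $V[G]\setminus A$. Any path from $x$ to a vertex of $B\subseteq A$ must therefore have a first edge that leaves $A^c_x$; because $A^c_x$ is a full connected component of $V[G]\setminus A$, the far endpoint of this edge must lie in $A$, so the edge lies in $\Gamma$. Hence $\Gamma$ separates $x$ from $B$.

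Next I would prove minimality. Fix $e=\{u,v\}\in\Gamma$ with $u\in A^c_x$ and $v\in A$. Since $A^c_x$ is connected in $G$, there is a path $P_1$ from $x$ to $u$ staying in $A^c_x$, and its edges avoid $\Gamma$ by construction. Since $v\in A$, by definition of $A$ there is a path $P_2$ from $v$ to some $b\in B$ staying in $\{f\le 0\}$, hence staying in $A$; its edges also avoid $\Gamma$. Concatenating $P_1$, $e$, and $P_2$ gives a path from $x$ to $B$ whose only $\Gamma$-edge is $e$, so $\Gamma\setminus\{e\}$ fails to separate $x$ from $B$. Combined with the previous paragraph, $\Gamma\in\MCut(x,B)$.

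For the oddness condition I would first observe that removing $\Gamma$ from $G$ disconnects $A^c_x$ from its $A$-neighbours but leaves $A^c_x$ internally connected, so $\comp(\Gamma,x)=A^c_x$ and $E_i(\Gamma,x)$ is exactly the set of $u\in A^c_x$ possessing a neighbour in $A$. Pick such a $u$, and let $v$ be its neighbour in $A$. Because $v\in A$ we have $f(v)\le 0$; because $u\notin A$ but $u\sim v$, the vertex $u$ cannot join $v$'s component of $\{f\le 0\}$, so $f(u)>0$. As $f$ is a homomorphism, $|f(u)-f(v)|=1$, forcing $f(u)=1$ and $f(v)=0$. Finally, since $(B,\mu)$ is a legal homomorphism boundary condition, $\mu$ is even on $V^\even$ and odd on $V^\odd$, and this parity propagates through $f$ on the bipartite graph $G$, so $f(u)=1$ forces $u\in V^\odd$. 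Thus $E_i(\Gamma,x)\subseteq V^\odd$, and $\Gamma\in\OMCut(x,B)$.

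There is no serious obstacle here: the entire argument is a careful but routine unwinding of the definitions of $\LS$, $A$, $A^c_x$, $\MCut$, and $\OMCut$, together with the two structural inputs that $\mu$ is non-positive (used to place $B$ inside $A$) and that $f$ preserves bipartite parity (used to upgrade $f(u)=1$ to $u\in V^\odd$).
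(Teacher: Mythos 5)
Your proof is correct and follows the same route as the paper's, which is quite terse: the paper just observes that $\LS(f,x,B)$ is the set of edges between a connected set containing $x$ and its complement containing $B$ (so it is in $\MCut(x,B)$), and that $f\equiv 1$ on the interior boundary, which by legality of the BC lies in $V^\odd$. You fill in the details the paper omits — the identification $\comp(\Gamma,x)=A^c_x$, the explicit construction showing edge-minimality, and the derivation $f(u)>0\Rightarrow f(u)=1$ — all correctly.
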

\begin{proof}
By its definition, if $\LS(f,x,B)\neq\emptyset$ then it consists of all edges between a set $\CC\subseteq V[G]$ and its complement where $x\in \CC$ and $B\cap \CC=\emptyset$ (since $\mu$ is non-positive). Hence $\LS(f,x,B)\in\MCut(x,B)$. In addition, by its definition, $f(v)=1$ for all points $v\in \Ei(\LS(f,x,B),x)$. Since our boundary conditions are legal, $\LS(f,x,B)\in\OMCut(x,B)$.
\end{proof}

For non-empty $X,B\subseteq V[G]$ and $\Gamma\in\OMCut(X,B)$, we denote $\bm{E_1(\Gamma)}:=\cup_{x\in X} \Ei(\Gamma,x)$ and $\bm{E_0(\Gamma)}:=\cup_{b\in B} \Ei(\Gamma,b)$.
By definition, $E_1(\Gamma)\subseteq V^\odd$ and $E_0(\Gamma)\subseteq V^\even$. We shall repeatedly use that for $1\le i\le \Delta(G)$,
\begin{equation}\label{parity_neighborhood_contained}
\text{if $v\in \Ei(\Gamma,v)$ and $\{v,v+f_i\}\notin \Gamma$ then $S(v+f_i)\subseteq \comp(\Gamma,v)$}.
\end{equation}
We also define $\bm{E_{1,1}(\Gamma)}:=\{v\in E_1(\Gamma)\ |\
\{v,v+e_1\}\in\Gamma\}$ and $\bm{\Eonee(\Gamma)}:=\{v\in
E_1(\Gamma)\ |\ P_\Gamma(v)\ge \Delta(G)-\sqrt{d}\}$. The letter
``e'' stands for ``exposed'' as vertices in $\Eonee(\Gamma)$ are
exposed to $\Gamma$ from many directions. $E_{1,1}$ and $\Eonee$
will play an important role in the definition of the transformation
$T$ in Section~\ref{T_def_section}. Finally note, following
Proposition~\ref{subcut_identical_prop}, that
$\subcut(\Gamma,x)\in\OMCut(x,B)$ and
$\subcut(\Gamma,b)\in\OMCut(X,b)$ for $x\in X$ and $b\in B$.

For the following propositions, fix non-empty $X,B\subseteq V[G]$ and $\Gamma\in\OMCut(X,B)$. These propositions are generally false for $\MCut$ cutsets. Our first proposition establishes the somewhat surprising property that surrounding every vertex, $\Gamma$ has the same number of edges in every direction.
\begin{proposition}\label{equal_edgenum_in_all_dir}
Setting $E_{v,j}:=\{w\in \Ei(\Gamma,v)\ |\ \{w,w+f_j\}\in\Gamma\}$ for $v\in V[G]$ and $1\le j\le \Delta(G)$, we have $|E_{v,j}|=|E_{v,k}|$ for all $1\le j,k\le \Delta(G)$.
\end{proposition}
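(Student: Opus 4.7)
The plan is to reduce the claim to counting edges of $\subcut(\Gamma,v)$ in each coordinate direction and exploit the odd property of $\Gamma$ through a local plaquette argument. Setting $\chi:=\mathbf{1}_{\comp(\Gamma,v)}$, the first step is the identification $|E_{v,j}|=|\{w\in V[G]\ |\ \chi(w)=1,\ \chi(w+f_j)=0\}|$; this uses that minimality of $\Gamma$ forbids a $\Gamma$-edge from lying inside $\comp(\Gamma,v)$, together with the observation that any $w\in\comp(\Gamma,v)$ with a $\Gamma$-edge leaving it is automatically in $E_i(\Gamma,v)$. The claim then splits into two cases. For the \emph{within-axis} case $f_k=-f_j$, which applies only when the corresponding $n_i>2$, the shift-invariance identity $\sum_w\chi(w)=\sum_w\chi(w+f_j)$ yields $|\{w\ |\ \chi(w)=1,\chi(w+f_j)=0\}|=|\{w\ |\ \chi(w)=0,\chi(w+f_j)=1\}|$, and re-indexing $w\mapsto w+f_j$ identifies the right-hand side with $|E_{v,k}|$.

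For the \emph{cross-axis} case, with $f_j,f_k$ in distinct coordinate axes $a\neq b$, the oddness of $\Gamma$ is genuinely used. The local claim I would prove is that on every basic $4$-cycle $v_1,v_2,v_3,v_4$ with $v_1,v_3\in V^p$ and $v_2,v_4\in V^{1-p}$, where $p$ is the parity class containing $E_i(\Gamma,v)$, the number of $\subcut(\Gamma,v)$-edges among the two axis-$a$ edges of the cycle equals the number among the two axis-$b$ edges. Property \eqref{parity_neighborhood_contained} supplies the critical constraint $\chi(v_2)=1\Rightarrow\chi(v_1)=\chi(v_3)=1$ (and analogously for $v_4$), leaving only a short enumeration of the admissible configurations of $(\chi(v_i))_{i=1}^4$, in each of which one checks the local equality directly. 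I would then sum this local identity over all basic $4$-cycles in axes $(a,b)$, using that each axis-$a$ edge appears in $2$ such cycles when $n_b>2$ and in $1$ when $n_b=2$ (and symmetrically for axis-$b$), and combine with the within-axis equality to obtain $|E_{v,j}|=|E_{v,k}|$.

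The main obstacle I anticipate is this plaquette bookkeeping together with the degenerate cycles arising when some $n_i=2$ (so that $w+e_i=w-e_i$ and the multiplicity of an edge in the cycle-count depends on which axes are involved). A secondary technical point is to justify that $E_i(\Gamma,v)$ is contained in a single parity class for every $v\in V[G]$ and not only for $v\in X$: minimality of $\Gamma$ forces $\Gamma$ to be the edge-boundary of $C_X:=\bigcup_{x\in X}\comp(\Gamma,x)$, so every component of $V[G]\setminus\Gamma$ is either a component of $C_X$ (with inner boundary in $V^{\odd}$) or a component of its complement (with inner boundary in $V^{\even}$).
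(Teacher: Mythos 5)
Your argument is correct, but it takes a genuinely different route than the paper. The paper proves the claim in three lines by a single bijection: assuming $E_i(\Gamma,v)\subseteq V^\odd$, the map $w\mapsto w+f_j$ sends $(\comp(\Gamma,v)\cap V^\odd)\setminus E_{v,j}$ bijectively onto $\comp(\Gamma,v)\cap V^\even$ (both well-definedness and surjectivity rest on the oddness fact that an even vertex of $\comp(\Gamma,v)$ has no incident $\Gamma$-edges), giving $|E_{v,j}|=|\comp(\Gamma,v)\cap V^\odd|-|\comp(\Gamma,v)\cap V^\even|$, a quantity independent of $j$. You instead split into a within-axis comparison (pure shift-invariance, no oddness required) and a cross-axis comparison (a local plaquette identity on every basic $4$-cycle, whose admissible $\chi$-configurations are constrained by the same oddness fact via \eqref{parity_neighborhood_contained}). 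Both proofs hinge on the identical local consequence of oddness, but the paper's bijection exploits it more economically, while your curl-type plaquette version is more laborious and does require the multiplicity bookkeeping you flag. That bookkeeping works out: writing $C_a$ for the number of crossing axis-$a$ edges and $m_a\in\{1,2\}$ for the per-edge plaquette multiplicity ($m_a=1$ if $n_a=2$, else $m_a=2$), summing the local identity over all $(a,b)$-plaquettes gives $m_bC_a=m_aC_b$, and one checks that $|E_{v,j}|=C_a/m_a$ when $f_j$ lies along axis $a$, in both the $n_a=2$ and $n_a>2$ cases, so the factors cancel as needed. Your secondary observation — that $E_i(\Gamma,v)$ lies in a single parity class for arbitrary $v\in V[G]$, not only for $v\in X$ or $v\in B$, which the paper leaves implicit in the phrase ``by definition of $\OMCut$'' — is a genuine point, and your justification via minimality (every component of $G$ minus $\Gamma$ that touches a $\Gamma$-edge contains some point of $X$ or of $B$) is correct.
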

\begin{proof}
Set $E_v:=\Ei(\Gamma,v)$ and $C_v:=\comp(\Gamma,v)$. By definition of $\OMCut$, $E_v\subseteq V^\odd$ or $E_v\subseteq V^\even$. Assume WLOG that $E_v\subseteq V^\odd$, then $|\{w\in E_v\ |\ \{w,w+f_j\}\in\Gamma\}|=|C_v\cap V^\odd|-|C_v\cap V^\even|$ since the mapping $w\mapsto w+f_j$ maps points of $(C_v\cap V^\odd)\setminus \{w\in E_v\ |\ \{w,w+f_j\}\in \Gamma\}$ bijectively to $C_v\cap V^\even$. Hence $|E_{v,j}|=|E_{v,k}|$ as required.
\end{proof}
The next proposition shows a connection between the number of $\Gamma$-edges incident to adjacent vertices.
\begin{proposition}\label{neighbor_plaquette_prop}
If $v,w\in V[G]$, $v\adj{G} w$ and $\{v,w\}\in\Gamma$ then
\begin{equation*}
P_\Gamma(v)+P_\Gamma(w)\ge \Delta(G).
\end{equation*}
\end{proposition}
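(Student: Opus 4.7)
My plan is to combine the bipartite parity structure enforced by the $\OMCut$ condition with the standard observation that any cycle in $G$ must cross $\Gamma$ an even number of times. By relabeling if necessary, I assume $v$ is odd and $w$ is even. Since $\Gamma$ is a minimal cutset, the edge $\{v,w\}$ lies in $\subcut(\Gamma,x)$ for some $x\in X$ (removing $\{v,w\}$ must restore a path from some $x\in X$ to $B$), so one of $v,w$ is in $\comp(\Gamma,x)$. Only the assignment $v\in\comp(\Gamma,x)$ is consistent with $E_i(\Gamma,x)\subseteq V^{\odd}$, since $v$ is the odd endpoint. Thus $v\in E_i(\Gamma,x)$, and whenever $\{v,v+f_i\}\notin\Gamma$ one has $v+f_i\in\comp(\Gamma,x)$.

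Write $w=v+f_{i_0}$. For each index $i\neq i_0$ with $f_i\neq -f_{i_0}$ I would consider the basic 4-cycle $(v,\, v+f_i,\, w+f_i,\, w)$; by the conditions on $i$ this is a genuine basic 4-cycle. Its $\Gamma$-edges are even in number, and it already contains $\{v,w\}\in\Gamma$, so at least one of the three remaining edges $\{v,v+f_i\}$, $\{v+f_i,w+f_i\}$, $\{w,w+f_i\}$ is in $\Gamma$. The crux of the argument — and the only step where the oddness of $\Gamma$ is truly used — is to rule out the possibility that only the middle edge $\{v+f_i,w+f_i\}$ is in $\Gamma$. Suppose both side edges were absent; then $v+f_i\in\comp(\Gamma,x)$, and the assumed middle edge would give $v+f_i\in E_i(\Gamma,x)\subseteq V^{\odd}$. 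But $v+f_i$ has parity opposite to $v$ and is therefore even, a contradiction. Consequently at least one of $\{v,v+f_i\}$ or $\{w,w+f_i\}$ lies in $\Gamma$.

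Finally I would tally contributions to $P_\Gamma(v)+P_\Gamma(w)$. The edge $\{v,w\}$ itself contributes $2$: once as an edge at $v$ (index $i_0$), and once as an edge at $w$ (via the partner index with $f_{\cdot}=-f_{i_0}$, or via $i_0$ again when $n_{i_0}=2$). The number of valid indices $i$ is $\Delta(G)-2$ generically, or $\Delta(G)-1$ when $n_{i_0}=2$ (since $-f_{i_0}$ is not in the list $(f_j)$), and each such $i$ supplies at least one further $\Gamma$-edge incident to $v$ or $w$; distinctness across $i$'s and from $\{v,w\}$ follows because $v+f_i$ and $w+f_i$ move with $i$ and cannot coincide with $v$ or $w$ under the exclusions on $i$. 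Summing yields $P_\Gamma(v)+P_\Gamma(w)\geq 2+(\Delta(G)-2)=\Delta(G)$, with an extra unit to spare in the $n_{i_0}=2$ case. I expect no real obstacle beyond the parity contradiction; everything else is bookkeeping with the torus neighbor structure.
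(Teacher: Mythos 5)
Your proof is correct and rests on the same core observation as the paper's: the oddness of the cutset forces $\{w,w+f_i\}\in\Gamma$ whenever $\{v,v+f_i\}\notin\Gamma$. The paper obtains this in one step from the standing fact \eqref{parity_neighborhood_contained} (so the cycle-parity invocation is actually dispensable — once $v+f_i\in\comp(\Gamma,x)$ has opposite parity to the inner boundary it carries no $\Gamma$-edge, so $w+f_i\in\comp(\Gamma,x)$ and hence $\{w,w+f_i\}\in\Gamma$), immediately giving $P_\Gamma(w)\ge\Delta(G)-P_\Gamma(v)$; your per-direction tally over basic $4$-cycles reaches the same conclusion with a bit more bookkeeping.
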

\begin{proof}
If $P_\Gamma(v)=\Delta(G)$ or $P_\Gamma(v)=\Delta(G)-1$, the statement is trivial. Otherwise write $w=v+f_j$ and let $f_{i_1},\ldots, f_{i_{\Delta(G)-P_\Gamma(v)}}$ be such that $\{v,v+f_{i_k}\}\notin\Gamma$ for all $k$. By \eqref{parity_neighborhood_contained}, $v+f_{i_k}+f_j\in \comp(\Gamma,v)$. Since $w$ is adjacent to $(v+f_{i_k}+f_j)_{k=1}^{\Delta(G)-P_\Gamma(v)}$, it follows that $P_\Gamma(w)\ge\Delta(G)-P_\Gamma(v)$.
\end{proof}
A similar property holds for interior vertices of the components $\comp(\Gamma,v)$, as follows.
\begin{proposition}\label{interior_P_gamma_estimate}
For $u,v\in V[G]$, $v\adj{G} u$ and $\{v,u\}\notin\Gamma$ we have $|\{v'\in \Ei(\Gamma,u)\ |\ v'\adj{G} u\}|\ge P_\Gamma(v)$.
\end{proposition}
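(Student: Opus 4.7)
The plan is to construct, for each direction $f_k$ in which $v$ has a $\Gamma$-edge, a distinct neighbor of $u$ that lies in $E_i(\Gamma,u)$. Write $v=u+f_j$, and for each index $k$ with $\{v,v+f_k\}\in\Gamma$ take $w_k:=u+f_k$ as the candidate. We may assume $P_\Gamma(v)\ge 1$, else there is nothing to prove. Since $\{u,v\}\notin\Gamma$ and $v\adj{G} u$, we have $v\in\comp(\Gamma,u)$, and so $v\in E_i(\Gamma,u)$.

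The key step, and the only place the odd-cutset hypothesis is really used, is to show that $P_\Gamma(u)=0$. A minimal edge cutset in $\MCut(X,B)$ cannot have a connected component of $G\setminus\Gamma$ disjoint from $X\cup B$: any boundary edge of such a component would contradict the minimality of $\Gamma$ (removing it keeps the cutset property intact, since any path witnessing its necessity would have to enter and leave the floating component through two $\Gamma$-edges). Hence $\comp(\Gamma,u)=\comp(\Gamma,z)$ for some $z\in X\cup B$, and by the definition of $\OMCut$ the set $E_i(\Gamma,u)$ lies in a single bipartition class. That class must be the parity class of $v\in E_i(\Gamma,u)$, which is opposite to the parity of $u$. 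Since $u\in\comp(\Gamma,u)$, this forces $P_\Gamma(u)=0$.

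Given this, I would verify three points for each $k$ with $\{v,v+f_k\}\in\Gamma$. First, $f_k\neq -f_j$, since otherwise $v+f_k=u$ and the $\Gamma$-edge would be $\{v,u\}$, contradicting the hypothesis. Second, $\{u,w_k\}\notin\Gamma$ thanks to $P_\Gamma(u)=0$, so $w_k\in\comp(\Gamma,u)$. Third, the vertex $v+f_k=w_k+f_j$ is a neighbor of $w_k$ lying across the cut edge $\{v,v+f_k\}$ from $v$, so $v+f_k\notin\comp(\Gamma,u)$; since every edge joining $\comp(\Gamma,u)$ to its complement must lie in $\Gamma$ (by definition of a connected component in $G\setminus\Gamma$), the edge $\{w_k,v+f_k\}$ is in $\Gamma$, whence $P_\Gamma(w_k)\ge 1$ and $w_k\in E_i(\Gamma,u)$. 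Distinct indices $k$ produce distinct neighbors $w_k=u+f_k$ of $u$, completing the count.

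I do not anticipate any significant obstacle beyond the parity argument of the second paragraph, which is the conceptual heart of the statement; the remaining verifications are routine bookkeeping with the definitions of $\comp$ and $\MCut$.
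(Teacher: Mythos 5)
Your proof is correct and takes essentially the same approach as the paper's. The paper establishes that all neighbors of $u$ lie in $\comp(\Gamma,u)$ by invoking the unproved fact \eqref{parity_neighborhood_contained} (whose justification is precisely your parity-plus-no-floating-components argument), and then, exactly as you do, matches each of the $P_\Gamma(v)$ cut edges at $v$ with the neighbor of $u$ in the same coordinate direction, concluding that these lie in $E_i(\Gamma,u)$ because they are adjacent to vertices across the cut. The only stylistic difference is that you spell out the derivation of $P_\Gamma(u)=0$, which the paper bundles into the preliminary fact \eqref{parity_neighborhood_contained} rather than re-deriving inline.
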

\begin{proof}
If $P_\Gamma(v)=0$, the claim is trivial. Otherwise, note that $v\in \Ei(\Gamma,u)$ and hence by \eqref{parity_neighborhood_contained}, $u+f_i\in\comp(\Gamma,u)$ for all $i$. Let $f_{i_1}, \ldots, f_{i_{P_\Gamma(v)}}$ be such that $\{v,v+f_{i_k}\}\in\Gamma$. We deduce that for all $k$, $u+f_{i_k}\in \Ei(\Gamma,u)$ since it is adjacent to $v+f_{i_k}$.
\end{proof}

Based on $\Gamma$, we define another graph structure on $V[G]$ which
is a subgraph of $\GC$. We say that $v,v'\in V[G]$ are
$\Gamma$-adjacent,denoted $\bm{v\adj{\Gamma}v'}$, if $v'=v+f_i+f_j$
for some $1\le i,j\le\Delta(G)$ such that $i\neq j$, $f_i\neq-f_j$,
$\{v,v+f_i\}\in\Gamma$ and $\{v,v+f_j\}\notin\Gamma$. Note that if
$v\adj{\Gamma} v'$ then necessarily $v,v'\in \Ei(\Gamma,v)$ ($v\in
\Ei(\Gamma,v)$ since $\{v,v+f_i\}\in\Gamma$ and $v'\in
\Ei(\Gamma,v)$ by \eqref{parity_neighborhood_contained} and since
$v+f_i\notin\comp(\Gamma,v)$). We have
\begin{proposition}\label{Gamma_1_degree_prop}
Each $v\in V[G]$ is $\Gamma$-adjacent to at least
\begin{equation}\label{min_Gamma_neighbors}
P_\Gamma(v)(\Delta(G)-P_\Gamma(v))-\min(P_\Gamma(v),\Delta(G)-P_\Gamma(v))
\end{equation}
$v'\in V[G]$. In particular, if $P_\Gamma(v)\notin\{0,\Delta(G)\}$ then $v$ has at least $d-2$ $\Gamma$-neighbors.
\end{proposition}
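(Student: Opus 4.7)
The plan is to count witnesses to the $\Gamma$-adjacency relation directly and argue that distinct witnesses yield distinct neighbors. Set $P = P_\Gamma(v)$, $\Delta = \Delta(G)$, and let $A = \{i : \{v, v+f_i\} \in \Gamma\}$ and $B = \{j : \{v, v+f_j\} \notin \Gamma\}$, so $|A| = P$, $|B| = \Delta - P$, and $A \cap B = \emptyset$. A pair $(i,j) \in A \times B$ witnesses a $\Gamma$-adjacency of $v$ to $v + f_i + f_j$ precisely when $f_i \neq -f_j$ (the requirement $i \neq j$ is automatic since the sets are disjoint). The proposition will follow from counting these valid pairs and showing that each one records a distinct $\Gamma$-neighbor.

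First I would establish injectivity of the map $(i,j) \mapsto v + f_i + f_j$ on valid pairs. Suppose two valid pairs $(i,j), (k,\ell) \in A \times B$ satisfy $v + f_i + f_j = v + f_k + f_\ell$. Since the $f_m$ are pairwise distinct and $A \cap B = \emptyset$, we have $f_i \neq f_j$, so the case analysis in the proof of Proposition~\ref{2nd_neighborhood_prop} forces $(k,\ell) \in \{(i,j), (j,i)\}$. But $(j,i) \in A \times B$ would demand $j \in A$, contradicting $j \in B$. Hence $(k,\ell) = (i,j)$, and distinct valid pairs produce distinct $\Gamma$-neighbors.

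Next I would count the valid pairs. Among the $P(\Delta - P)$ pairs in $A \times B$, only those with $f_j = -f_i$ are excluded. For each $i \in A$ there is at most one $j$ (in the whole index set $\{1,\ldots,\Delta\}$) with $f_j = -f_i$, and similarly for each $j \in B$; therefore the number of excluded pairs is at most $\min(|A|, |B|) = \min(P, \Delta - P)$. This yields the bound \eqref{min_Gamma_neighbors}. For the concluding clause, assume $P \notin \{0,\Delta\}$ and without loss of generality $P \le \Delta - P$, so the bound reads $P(\Delta - P - 1)$. If $P = 1$ this equals $\Delta - 2 \ge d - 2$ by \eqref{Delta_G_bounds}; if $P \ge 2$ then $P(\Delta - P - 1) \ge 2(\lceil \Delta/2 \rceil - 1) \ge \Delta - 2 \ge d - 2$.

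No single step is a genuine obstacle — the crux is the injectivity argument, which reduces to observing that the swap $(j,i)$ of a valid witness $(i,j)$ is never itself valid, a direct consequence of the asymmetric roles of $A$ and $B$ in the definition of $\adj{\Gamma}$. The remaining counting is elementary and the case analysis for the "in particular" clause is immediate from $\Delta(G) \ge d$.
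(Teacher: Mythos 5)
Your proof is correct and follows essentially the same route as the paper's: fix the directions lying in $\Gamma$ and those not, form the candidate neighbors $v+f_i+f_j$, exclude the pairs with $f_i=-f_j$, and count. The one place you are more careful than the paper is the injectivity of $(i,j)\mapsto v+f_i+f_j$ on the valid pairs: the paper simply asserts "there are at least $P_\Gamma(v)(\Delta(G)-P_\Gamma(v))-\min(\cdot)$ such choices" and tacitly identifies the count of pairs with the count of distinct $\Gamma$-neighbors, whereas you explicitly derive the injectivity from Proposition~\ref{2nd_neighborhood_prop} together with the disjointness of $A$ and $B$ (ruling out the swap $(j,i)$). That gap in the paper's exposition is genuine but easy to fill, and your argument fills it cleanly. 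For the "in particular" clause, the paper notes directly that \eqref{min_Gamma_neighbors} is minimized over $P_\Gamma(v)\in[1,\Delta(G)-1]$ at the endpoints, giving $\Delta(G)-2$; your two-case computation ($P=1$ and $P\ge2$) is an equivalent, equally elementary verification.
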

\begin{proof}
If $P_\Gamma(v)\in\{0,\Delta(G)\}$ the claim is trivial. Otherwise, let $f_{i_1}, \ldots, f_{i_{P_\Gamma(v)}}$ be the directions such that $\{v,v+f_{i_k}\}\in\Gamma$ and let $f_{j_1},\ldots, f_{j_{\Delta(G)-P(v)}}$ be the other directions. Then every $v'$ of the form $v'=v+f_{i_k}+f_{j_m}$ where $f_{i_k}\neq -f_{j_m}$ is a $\Gamma$-neighbor of $v$ and there are at least $P_\Gamma(v)(\Delta(G)-P_\Gamma(v))-\min(P_\Gamma(v),\Delta(G)-P_\Gamma(v))$ such choices.
The second part of the proposition follows by noting that \eqref{min_Gamma_neighbors} is minimized at $P_\Gamma(v)=1$ over $P_\Gamma(v)\in[1,\Delta(G)-1]$, giving $\Delta(G)-2\ge d-2$.
\end{proof}

Next, fix $x,b\in V[G]$. We say that $\Gamma\in\OMCut(x,b)$ is \textbf{trivial} if $\Gamma$ consists only of the edges incident to $x$ or only of the edges incident to $b$. If $\Gamma$ is trivial then $|\Gamma|=\Delta(G)$. The next proposition gives some properties of non-trivial $\Gamma$ and shows in particular that they must have many more edges than trivial ones.
\begin{proposition}\label{trivial_gamma_prop}
For $\Gamma\in\OMCut(x,b)$ and dimension $d>2$, the following are equivalent:
\begin{enumerate}
\item $\Gamma$ is non-trivial.
\item For all $v\in V[G]$, $P_\Gamma(v)\le\Delta(G)-1$.
\item $|\Gamma|\ge \frac{\Delta(G)^2}{2}$.
\end{enumerate}
\end{proposition}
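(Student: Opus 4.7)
My plan is to prove the three-way equivalence via $(1)\Leftrightarrow(2)$, $(3)\Rightarrow(1)$, and $(1)\Rightarrow(3)$, with the bulk of the work in the final implication.

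For $(2)\Rightarrow(1)$: a trivial $\Gamma$, say the edges incident to $x$, has $P_\Gamma(x)=\Delta(G)$, violating (2). For $(1)\Rightarrow(2)$, suppose for contradiction that $P_\Gamma(v_0)=\Delta(G)$ for some $v_0$, so all edges at $v_0$ lie in $\Gamma$. Pick any $e=\{v_0,u\}\in\Gamma$; by minimality of $\Gamma$ there is an $(x,b)$-path crossing $\Gamma$ only at $e$. Such a path enters $v_0$ through $e$ and must exit via another edge at $v_0$, but every other such edge is in $\Gamma\setminus\{e\}$, which the path avoids — the only escape is that the path begins or ends at $v_0$, forcing $v_0\in\{x,b\}$. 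Taking $v_0=x$, the $\Delta(G)$ edges at $x$ already separate $x$ from $b$, so minimality forces $\Gamma$ to equal this trivial cutset. For $(3)\Rightarrow(1)$: a trivial cutset has $|\Gamma|=\Delta(G)$, and since $d>2$ and $\Delta(G)\ge d$ we have $\Delta(G)^2/2>\Delta(G)$, so (3) rules out triviality.

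For the hard direction $(1)\Rightarrow(3)$, the plan is to locate a ``star'' inside $\comp(\Gamma,x)$ and read off a lower bound. Since $\comp(\Gamma,x)\ne\{x\}$ (non-triviality around $x$) and $V^\odd$ is $G$-independent (so connected subsets of $V^\odd$ are singletons), $\comp(\Gamma,x)$ cannot lie in $E_i(\Gamma,x)\subseteq V^\odd$ and must contain an interior vertex $v$ with $P_\Gamma(v)=0$ and $S(v)\subseteq\comp(\Gamma,x)$. A parity analysis — a purely $V^\odd$-interior configuration in $\comp(\Gamma,x)$ would recursively expand and ultimately force $\comp(\Gamma,b)=\{b\}$, contradicting non-triviality around $b$ — lets one choose $v\in V^\even$, whence $S(v)\subseteq V^\odd\cap\comp(\Gamma,x)=E_i(\Gamma,x)$. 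In the extremal case $\comp(\Gamma,x)=\{v\}\cup S(v)$, every $w\in S(v)$ has $P_\Gamma(w)=\Delta(G)-1$ (only the edge $\{w,v\}$ is internal), giving $|\Gamma|=\Delta(G)(\Delta(G)-1)\ge\Delta(G)^2/2$.

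The main obstacle will be showing that any enlargement of $\comp(\Gamma,x)$ past the star $\{v\}\cup S(v)$ cannot push $|\Gamma|$ below $\Delta(G)(\Delta(G)-1)$. The $\OMCut$ parity constraint $E_i(\Gamma,x)\subseteq V^\odd$ forbids adjoining an isolated $V^\even$ vertex $u$ to $\comp(\Gamma,x)$ (such $u$ would become a $V^\even$ boundary vertex, violating the constraint), so enlargements come in pairs: adjoining a $V^\even$ vertex $u$ must be accompanied by its neighborhood $S(u)\subseteq V^\odd$. A direct computation shows that adjoining $u$ alters $|\Gamma|$ by $\Delta(G)-2k_u$, where $k_u$ is the number of edges from $u$ to the current component; applying Proposition~\ref{neighbor_plaquette_prop} along the $\Gamma$-edges at the newly exposed $S(u)$-boundary, together with the fact that $k_u\le 2$ for a $V^\even$ vertex in $\Z^d$ (at most two length-two paths between same-parity vertices), shows each valid pair-adjunction leaves $|\Gamma|$ weakly non-decreasing. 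A small separate check handles degenerate tori with some $n_i=2$, where $f_i=-f_i$ produces extra collapsings in the above count.
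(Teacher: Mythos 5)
Your proofs of $(1)\Leftrightarrow(2)$ and $(3)\Rightarrow(1)$ are correct and essentially match the paper, which cites minimality directly where you spell out the path argument. The problems are all in $(1)\Rightarrow(3)$, which you attempt by a genuinely different route that does not work as sketched.

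First, the intermediate identity $V^\odd\cap\comp(\Gamma,x)=E_i(\Gamma,x)$ is false in general. Take $\comp(\Gamma,x)$ to be a ball of odd radius $R\ge 3$ around an even vertex: its interior then contains odd vertices at distance $\le R-2$ from the center, all of which have $P_\Gamma=0$ and hence are not in $E_i(\Gamma,x)$. So the parity analysis you sketch (recursive expansion forcing $\comp(\Gamma,b)=\{b\}$) does not hold — the recursion you describe only propagates two steps, since the odd vertices in $S^2(w)$ may perfectly well be boundary vertices and stop it. Consequently it is not established that some interior even $v$ has $S(v)\subseteq E_i(\Gamma,x)$, which your extremal computation needs.

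Second, and more seriously, the ``enlargement from the star'' strategy cannot yield a global lower bound. The change in $|\Gamma|$ upon adjoining a set $A$ to $\comp(\Gamma,x)$ is $\Delta(G)|A|-2\,e(A)-2\,e(A,\comp(\Gamma,x))$, and the last term is unbounded: once the component is large, a freshly adjoined $\{u\}\cup S(u)$ can have many edges back into the old component, making the net change negative. Your bound $k_u\le 2$ is only correct for adjoining an even $u\ne v$ to the bare star $\{v\}\cup S(v)$ (two even vertices share at most two common neighbors); it fails for the edges from $S(u)$ into a general component. A ``local perturbations never help'' argument is not a substitute for a global isoperimetric bound, and indeed there is no monotonicity to exploit: $|\Gamma|$ both rises and falls along such an enlargement sequence, what the proposition asserts is only that it never falls below $\Delta(G)^2/2$.

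The paper avoids all of this by invoking Proposition~\ref{equal_edgenum_in_all_dir}, the special odd-cutset identity $|\Gamma|=\Delta(G)\,|E_{v,i}|$ for any direction $i$. Combined with Proposition~\ref{neighbor_plaquette_prop} (which, applied to any $\Gamma$-edge, produces a vertex $v$ with $\Delta(G)/2\le P_\Gamma(v)\le\Delta(G)-1$), one then exhibits $\lceil\Delta(G)/2\rceil$ distinct vertices $v-f_i+f_{j_k}\in E_{v,i}$, which immediately gives $|\Gamma|\ge\Delta(G)^2/2$. This converts a global boundary bound into a single-direction count, which is exactly the leverage your local argument is missing. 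If you want to salvage your route you would need a genuine discrete isoperimetric inequality for sets with odd boundary; in the paper's framework that is precisely what Proposition~\ref{equal_edgenum_in_all_dir} delivers, and I'd recommend using it.
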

Note that the third item does not necessarily hold for $\Gamma\in\MCut(x,b)$ since we may have that $\Gamma$ is all edges surrounding $x$ and one of its neighbors.
\begin{proof}
For a trivial $\Gamma$ it is clear that none of the properties hold (since $\frac{\Delta(G)^2}{2}>\Delta(G)$ when $d>2$). Suppose now that $\Gamma$ is non-trivial. If there exists $v\in V[G]$ with $P_\Gamma(v)=\Delta(G)$ then we would have to have $v\in\{x,b\}$ by minimality of $\Gamma$ and then $\Gamma$ would be trivial, again by minimality.

Next, we claim that there exists $v\in V[G]$ with $\frac{\Delta(G)}{2}\le P_\Gamma(v)\le \Delta(G)-1$. Indeed, there exists $w\in V[G]$ with $1\le P_\Gamma(w)\le \Delta(G)-1$. If $P_\Gamma(w)<\frac{\Delta(G)}{2}$ then by Proposition~\ref{neighbor_plaquette_prop} and the previous characterization of non-trivial $\Gamma$, any neighbor $v\adj{G} w$ with $\{v,w\}\in\Gamma$ satisfies $\frac{\Delta(G)}{2}\le P_\Gamma(v)\le \Delta(G)-1$. Fix such a $v$, let $1\le i\le \Delta(G)$ be such that $\{v,v-f_i\}\notin \Gamma$ and let $j_1,\ldots, j_{\left\lceil\frac{\Delta(G)}{2}\right\rceil}$ be such that $\{v,v+f_{j_k}\}\in\Gamma$ for all $k$ (here, we allow $f_{j_k}=f_i$ for some $k$). We have $v+f_{j_k}\notin\comp(\Gamma,v)$ and, by \eqref{parity_neighborhood_contained}, $v-f_i+f_{j_k}\in\comp(\Gamma,v)$ for all $k$. Finally, recalling the definition of $E_{v,i}$ from Proposition~\ref{equal_edgenum_in_all_dir}, it follows that $v-f_i+f_{j_k}\in E_{v,i}$ for all $k$ and hence $|E_{v,i}|\ge \frac{\Delta(G)}{2}$ so that by Proposition~\ref{equal_edgenum_in_all_dir}, $|\Gamma|\ge\Delta(G)|E_{v,i}|\ge \frac{\Delta(G)^2}{2}$.
\end{proof}
\begin{remark}\label{trivial_gamma_prop_remark}
The proof in fact shows that in all dimensions we have that a $\Gamma\in\OMCut(x,b)$ is either trivial or has $|\Gamma|\ge\frac{\Delta(G)^2}{2}$. The assumption $d>2$ is only needed so that these two properties cannot coexist.
\end{remark}

{\bf Combinatorics:} We shall need the following basic counting result.
\begin{proposition}\label{L_decomp_prop}
Given integers $s_1, s_2, L>0$ with $s_2>s_1$, the number of solutions in integers $k$ and $(x_m)_{m=1}^k$ to
\begin{equation}\label{L_decomp}
\sum_{m=1}^k x_m=L
\end{equation}
with each $x_m$ satisfying either $x_m=s_1$ or $x_m\ge s_2$ is at most
\begin{equation*}
\exp\left(\frac{6L\log{s_2}}{s_2}\right).
\end{equation*}
\end{proposition}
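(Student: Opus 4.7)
The plan is to use generating functions. The generating function for a single valid block size is
\[
A(z) := z^{s_1} + \sum_{m \geq s_2} z^m = z^{s_1} + \frac{z^{s_2}}{1-z},
\]
so the total number $N$ of valid tuples, summed over the number of blocks $k$, is
\[
N = [z^L] \sum_{k \geq 0} A(z)^k = [z^L] \frac{1}{1 - A(z)}.
\]
Since this power series has non-negative Taylor coefficients, for any $z \in (0,1)$ with $A(z) < 1$ the elementary bound $N \le z^{-L}/(1 - A(z))$ holds, and the proof reduces to choosing $z$ well.

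I would take $z_0 := 1 - c \log s_2 / s_2$ with a constant $c$ satisfying $2 < c < 6$ (say $c = 5/2$). Using $1 - x \le e^{-x}$ one has $z_0^{s_2} \le s_2^{-c}$, so $z_0^{s_2}/(1-z_0) = O(s_2^{1-c}/\log s_2)$ is negligible; and $z_0^{s_1} \le z_0$ since $s_1 \ge 1$. Combining, $A(z_0) \le 1 - (c - o(1)) \log s_2/s_2$, hence $1 - A(z_0) = \Omega(\log s_2/s_2)$ and $1/(1 - A(z_0))$ is polynomial in $s_2$. The inequality $-\log(1-x) \le x/(1-x)$ then yields $z_0^{-L} \le \exp((c + o(1)) L \log s_2/s_2)$.

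Two edge cases require separate treatment. When $L < s_2$, no block of size $\geq s_2$ can fit, so every block equals $s_1$ and $N \le 1$, making the bound trivial. When $s_2$ is at most a fixed constant, the crude bound $N \le 2^{L-1}$ (total number of compositions of $L$) already satisfies $2^{L-1} \le \exp(6L \log s_2 / s_2)$, because $\log s_2 / s_2$ stays bounded away from zero in that regime. In the remaining range the polynomial prefactor $1/(1 - A(z_0))$ is absorbed into the exponential, since $\exp(L \log s_2/s_2) \ge s_2$ whenever $L \ge s_2$, and the slack between $c$ and $6$ delivers the stated exponent $6 L \log s_2/s_2$. The only real obstacle is bookkeeping: checking that one choice of $c$ works uniformly across the remaining range $(s_1, s_2, L)$ so that no further case split is needed.
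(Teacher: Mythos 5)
Your approach is correct in outline and genuinely different from the paper's. The paper counts directly: it fixes the number $k_2$ of parts that are at least $s_2$, observes $k_2\le L/s_2$, bounds the number of solutions with a given $k_2$ by $\binom{L}{k_2}^2$ (one factor for the stars-and-bars count of the large parts, one for their positions among all parts), and then sums $\sum_{i\le L/s_2}\binom{L}{i}^2$ using the elementary estimate $\sum_{i\le n}\binom{L}{i}\le r^{-n}(1+r)^L$ with $r=1/s_2$. Your route replaces this explicit enumeration with the generating function $A(z)=z^{s_1}+z^{s_2}/(1-z)$, the identity $N=[z^L]\frac{1}{1-A(z)}$, and the Chernoff-type bound $N\le z_0^{-L}/(1-A(z_0))$, followed by the saddle-point choice $z_0=1-c\log s_2/s_2$. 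Both are sound; yours is more systematic and would generalize to other constraints on the parts with little change, while the paper's argument keeps the constants visible at every step and avoids any ``as $s_2\to\infty$'' asymptotics.

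The one point you flag yourself, the uniform bookkeeping in $(s_1,s_2,L)$, is a real obligation but is indeed routine, and your proposed case split handles it. Concretely: with $c=5/2$ one has $z_0=1-\tfrac{5}{2}\log s_2/s_2\in(0,1)$ for all $s_2\ge 2$; since $s_1\ge 1$ the worst case for $1-A(z_0)$ is $s_1=1$, and a short computation shows $s_2^{1-c}/(c\log s_2)\le\tfrac{1}{4}\,c\log s_2/s_2$ for all $s_2\ge 2$, so $1-A(z_0)\ge\tfrac{3}{4}\,c\log s_2/s_2$ and the prefactor $1/(1-A(z_0))$ is $O(s_2)$; meanwhile $-\log z_0\le\frac{c\log s_2/s_2}{1-c\log s_2/s_2}\le (c+1)\log s_2/s_2$ once $s_2$ exceeds a fixed threshold. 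Combined with your observations that $L<s_2$ gives $N\le 1$, that $s_2\le 29$ or so is covered by the crude $2^{L-1}\le e^{6L\log s_2/s_2}$, and that $L\ge s_2$ makes $e^{L\log s_2/s_2}\ge s_2$ absorb the polynomial prefactor, a single choice of $c$ does close the argument. So the proof is complete once these checks are written out; there is no missing idea, only the arithmetic you deferred.
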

\begin{proof}
Suppose that in the sum in \eqref{L_decomp} there are exactly $k_2$ factors of size at least $s_2$ and denote them, in order of appearance in the sum, by $(y_m)_{m=1}^{k_2}$. As
\begin{equation}\label{L_decomp2}
\sum_{m=1}^{k_2} (y_m-s_2)\le L - k_2s_2,
\end{equation}
it follows from standard combinatorial enumeration that the number of possibilities for $(y_m)_{m=1}^{k_2}$, given $k_2$, is at most $\binom{L-k_2(s_2-1)}{k_2}$. In addition, suppose that in \eqref{L_decomp} there are exactly $k_1$ factors $x_m$ of size equal to $s_1$ and note that $k_1$ can be determined from $k_2$ and $(y_m)_{m=1}^{k_2}$. Thus, given $k_2$ and $(y_m)_{m=1}^{k_2}$, the solution $(x_m)_{m=1}^k$ to \eqref{L_decomp} is determined by the choice of which of the $k_1+k_2$ factors are the $k_2$ factors corresponding to the $(y_m)$. As $k_1+k_2\le L$ we see that we have at most
\begin{equation*}
\binom{L-k_2(s_2-1)}{k_2}\binom{L}{k_2}\le \binom{L}{k_2}^2
\end{equation*}
solutions to \eqref{L_decomp} with a given $k_2$. Since $k_2\le \frac{L}{s_2}$ we see that \eqref{L_decomp} has at most
\begin{equation*}
\sum_{i=0}^{\lfloor L/s_2\rfloor} \binom{L}{i}^2\le e^{\frac{6L\log{s_2}}{s_2}}
\end{equation*}
solutions where we used that $\sum_{i=0}^n \binom{L}{i}\le r^{-n}(1+r)^L\le e^{rL-n\log r}$ for $r\le 1$ and then substituted $n=\left\lfloor\frac{L}{s_2}\right\rfloor$, $r=\frac{1}{s_2}$ and squared.
\end{proof}

\section{Proof of Level Set Theorem}\label{level_set_thm_sec}
In this section we prove theorem~\ref{main_thm}.
\subsection{Reduction to an Expanding Transformation}\label{expanding_transformation_sec}

Our probabilistic estimates are all based on the idea of an expanding transformation (as explained in the proof sketch). For an $\Omega\subseteq\Hom(G,B,\mu)$ (for some legal boundary condition $(B,\mu)$), we shall find a transformation $T:\Omega\to\PP(\Hom(G,B,\mu))$, i.e., a transformation taking $f\in\Omega$ to a subset of $\Hom(G,B,\mu)$.
With a slight abuse of notation we denote $T(\Omega):=\cup_{f\in\Omega} T(f)$. We have the following simple
\begin{lemma}\label{transformation_lemma}
Let $(B,\mu)$ be a legal boundary condition, $\Omega\subseteq\Hom(G,B,\mu)$ and $T:\Omega\to\PP(\Hom(G,B,\mu))$. If $f\unifin\Hom(G,B,\mu)$ then
\begin{equation*}
\P(f\in\Omega)=\frac{|\Omega|}{|T(\Omega)|}\P(f\in T(\Omega)).
\end{equation*}
In particular, $\P(f\in\Omega)\le \frac{|\Omega|}{|T(\Omega)|}$.
\end{lemma}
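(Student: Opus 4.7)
The plan is essentially just to unwind definitions. Since $f$ is sampled uniformly from the finite set $\Hom(G,B,\mu)$, for any subset $S\subseteq\Hom(G,B,\mu)$ we have $\P(f\in S)=|S|/|\Hom(G,B,\mu)|$. The key observation is that both $\Omega$ and $T(\Omega)$ are subsets of $\Hom(G,B,\mu)$: the former by hypothesis, and the latter because $T$ takes values in $\PP(\Hom(G,B,\mu))$, so every element of every $T(f)$ lies in $\Hom(G,B,\mu)$, and the union $T(\Omega)=\cup_{f\in\Omega}T(f)$ is therefore a subset as well.

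Given this, I would apply the uniform-measure formula to each of $\Omega$ and $T(\Omega)$, then multiply and divide by $|T(\Omega)|$ to obtain
\begin{equation*}
\P(f\in\Omega)=\frac{|\Omega|}{|\Hom(G,B,\mu)|}=\frac{|\Omega|}{|T(\Omega)|}\cdot\frac{|T(\Omega)|}{|\Hom(G,B,\mu)|}=\frac{|\Omega|}{|T(\Omega)|}\P(f\in T(\Omega)),
\end{equation*}
which is the claimed identity. One small point worth noting is that if $\Omega=\emptyset$ then $T(\Omega)=\emptyset$ and both sides are $0$, so the trivial division is avoided; otherwise $T(\Omega)\neq\emptyset$ and the division is legitimate. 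The ``in particular'' clause then follows immediately by bounding $\P(f\in T(\Omega))\le 1$. There is no real obstacle here; the lemma is purely formal, and its importance is only as a packaging tool for the expanding-transformation arguments carried out in later sections (where the work lies in producing a $T$ that makes $|T(\Omega)|/|\Omega|$ large).
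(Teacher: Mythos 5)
Your proof is correct and is essentially identical to the paper's: both unwind the uniform-measure definition and multiply-and-divide by $|T(\Omega)|$, with the final bound following from $\P(f\in T(\Omega))\le 1$. Your remark on the $\Omega=\emptyset$ case is a harmless addition not present in the paper.
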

\begin{proof}
By definition,
\begin{equation*}
\P(f\in\Omega) = \frac{|\Omega|}{|\Hom(G,B,\mu)|} = \frac{|\Omega|}{|T(\Omega)|}\cdot\frac{|T(\Omega)|}{|\Hom(G,B,\mu)|}= \frac{|\Omega|}{|T(\Omega)|}\P(f\in T(\Omega)).\qedhere
\end{equation*}
\end{proof}
The previous lemma is of course true also when the set $T(\Omega)$ is replaced by an arbitrary $\Omega'\subseteq\Hom(G,B,\mu)$, however, we wish to emphasize the role of the transformation $T$ since our main use of the lemma will be through it.
\begin{theorem}\label{transformation_thm}
There exist $d_0\in\N$, $c>0$ such that for all $d\ge d_0$, non-linear tori $G$, legal boundary conditions $(B,\mu)$ with non-positive $\mu$, $x\in V[G]$ and integer $L\ge 1$, there exists $T:\Omega_{x,L}\to\PP(\Hom(G,B,\mu))$ satisfying
\begin{enumerate}
\item For all $\emptyset\neq \Omega\subseteq\Omega_{x,L}$ we have
\begin{equation*}
\frac{|\Omega|}{|T(\Omega)|}\le d^3\exp\left(-\frac{cL}{d\log^2 d}\right).
\end{equation*}
\item For all $k\ge 2$, $x_1,\ldots, x_{k-1}\in V[G]$ and integers $L_1,\ldots, L_{k-1}\ge 1$ we have $T(\Omega_{(x_1,\ldots, x_{k-1},x),(L_1,\ldots, L_{k-1},L)})\subseteq \Omega_{(x_1,\ldots, x_{k-1}),(L_1,\ldots, L_{k-1})}$.
\end{enumerate}
\end{theorem}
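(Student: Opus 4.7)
The plan is to implement the shift-and-modify transformation sketched in Section~\ref{proof_sketch_readers_guide_ack_sec} and to derive the expansion estimate by partitioning $\Omega_{x,L}$ according to the number of exposed vertices of the level set. For $f\in\Omega_{x,L}$, write $\Gamma:=\LS(f,x,B)\in\OMCut(x,B)$ (Proposition~\ref{level_set_OMCut_prop}) and let $\CC_1$ be the component of $x$ in $G$ when the edges of $\Gamma$ are removed. Define the shift $S(f)(v):=f(v)$ for $v\notin\CC_1$ and $S(f)(v):=f(v+e_1)-1$ for $v\in\CC_1$; a direct check using $\mu\le 0$ gives $S(f)\in\Hom(G,B,\mu)$ and $S(f)(w)=0$ for every neighbour $w$ of every $v\in E_{1,1}(\Gamma)$. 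Let $T_1(f)$ be the $2^{|E_{1,1}(\Gamma)|}$ homomorphisms obtained by assigning each $v\in E_{1,1}(\Gamma)$ an independent value in $\{\pm 1\}$, and let $T(f)$ be produced from each $g\in T_1(f)$ by, for every exposed vertex $v\in E_{1,e}(\Gamma)$ at which $g(v)=-1$, negating $g$ on the connected component of $v$ in $G\setminus\{u:g(u)=0\}$. This enforces $g\equiv +1$ on $E_{1,e}(\Gamma)$ and still produces a homomorphism height function. Since every image agrees with $f$ on $V[G]\setminus\CC_1$, and the level sets around the earlier points $x_i$ (for $i<k$) are by assumption disjoint from $\CC_1\cup\LS(f,x,B)$, property (2) is immediate.

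For property (1) I would fix a small absolute constant $\lambda>0$ and split $\Omega_{x,L}=\Omega^{\mathrm{r}}\cup\Omega^{\mathrm{s}}$ according to whether $|E_{1,e}(\Gamma)|\ge (1-\lambda/\log^2 d)\,L/\Delta(G)$ (rough) or not (smooth). On $\Omega^{\mathrm{r}}$, Theorem~\ref{count_cutsets_thm}, applied together with the elementary observation that each exposed vertex contributes at least $\Delta(G)-\sqrt{d}$ edges to $\Gamma$, bounds the number of realisable $\Gamma$ by $\exp(C\lambda L/d)$. Since $S$ is invertible given $\Gamma$, any $g\in T(\Omega^{\mathrm{r}})$ has at most $2^{|E_{1,e}(\Gamma)|}\exp(C\lambda L/d)$ preimages in $\Omega^{\mathrm{r}}$, while $|T(f)|\ge 2^{L/\Delta(G)-|E_{1,e}(\Gamma)|}$ using the equidistribution identity $|E_{1,1}(\Gamma)|=L/\Delta(G)$ from Proposition~\ref{equal_edgenum_in_all_dir}. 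Choosing $\lambda$ sufficiently small yields $|\Omega^{\mathrm{r}}|/|T(\Omega^{\mathrm{r}})|\le\exp(-cL/d)$.

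On $\Omega^{\mathrm{s}}$ I would slice further by $m:=|E_{1,e}(\Gamma)|$. Theorem~\ref{interior_approximation_theorem} produces a family of at most $2\exp(C\log^2 d\cdot L/d^{3/2})$ vertex subsets of $V[G]$, one of which is an interior approximation to every $\Gamma\in\OMCut(x,B)$ with $|\Gamma|=L$. For $f\in\Omega^{\mathrm{s}}$ and $g\in T(f)$, any such interior approximation $E$ of $\Gamma$ recovers $\Gamma$: since $g\equiv +1$ on $E_{1,e}(\Gamma)$ and $E$ already contains every non-exposed vertex of $E_1(\Gamma)$, the union of $B$-components of $G\setminus(\{g=+1\}\cup E)$ coincides with the union of $B$-components of $G\setminus\{f=+1\}$ and hence has edge boundary $\Gamma$. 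Once $\Gamma$ is recovered, $f$ is determined by $g$ up to the $2^m$ sign choices on $E_{1,e}(\Gamma)$ made in the construction of $T$, so each $g$ has at most $2^{m+1}\exp(C\log^2 d\cdot L/d^{3/2})$ preimages in $\Omega^{\mathrm{s}}_m$, whereas $|T(f)|\ge 2^{L/\Delta(G)-m-1}$. Using $m\le (1-\lambda/\log^2 d)L/\Delta(G)$ on $\Omega^{\mathrm{s}}$ and the non-linearity of $G$ (which controls the $L/d^{3/2}$ loss against the gain $L/(d\log^2 d)$), this gives $|\Omega^{\mathrm{s}}_m|/|T(\Omega^{\mathrm{s}}_m)|\le d^2\exp(-cL/(d\log^2 d))$; summing over the $O(L)$ values of $m$ and absorbing the polynomial in $L$ into the exponential produces the prefactor $d^3$ stated in the theorem.

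The main obstacle is precisely this balance on $\Omega^{\mathrm{s}}$: the interior-approximation count of Theorem~\ref{interior_approximation_theorem} must overcome both the $2^m$ cost of undoing sign flips on exposed vertices and the corresponding shrinkage of $|T(f)|$ by a factor $2^m$, and it is this calibration that forces the $\log^2 d$ factor in the exponent of the theorem. The uniform statement over arbitrary $\Omega\subseteq\Omega_{x,L}$ then follows by applying the per-piece bounds to $\Omega\cap\Omega^{\mathrm{r}}$ and to each $\Omega\cap\Omega^{\mathrm{s}}_m$ separately.
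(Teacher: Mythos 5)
Your overall plan — shift, flip on exposed vertices, split by $|E_{1,e}(\Gamma)|$, invert via Theorem~\ref{count_cutsets_thm} on the rough piece and via Theorem~\ref{interior_approximation_theorem} on the smooth piece — is the paper's plan, but the bookkeeping goes wrong in two places, and both breaks are real.

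First, you apply the flip version ($T_2$ in the paper's notation, forcing the image to equal $+1$ on $E_{1,e}(\Gamma)$) uniformly on all of $\Omega_{x,L}$. On $\Omega^{\mathrm{r}}$ this fails. After the flip, $|T(f)|=2^{|E_{1,1}(\Gamma)\setminus E_{1,e}(\Gamma)|}$, and inverting costs an enumeration of $2^{|E_{1,e}(\Gamma)\setminus E_{1,1}(\Gamma)|}$ sign choices, so even before counting cutsets the net expansion exponent (in base $2$) is $|E_{1,1}|-|E_{1,e}|=L/\Delta(G)-|E_{1,e}|$. But on $\Omega^{\mathrm{r}}$ one has $|E_{1,e}|\ge(1-\lambda/\log^2 d)L/\Delta(G)$ and, since each exposed vertex carries only at least $\Delta(G)-\sqrt d$ edges, $|E_{1,e}|$ can be as large as $L/(\Delta(G)-\sqrt d)>L/\Delta(G)$: the exponent is then \emph{negative}, and multiplying by the $\Gamma$-count $\exp(C\lambda L/d)$ on the wrong side only makes it worse. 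So the asserted bound $|\Omega^{\mathrm{r}}|/|T(\Omega^{\mathrm{r}})|\le\exp(-cL/d)$ does not follow from what you have written. The paper avoids this by \emph{not} flipping on $\Omega^{\mathrm{r}}$ (see \eqref{T_def}): it uses the pure shift $T_1$ there, for which $|T_1(f)|=2^{L/\Delta(G)}$ regardless of $|E_{1,e}|$ and the preimage count is bounded by the cutset count alone, so the gain $2^{L/\Delta(G)}$ comfortably beats $\exp(C\lambda L/d)$ once $\lambda$ is small.

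Second, on $\Omega^{\mathrm{s}}$ you slice only by $m:=|E_{1,e}(\Gamma)|$ and use the crude bounds $|T(f)|\ge2^{L/\Delta(G)-m-1}$ and preimage count $\le2^{m+1}\cdot(\text{approximation count})$. That yields an expansion factor of order $2^{L/\Delta(G)-2m}$; for $m$ near the threshold $(1-\lambda/\log^2 d)L/\Delta(G)$ the exponent is about $-L/\Delta(G)$, so the estimate is again vacuous. The correct calibration is $2^{L/\Delta(G)-m}$ — with a single $m$, not $2m$, subtracted — and you only get it by slicing by the \emph{pair} $(m,k)$ with $k:=|E_{1,1}(\Gamma)\cap E_{1,e}(\Gamma)|$: on a fixed-$(m,k)$ slice the output size is $2^{L/\Delta(G)-k}$ (flipping only kills the $k$ free choices that $T_1$ had) and the sign-enumeration cost is $2^{m-k}$ (only exposed vertices outside $E_{1,1}$ need to be guessed, since for $v\in E_{1,1}\cap E_{1,e}$ the flipped region is $\{v\}$ itself), so the $k$'s cancel. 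With the coarser $m$-slicing the worst-case $k$ for the output lower bound ($k=m$) and the worst-case $k$ for the preimage upper bound ($k=0$) are achieved on different cutsets, the cancellation is lost, and the argument does not close. This $(m,k)$-decomposition is exactly what the paper's $\Omega_{x,L,2,m,k}$ and Proposition~\ref{Omega_x_L_2_bound_prop} are for.
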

Note that by definition $\Omega_{(x_1,\ldots, x_{k-1},x),(L_1,\ldots, L_{k-1},L)}\subseteq\Omega_{x,L}$ so that the second part of the theorem makes sense. Theorem~\ref{main_thm} follows immediately from this theorem and the previous lemma, as follows.
\begin{proof}[Proof of Theorem~\ref{main_thm}]
Let $d_0$ and $c>0$ be the numbers from Theorem~\ref{transformation_thm} and fix $d\ge d_0$, non-linear tori $G$, legal boundary conditions $(B,\mu)$ with non-positive $\mu$. Let $k\ge 1$, $x_1,\ldots, x_k\in V[G]$ and integers $L_1,\ldots, L_k\ge 1$. Taking the transformation $T:\Omega_{x_k,L_k}\to\PP(\Hom(G,B,\mu))$ given by Theorem~\ref{transformation_thm}, we obtain using Lemma~\ref{transformation_lemma} and both parts of Theorem~\ref{transformation_thm} that
\begin{align*}
\P(f\in\Omega_{(x_1,\ldots, x_k),(L_1,\ldots, L_k)}&) = \frac{|\Omega_{(x_1,\ldots, x_k),(L_1,\ldots, L_k)}|}{|T(\Omega_{(x_1,\ldots, x_k),(L_1,\ldots, L_k)})|}\P(f\in T(\Omega_{(x_1,\ldots, x_k),(L_1,\ldots, L_k)}))\le\\
&\le d^3\exp\left(-\frac{cL_k}{d\log^2 d}\right)\P(f\in\Omega_{(x_1,\ldots, x_{k-1}),(L_1,\ldots, L_{k-1})}),
\end{align*}
where we interpret $\frac{0}{0}$ as $0$ and, for $k=1$, define $\Omega_{\emptyset,\emptyset}:=\Hom(G,B,\mu)$. By induction on $k$ we have
\begin{equation*}
\begin{split}
\P(f\in\Omega_{(x_1,\ldots, x_k),(L_1,\ldots, L_k)})&\le \min\left(d^{3k}\exp\left(-\frac{c\sum_{i=1}^k L_i}{d\log^2 d}\right),1\right)\le\\
&\le d^k\exp\left(-\frac{c'\sum_{i=1}^k L_i}{d\log^2 d}\right)
\end{split}
\end{equation*}
for some $c'>0$, as required.\qedhere
\end{proof}
Hence all our efforts will be concentrated towards proving Theorem~\ref{transformation_thm}. In the next section we define the transformation $T$ and show why it satisfies the second part of Theorem~\ref{transformation_thm}. Section~\ref{odd_cutsets_structure_sec} develops the structural results on odd cutsets we shall need for the proof of the first part of the theorem, which is subsequently proved in Section~\ref{proof_of_transform_theorem_sec}.

\subsection{Definition of the Transformation}\label{T_def_section}
In this section we define the transformation $T$ of Theorem~\ref{transformation_thm}, establish some of its basic properties and prove that it satisfies the second property in Theorem~\ref{transformation_thm}. Fix a torus $G$ (for some dimension $d$ and any even side lengths $n_i$ satisfying \eqref{torus_side_lengths}), legal boundary conditions $(B,\mu)$ with non-positive $\mu$, $x\in V[G]$ and integer $L\ge 1$.

Throughout the section we denote, for $f\in\Omega_{x,L}$, $\Gamma:=\LS(f,x,B)$ (note that $\Gamma\in\OMCut(x,B)$ by Proposition~\ref{level_set_OMCut_prop}), $\CC_1:=\comp(\Gamma,x)$, $E_1:=E_1(\Gamma)$, $E_0:=E_0(\Gamma)$, $E_{1,1}:=E_{1,1}(\Gamma)$ and $\Eonee:=\Eonee(\Gamma)$. We note especially that
\begin{equation}\label{Gamma_boundary_values}
f(v)=j\text{ for $j\in\{0,1\}$ and $v\in E_j$}.
\end{equation}
The transformation $T$ will take one of two possible forms, which we now describe.

\subsubsection{The Shift Transformation}\label{shift_transformation_sec}
We define the ``shift transformation'' $\shift:\Omega_{x,L}\to\Hom(G,B,\mu)$ by
\begin{equation*}
\shift(f)(v) = \begin{cases}f(v+e_1)-1 &\text{for }v\in \CC_1\\f(v) &\text{otherwise}\end{cases}.
\end{equation*}
\begin{lemma}
We indeed have $\shift(f)\in\Hom(G,B,\mu)$.
\end{lemma}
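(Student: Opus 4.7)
The plan is to verify two requirements: that $S(f)$ respects the boundary condition on $B$, and that $|S(f)(v)-S(f)(w)|=1$ for every edge $\{v,w\}$ of $G$.

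For the boundary condition, I would first note that $\CC_1 \cap B = \emptyset$. This is immediate: $\Gamma = \LS(f,x,B) \in \OMCut(x,B)$ separates $x$ from every vertex of $B$, so none of $B$ lies in the component $\CC_1 = \comp(\Gamma,x)$. Hence for $b \in B$, $S(f)(b) = f(b) = \mu(b)$.

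For the homomorphism condition, my approach is to introduce the auxiliary function $f'(v) := f(v+e_1)-1$ on all of $V[G]$. Since $v \mapsto v+e_1$ is a graph automorphism of the torus $G$ and $f$ is a homomorphism, $f'$ is a homomorphism as well. Now $S(f)$ coincides with $f'$ on $\CC_1$ and with $f$ on $\CC_1^c$, so the adjacency condition is automatic on any edge $\{v,w\}$ whose endpoints lie entirely in $\CC_1$ or entirely in $\CC_1^c$.

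The only nontrivial case is the crossing edges: $v \in \CC_1$ and $w \notin \CC_1$ (or the symmetric situation). Such an edge lies in $\subcut(\Gamma,x) \subseteq \Gamma$, so $v \in E_1$ and $w \in E_0$, and by \eqref{Gamma_boundary_values} we have $f(v)=1$ and $f(w)=0$. Since $v+e_1$ is a neighbor of $v$ in $G$, we have $f(v+e_1) \in \{0,2\}$, giving $S(f)(v) = f(v+e_1) - 1 \in \{-1,1\}$; this differs from $S(f)(w)=f(w)=0$ by exactly $1$, as required. The mild subtlety worth double-checking (but not a real obstacle) is the subcase $w=v+e_1$, which is handled by the same bound since then $S(f)(v) = f(w)-1 = -1$. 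There is no serious obstacle here: the proof is entirely structural, hinging only on the fact that edges crossing the component boundary are precisely the edges of $\Gamma$, on which $f$ takes the pinned values $0$ and $1$.
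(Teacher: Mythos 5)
Your proof is correct and follows essentially the same route as the paper: the boundary condition follows from $\CC_1 \cap B = \emptyset$, edges entirely inside or entirely outside $\CC_1$ are handled by translation-invariance of the homomorphism property (your auxiliary $f'$ is just a cleaner restatement of the paper's observation that $(v+e_1)+f_i = (v+f_i)+e_1$), and the crossing edges are settled by \eqref{Gamma_boundary_values}, which pins $f$ to $1$ on $E_1$ and $0$ on $E_0$, forcing $f(v+e_1)\in\{0,2\}$ and hence $S(f)(v)\in\{-1,1\}$.
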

\begin{proof}
Since $\Gamma\in\OMCut(x,B)$ we have $B\cap \CC_1=\emptyset$. It follows that $\shift(f)(b)=\mu(b)$ for all $b\in B$. Now fix $v\in G$ and $1\le i\le \Delta(G)$. It remains to check that $|\shift(f)(v)-\shift(f)(v+f_i)|=1$. If $v,v+f_i\in \CC_1$ or $v,v+f_i\notin \CC_1$, this follows from the corresponding property of $f$ (using that $(v+e_1)+f_i=(v+f_i)+e_1$ in $G$). Otherwise, assume WLOG that $v\in \CC_1$ and $v+f_i\notin \CC_1$. It follows from \eqref{Gamma_boundary_values} that $f(v)=1$ and $f(v+f_i)=0$. Hence $f(v+e_1)\in\{0,2\}$ and we have $|\shift(f)(v)-\shift(f)(v+f_i)|=|f(v+e_1)-1|=1$.
\end{proof}
The following lemma is key to our definitions.
\begin{lemma}\label{flip_possible_lemma}
For all $v\in E_{1,1}$ and $1\le i\le \Delta(G)$ we have $\shift(f)(v+f_i)=0$.
\end{lemma}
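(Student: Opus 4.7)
The plan is to unfold the definitions of the shift $S$ and the level set $\Gamma := \LS(f,x,B)$, and establish $S(f)(v+f_i)=0$ by case analysis on whether $v+f_i \in \CC_1$ or not. Recall that $\Gamma$ consists precisely of the edges between $\CC_1 = A^c_x$ and $A$, where $A$ is the union of connected components of $B$ in $\{w\ |\ f(w)\le 0\}$. Since $v \in E_{1,1}$, we have $v \in E_1 \subseteq \CC_1 \cap V^{\odd}$ with $f(v)=1$, and $\{v,v+e_1\}\in\Gamma$, which also places $v+e_1$ in $A$ with $f(v+e_1)=0$.

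First I would handle the easy case $v+f_i\notin \CC_1$: then the edge $\{v,v+f_i\}$ must lie in $\Gamma$, so $v+f_i \in A$ with $f(v+f_i)=0$, and the defining formula for $S$ yields $S(f)(v+f_i)=f(v+f_i)=0$ immediately. The remaining case $v+f_i\in \CC_1$ contains all the real content: by the definition of $S$, the claim reduces to showing $f(v+f_i+e_1)=1$.

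For this sub-case, I would invoke \eqref{parity_neighborhood_contained} applied to $v$ in direction $f_i$ (whose hypothesis $\{v,v+f_i\}\notin\Gamma$ is available since $v+f_i \in \CC_1$). This yields $S(v+f_i)\subseteq \CC_1$; in particular $v+f_i+e_1\in \CC_1 = A^c_x$, so $v+f_i+e_1 \notin A$. Coupled with the homomorphism constraint relative to $f(v+e_1)=0$, we have $f(v+f_i+e_1)\in\{-1,1\}$. The main obstacle --- the only genuinely non-trivial step --- is ruling out $f(v+f_i+e_1)=-1$: in that event $v+f_i+e_1$ would lie in $\{w\ |\ f(w)\le 0\}$ and be adjacent to $v+e_1$, which is also in that set and already lies in $A$. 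The characterization of $A$ as the union of $B$-connected components of $\{w\ |\ f(w)\le 0\}$ would then force $v+f_i+e_1 \in A$, contradicting $v+f_i+e_1 \in A^c_x$. Hence $f(v+f_i+e_1)=1$ and $S(f)(v+f_i)=f(v+f_i+e_1)-1=0$, completing the proof.
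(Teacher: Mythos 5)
Your proof is correct and follows essentially the same line as the paper's. Both split on whether $v+f_i$ lies in $\CC_1$, dispose of the exterior case via $f(v+f_i)=0$, and in the interior case use \eqref{parity_neighborhood_contained} (applied to $v\in E_1$ and direction $f_i$) to place $v+f_i+e_1$ in $\CC_1$. The only difference is the last step: the paper observes that $v+f_i+e_1\in\CC_1$ is adjacent to $v+e_1\in E_0$, hence lies in $E_1$, and then reads off $f(v+f_i+e_1)=1$ from \eqref{Gamma_boundary_values}; you instead unfold the definition of $A$ and rule out $f(v+f_i+e_1)=-1$ by noting it would force $v+f_i+e_1\in A$, contradicting $v+f_i+e_1\in\CC_1=A^c_x$. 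Both are valid and equally short --- the paper's version stays within the cutset vocabulary it has already set up, while yours is a touch more self-contained since it argues directly from the level-set definition rather than via the intermediate fact $v+f_i+e_1\in E_1$.
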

\begin{proof}
Let $v\in E_{1,1}$ and $1\le i\le \Delta(G)$. By definition, $v+e_1\in E_0$. If $v+f_i\in E_0$ then $\shift(f)(v+f_i)=f(v+f_i)=0$ by \eqref{Gamma_boundary_values}. If $v+f_i\notin E_0$ then by \eqref{parity_neighborhood_contained}, $v+f_i+e_1\in E_1$ (since it is adjacent to $v+e_1$) implying that $\shift(f)(v+f_i)=f(v+f_i+e_1)-1=0$ by \eqref{Gamma_boundary_values}.
\end{proof}
We continue to define the transformation $T_1:\Omega_{x,L}\to\PP(\Hom(G,B,\mu))$. $T_1(f)$ is the set of all functions $g$ of the form
\begin{equation*}
g(v)=\begin{cases}\shift(f)(v)&v\notin E_{1,1}\\\eps_v&\text{otherwise}\end{cases}
\end{equation*}
where $\{\eps_v\}_{v\in E_{1,1}}$ is a sequence of $\pm 1$. The
previous lemma shows that these $2^{|E_{1,1}|}$ functions are indeed
a subset of $\Hom(G,B,\mu)$. Since we wish to define a
transformation $T$ with $|T(f)|$ large, one may wonder if $|T_1(f)|$
can be increased by shifting in a direction other than $e_1$ in the
definition of $\shift$. However, by
Proposition~\ref{equal_edgenum_in_all_dir} we have
\begin{equation*}
|E_{1,1}|=|\{v\in E_1\ |\ \{v,v+f_i\}\in \Gamma\}|
\end{equation*}
for all $1\le i\le \Delta(G)$. It follows that $|E_{1,1}|=\frac{L}{\Delta(G)}$ and consequently $|T_1(f)|=2^{\frac{L}{\Delta(G)}}$.

\subsubsection{The Shift+Flip Transformation}\label{shift_flip_transformation_sec}
We now define the transformation $T_2:\Omega_{x,L}\to\PP(\Hom(G,B,\mu))$ as follows. Let $g\in T_1(f)$. By definition of $T_1$, we know that $g(v)\in\{-1,1\}$ for all $v\in E_1$ (since $g(v)=0$ for all $v\in E_0$). For $v\in \Eonee$, we let $R_v$ be the connected component of $v$ in $V[G]\setminus \{w\in V[G]\ |\ g(w)=0\}$. We note that it may happen that $R_v=R_w$ for $v\neq w$, but then we must have $g(v)=g(w)$ since otherwise any path between them will cross a zero of $g$. We also note that $R_v\subseteq \CC_1$ for all $v\in \Eonee$ since $g(w)=0$ for all $w\in E_0$. Finally, we define $T_2(f)$ to be all functions $\tilde{g}$ formed by taking a $g\in T_1(f)$ and defining
\begin{equation}\label{flip_T_def}
\tilde{g}(w):=\begin{cases}-g(w)&\text{if $w\in R_v$ for some $v\in \Eonee$ with $g(v)=-1$}\\g(w)&\text{otherwise}\end{cases}.
\end{equation}
Less formally, $\tilde{g}$ is formed from $g$ by flipping some values to ensure that $\tilde{g}(v)=1$ for all $v\in \Eonee$. By our definition of $R_v$ and since $R_v\subseteq \CC_1$, it follows that $\tilde{g}\in \Hom(G,B,\mu)$ in a straightforward manner. Comparing the definitions of $T_1$ and $T_2$, we see that $|T_2(f)|=2^{\frac{L}{\Delta(G)}-|E_{1,1}\cap \Eonee|}=2^{|E_{1,1}\setminus \Eonee|}$ since by Lemma~\ref{flip_possible_lemma}, $R_v=\{v\}$ for $v\in E_{1,1}\cap \Eonee$.

\subsubsection{The Transformation $T$}
We are now ready to define the transformation $T$.
\begin{equation}\label{T_def}
T(f):=\begin{cases}T_1(f)&\text{if }|\Eonee|\ge (1-\frac{\lambda}{\log^2 d})\frac{L}{\Delta(G)}\\T_2(f)&\text{otherwise}\end{cases}.
\end{equation}
for some small enough constant $\lambda$ (independent of $d$) to be determined later (in Section~\ref{proof_of_transform_theorem_sec}). From our previous discussion, we have
\begin{equation}\label{T_image_size_eq}
|T(f)|=\begin{cases}2^{\frac{L}{\Delta(G)}}&\text{if }|\Eonee|\ge (1-\frac{\lambda}{\log^2 d})\frac{L}{\Delta(G)}\\2^{\frac{L}{\Delta(G)}-|E_{1,1}\cap \Eonee|}&\text{otherwise}\end{cases},
\end{equation}
and also that
\begin{equation}\label{only_Cx_changes}
g(v)=f(v)\text{ for all $g\in T(f)$ and $v\notin \CC_1$}.
\end{equation}

As promised, we now show that the second property of
Theorem~\ref{transformation_thm} holds for this transformation.
\begin{proof}[Proof of second property in Theorem~\ref{transformation_thm}.]
Fix $k\ge 2$, $x_1,\ldots, x_{k-1}\in V[G]$ and integers $L_1,\ldots, L_{k-1}\ge 1$ and assume that $f\in\Omega_{(x_1,\ldots, x_{k-1},x),(L_1,\ldots, L_{k-1},L)}$. We need to show that $T(f)\subseteq\Omega_{(x_1,\ldots, x_{k-1}),(L_1,\ldots, L_{k-1})}$. Fix $g\in T(f)$ and $1\le i\le k-1$. It is sufficient to show that $\LS(f,x_i,B)=\LS(g,x_i,B)$. Let $C_x:=\comp(\LS(f,x,B),x)$. We shall need only that by \eqref{only_Cx_changes}, $g(v)=f(v)$ for all $v\notin C_x$. Let $A$ and $A'$ be the union of those connected components which contain points of $B$ in $\{v\in V[G]\ |\ f(v)\le 0\}$ and in $\{v\in V[G]\ |\ g(v)\le 0\}$ respectively. Let $C_{x_i}$ and $C_{x_i}'$ be the connected components of $x_i$ in $V[G]\setminus A$ and $V[G]\setminus A'$ respectively. The claim will follow once we show that $C_{x_i}=C_{x_i}'$. Note first that since $C_x$ is the connected component of $x$ in $V[G]\setminus A$, $C_x$ and $C_{x_i}$ must be identical or disjoint. But by definition of $\Omega_{(x_1,\ldots, x_{k-1},x),(L_1,\ldots, L_{k-1},L)}$, it follows that $C_x\cap C_{x_i}=\emptyset$. Next, let $E_1^{x_i}:=\{v\in C_{x_i}\ |\ \exists w\notin C_{x_i}, w\adj{G} v\}$. By our definitions, $f(v)=1$ for all $v\in E_1^{x_i}$ and hence also $g(v)=1$ by \eqref{only_Cx_changes}. It follows that $A'\cap C_{x_i}=\emptyset$ and hence $C_{x_i}\subseteq C_{x_i}'$. To see the opposite inequality, note that a point in $A$ is characterized by having a path connecting it to some $b\in B$ which avoids $C_x$ and $\{v\in V[G]\ |\ f(v)\ge 1\}$. This same path shows that point is also in $A'$ and hence $A\subseteq A'$ so that $C_{x_i}\supseteq C_{x_i}'$.\qedhere

\end{proof}

\subsection{Structure Theorems for Odd Cutsets}\label{odd_cutsets_structure_sec}

In this section we shall prove several theorems estimating the number of odd minimal cutsets in various settings. In Section~\ref{rough_bdry_counting_sec} we estimate the number of such cutsets in terms of their boundary roughness. In Section~\ref{partial_edge_cutsets_section} we show that if one is content with finding only an approximation to the cutset,
identifying clearly only vertices whose $P_\Gamma(v)$ is less than $\Delta(G)-\sqrt{d}$, then one can find a relatively small set of approximations, containing such an approximation to every cutset.
This is used in that section to bound the number of ``possible level sets'' for a function $f$ given a function $g\in T_2(f)$ ($T_2$ is defined in Section~\ref{shift_flip_transformation_sec}).

\subsubsection{Counting Cutsets With Rough Boundary}\label{rough_bdry_counting_sec}
To state the main theorem of this section, fix $B\subseteq V[G]$, $x\in V[G]\setminus B$ and for a cutset $\Gamma\in\OMCut(x,B)$, $v\in V[G]$ and subset $E\subseteq V[G]$ define
\begin{equation}\label{R_Gamma_def}
\begin{split}
R_\Gamma(v)&:=\min(P_\Gamma(v), \Delta(G)-P_\Gamma(v)),\\
R_\Gamma(E)&:=\sum_{v\in E} R_\Gamma(v).
\end{split}
\end{equation}
A value of $\frac{R_\Gamma(E_1(\Gamma))}{|E_1(\Gamma)|}$ significantly smaller than $d$ indicates some roughness of $E_1(\Gamma)$. Our theorem will allow us to estimate the number of cutsets having such roughness. For integers $M,R\ge0$, let
\begin{equation*}
 \OMCut(x,B,M,R):=\{\Gamma\in\OMCut(x,B)\ |\ |E_1(\Gamma)|=M, R_\Gamma(E_1(\Gamma))=R\}.
\end{equation*}
Recalling from \eqref{alpha_def} that $\alpha=\prod_{i=1}^{d-1} n_i$, we will prove
\begin{theorem}\label{count_cutsets_thm}
There exist $C,d_0>0$ such that for all $d\ge d_0$ and integers $M,R\ge 0$,
\begin{equation*}
 |\OMCut(x,B,M,R)|\le n_d^{\left\lfloor\frac{M}{\alpha}\right\rfloor} \exp\left(\frac{C\log^2 d}{d}R\right).
\end{equation*}
\end{theorem}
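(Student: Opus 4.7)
The plan is to describe each $\Gamma\in\OMCut(x,B,M,R)$ by a small skeleton $S_\Gamma\subseteq E_1(\Gamma)$ plus a bounded amount of local data around each of its points, and then to count via a spanning-tree argument (Proposition~\ref{span_tree_prop}). The two factors in the bound have different origins: the $n_d^{\lfloor M/\alpha\rfloor}$ factor will come from handling the $\GC$-connected components of $E_1(\Gamma)$, while $\exp(CR\log^2 d/d)$ will arise from counting skeletons and reconstructing $\Gamma$ from them. Since $R\ge M$ for the cutsets of interest (each $v\in E_1(\Gamma)$ has $P_\Gamma(v)\ge 1$, and for non-trivial $\Gamma$ also $P_\Gamma(v)\le\Delta(G)-1$, so $R_\Gamma(v)\ge 1$), any polynomial-in-$M$ losses may be absorbed into the exponential factor and need not be tracked precisely.

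For the component factor, I would decompose $E_1(\Gamma)$ into its $\GC$-connected components. By Proposition~\ref{full_proj_prop}, either $E_1(\Gamma)$ is $\GC$-connected, or each of its components has full projection on some coordinate direction and so contains at least $\alpha$ vertices. Thus the number of components is at most $\lfloor M/\alpha\rfloor+1$, and Proposition~\ref{point_in_cutset_comp_prop} produces a universal set of size $O(Mn_d)$ meeting every component of size at most $M$. Choosing one seed per component contributes at most $(O(Mn_d))^{\lfloor M/\alpha\rfloor+1}$ to the count, whose genuinely exponential-in-$M/\alpha$ part is exactly the required $n_d^{\lfloor M/\alpha\rfloor}$.

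The main step is the construction of the skeleton itself. I would take $S_\Gamma$ to be a small dominating set for $E_1(\Gamma)$ under the $\Gamma$-adjacency relation defined before Proposition~\ref{Gamma_1_degree_prop}. That proposition guarantees that every $v$ with $P_\Gamma(v)\notin\{0,\Delta(G)\}$ has $\Omega(d\cdot R_\Gamma(v))$ distinct $\Gamma$-neighbors, so a greedy or probabilistic construction yields a dominating set of size $O(R/d)$ up to polylogarithmic factors. Since $\Gamma$-adjacency shifts positions by $f_i+f_j$ and thus moves at most two steps in $G$, one can build $S_\Gamma$ so that $S_\Gamma\cup\{x\}$ is connected in $G^{\otimes r}$ for some $r=O(\log d)$, and Proposition~\ref{span_tree_prop} then bounds the number of possible skeletons by $(2d)^{O(r|S_\Gamma|)}=\exp(O(R\log^2 d/d))$. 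To recover $\Gamma$ from $S_\Gamma$, one specifies at each $v\in S_\Gamma$ the pattern of $\Gamma$-edges incident to $v$ (a choice of $P_\Gamma(v)$ out of $\Delta(G)$ directions) and propagates this to its dominated neighbors using \eqref{parity_neighborhood_contained} and Proposition~\ref{interior_P_gamma_estimate}; the total reconstruction cost is at most $\exp(O(R\log d))$, which is again absorbed into the main factor.

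The principal obstacle, and the reason the odd-cutset hypothesis is essential, is keeping the dominating set at size $O(R/d)$ up to polylog rather than the $O(M\log M/d)$ that a naive argument would give. The delicate case is that of exposed vertices: they can outnumber the non-exposed ones yet each contributes at most $\sqrt{d}$ to $R$, so they must be dominated implicitly by skeleton vertices placed on rough parts of $E_1(\Gamma)$. This is where the $\OMCut$ property enters through Propositions~\ref{equal_edgenum_in_all_dir} and~\ref{Gamma_1_degree_prop}, which force exposed regions to cluster rigidly against rough boundary in a way that lets a single skeleton vertex govern many exposed ones simultaneously. Carrying this quantitative control through is the technical heart of the proof.
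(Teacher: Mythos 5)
Your proposal tracks the paper's overall strategy closely: a small $\Gamma$-dominating skeleton, a spanning-tree enumeration of skeletons via Proposition~\ref{span_tree_prop}, local ``edge pattern'' data $N_\Gamma(v)$ at skeleton vertices, and the component decomposition backed by Propositions~\ref{full_proj_prop} and~\ref{point_in_cutset_comp_prop}. But there is a quantitative gap at the center of the argument that, as you have written it, would cause the bound to fail.

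The crux is how the cost of specifying the local data $N_\Gamma(E^t)$ is controlled. A skeleton vertex $v$ costs $\binom{\Delta(G)}{P_\Gamma(v)}\le (2d)^{R_\Gamma(v)}$ to describe, so the total local-data cost is $(2d)^{R_\Gamma(E^t)}$ where $R_\Gamma(E^t)=\sum_{v\in E^t}R_\Gamma(v)$. Keeping the \emph{size} $|E^t|$ down to $O(R/d)$ (even without polylogs) does not control this quantity: each $R_\Gamma(v)$ may be as large as $\Delta(G)/2\approx d$, so a skeleton of size $O(R/d)$ can have $R_\Gamma(E^t)$ as large as $O(R)$, giving cost $\exp(O(R\log d))$. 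You write exactly this figure and claim it is ``absorbed into the main factor,'' but $\exp(O(R\log d))$ is enormously larger than the target $\exp(CR\log^2 d/d)$ once $d$ is large; it would in fact be the dominant term and the theorem would not follow. The paper avoids this by choosing $E^t$ as a uniform random subset of $E$ with inclusion probability $\Theta(\log d/d)$ (Proposition~\ref{dom_set_prop}), which simultaneously yields the two bounds $|E^t|\le C\frac{\log d}{d}|E|$ and, crucially, $R_\Gamma(E^t)\le C\frac{\log d}{d}R_\Gamma(E)$; the second bound is what makes the $N_\Gamma$ cost $(2d)^{O(R\log d/d)}=\exp(O(R\log^2 d/d))$. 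Your proposal never controls $R_\Gamma(E^t)$, and your framing of the obstacle in terms of exposed vertices needing to ``cluster rigidly'' does not reflect what happens: the paper treats exposed and non-exposed vertices identically in the skeleton construction, and the $R_\Gamma$-mass control handles them automatically.

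Two secondary issues. First, the connectivity scale of the skeleton is $G^{\otimes 6}$ (Lemma~\ref{Et_connected_graph_lem}), a fixed constant, because $\Gamma$-adjacency moves at most two steps in $G$; using $G^{\otimes O(\log d)}$ as you propose would inflate the spanning-tree count by an additional $\log d$ factor in the exponent, again spoiling the bound. Second, the component accounting needs more care than ``one seed per component.'' The universal set $A(x,b,M)$ of Proposition~\ref{point_in_cutset_comp_prop} depends on $b$, and for a multi-component $E_1(\Gamma)$ one must know which $b\in B$ each component is associated to before using the seed set; the paper resolves this recursively via the min-association ordering and Lemma~\ref{b_determined_lem}. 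Also, the crude bound $(O(Mn_d))^{\lfloor M/\alpha\rfloor+1}$ has one factor of $n_d$ too many; the paper recovers exactly $n_d^{\lfloor M/\alpha\rfloor}$ by using the indicator exponent $n_d^{\mathbf{1}(M_i\ge\alpha)}$ per component together with the superadditivity $\sum_i\lfloor M_i/\alpha\rfloor\le\lfloor M/\alpha\rfloor$.
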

For $\Gamma\in\OMCut(x,B)$ and a $\GC$-connected component $E$ of $E_1(\Gamma)$, we say that $E$ is associated with $b\in B$ if $E\cap E_1(\subcut(\Gamma,b))\neq \emptyset$, that is, if the part of $\Gamma$ which separates $b$ and $x$ has an edge incident to $E$. Note that $E$ may be associated to several $b\in B$. The following proposition is the main step in proving the above theorem.
\begin{proposition}\label{count_components_prop}
There exist $C,d_0>0$ such that for $d\ge d_0$, integers $M,R\ge 0$ and $b\in B$, the number of possibilities for a $\GC$-connected component $E$, associated with $b$ and having $|E|=M$ and $R_\Gamma(E)=R$, of $E_1(\Gamma)$ for some $\Gamma\in\OMCut(x,B)$ is at most
\begin{equation*}
n_d^{\left\lfloor\frac{M}{\alpha}\right\rfloor} \exp\left(\frac{C\log^2 d}{d}R\right).
\end{equation*}
\end{proposition}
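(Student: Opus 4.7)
My plan is to encode each admissible $\GC$-connected component $E$ by a small ``skeleton'' $D\subseteq E$ together with a designated base point $v_0\in E$, and then count the possibilities for such pairs. The $n_d^{\lfloor M/\alpha\rfloor}$ prefactor will come from locating $v_0$ on the torus, while the $\exp(CR\log^2 d/d)$ factor will come from bounding the number of choices for $D$ and from a subsequent reconstruction of $E$ from $D$.

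To locate the base point, I would exploit that $E$ is associated with $b$, so $\subcut(\Gamma,b)\in\OMCut(x,b)\subseteq\MCut(x,b)$ has $E_1(\subcut(\Gamma,b))$ meeting $E$. Applying Proposition~\ref{point_in_cutset_comp_prop} to $\subcut(\Gamma,b)$ yields a deterministic anchor set of size $\leq 40M\cdot n_d^{\mathbf{1}(M\geq\alpha)}$ that meets $E$. When $M$ is a larger multiple of $\alpha$, so that $E$ can wrap the torus many times, one iterates this argument along the wrap-around direction to recover the $n_d^{\lfloor M/\alpha\rfloor}$ factor (with a polynomial-in-$M$ slack absorbed into the final exponential).

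The skeleton $D$ is then built greedily. The structural input is that vertices of $E$ have large $\Gamma$-degree inside $E$: whenever $|E|\geq 2$, Proposition~\ref{one_point_cutset_prop} forces $1\leq P_\Gamma(v)\leq \Delta(G)-1$ for every $v\in E$, and Proposition~\ref{Gamma_1_degree_prop} then gives $v$ at least $R_\Gamma(v)(\Delta(G)-R_\Gamma(v)-1)$ $\Gamma$-neighbors. Each such $\Gamma$-neighbor lies in $E_1(\Gamma)$ at $G$-distance $2$ from $v$, hence is a $\GC$-neighbor of $v$ inside the same $\GC$-component $E$. Starting from $v_0$ and greedily adding an undominated vertex of $E$ to $D$ at each step, charging each addition to the $R_\Gamma$-mass it absorbs, one obtains a $G^{\otimes 2}$-connected skeleton $D\ni v_0$ of size $|D|\leq CR\log d/d$. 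Proposition~\ref{span_tree_prop} then bounds the number of such $D$ of a given size by $(2d)^{6|D|}$, which sums to $\exp(C'R\log^2 d/d)$.

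The main obstacle is the reconstruction step: recovering $E$ itself from $(v_0,D)$ within the remaining $\exp(O(R\log^2 d/d))$ budget. Naively specifying for each $v\in D$ its full plaquette pattern in $\Gamma$ would cost $\prod_{v\in D}(2d)^{R_\Gamma(v)}=\exp(O(R\log d))$, which is too large. I would overcome this by exploiting the oddness of $\Gamma$: since $E_1(\Gamma)\subseteq V^\odd$ and every $v\in E$ lies within bounded $\GC$-distance of $D$, $E$ is confined to a small deterministic neighborhood of $D$ in the odd sublattice, and the rigidity of Proposition~\ref{equal_edgenum_in_all_dir} together with the sparsity of high-roughness vertices (at most $R/\sqrt d$ vertices have $R_\Gamma(v)\geq\sqrt d$) should compress the remaining data into the required budget. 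Making this compression precise, probably by refining the greedy procedure so that $D$ itself carries enough structural information to pin down $E$, is where the technical heart of the argument lies.
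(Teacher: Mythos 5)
Your outline tracks the paper's strategy well up to the point you call ``the technical heart of the argument,'' but it is exactly there that you have a genuine gap, and it stems from a miscounting of the naive reconstruction cost. The paper's dominating set $E^t$ (which plays the role of your $D$) is chosen so that \emph{both} $|E^t|\le 10\frac{\log d}{d}|E|$ \emph{and} $R_\Gamma(E^t)\le 10\frac{\log d}{d}R_\Gamma(E)$; the second property is the one you are missing. Both properties are obtained from a single probabilistic argument: include each $v\in E$ with probability $3\log d/d$, apply Markov to $|E^s|$ and to $R_\Gamma(E^s)$ separately, and patch in the (with high probability few) undominated vertices. With $R_\Gamma(E^t)\le C\frac{\log d}{d}R$ in hand, the ``naive'' reconstruction you dismiss is actually affordable: specifying the degrees $(P_\Gamma(v))_{v\in E^t}$ costs $(2d)^{|E^t|}$ and then specifying $N_\Gamma(v)$ given $P_\Gamma(v)$ costs $\binom{\Delta(G)}{P_\Gamma(v)}\le(2d)^{R_\Gamma(v)}$, so the total is $(2d)^{|E^t|+R_\Gamma(E^t)}\le(2d)^{O(\frac{\log d}{d}(M+R))}=\exp\bigl(O(\frac{\log^2 d}{d}R)\bigr)$ using $R\ge M$ (Lemma~\ref{component_triviality_lemma}). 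The estimate $\prod_{v\in D}(2d)^{R_\Gamma(v)}=\exp(O(R\log d))$ in your proposal implicitly assumes $\sum_{v\in D}R_\Gamma(v)$ is of order $R$; if you require $R_\Gamma(D)$ to be a $\frac{\log d}{d}$-fraction of $R$, the obstacle you identify disappears, and neither ``sparsity of high-roughness vertices'' nor a refined greedy procedure is needed. Once $E^t$ and $N_\Gamma(E^t)$ are known, $E$ is recovered exactly by adding, to each $v'\in E^t$, all its $\Gamma$-neighbors read off from $N_\Gamma(v')$ (Lemma~\ref{Et_and_Nv_suffice}); this is where oddness has already been used, via Proposition~\ref{Gamma_1_degree_prop} guaranteeing large $\Gamma$-degree. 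Two smaller inaccuracies: a $\Gamma$-dominating set yields $G^{\otimes 6}$-connectedness, not $G^{\otimes 2}$ (consecutive steps in a $\GC$-path of $E$ are at $G$-distance $2$ and each must be replaced by a dominator at $G$-distance $\le 2$); and there is no need to ``iterate along the wrap-around direction'' — the anchor set from Proposition~\ref{point_in_cutset_comp_prop} applied once to $\subcut(\Gamma,b)$ already has size $\le 40M n_d^{\mathbf{1}(M\ge\alpha)}\le 40M n_d^{\lfloor M/\alpha\rfloor}$, which suffices.
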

We emphasize that in the above proposition, $\Gamma$ is not given. We are estimating the number of possibilities for $E$ from all possible $\Gamma$'s.

We note the following simple lemma for later reference.
\begin{lemma}\label{component_triviality_lemma}
 For $\Gamma\in\OMCut(x,B)$, either $E_1(\Gamma)=\{x\}$, in which case $|E_1(\Gamma)|=1$ and $R_\Gamma(E_1(\Gamma))=0$, or all $\GC$-connected components $E$ of $E_1(\Gamma)$ have $R_\Gamma(E)\ge |E|\ge d-1$.
\end{lemma}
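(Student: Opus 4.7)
The plan is to handle each of the two stated alternatives separately. For the first case, suppose $E_1(\Gamma) = E_i(\Gamma,x) = \{x\}$. Then $|E_1(\Gamma)|=1$ is immediate, and it remains to show $P_\Gamma(x) = \Delta(G)$, so that $R_\Gamma(x) = \min(\Delta(G),0) = 0$. I would prove this by contradiction: if $P_\Gamma(x) < \Delta(G)$, then some neighbor of $x$ is joined to $x$ by a non-$\Gamma$ edge, so $\comp(\Gamma,x) \supsetneq \{x\}$. Every boundary edge of $\comp(\Gamma,x)$ lies in $\Gamma$, hence has its endpoint in $\comp(\Gamma,x)$ inside $E_i(\Gamma,x) = \{x\}$. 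Consequently, the two nonempty sets $\comp(\Gamma,x)\setminus\{x\}$ and $V[G]\setminus\comp(\Gamma,x)$ are not joined by any edge of $G - x$, contradicting the fact that removing a single vertex from any torus $G$ (as defined in \eqref{torus_side_lengths}) leaves a connected graph.

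In the second case, where $E_1(\Gamma) \ne \{x\}$, I would first verify $R_\Gamma(w) \ge 1$ for every $w \in E_1(\Gamma)$; this is equivalent to $1 \le P_\Gamma(w) \le \Delta(G)-1$. Proposition~\ref{one_point_cutset_prop} supplies this for every $w \neq x$, since there $P_\Gamma(w) = \Delta(G)$ is ruled out. For $w = x$ itself (if $x\in E_1(\Gamma)$), $P_\Gamma(x) = \Delta(G)$ would force $\comp(\Gamma,x) = \{x\}$ and hence $E_1(\Gamma) = \{x\}$, contradicting the hypothesis of this case. Summing the bound $R_\Gamma(w) \geq 1$ over any $\GC$-connected component $E$ gives $R_\Gamma(E) \ge |E|$.

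Finally, to prove $|E| \ge d-1$, I would pick any $w \in E$ and apply Proposition~\ref{Gamma_1_degree_prop}: since $1 \le P_\Gamma(w) \le \Delta(G)-1$, the vertex $w$ has at least $d-2$ distinct $\Gamma$-neighbors. By the discussion preceding that proposition, each such $\Gamma$-neighbor lies in $E_i(\Gamma,w) = E_1(\Gamma)$ and is $\GC$-adjacent to $w$, so it lies in the same $\GC$-component $E$. Together with $w$, this yields at least $d-1$ vertices in $E$.

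There is no serious obstacle in the argument; the only subtle point is the connectivity argument used in the first case to force $\comp(\Gamma,x)=\{x\}$, since Proposition~\ref{one_point_cutset_prop} by itself permits $E_i(\Gamma,x) = \{x\}$ without pinning down $P_\Gamma(x)$. Once that is dispensed with, both bounds in the second case follow directly from the propositions already established in Section~\ref{preliminaries_sec}.
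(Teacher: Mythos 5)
Your proof is correct and follows essentially the same route as the paper: both apply Proposition~\ref{one_point_cutset_prop} for the bound $1\le P_\Gamma(w)\le\Delta(G)-1$ in the non-degenerate case (note that proposition already covers $w=x$, so your separate treatment of $w=x$ is redundant but harmless), and Proposition~\ref{Gamma_1_degree_prop} together with the observation that $\Gamma$-neighbors lie in $E_1(\Gamma)$ to get $|E|\ge d-1$. The one place you go beyond the paper's one-line proof is the explicit vertex-connectivity argument showing $P_\Gamma(x)=\Delta(G)$ when $E_1(\Gamma)=\{x\}$; the paper leaves this implicit (it also drops out of Proposition~\ref{Gamma_1_degree_prop}, since $1\le P_\Gamma(x)\le\Delta(G)-1$ would force $\ge d-2$ $\Gamma$-neighbors of $x$ inside $E_1(\Gamma)$, contradicting $E_1(\Gamma)=\{x\}$ once $d\ge 3$), so your connectivity argument is a valid and slightly more elementary way to fill that gap.
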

\begin{proof}
 Let $\Gamma\in\OMCut(x,B)$ and $E$ a $\GC$-connected component of $E_1(\Gamma)$. First, if $E=\{x\}$ then $E_1(\Gamma)=\{x\}$ by Propositions~\ref{one_point_cutset_prop} and \ref{Gamma_1_degree_prop}. Second, the same propositions imply that if $E\neq\{x\}$ then $R_\Gamma(E)\ge |E|\ge d-1$.
\end{proof}
\paragraph{Proof of Proposition~\ref{count_components_prop}}

Let $\Gamma\in\OMCut(x,B)$ and $E$ a $\GC$-connected component of $E_1(\Gamma)$. Assume
\begin{equation*}
E\neq\{x\}.
\end{equation*}
The next proposition shows that $E$ is ``dominated'' by a small subset of it.
\begin{proposition}\label{dom_set_prop}
There exists $d_0>0$, independent of $\Gamma$ and $E$, such that for all $d\ge d_0$ there exists $E^{\operatorname{t}}\subseteq E$ with the properties:
\begin{enumerate}
\item $|E^{\operatorname{t}}|\le 10\frac{\log d}{d}|E|$ and $R_\Gamma(E^{\operatorname{t}})\le 10\frac{\log d}{d}R_\Gamma(E)$.
\item For every $v\in E$, either $v\in E^{\operatorname{t}}$ or there exists $v'\in E^{\operatorname{t}}$ such that $v'\adj{\Gamma}v$ (in other words, $E^{\operatorname{t}}$ is a $\Gamma$-dominating set for $E$).
\end{enumerate}
\end{proposition}
\begin{proof}
Choose a subset $E^{\operatorname{s}}$ randomly by adding each $v\in E$ to it independently with probability $3\frac{\log d}{d}$ (assuming $d$ is large enough so that this probability is at most 1). We have $\E|E^{\operatorname{s}}| = 3\frac{\log d}{d}|E|$ and $\E R_\Gamma(E^{\operatorname{s}}) = 3\frac{\log d}{d} R_\Gamma(E)$ so that by Markov's inequality
\begin{align}
&\P\left(|E^{\operatorname{s}}|\ge 9\frac{\log d}{d}|E|\right)\le \frac{1}{3}\qquad \text{and}\label{Markov1}\\
&\P\left(R_\Gamma(E^{\operatorname{s}})\ge 9\frac{\log d}{d}R_\Gamma(E)\right)\le \frac{1}{3}\label{Markov2}.
\end{align}
Let $E^{\text{nd}}\subseteq E$ be those vertices which are not $\Gamma$-dominated by $E^{\operatorname{s}}$. That is, vertices in $E$ such that they and their $\Gamma$-neighbors are not in $E^{\operatorname{s}}$. Using the assumption $E\neq\{x\}$, by Propositions~\ref{one_point_cutset_prop} and \ref{Gamma_1_degree_prop}, the minimal $\Gamma$-degree of vertices in $E$ is at least $d-2$. Hence the probability that some vertex is in $E^{\text{nd}}$ is at most $\left(1-3\frac{\log d}{d}\right)^{d-1}$ implying
\begin{equation*}
\E |E^{\text{nd}}| \le \left(1-3\frac{\log d}{d}\right)^{d-1}|E|\le e^{-\frac{3(d-1)\log d}{d}}|E| < \frac{|E|}{d^{5/2}}
\end{equation*}
for large enough $d$. By Markov's inequality,
\begin{equation}\label{Markov3}
\P\left(|E^{\text{nd}}|\ge \frac{|E|}{d^2}\right)<\frac{1}{3}
\end{equation}
for large enough $d$. Finally, we take $E^{\operatorname{t}}:=E^{\operatorname{s}}\cup E^{\text{nd}}$. Noting that $R_\Gamma(E^{\text{nd}})\le d|E^{\text{nd}}|$ and putting together \eqref{Markov1},\eqref{Markov2} and \eqref{Markov3}, we see that $E^{\operatorname{t}}$ satisfies the requirements of the proposition with positive probability (and in particular, such a set exists).\qedhere
\end{proof}

\begin{lemma}\label{Et_connected_graph_lem}
If $E^{\operatorname{t}}\subseteq E$ is as in Proposition~\ref{dom_set_prop}, then $E^{\operatorname{t}}$ is connected in $G^{\otimes 6}$.
\end{lemma}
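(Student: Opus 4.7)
The plan is a direct connectivity argument combining the $\GC$-connectivity of $E$ with the $\Gamma$-domination property of $E^t$. The key observation is that both $\GC$-adjacency and $\Gamma$-adjacency produce pairs of vertices within $G$-distance $2$. Indeed, $u \adj{\GC} u'$ means $u$ and $u'$ lie on a common basic $4$-cycle, so $d_G(u,u') \le 2$; and $u \adj{\Gamma} u'$ means $u' = u + f_i + f_j$ for some $i,j$, so again $d_G(u,u') \le 2$.

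Given arbitrary $v, w \in E^t$, since $E$ is a $\GC$-connected component of $E_1(\Gamma)$ by hypothesis, I would fix a $\GC$-path $v = u_0, u_1, \ldots, u_k = w$ with each $u_i \in E$. For each index $i$, the second property of Proposition~\ref{dom_set_prop} supplies some $u_i' \in E^t$ with either $u_i' = u_i$ (when $u_i \in E^t$, and in particular for $i \in \{0,k\}$) or $u_i' \adj{\Gamma} u_i$. In either case $d_G(u_i', u_i) \le 2$. By the triangle inequality,
\begin{equation*}
d_G(u_i', u_{i+1}') \;\le\; d_G(u_i', u_i) + d_G(u_i, u_{i+1}) + d_G(u_{i+1}, u_{i+1}') \;\le\; 2 + 2 + 2 \;=\; 6,
\end{equation*}
so consecutive $u_i'$ are either equal or adjacent in $G^{\otimes 6}$.

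Collapsing consecutive repetitions in the sequence $u_0', u_1', \ldots, u_k'$ then yields a path in $G^{\otimes 6}$ from $v$ to $w$ all of whose vertices lie in $E^t$. Since $v, w \in E^t$ were arbitrary, $E^t$ is connected in $G^{\otimes 6}$, as required. There is no substantive obstacle here; the lemma is essentially a bookkeeping step assembling two elementary distance bounds with the triangle inequality, and its role is to let the subsequent spanning-tree counting argument (via Proposition~\ref{span_tree_prop} with $r=6$) be applied to $E^t$.
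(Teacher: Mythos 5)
Your proof is correct and is essentially the paper's own argument: take a $\GC$-path in $E$ between two points of $E^t$, replace each intermediate vertex by a $\Gamma$-dominating vertex in $E^t$ (which is at $G$-distance at most $2$), and use the triangle inequality to bound consecutive gaps by $2+2+2=6$.
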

\begin{proof}
Fix $v_s,v_t\in E^{\operatorname{t}}$ and let $v_s=v_1,v_2,\ldots, v_m=v_t$ be a $\GC$-path between them of vertices of $E$. By Proposition~\ref{dom_set_prop}, we may find for each $2\le i\le m-1$, a vertex $v_i'\in E^{\operatorname{t}}$ such that $d_G(v_i',v_i)\le 2$. We also take $v_1'=v_s$ and $v_m'=v_t$. It follows that for each $1\le i\le m-1$
\begin{equation*}
d_G(v_i', v_{i+1}')\le d_G(v_i', v_i) + d_G(v_i, v_{i+1}) + d_G(v_{i+1},v_{i+1}')\le 6.
\end{equation*}
Hence $v_s=v_1',v_2',\ldots,v_m'=v_t$ is a $G^{\otimes 6}$-walk, proving the lemma.
\end{proof}
We continue by defining for each $v\in E$, a vector $N_\Gamma(v)\in\{0,1\}^{\Delta(G)}$ by $N_\Gamma(v)_i=1_{(v+f_i\in\comp(\Gamma,x))}$ and letting $N_\Gamma(E^{\operatorname{t}}):=(N_\Gamma(v))_{v\in E^{\operatorname{t}}}$. We have
\begin{lemma}\label{Et_and_Nv_suffice}
The set $E^{\operatorname{t}}$ and the vector $N_\Gamma(E^{\operatorname{t}})$ uniquely determine $E$ among all $\GC$-connected components of $E_1(\Gamma)$ for all $\Gamma\in\OMCut(x,B)$.
\end{lemma}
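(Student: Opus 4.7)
The plan is to write down, purely in terms of the data $(E^t, N_\Gamma(E^t))$, an explicit reconstruction formula for $E$ and to check via the defining properties of $E^t$ and of $\Gamma$-adjacency that this formula returns $E$ exactly. The key observation is that for each $v\in E_1(\Gamma)$ we have $v\in\comp(\Gamma,x)$, so $v+f_i\in\comp(\Gamma,x)$ if and only if $\{v,v+f_i\}\notin\Gamma$; thus the bit $N_\Gamma(v)_i$ records precisely whether the edge $\{v,v+f_i\}$ belongs to $\Gamma$. Consequently, the entire set of edges of $\Gamma$ incident to $v$ is determined by $N_\Gamma(v)$, and since the definition of $\Gamma$-adjacency $v\adj{\Gamma}v'=v+f_i+f_j$ (with $i\neq j$, $f_i\neq -f_j$, $\{v,v+f_i\}\in\Gamma$, $\{v,v+f_j\}\notin\Gamma$) involves only such edges, the $\Gamma$-neighborhood of $v$ is read off directly from $N_\Gamma(v)$.

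With this in hand I would define
\[
E' := E^t\cup\bigl\{v+f_i+f_j : v\in E^t,\ i\neq j,\ f_i\neq -f_j,\ N_\Gamma(v)_i=0,\ N_\Gamma(v)_j=1\bigr\},
\]
a set manifestly computable from $(E^t,N_\Gamma(E^t))$ alone, and argue $E=E'$ by double inclusion. For $E\subseteq E'$: by the dominating property in Proposition~\ref{dom_set_prop}, every $w\in E$ either lies in $E^t$ or admits some $v\in E^t$ with $v\adj{\Gamma}w$, which by the observation above places $w$ in $E'$. For $E'\subseteq E$: any $w\in E'\setminus E^t$ is, by construction, a $\Gamma$-neighbor of some $v\in E^t\subseteq E\subseteq E_1(\Gamma)$; the remark following the definition of $\Gamma$-adjacency gives $v,w\in E_i(\Gamma,v)\subseteq E_1(\Gamma)$, and since $\Gamma$-adjacency implies $\GC$-adjacency the vertex $w$ lies in the same $\GC$-component of $E_1(\Gamma)$ as $v$, namely $E$. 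Thus $E'=E$, and in particular the pair $(E^t,N_\Gamma(E^t))$ determines $E$ independently of which $\Gamma\in\OMCut(x,B)$ it arose from.

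There is no substantive obstacle here: the lemma is essentially a bookkeeping check. The only point that requires care is to verify that $\Gamma$-adjacency is indeed a \emph{local} notion (depending solely on the edges of $\Gamma$ at $v$), so that $N_\Gamma(v)$ really does suffice to list the $\Gamma$-neighbors of $v$ without any further knowledge of $\Gamma$; once this is noted, the inclusion $E\subseteq E'$ uses the dominating property of $E^t$ and $E'\subseteq E$ uses that $\Gamma$-neighbors of points of $E_1(\Gamma)$ are again in $E_1(\Gamma)$ (and in the same $\GC$-component), as recorded in the paragraph introducing $\adj{\Gamma}$.
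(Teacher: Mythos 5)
Your proof is correct and follows essentially the same route as the paper's: use Proposition~\ref{dom_set_prop}(2) to see that $E$ is $E^t$ together with the $\Gamma$-neighborhood of $E^t$, and then observe that $N_\Gamma(v)$ encodes which edges at $v$ belong to $\Gamma$ (since $v\in\comp(\Gamma,x)$ and, by minimality, $\Gamma$ is exactly the set of edges between $\comp(\Gamma,x)$ and its complement), so the $\Gamma$-out-neighbors of each $v\in E^t$ are computable from $N_\Gamma(v)$ alone. Your explicit reconstruction set $E'$ and the two inclusions you check are exactly the content of the paper's (terser) proof.
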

We emphasize that what we mean in the lemma is that if we are not given $\Gamma$ or $E$, but instead are only given $E^{\operatorname{t}}$ and $N_\Gamma(E^{\operatorname{t}})$ corresponding to some $\Gamma$ and $E$ ($E^{\operatorname{t}}\subseteq E$ is as in Proposition~\ref{dom_set_prop}), then we may reconstruct $E$. In other words, there is a function satisfying that for every $\Gamma\in\OMCut(x,B)$, $E$ a $\GC$-connected component of $E_1(\Gamma)$ and $E^{\operatorname{t}}\subseteq E$ a subset as in Proposition~\ref{dom_set_prop}, the function takes $E^{\operatorname{t}}$ and $N_\Gamma(E^{\operatorname{t}})$ and returns $E$.
\begin{proof}
By property (2) of Proposition~\ref{dom_set_prop} and since $E$ is $\GC$-connected, $E$ equals the set of $v\in V[G]$ satisfying that either $v\in E^{\operatorname{t}}$ or there exists a $v'\in E^{\operatorname{t}}$ such that $v\adj{\Gamma} v'$ (such $v$ are necessarily in $E$ as noted before Proposition~\ref{Gamma_1_degree_prop}). It remains to note that given $v'\in E^{\operatorname{t}}$, we can identify which $v$ satisfy $v\adj{\Gamma} v'$ using only $N_\Gamma(v')$. Indeed, these are exactly those $v$ such that for some $i\neq j$, $f_i\neq-f_j$, we have $v=v'+f_i+f_j$, $v'+f_i\in\comp(\Gamma,x)$ and $v'+f_j\notin\comp(\Gamma,x)$.
\end{proof}
We are finally ready for
\begin{proof}[Proof of Proposition~\ref{count_components_prop}]
We first consider the case $R=0$. By Lemma~\ref{component_triviality_lemma}, the only $\Gamma\in\OMCut(x,B)$ having a $\GC$-connected component $E$ with $R_\Gamma(E)=0$ is the trivial $\Gamma$ having $E_1(\Gamma)=\{x\}$. Hence the proposition is straightforward in this case. For the rest of the proof we assume that $R>0$ in which case we may also assume that
\begin{equation}\label{R_M_ineq}
R\ge M\ge d-1,
\end{equation}
since Lemma~\ref{component_triviality_lemma} shows this is necessary for there to exist $\Gamma\in\OMCut(x,B)$ with $\GC$-connected components $E$ having $|E|=M$ and $R_\Gamma(E)=R$.

Let $\A'$ denote the set of all $(\Gamma,E)$ where $\Gamma\in\OMCut(x,B)$ and $E$ is a $\GC$-connected component of $E_1(\Gamma)$ associated with $b$ and satisfying $E\neq\{x\}$, $|E|=M$ and $R_\Gamma(E)=R$. Let $\A:=\{E\ |\ \exists \Gamma\text{ s.t. }(\Gamma,E)\in\A'\}$. Our goal is to upper bound $|\A|$. For $(\Gamma,E)\in\A'$, we define
\begin{equation*}
 S(\Gamma,E) = (E^{\operatorname{t}},N_\Gamma(E^{\operatorname{t}}))
\end{equation*}
where $E^{\operatorname{t}}\subseteq E$ is as in Proposition~\ref{dom_set_prop} (taking $d$ sufficiently large). By Lemma~\ref{Et_and_Nv_suffice}, $E$ is uniquely determined by $E^{\operatorname{t}}$ and $N_\Gamma(E^{\operatorname{t}})$ and hence
\begin{equation}\label{A_bound_by_A'}
 |\A|\le |S(\A')|.
\end{equation}
We estimate $|S(\A')|$ by showing how to describe succinctly a $(E^{\operatorname{t}}, N)\in S(\A')$. Fix some $(\Gamma,E)\in\A'$ such that $S(\Gamma,E)=(E^{\operatorname{t}},N)$. We describe $E^{\operatorname{t}}$ by prescribing a point $v\in E$, the size $|E^{\operatorname{t}}|$ and the location of the vertices of $E^{\operatorname{t}}$, given $v$ and $|E^{\operatorname{t}}|$. To estimate the number of possibilities for such a description, let $A=A(x,b,M)$ be the set from Proposition~\ref{point_in_cutset_comp_prop}. By that proposition and the fact that $E$ is associated with $b$ we have $|A|\le 40Mn_d^{\lfloor\frac{M}{\alpha}\rfloor}$ and $E\cap A\neq \emptyset$. Hence $v\in E$ may be prescribed as one of $40Mn_d^{\lfloor\frac{M}{\alpha}\rfloor}$ possibilities. We continue by noting that $|E^{\operatorname{t}}|\le |E|=M$, hence the size $|E^{\operatorname{t}}|$ may be prescribed as one of $M$ possibilities. Lastly, note that $E^{\operatorname{t}}\cup\{v\}$ is connected in $G^{\otimes 6}$ by Lemma~\ref{Et_connected_graph_lem} and Proposition~\ref{dom_set_prop}. Thus, Lemma~\ref{span_tree_prop} implies that given $v\in E$ and $|E^{\operatorname{t}}|$, the number of possibilities for $E^{\operatorname{t}}$ is at most $(2d)^{14|E^{\operatorname{t}}|}$ which by Proposition~\ref{dom_set_prop} is at most $(2d)^{\frac{140\log d}{d}M}$. Summing up, the number of possibilities for $E^{\operatorname{t}}$ is at most
\begin{equation}\label{Et_possibilities}
 40M^2n_d^{\lfloor\frac{M}{\alpha}\rfloor}(2d)^{\frac{140\log d}{d}M}.
\end{equation}
We continue by describing $N$. To do so, we prescribe $(P_\Gamma(v))_{v\in E^{\operatorname{t}}}$ and then $N=N_\Gamma(E^{\operatorname{t}})$ given $(P_\Gamma(v))_{v\in E^{\operatorname{t}}}$. The number of possibilities for $(P_\Gamma(v))_{v\in E^{\operatorname{t}}}$ is at most $(2d)^{|E^{\operatorname{t}}|}$ which by Proposition~\ref{dom_set_prop} is at most $(2d)^{\frac{10\log d}{d}M}$. For each $v\in E^{\operatorname{t}}$, given $P_\Gamma(v)$ we may describe $N_\Gamma(v)$ using at most $\binom{\Delta(G)}{P_\Gamma(v)}\le (2d)^{R_\Gamma(v)}$ possibilities. Hence given $(P_\Gamma(v))_{v\in E^{\operatorname{t}}}$, the number of possibilities for $N_\Gamma(E^{\operatorname{t}})$ is at most $(2d)^{R_\Gamma(E^{\operatorname{t}})}$ which by Proposition~\ref{dom_set_prop} is at most $(2d)^{\frac{10\log d}{d}R}$. In conclusion, the number of possibilities for $N$ given $E^{\operatorname{t}}$ is at most
\begin{equation}\label{N_possibilities}
 (2d)^{\frac{10\log d}{d}(M+R)}.
\end{equation}
Putting together \eqref{A_bound_by_A'}, \eqref{Et_possibilities} and \eqref{N_possibilities} we obtain
\begin{equation*}
|\A|\le |S(\A')|\le 40M^2n_d^{\lfloor\frac{M}{\alpha}\rfloor}(2d)^{\frac{150\log d}{d}(M+R)}\le CM^2n_d^{\left\lfloor\frac{M}{\alpha}\right\rfloor} e^{\frac{C\log^2 d}{d}(M+R)}
\end{equation*}
for some $C>0$. Using that $R\ge M\ge d-1$ by \eqref{R_M_ineq}, this implies
\begin{equation*}
 |\A|\le n_d^{\left\lfloor\frac{M}{\alpha}\right\rfloor} e^{\frac{C'\log^2 d}{d}R}
\end{equation*}
for some $C'>0$, as required.\qedhere
\end{proof}

\paragraph{Proof of Theorem~\ref{count_cutsets_thm}}\label{count_cutsets_thm_proof}
In the proof, we shall always assume that $d\ge d_0$ for some large
constant $d_0$. Fix a total order $\prec$ on $V[G]$. For
$\Gamma\in\OMCut(x,B)$, we say that a $\GC$-connected component $E$
of $E_1(\Gamma)$ is \emph{min-associated} to $b\in B$ if $E$ is
associated to $b$ and $E$ is not associated to any $b'\in B$ with
$b'\prec b$. Let $c(\Gamma)$ be the number of $\GC$-connected
components of $E_1(\Gamma)$ and $(E^i)_{i=1}^{c(\Gamma)}$ be these
components. We order the $(E^i)$ in such a way that if $i<j$, $E^i$
is min-associated to $b$ and $E^j$ is min-associated to $b'$ then
either $b\prec b'$, or both $b=b'$ and the $\prec$-least element of $E^i$
is smaller than the $\prec$-least element of $E^j$.
We now inductively define $m(\Gamma)$ and a vector
$(b^j)_{j=1}^{m(\Gamma)}\subseteq B$ as follows: $b^1$ is the
$\prec$-smallest element of $B$. Assuming that $(b^1,\ldots,
b^\ell)$ have already been defined, we set $m(\Gamma):=\ell$ if
$B\subseteq\cup_{j=1}^\ell \comp(\Gamma,b^j)$ or otherwise set
$b^{\ell+1}$ to be the $\prec$-smallest element of $B\setminus
\cup_{j=1}^\ell\comp(\Gamma,b^j)$. We finally let $c^j$ for $1\le
j\le m(\Gamma)$ be the number of $E^i$ which are min-associated to
$b^j$ (note that $c^j$ may be 0). We will write $E^i(\Gamma)$,
$b^j(\Gamma)$ and $c^j(\Gamma)$ for $E^i$, $b^j$ and $c^j$ when we
want to emphasize their dependence on $\Gamma$. Note that our
definitions imply that each $E^i$ is min-associated to one of the
$b^j$ and hence
\begin{equation}\label{m_c_relation}
\sum_{j=1}^{m(\Gamma)} c^j(\Gamma) = c(\Gamma).
\end{equation}

In this section, we say that the \emph{type} of $\Gamma$ is the vector
\begin{equation*}
\left(c(\Gamma),(|E^i|,R_\Gamma(E^i))_{i=1}^{c(\Gamma)}, m(\Gamma), (c^j(\Gamma))_{j=1}^{m(\Gamma)}\right).
\end{equation*}
Recalling the definition of $\OMCut(x,B,M,R)$ from the beginning of Section~\ref{rough_bdry_counting_sec}, we define $\TT(M,R)$, for $M,R\ge 0$, to be the set of all types of $\Gamma\in\OMCut(x,B,M,R)$. We shall need the following
\begin{lemma}\label{type_estimate_lemma}
There exists $C>0$ such that for all $M,R\ge 0$ we have
\begin{equation*}
|\TT(M,R)|\le \exp\left(\frac{C\log d}{d}R\right).
\end{equation*}
\end{lemma}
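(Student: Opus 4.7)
The plan is to bound separately the number of possibilities for each of the four \emph{coordinates} of the type vector, $c(\Gamma)$, $(|E^i|,R_\Gamma(E^i))_{i=1}^{c(\Gamma)}$, $m(\Gamma)$, and $(c^j(\Gamma))_{j=1}^{m(\Gamma)}$, and then multiply. The trivial case $R=0$ is handled at once by Lemma~\ref{component_triviality_lemma}, which forces $E_1(\Gamma)=\{x\}$ and so $|\TT(M,0)|\le 1$. Assuming from now on that $R\ge 1$, the same lemma gives $R_\Gamma(E^i)\ge|E^i|\ge d-1$ for every $\GC$-component, so $c\le R/(d-1)$ and $M\le R$.

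Given $c$, the number of compositions $(R_i)_{i=1}^c$ of $R$ into parts of size $\ge d-1$ is $\binom{R-c(d-2)-1}{c-1}$, and standard binomial bounds yield a factor of the form $\exp(O(R\log d/d))$, using $R/c\le d-1$. For each such choice, the number of tuples $(|E^i|)_{i=1}^c$ with $d-1\le|E^i|\le R_i$ is at most $\prod_i R_i\le (R/c)^c$ by AM--GM, again $\exp(O(R\log d/d))$.

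I expect the bound on $m(\Gamma)$ to be the main obstacle, since the naive guess $m\le c$ is false: a single $\GC$-component of $E_1(\Gamma)$ may border several distinct $B$-components, so some $c^j$ may vanish and $m$ can genuinely exceed $c$. The plan is to get a structural bound instead. By minimality of $\Gamma\in\MCut(x,B)$, every edge of $\Gamma$ has one endpoint in $\comp(\Gamma,x)$ and the other in some $\comp(\Gamma,b^j)$, so the sets $\{\subcut(\Gamma,b^j)\}_{j=1}^m$ form a partition of $\Gamma$ into pairwise disjoint pieces. Each piece $\subcut(\Gamma,b^j)\in\OMCut(x,b^j)$ has size at least $\Delta(G)$ by Proposition~\ref{trivial_gamma_prop} (whether it is trivial or not). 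Hence $m\,\Delta(G)\le|\Gamma|=\sum_{v\in E_1(\Gamma)}P_\Gamma(v)\le\Delta(G)\,M$, which gives $m\le M\le R$.

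Finally, the number of non-negative compositions $(c^j)_{j=1}^m$ of $c$ into $m$ parts is $\binom{c+m-1}{m-1}$. When $m\ge c$ I would apply $\binom{c+m-1}{c}\le(2em/c)^c\le(2eR/c)^c$, which over $c\le R/(d-1)$ is $\exp(O(R\log d/d))$; when $m<c$ the crude bound $\binom{c+m-1}{m-1}\le 2^{c+m-1}\le 2^{2R/(d-1)}$ is smaller still. Multiplying the four estimates and summing over $c\in\{1,\ldots,\lfloor R/(d-1)\rfloor\}$ and $m\in\{1,\ldots,R\}$ contributes only polynomial-in-$R$ factors, which can be absorbed into the exponential since $R\ge d-1$ in the non-trivial regime. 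This yields $|\TT(M,R)|\le\exp(CR\log d/d)$ for a suitable $C>0$.
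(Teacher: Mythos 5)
Your proposal is correct and follows essentially the same route as the paper: both use Lemma~\ref{component_triviality_lemma} to dispose of $R=0$ and to obtain $R_\Gamma(E^i)\ge|E^i|\ge d-1$, both derive $c\le R/(d-1)$ and $m\le M\le R$ via disjointness of the subcuts and $|\subcut(\Gamma,b^j)|\ge\Delta(G)$, and both conclude by elementary composition counts (the paper packages these via Proposition~\ref{L_decomp_prop}, you use direct binomial estimates) followed by absorbing polynomial-in-$R$ factors using $R\ge d-1$. One small slip: where you write ``using $R/c\le d-1$'' the inequality runs the other way ($c\le R/(d-1)$ gives $R/c\ge d-1$), but the bound $\exp(O(R\log d/d))$ you derive from it is the correct one.
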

\begin{proof}
By Lemma~\ref{component_triviality_lemma}, we have that for $R=0$ the set $\OMCut(x,B,M,R)$ contains at most one $\Gamma$, the one with $E_1(\Gamma)=\{x\}$. Hence the current lemma follows trivially in this case. For the rest of the proof we assume that $R>0$ in which case Lemma~\ref{component_triviality_lemma} implies that every $\GC$-connected component $E$ of $E_1(\Gamma)$ for every $\Gamma\in\OMCut(x,B,M,R)$ satisfies
\begin{equation}\label{large_comp_ineq}
 R_\Gamma(E)\ge |E|\ge d-1.
\end{equation}

We continue by noting that if $\Gamma\in\OMCut(x,B,M,R)$ then
\begin{equation}\label{c_E_R_relation}
 \sum_{i=1}^{c(\Gamma)} |E^i| = M\qquad\text{and}\qquad\sum_{i=1}^{c(\Gamma)} R_\Gamma(E^i)=R.
\end{equation}
By Proposition~\ref{L_decomp_prop}, given an integer $L>0$, the number of solutions in integers $k$ and $(x_m)_{m=1}^k$ to
\begin{equation*}
\sum_{m=1}^k x_m=L
\end{equation*}
with each $x_m\ge d-1$ is at most $\exp\left(\frac{6\log d}{d}L\right)$. Hence by \eqref{large_comp_ineq} and \eqref{c_E_R_relation}, the number of possibilities for $\left(c(\Gamma),(|E^i|,R_\Gamma(E^i))_{i=1}^{c(\Gamma)}\right)$ over all $\Gamma\in\OMCut(x,B,M,R)$ is at most $\exp\left(\frac{6\log d}{d}(M+R)\right)$ which by \eqref{large_comp_ineq} is at most $\exp\left(\frac{12\log d}{d}R\right)$.

Next, we note that for $\Gamma\in\OMCut(x,B,M,R)$ we have
\begin{equation}\label{c_and_m_ineq}
c(\Gamma)\le \frac{R}{d-1}\qquad\text{and}\qquad m(\Gamma)\le R.
\end{equation}
The first assertion follows simply from $\eqref{large_comp_ineq}$ and $\eqref{c_E_R_relation}$. To see the second assertion, first note that for any $1\le j\le m(\Gamma)$, $\subcut(\Gamma,b^j)\in\OMCut(x,b^j)$ by the remark after the definition of $\OMCut$. Then, by Proposition~\ref{equal_edgenum_in_all_dir} we have that $|\subcut(\Gamma,b^j)|\ge\Delta(G)$ for all $j$. Since by Proposition~\ref{subcut_identical_prop} (and since identicality of the subcuts occurs only when their interior components are also equal, e.g., since the cutsets are odd), $\subcut(\Gamma,b^{j_1})\cap\subcut(\Gamma,b^{j_2})=\emptyset$ for distinct $1\le j_1,j_2\le m(\Gamma)$, it follows that $m(\Gamma)\le \frac{|\Gamma|}{\Delta(G)}$. The second assertion now follows by noting that $|\Gamma|\le \Delta(G) M\le \Delta(G) R$ by \eqref{large_comp_ineq}.

Using the relations \eqref{m_c_relation} and \eqref{c_and_m_ineq}, it follows that the number of possibilities for $\left(m(\Gamma), (c^j(\Gamma))_{j=1}^{m(\Gamma)}\right)$ over all $\Gamma\in\OMCut(x,B,M,R)$ is bounded by the number of solutions in integers $c, m$ and $(c^j)_{j=1}^m$ to
\begin{equation}\label{m_c_eq}
\sum_{j=1}^m c^j = c
\end{equation}
for $1\le m\le R$, $c^j\ge 0$ for all $j$ and $c\le \frac{R}{d-1}$. By standard combinatorial enumeration, the number of solutions to \eqref{m_c_eq} for fixed $m$ and $c$ is $\binom{m+c-1}{c}$. Thus standard estimates show that \eqref{m_c_eq} has at most $R^2\exp\left(\frac{C\log d}{d}R\right)$ solutions, for some $C>0$. Combining this estimate with the estimate for the number of possibilities for $\left(c(\Gamma),(|E^i|,R_\Gamma(E^i))_{i=1}^{c(\Gamma)}\right)$ obtained previously, we see that
\begin{equation*}
|\TT(M,R)|\le R^2\exp\left(\frac{C'\log d}{d} R\right)
\end{equation*}
for some $C'>0$. Since $R\ge d-1$ by $\eqref{large_comp_ineq}$, the lemma follows.\qedhere
\end{proof}

For $M,R\ge 0$ and $\gamma\in\TT(M,R)$, let
\begin{equation*}
\OMCut(x,B,M,R,\gamma) := \{\Gamma\in\OMCut(x,B,M,R)\ |\ \Gamma\text{ has type }\gamma\}.
\end{equation*}
Fix $\gamma\in\TT(M,R)$ and $1\le k<m(\Gamma)$ where here, $m(\Gamma)$ is the third element of $\gamma$. By our definitions, $b^{k+1}(\Gamma)$ is well defined for $\Gamma\in\OMCut(x,B,M,R,\gamma)$. The next lemma notes that $b^{k+1}(\Gamma)$ is determined also from partial information about $\Gamma$.
\begin{lemma}\label{b_determined_lem}
The point $b^{k+1}(\Gamma)$ is determined as a function only of $(b^1(\Gamma),\ldots, b^k(\Gamma))$ and the set of all $E^i(\Gamma)$ which are associated to some $b^j$ for $j\le k$.
\end{lemma}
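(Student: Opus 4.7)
The plan is to reconstruct $\bigcup_{j=1}^k \comp(\Gamma, b^j)$ using only the data allowed, since then $b^{k+1}(\Gamma)$ is, by definition, the $\prec$-smallest element of $B \setminus \bigcup_{j=1}^k \comp(\Gamma, b^j)$ (which is determined by $B$, the given list $(b^1,\ldots,b^k)$, and that union). Let $W_k := \bigcup\{E^i(\Gamma) \mid E^i(\Gamma) \text{ associated to } b^j \text{ for some } j \le k\}$; this set is explicitly given to us. I will show that for every $j \le k$, the connected component of $b^j$ in the induced subgraph $G[V[G]\setminus W_k]$ is exactly $\comp(\Gamma,b^j)$.

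The first (easy) direction is $\comp(\Gamma,b^j) \subseteq $ this component: since $\Gamma \in \OMCut(x,B)$, the interior boundary $E_1(\Gamma)$ lies in $V^\odd$, so $W_k \subseteq V^\odd \cap \comp(\Gamma,x)$ is disjoint from $\comp(\Gamma,b^j)$; and $\comp(\Gamma,b^j)$ is connected in $G\setminus\Gamma$, hence in $G[V[G]\setminus W_k]$. The key (reverse) direction uses the odd structure: I claim every edge of $\subcut(\Gamma,b^j)$ is incident to $W_k$. Indeed, by Proposition~\ref{subcut_identical_prop}, $\subcut(\Gamma,b^j) \subseteq \Gamma$; its $V^\odd$-endpoints therefore lie in $E_1(\Gamma)$ and, being incident to $\subcut(\Gamma,b^j)$ from inside $\comp(\subcut(\Gamma,b^j),x) \supseteq \comp(\Gamma,x)$, they lie in $E_1(\subcut(\Gamma,b^j))$. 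Each such endpoint is thus in some $\GC$-component $E^i$ of $E_1(\Gamma)$ that meets $E_1(\subcut(\Gamma,b^j))$, i.e.\ is associated with $b^j$; so it lies in $W_k$.

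Now any path in $G[V[G]\setminus W_k]$ starting at $b^j$ cannot leave $\comp(\Gamma,b^j)$: leaving requires crossing an edge of $\Gamma$ whose other end is in $\comp(\Gamma,x)$, and in particular an edge of $\subcut(\Gamma,b^j)$, but all such edges are incident to $W_k$ and hence absent from $G[V[G]\setminus W_k]$. This gives the reverse containment, so the component of $b^j$ in $G[V[G]\setminus W_k]$ equals $\comp(\Gamma,b^j)$. Taking the union over $j \le k$ yields $\bigcup_{j=1}^k \comp(\Gamma,b^j)$ as a function of $(b^1,\ldots,b^k)$ and $W_k$ alone, and the lemma follows.

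The only subtlety worth verifying carefully is that the $V^\odd$-endpoint of an edge in $\subcut(\Gamma,b^j)$ really lies in $E_1(\subcut(\Gamma,b^j))$ (and not merely in $E_1(\Gamma)$); this is exactly where the oddness of $\Gamma$, combined with $\subcut(\Gamma,b^j) \in \OMCut(x,b^j)$ from Proposition~\ref{subcut_identical_prop} plus the remark after the definition of $\OMCut$, does the work.
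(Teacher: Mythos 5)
Your proof is correct and takes essentially the same approach as the paper, which merely asserts (without argument) that knowing $(b^j)_{j\le k}$ and the associated components determines $\comp(\Gamma,b^j)$ for all $j\le k$; you have supplied the missing verification. The key mechanism you identify --- that the $V^\odd$-endpoint of any edge of $\subcut(\Gamma,b^j)$ must lie in $\comp(\Gamma,x)\cap E_1(\subcut(\Gamma,b^j))$ and hence in one of the given associated $\GC$-components --- is exactly what makes the reconstruction of $\comp(\Gamma,b^j)$ as the component of $b^j$ in $G[V[G]\setminus W_k]$ work, and it is here that the oddness of $\Gamma$ is genuinely used.
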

\begin{proof}
Knowing $(b^j(\Gamma))_{j=1}^k$ and the given $E^i(\Gamma)$ determines $\comp(\Gamma,b^j(\Gamma))$ for all $j\le k$. By our definitions, $b^{k+1}(\Gamma)$ is the $\prec$-smallest point of $B$ which is not in $\cup_{j=1}^k\comp(\Gamma,b^j)$.
\end{proof}

We finally reach
\begin{proof}[Proof of Theorem~\ref{count_cutsets_thm}] As in the proof of Proposition~\ref{count_components_prop}, we count by showing that a $\Gamma\in\OMCut(x,B,M,R)$ may be described succinctly. We describe a $\Gamma\in\OMCut(x,B,M,R)$ by
\begin{enumerate}
 \item The type $\left(c(\Gamma),(|E^i|,R_\Gamma(E^i))_{i=1}^{c(\Gamma)}, m(\Gamma), (c^j(\Gamma))_{j=1}^{m(\Gamma)}\right)$ of $\Gamma$.
 \item For each $1\le j\le m(\Gamma)$, in this order:
\begin{enumerate}
\item For each of the $c^j(\Gamma)$ of the $E^i$ which are min-associated to $b^j$, in the order they appear in $(E^i)_{i=1}^{c(\Gamma)}$:
\begin{enumerate}
\item A description of $E^i$.
\end{enumerate}
\end{enumerate}
\end{enumerate}
We emphasize that in step 2(a) above, if $c^j(\Gamma)=0$, we do not describe anything and go on to the next $j$.

We first need to show that $\Gamma$ can indeed be recovered from the above description. Then we will estimate the number of possibilities for this description in order to obtain a bound for $|\OMCut(x,B,M,R)|$. To see that $\Gamma$ can be recovered, note that the above description gives all the $\GC$-connected components $(E^i)_{i=1}^{c(\Gamma)}$ of $E_1(\Gamma)$ (since each component is min-associated to some $b\in B$). These, in turn, suffice to recover $\comp(\Gamma,b)$ for all $b\in B$ from which we get that $\Gamma$ is all edges between $\cup_{b\in B}\comp(\Gamma,b)$ and its complement.

We next estimate the number of possibilities for the above description. We start with a definition. For $b\in B$ and $M',R'\ge 0$, define $\A(b,M',R')$ to be the set of all $\GC$-connected components $E$, associated to $b$ and having $|E|=M'$ and $R_{\Gamma'}(E)=R'$, of $E_1(\Gamma')$ for some $\Gamma'\in\OMCut(x,B)$ (which is not fixed in advance). In Proposition~\ref{count_components_prop} we showed that
\begin{equation*}
|\A(b,M',R')|\le n_d^{\left\lfloor\frac{M'}{\alpha}\right\rfloor} \exp\left(\frac{C\log^2 d}{d}R'\right)
\end{equation*}
for some $C>0$.

Fix $\gamma\in\TT(M,R)$ and let us estimate the number of possibilities for the above description for $\Gamma\in\OMCut(x,B,M,R,\gamma)$. Part 1 has just one option since the type of $\Gamma$ is fixed. Hence we need only estimate how many possibilities there are for $E^i$ each time we reach part 2(a)i above, given the partial information about $\Gamma$ described up to that point.

We claim that whenever we reach part 2(a)i above for a particular $b^j$ and $E^i$, we have already described the point $b^j$ itself, $|E^i|$ and $R_\Gamma(E^i)$. To see this, note that by our definitions, the $E^i$ which are min-associated to $b^j$ are exactly those for which $i\in\{i_0+1,i_0+2,\ldots, i_0+c^j\}$ where $i_0=\sum_{k=1}^{j-1} c^k$ and hence $|E^i|$ and $R_\Gamma(E^i)$ are known from $\gamma$. We use induction to show that $b^j$ has also been described. For $j=1$ this follows since $b^1$ is the $\prec$-smallest point in $B$. Assuming the claim is true for all $1\le k<j$, the claim for $j$ follows from Lemma~\ref{b_determined_lem} since when we reach part 2(a)i for that $j$, we have already described $(b^k)_{k=1}^{j-1}$ and all the $E^i$ which are associated to some $b^k$ for $k<j$. We see that we may describe $E^i$ as an element of $\A(b^j,|E^i|,R_\Gamma(E^i))$ and hence have at most
\begin{equation*}
n_d^{\bigl\lfloor\frac{|E^i|}{\alpha}\bigr\rfloor} \exp\left(\frac{C\log^2 d}{d}R_\Gamma(E^i)\right)
\end{equation*}
possibilities for its description. In conclusion, we see that the number of possibilities for the above description for $\Gamma\in\OMCut(x,B,M,R,\gamma)$ is at most
\begin{equation*}
\prod_{i=1}^{c(\Gamma)} n_d^{\bigl\lfloor\frac{|E^i|}{\alpha}\bigr\rfloor} \exp\left(\frac{C\log^2 d}{d}R_\Gamma(E^i)\right)\le n_d^{\left\lfloor\frac{M}{\alpha}\right\rfloor} \exp\left(\frac{C\log^2 d}{d}R\right)
\end{equation*}
which is independent of $\gamma$. Hence, the number of possibilities for the above description for $\Gamma\in\OMCut(x,B,M,R)$ is at most
\begin{equation*}
|\TT(M,R)|n_d^{\left\lfloor\frac{M}{\alpha}\right\rfloor} \exp\left(\frac{C\log^2 d}{d}R\right)
\end{equation*}
which by Lemma~\ref{type_estimate_lemma} is at most
\begin{equation*}
n_d^{\left\lfloor\frac{M}{\alpha}\right\rfloor} \exp\left(\frac{C'\log^2 d}{d}R\right)
\end{equation*}
for some $C'>0$. Since $\Gamma$ may be recovered from the above description, this is also a bound for $|\OMCut(x,B,M,R)|$, proving the theorem.\qedhere
\end{proof}

\subsubsection{Counting Interior Approximations To Cutsets}\label{partial_edge_cutsets_section}
We start with a definition. For $x, b\in V[G]$ and $\Gamma\in\OMCut(x,b)$, recalling the definition of $\Eonee(\Gamma)$ from Section~\ref{preliminaries_sec}, we say that $E\subseteq V[G]$ is an \emph{interior approximation} to $\Gamma$ if
\begin{equation*}
E_1(\Gamma)\setminus \Eonee(\Gamma) \subseteq E\subseteq \comp(\Gamma,x).
\end{equation*}
The following is the main theorem of this section (recall from \eqref{alpha_def} that $\alpha=\prod_{i=1}^{d-1} n_i$).
\begin{theorem}\label{interior_approximation_theorem}
There exist $d_0,C>0$ such that for all $d\ge d_0$, $L\in\N$ and $x,b\in V[G]$, there exists a family $\mathcal{E}$ of subsets of $V[G]$ satisfying
\begin{equation*}
|\mathcal{E}|\le 2n_d^{\left\lfloor\frac{L}{\alpha}\right\rfloor}\exp\left(\frac{C\log^2 d}{d^{3/2}}L\right)
\end{equation*}
and such that for every $\Gamma\in\OMCut(x,b)$ with $|\Gamma|=L$ there is an $E\in\mathcal{E}$ which is an interior approximation to $\Gamma$.
\end{theorem}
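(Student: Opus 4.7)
My plan is to adapt the skeleton / dominating-set counting used in the proof of Theorem~\ref{count_cutsets_thm}, exploiting the key new freedom that exposed vertices in $E_{1,e}(\Gamma)$ need not be recovered by an interior approximation. For each non-trivial $\Gamma \in \OMCut(x,b)$ with $|\Gamma|=L$, I will take the interior approximation to be $E(\Gamma) := E_1(\Gamma) \setminus E_{1,e}(\Gamma)$, which by definition lies in $\comp(\Gamma,x)$ and contains $E_1(\Gamma) \setminus E_{1,e}(\Gamma)$. The factor $2$ in the final bound will absorb the (at most two) trivial cutsets of $\OMCut(x,b)$ (cf.\ Proposition~\ref{trivial_gamma_prop}), for which the corresponding $E(\Gamma)$ is explicit (either $\emptyset$ or $S(b)$).

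I will classify $E(\Gamma)$ into \emph{medium} vertices $V_B := \{v \in E(\Gamma) : P_\Gamma(v) \ge \sqrt{d}\}$ and \emph{sparse} vertices $V_C := E(\Gamma) \setminus V_B$. Proposition~\ref{equal_edgenum_in_all_dir} gives $\sum_{v\in E_1(\Gamma)} P_\Gamma(v) = L$, whence $|V_B| \le L/\sqrt{d}$. Proposition~\ref{Gamma_1_degree_prop} then yields that every $v \in V_B$ has at least $\sqrt{d}(\Delta(G)-\sqrt{d}) - \sqrt{d} = \Omega(d^{3/2})$ $\Gamma$-neighbors, while each $v \in V_C$ has at least $\Omega(d)$. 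These bounds drive a random-selection argument analogous to Proposition~\ref{dom_set_prop}: include each $v \in E(\Gamma)$ in a candidate set $E^s$ independently with probability $p \asymp \log d / d^{3/2}$, adjoin any vertex of $E(\Gamma)$ not already dominated, and use Markov's inequality to produce a skeleton $E^t \subseteq E(\Gamma)$ with $|E^t| \le C(\log d / d^{3/2})\,L$ and $R_\Gamma(E^t) \le C(\log d/d^{3/2})\,R_\Gamma(E(\Gamma))$, such that every $v\in E(\Gamma)$ is either in $E^t$ or is $\Gamma$-adjacent to a member of $E^t$.

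The $\Omega(d^{3/2})$ degree bound makes the medium part immediate; the sparse part is where the principal analytical obstacle lies, since the naive $\Omega(d)$ $\Gamma$-degree of a sparse vertex is too weak to allow inclusion probability $p \asymp \log d / d^{3/2}$. To close this gap I plan to invoke Proposition~\ref{neighbor_plaquette_prop}: every $\Gamma$-edge incident to a sparse vertex of $V^{\odd}$ terminates at a ``co-exposed'' vertex of $V^{\even}$ with $P_\Gamma \ge \Delta(G) - \sqrt{d}$. This local rigidity supplies, after one additional step through such co-exposed vertices, an effective short-range neighborhood of each sparse vertex of size $\Omega(d^{3/2})$ inside $E(\Gamma)$; domination of $V_C$ at the desired rate then follows, at the cost of enlarging $G^{\otimes 6}$ in Lemma~\ref{Et_connected_graph_lem} to $G^{\otimes r}$ for a slightly larger constant $r$. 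This co-exposure step is what I expect to be the hard part of the proof.

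Given $E^t$, a succinct description of $\Gamma$ sufficient to recover $E(\Gamma)$ consists of: (i) an anchor $v_0 \in E_1(\Gamma)$ drawn from the set of Proposition~\ref{point_in_cutset_comp_prop}, contributing the factor $n_d^{\lfloor L/\alpha\rfloor}$; (ii) the skeleton $E^t$, encoded as a $G^{\otimes r}$-connected subset containing $v_0$ via Proposition~\ref{span_tree_prop}, contributing $(2d)^{O(|E^t|)}$ possibilities; and (iii) the pair $(P_\Gamma(v), N_\Gamma(v))_{v\in E^t}$ with $N_\Gamma(v)\in\{0,1\}^{\Delta(G)}$ indicating the edges of $\Gamma$ at $v$, contributing at most $(2d)^{O(R_\Gamma(E^t))}$ further possibilities. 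An adaptation of Lemma~\ref{Et_and_Nv_suffice} reconstructs $E(\Gamma)$ from this data by taking $E^t$ together with its short-range neighbors (via $\Gamma$ and the co-exposure enrichment) that lie in $E(\Gamma)$. Summing over the at most $L^2$ possible values of $(|E^t|, R_\Gamma(E^t))$, all factors combine to give $n_d^{\lfloor L/\alpha\rfloor}\exp(C\log^2 d \cdot L/d^{3/2})$, with the extra factor $2$ accounting for the trivial case, as required.
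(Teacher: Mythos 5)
Your overall architecture (build a small skeleton of the inner boundary, recover the approximation from the skeleton plus local $N_\Gamma$ data, enumerate skeletons via an anchor and $G^{\otimes r}$-connectivity) is the same as the paper's, and you correctly identify the sparse vertices $V_C=\{v\in E_1(\Gamma): P_\Gamma(v)\le\sqrt d,\ v\notin E_{1,e}\}$ as the analytic bottleneck. But the fix you propose for them does not work, and the paper goes a genuinely different route there.

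Concretely, your ``co-exposure enrichment'' cannot manufacture an $\Omega(d^{3/2})$ short-range neighborhood inside $E(\Gamma)$. For a sparse $v$ with, say, $P_\Gamma(v)=1$, there is a single co-exposed $w=v+f_1\in E_0$ with $P_\Gamma(w)\ge\Delta(G)-1$, and the $E_1$-vertices reachable from $v$ in two steps through $w$ are a subset of $S(w)$, hence at most $\Delta(G)=O(d)$ of them; moreover nothing prevents \emph{all} of these from being exposed (exactly what happens on flat face-like patches of an odd cutset), so the count of \emph{non-exposed} $E_1$-vertices found this way can be $O(1)$. More seriously, the real obstruction the paper isolates is the case $|A_2(v)|\ge\Delta(G)/2$ (most $\CC_1$-neighbors $u$ of $v$ have $|S(u)\cap E_1|<\sqrt d$, see the definitions before Proposition~\ref{dom_set_prop2} and case~2 of Claim~6 in the proof of Proposition~\ref{E_alg_prop}): for such $v$ the number of $E_1$-vertices at $G$-distance $\le 2$ can be as small as $O(d)$ altogether, and at odd distance one is back in $V^{\even}$, so no amount of path-chasing that is required to return to $E(\Gamma)\subseteq E_1$ produces $d^{3/2}$ candidates. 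A uniform-rate ($p\asymp\log d/d^{3/2}$) dominating set drawn from $E_1$ alone therefore misses such $v$ with probability $\Theta(1)$, not $d^{-3}$, and the proof of the analogue of Proposition~\ref{dom_set_prop} breaks.

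The paper's Proposition~\ref{dom_set_prop2} sidesteps this precisely by building a \emph{two-sided} skeleton $E_0^t\subseteq E_0$, $E_1^t\subseteq E_1$, with a \emph{$P_\Gamma$-dependent} inclusion probability $\frac{30\log d}{(\Delta(G)-P_\Gamma(v))\sqrt d}$. This rate is of order $\log d/d$ on (co-)exposed vertices (where $\Delta(G)-P_\Gamma(v)\le\sqrt d$), much larger than $\log d/d^{3/2}$, while the expected $R_\Gamma$-weight of the skeleton is still $O(\log d\,|\Gamma|/d^{3/2})$ because exposed vertices have small $R_\Gamma$-weight. With this rate, a sparse $v$ satisfying $|A_2(v)|\ge\Delta(G)/2$ is certified indirectly: there are $\Omega(d)$ co-exposed vertices $v''\in E_0$ at distance~3 from $v$ (one per $u\in A_2(v)$, constructed in the proof of part~(d) of Proposition~\ref{dom_set_prop2}), each caught by $E_0^t$ with probability $\gtrsim\log d/d$, so with overwhelming probability $A_3(v)\cap S(E_0^t)\neq\emptyset$. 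The reconstruction algorithm preceding Proposition~\ref{E_alg_prop} then uses the $E_0$-side information ($R_0^a,R_0^b,V_0$) in an essential way to recover these $v$ via the auxiliary set $U$. In short: you need $E_0$-skeleton data and $N_\Gamma$ at $E_0$-skeleton vertices, and a non-uniform selection rate; neither appears in your sketch, and without them the sparse-vertex domination fails.
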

Aiming towards an application of this theorem, we make the following definitions. For $x\in V[G]$, legal boundary conditions $(B,\mu)$ with non-positive $\mu$ and $f\in \Hom(G,B,\mu)$,
denoting $\Gamma:=\LS(f,x,B)$ and assuming $\Gamma\neq\emptyset$, we say that a function $g\in\Hom(G,B,\mu)$ is a \emph{$(x,B)$-interior modification} of $f$ if $f(v)=g(v)$ for all $v\notin\comp(\Gamma,x)$ and $g(v)=1$ for all $v\in \Eonee(\Gamma)$. Recalling the transformation $T_2$ of Section~\ref{shift_flip_transformation_sec}, we note that any $g\in T_2(f)$ is a $(x,B)$-interior modification of $f$.
In addition, for $x$ and $(B,\mu)$ as above, $L\in\N$ and $g\in\Hom(G,B,\mu)$, we define
\begin{equation*}
\begin{split}
\PLS(g,& x, B, L)=\\
&=\{\LS(f,x,B)\ |\ f\in \Omega_{x,L}\text{, $g$ is a $(x,B)$-interior modification of $f$}\},
\end{split}
\end{equation*}
the ``possible level sets for $f$ given $g$''. Note that any $\Gamma\in\PLS(g,x,B,L)$ satisfies $\Gamma\in\OMCut(x,B)$ and $|\Gamma|=L$. We will use Theorem~\ref{interior_approximation_theorem} to prove
\begin{theorem}\label{count_level_sets_thm}
There exist $d_0,C>0$ such that for all $d\ge d_0$, $L\in\N$, $x\in V[G]$, legal boundary conditions $(B,\mu)$ with non-positive $\mu$ and $g\in\Hom(G,B,\mu)$, we have
\begin{equation*}
|\PLS(g,x,B,L)|\le 2n_d^{\left\lfloor\frac{L}{\alpha}\right\rfloor}\exp\left(\frac{C\log^2 d}{d^{3/2}}L\right).
\end{equation*}
\end{theorem}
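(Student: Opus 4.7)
The plan is to reduce Theorem~\ref{count_level_sets_thm} to Theorem~\ref{interior_approximation_theorem} by a reconstruction argument: a cutset $\Gamma\in\PLS(g,x,B,L)$ is uniquely determined by the pair consisting of $g$ and any interior approximation $E$ to $\Gamma$. Combined with a bound on the number of distinct interior approximations across all $\Gamma\in\OMCut(x,B)$ with $|\Gamma|=L$, this gives the desired estimate on $|\PLS(g,x,B,L)|$.

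For the reconstruction, given $g$ and a set $E$ with $E_1(\Gamma)\setminus E_{1,e}(\Gamma)\subseteq E\subseteq C:=\comp(\Gamma,x)$, I would form $W:=E\cup\{v\in V[G]\ |\ g(v)\ge 1\}$, let $A$ be the union of connected components of $B$ in the subgraph of $G$ induced on $\{v\ |\ g(v)\le 0\}\setminus W$, and output $\Gamma$ as the set of edges between $A$ and the connected component of $x$ in $V[G]\setminus A$. Correctness rests on two observations. First, $W\supseteq E_1(\Gamma)$, since the non-exposed part of $E_1(\Gamma)$ lies in $E$ by hypothesis and the exposed part $E_{1,e}(\Gamma)$ satisfies $g\equiv 1$ by the definition of $(x,B)$-interior modification. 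Second, $W$ is disjoint from the $A$ arising in the definition of $\LS(f,x,B)$: the set $E\subseteq C$ cannot touch $A\subseteq V[G]\setminus C$, and on $V[G]\setminus C$ we have $g=f\le 0$ on $A$, so $\{g\ge 1\}\cap A=\emptyset$. The barrier $E_1(\Gamma)\subseteq W$ then prevents any $B$-connected path in $\{g\le 0\}\setminus W$ from entering $C$, so such paths remain in $V[G]\setminus C$ where $g=f$; the computed $A$ therefore matches the true $A$, and $\Gamma$ is recovered.

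For the counting step, Theorem~\ref{interior_approximation_theorem} directly gives the required bound when $B$ is a singleton. For general $B$, decompose $\Gamma=\bigcup_i \subcut(\Gamma,b_i)$ into its distinct subcuts (using Proposition~\ref{subcut_identical_prop}), each $\subcut(\Gamma,b_i)\in\OMCut(x,b_i)$ having size $L_i$ with $\sum_i L_i=L$. Enumerate the type vector $((L_i),(b_i))$ in the same manner as in the proof of Theorem~\ref{count_cutsets_thm}; the overhead from this enumeration is at most $\exp(O(\log d\cdot L/d))$, which is dominated by $\exp(C\log^2 d\cdot L/d^{3/2})$ for $d$ large. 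For each type, apply Theorem~\ref{interior_approximation_theorem} to each subcut to obtain candidate approximations $E_i$, and take the candidate interior approximation to $\Gamma$ to be $E:=\bigcup_i E_i$. Using that $\sum_i\lfloor L_i/\alpha\rfloor\le\lfloor L/\alpha\rfloor$ and that any $v\in E_1(\Gamma)\setminus E_{1,e}(\Gamma)$ has fewer than $\Delta(G)-\sqrt{d}$ $\Gamma_i$-edges for every $i$ (hence lies in $E_1(\Gamma_i)\setminus E_{1,e}(\Gamma_i)\subseteq E_i$ for each $i$ adjacent to it), the product yields the required bound $2n_d^{\lfloor L/\alpha\rfloor}\exp(C'\log^2 d\cdot L/d^{3/2})$.

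The main obstacle is that the candidate $E=\bigcup_i E_i$ sits inside $\bigcup_i\comp(\subcut(\Gamma,b_i),x)$ rather than inside $C=\comp(\Gamma,x)$, so it need not satisfy the literal containment $E\subseteq C$ — and in principle $E$ could intersect the $A$-component $\comp(\Gamma,b_j)$ for $j\ne i$. The remedy is to observe that the reconstruction only requires the weaker pair of conditions $E\supseteq E_1(\Gamma)\setminus E_{1,e}(\Gamma)$ and $E\cap A=\emptyset$; carrying out the latter verification, by tracking how each $E_i\subseteq\comp(\Gamma_i,x)$ meets the level-set geometry of each $\Gamma_j$ with $j\ne i$, is where I expect the bulk of the technical work to lie. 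An alternative, perhaps cleaner, route is simply to re-run the skeleton argument used to prove Theorem~\ref{interior_approximation_theorem} itself with general $B$ in place of singleton $b$, absorbing the additional union-bound factor over $b\in B$ into the exponent.
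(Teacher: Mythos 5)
Your singleton reconstruction is essentially the paper's Proposition~\ref{count_level_sets_prop}: forming $W=E\cup\{g\ge1\}$ and then recovering $A$ is, modulo the cosmetic inclusion of $\{g\ge1\}$ (automatic since you intersect with $\{g\le 0\}$), the same $A=A'$ argument. The trouble is in the general-$B$ step, and you have correctly diagnosed the gap: $E=\bigcup_i E_i$ is only known to lie in $\bigcup_i\comp(\Gamma_i,x)$, which is strictly larger than $\comp(\Gamma,x)$, and $E_i$ really can meet the component $\comp(\Gamma,b_j)$ (and hence the set $A$) for $j\ne i$. The definition of interior approximation gives no control on $E_i$ deep in the interior of $\comp(\Gamma_i,x)$, so the ``relaxed'' condition $E\cap A=\emptyset$ is not something you get for free; you would need to modify Proposition~\ref{E_alg_prop} or Theorem~\ref{interior_approximation_theorem} to control this, which you acknowledge would be ``the bulk of the technical work.'' That work is not a routine verification and is not carried out, so as written the argument is incomplete.

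The paper sidesteps the issue entirely by never forming a combined interior approximation of $\Gamma$. Instead, it establishes Lemma~\ref{local_transformation_hereditary_lem}: if $g$ is an $(x,B)$-interior modification of $f$, then $g$ is also an $(x,b)$-interior modification of $f$ for every $b\in B$. This lets one apply the singleton Proposition~\ref{count_level_sets_prop} to each subcut $\Gamma^i=\subcut(\Gamma,b^i)$ \emph{individually}: given $g$, $b^i$ and $L^i=|\Gamma^i|$, the number of possibilities for $\Gamma^i$ is at most $n_d^{\lfloor L^i/\alpha\rfloor}\exp(C\log^2 d\,L^i/d^{3/2})$. Then $\Gamma$ itself is the union of its subcuts, so one multiplies these bounds; the $b^i$ are determined iteratively (Lemma~\ref{b_determined_lem2}), and the type $(k,(L^i))$ is enumerated by Proposition~\ref{L_decomp_prop} with $s_1=\Delta(G)$, $s_2=\lceil\Delta(G)^2/2\rceil$, contributing $\exp(O(\log d\cdot L/d^2))$, as you anticipated. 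The crucial idea you are missing is thus not a strengthened approximation theorem but the observation that each subcut $\Gamma^i$ can be reconstructed separately from $g$ and \emph{its own} interior approximation $E_i\subseteq\comp(\Gamma^i,x)$ --- for that per-subcut reconstruction, $E_i$ only needs to avoid $\comp(\Gamma^i,b^i)$, which it does by definition, and the question of whether $E_i$ intrudes into $\comp(\Gamma,b_j)$ for $j\neq i$ simply never arises.
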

\paragraph{Proof of Theorem~\ref{interior_approximation_theorem}}

Throughout the proof, we fix $x,b\in V[G]$ and shall always assume that $d\ge d_0$ for some large constant $d_0$. Also, for $\Gamma\in\OMCut(x,b)$ we adapt the notation $E_1:=E_1(\Gamma), \CC_1:=\comp(\Gamma,x), E_0:=E_0(\Gamma)$ and $\CC_0:=\comp(\Gamma,b)$ where the dependence on $\Gamma$ is implicit and the choice of $\Gamma$ will be understood from the context. Note that $\CC_0=V[G]\setminus \CC_1$ by minimality of $\Gamma$. We will also write, for $j\in\{0,1\}$ and a condition $c(\cdot)$,
\begin{equation*}
E_{j,c(\cdot)} := \{v\in E_j\ |\ c(P_\Gamma(v))\text{ holds}\}.
\end{equation*}
For example, $E_{1,\sqrt{d}<\cdot<\Delta(G)-\sqrt{d}}=\{v\in E_1\ | \ \sqrt{d}<P_\Gamma(v)<\Delta(G)-\sqrt{d}\}$ and $E_{1,\cdot\ge\Delta(G)-\sqrt{d}}=\Eonee(\Gamma)$. Finally, for $j\in\{0,1\}$ and $v\in E_j$, we let
\begin{equation*}
\begin{split}
A_1(v)&:=\{v'\in E_j\ \big|\ \exists u\in\CC_j\text{ such that }v\adj{G} u\text{, }u\adj{G} v'\},\\
A_2(v)&:=\{u\in S(v)\cap \CC_j\ \big|\ |S(u)\cap E_j|<\sqrt{d}\},\\
A_3(v)&:=S(A_2(v))\cap E_j.
\end{split}
\end{equation*}

We remind that a $\Gamma\in\OMCut(x,b)$ is called trivial if it consists only of the edges incident to $x$ or only of the edges incident to $b$ (see Proposition~\ref{trivial_gamma_prop}), we remind the definition of $R_\Gamma$ from \eqref{R_Gamma_def} and we start our proof with the following ``dominating set'' proposition.
\begin{proposition}\label{dom_set_prop2}
There exists $C>0$ such that for all non-trivial $\Gamma\in\OMCut(x,b)$, there exist $E_{0}^{\operatorname{t}}\subseteq E_0$ and $E_{1}^{\operatorname{t}}\subseteq E_1$ satisfying for both $j\in\{0,1\}$:
\begin{enumerate}
\item[(a)] $R_\Gamma(E_{j}^{\operatorname{t}})\le \frac{C\log d}{d^{3/2}}|\Gamma|$.
\item[(b)] If $v\in E_j$ and $|A_1(v)|\ge \frac{1}{5}d^{3/2}$ then $A_1(v)\cap E_{j}^{\operatorname{t}}\neq\emptyset$.
\item[(c)] If $v\in E_{j,\cdot\ge\Delta(G)/2}$ then $|S(v)\cap E_{1-j}\cap S(E_{j}^{\operatorname{t}})|\ge\sqrt{d}$.
\item[(d)] If $v\in E_{j,\cdot\le\sqrt{d}}$ and $|A_2(v)|\ge \frac{\Delta(G)}{2}$ then $A_3(v)\cap S(E_{1-j}^{\operatorname{t}})\neq\emptyset$.
\end{enumerate}
\end{proposition}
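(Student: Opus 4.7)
The plan is to prove this via a probabilistic method extending the approach of Proposition~\ref{dom_set_prop} to handle all four properties simultaneously. I would construct $E_0^t$ and $E_1^t$ jointly by first taking random subsets $E_j^s\subseteq E_j$ (for $j\in\{0,1\}$) via independent inclusion at probability $p$ of order $\log d/d^{3/2}$, and then augmenting $E_j^s$ with a small set of deterministic ``fix-up'' vertices to enforce (b)--(d) pointwise. Property (a) is immediate from linearity of expectation and Markov: $\E R_\Gamma(E_j^s)=pR_\Gamma(E_j)\le p|\Gamma|$, so $R_\Gamma(E_j^s)\le 4p|\Gamma|$ holds with positive probability, leaving a budget within which the fix-up contributions must fit.

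Property (b) is handled essentially as in Proposition~\ref{dom_set_prop}: for $v\in E_j$ with $|A_1(v)|\ge d^{3/2}/5$ the probability $A_1(v)\cap E_j^s=\emptyset$ is at most $(1-p)^{d^{3/2}/5}\le d^{-C_0/5}$ when $p=C_0\log d/d^{3/2}$, so by Markov the number of bad $v$ is $O(|E_j|d^{-C_0/5})$; for each such $v$ one adds any single element of $A_1(v)$ to $E_j^t$, contributing at most $O(\Delta(G)|E_j|d^{-C_0/5})$ to $R_\Gamma$, negligible compared to $p|\Gamma|$ once $C_0$ is large. Property (d) is analogous once one shows, using Proposition~\ref{interior_P_gamma_estimate} applied to the $u\in A_2(v)$, that $|A_2(v)|\ge\Delta(G)/2$ forces $|A_3(v)|$ to have order $d$, so $E_{1-j}^s$ dominates $A_3(v)$ with failure probability exponentially small in $d^{1/2}\log d$.

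Property (c) is the hard step. For $v\in E_j$ with $P_\Gamma(v)\ge\Delta(G)/2$, Proposition~\ref{neighbor_plaquette_prop} gives $P_\Gamma(u)\ge\Delta(G)-P_\Gamma(v)$ for every $u\in S(v)\cap E_{1-j}$, and since $|S(u)\cap E_j|=P_\Gamma(u)$ the expectation $\E|S(v)\cap E_{1-j}\cap S(E_j^s)|$ is at least $P_\Gamma(v)\bigl(1-(1-p)^{\Delta(G)-P_\Gamma(v)}\bigr)$, comfortably exceeding $\sqrt d$ whenever $\Delta(G)-P_\Gamma(v)$ is at least a small multiple of $d$; a Chernoff estimate then yields (c) for such $v$ with exponentially small failure probability, and Markov plus cheap fix-ups absorb the rest. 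The delicate subcase is nearly-exposed $v$, i.e.\ $P_\Gamma(v)>\Delta(G)-c\sqrt d$, where random sampling at rate $p$ is too weak; there one exploits the fact that $R_\Gamma(v)<c\sqrt d$ together with the structural bound $|\{v:P_\Gamma(v)\ge\Delta(G)/2\}|\le 2|\Gamma|/\Delta(G)$ (from $\sum P_\Gamma(v)\le|\Gamma|$) to perform an additional, more aggressive, random sampling restricted to these nearly-exposed vertices, calibrated so that their small per-vertex weight $R_\Gamma(v)\le c\sqrt d$ offsets the larger inclusion probability in the resulting contribution to $R_\Gamma(E_j^t)$.

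The main obstacle is exactly this nearly-exposed subcase of (c): the $R_\Gamma$-budget in (a) is so tight that one cannot simply add all nearly-exposed $v$ to $E_j^t$, and thus the argument must exploit the rigid structure of odd cutsets---in particular the directional symmetry of Proposition~\ref{equal_edgenum_in_all_dir} and the neighbor-plaquette inequality of Proposition~\ref{neighbor_plaquette_prop}---to simultaneously balance (a), (c), and their analogues for both parities $j\in\{0,1\}$.
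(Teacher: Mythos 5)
Your high-level plan (random sample plus deterministic fix-ups, Markov on the number of failures) is the same as the paper's, and your treatment of (a) and (b) would go through. The genuine gap is in (c) and (d): a \emph{uniform} inclusion probability $p=\Theta(\log d/d^{3/2})$ gives per-vertex failure probability for (c) that is close to $1$, not $d^{-\Theta(1)}$, so Markov plus fix-ups becomes unaffordable. Concretely, take $v\in E_{j}$ with $P_\Gamma(v)=\Delta(G)-2$; every $u\in S(v)\cap E_{1-j}$ with $\{u,v\}\in\Gamma$ may have $P_\Gamma(u)=2$, hence only two $E_j$-neighbours, and with uniform $p$ each such $u$ lands in $S(E_j^s)$ with probability $\approx 2p=\Theta(\log d/d^{3/2})$. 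One then expects only $O(d\cdot\log d/d^{3/2})=o(1)$ successes among the $\approx d$ neighbours of $v$, far short of the required $\sqrt d$, and this can happen for a constant fraction of $E_j$ (think of a nearly flat cutset where one side is almost entirely exposed). Property (d) has the same defect: the vertices $v+f_i+f_{i_k}+f_{i'}$ one needs to hit are nearly exposed with $P_\Gamma\ge\Delta(G)-\sqrt d$, and $\Delta(G)/2$ attempts at rate $\log d/d^{3/2}$ only succeed with probability $\Theta(\log d/\sqrt d)$, again nowhere near $1-d^{-3}$.

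You correctly identify the nearly-exposed subcase as the obstruction and correctly intuit that the smallness of $R_\Gamma(v)=\Delta(G)-P_\Gamma(v)$ for such $v$ is what should pay for a higher rate, but your proposed fix — a second, constant, boosted rate restricted to $\{v: P_\Gamma(v)>\Delta(G)-c\sqrt d\}$ — cannot simultaneously satisfy (a) and (c): to handle $u$ with $P_\Gamma(u)=2$ you need $p'\gtrsim 1/\sqrt d$, but then the contribution to $\E R_\Gamma(E_j^s)$ from vertices with $\Delta(G)-P_\Gamma(v)$ near $c\sqrt d$ is $\gtrsim p'\cdot c\sqrt d\cdot |\Gamma|/d=\Theta(|\Gamma|/\sqrt d)$, which exceeds the budget $O(\log d\,|\Gamma|/d^{3/2})$ by a polynomial factor; a lower threshold merely pushes the failure to intermediate $P_\Gamma(u)$. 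What the paper actually does is make the inclusion probability continuously graded, $p(v)=\frac{30\log d}{(\Delta(G)-P_\Gamma(v))\sqrt d}$ for \emph{all} $v\in E_j$. Combined with Proposition~\ref{neighbor_plaquette_prop} (for $\{v',w\}\in\Gamma$ one has $\Delta(G)-P_\Gamma(v')\le P_\Gamma(w)$), this makes each $w\in B(v)$ contribute a $\Theta(1/\sqrt d)$-probability independent success regardless of $P_\Gamma(w)$, while the expected $R_\Gamma$-cost per sampled $v$ with $P_\Gamma(v)\ge\Delta(G)/2$ is the constant $\Theta(\log d/\sqrt d)$, and summing over $v$ using $\sum_v P_\Gamma(v)=|\Gamma|$ stays within the budget. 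Replacing your two-tier scheme with this single graded rate is the missing idea.
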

\begin{proof}
Fix a non-trivial $\Gamma\in\OMCut(x,b)$. Note that the non-triviality and Proposition~\ref{trivial_gamma_prop} imply
\begin{equation}\label{gamma_non-triviality_conseq}
\text{$P_\Gamma(v)\le\Delta(G)-1$ for all $v\in V[G]$}.
\end{equation}
For $j\in\{0,1\}$, we choose $E_{j}^{\operatorname{s}}\subseteq E_j$ randomly by adding each $v\in E_j$ to $E_{j}^{\operatorname{s}}$ independently with probability $\frac{30\log d}{(\Delta(G)-P_\Gamma(v))\sqrt{d}}$. These probabilities are indeed at most 1 for sufficiently large $d$ by \eqref{gamma_non-triviality_conseq}.

Fix $j\in\{0,1\}$. Using that $\sum_{k=1}^{\Delta(G)}k|E_{j,\cdot = k}|=|\Gamma|$, since the subsets of $\Gamma$ incident to distinct vertices in $E_j$ are disjoint, we have
\begin{equation*}
\begin{split}
\E R_\Gamma(E_{j}^{\operatorname{s}})&=\frac{30\log d}{\sqrt{d}}\sum_{v\in E_j} \frac{\min(P_\Gamma(v),\Delta(G)-P_\Gamma(v))}{\Delta(G)-P_\Gamma(v)}=\\
&=\frac{30\log d}{\sqrt{d}}\left(|E_{j,\cdot\ge\Delta(G)/2}|+\sum_{k=1}^{\lceil\Delta(G)/2\rceil-1}\frac{k|E_{j,\cdot=k}|}{\Delta(G)-k}\right)\le\\
&\le \frac{30\log d}{\sqrt{d}}\left(\frac{2|\Gamma|}{\Delta(G)}+\frac{2|\Gamma|}{\Delta(G)}\right)\le \frac{120\log d|\Gamma|}{d^{3/2}}.
\end{split}
\end{equation*}
Markov's inequality now implies that
\begin{equation}\label{R_Gamma_prob_estimate}
\P\left(R_\Gamma(E_{j}^{\operatorname{s}})\ge\frac{360\log d|\Gamma|}{d^{3/2}}\right)\le \frac{1}{3}.
\end{equation}

Let $v_1\in E_j$ be such that $|A_1(v_1)|\ge\frac{1}{5}d^{3/2}$. We have
\begin{equation}\label{v1_prob_estimate}
\P(E_{j}^{\operatorname{s}}\cap A_1(v_1)=\emptyset)\le \left(1-\frac{30\log d}{\Delta(G)\sqrt{d}}\right)^{\frac{1}{5}d^{3/2}}\le \exp(-3\log d)=\frac{1}{d^3}.
\end{equation}

Let $v_2\in E_{j,\cdot\ge\Delta(G)/2}$. With part (c) of the proposition in mind, we would like to estimate $\P\big(|S(v_2)\cap E_{1-j}\cap S(E_{j}^{\operatorname{s}})|<\sqrt{d}\big)$. We first let $B(v_2):=S(v_2)\cap E_{1-j,\cdot\ge 2}$ and note that
\begin{equation}\label{Bv_size}
|B(v_2)|\ge \frac{\Delta(G)}{2}-1.
\end{equation}
To see this, note that by \eqref{gamma_non-triviality_conseq}, there exists $1\le i\le \Delta(G)$ such that $v_2+f_i\in\CC_j$. Hence $v_2+f_i+f_k\in\CC_j$ for all $k$ by \eqref{parity_neighborhood_contained}. Thus, each $1\le i'\le \Delta(G)$ for which $v_2+f_{i'}\notin\CC_j$ and $f_{i'}\neq -f_i$ satisfies $v_2+f_{i'}\in B(v_2)$ since $v_2+f_{i'}$ is adjacent to both $v_2$ and $v_2+f_i+f_{i'}$.

Next, for each $w\in B(v_2)$, let $E(w):=(S(w)\cap E_j)\setminus\{v_2\}$ and define a random set $E(w)^{\operatorname{s}}$ by taking each $v'\in E(w)$ into $E(w)^{\operatorname{s}}$ with probability $\frac{15\log d}{(\Delta(G)-P_\Gamma(v'))\sqrt{d}}$ independently for each such $v'$ and $w$. We note that by Proposition~\ref{2nd_neighborhood_prop}, each $v'$ is contained in at most $2$ of the $E(w)$'s and hence
\begin{equation}\label{Ews_domination}
\bigcup_{w\in B(v_2)} E(w)^{\operatorname{s}}\text{ is stochastically dominated by }E_{j}^{\operatorname{s}}.
\end{equation}
Noting that for $w\in B(v_2)$, $P_\Gamma(w)\ge 2$ by definition of
$B(v_2)$ and $\Delta(G)-P_\Gamma(v')\le P_\Gamma(w)$ for all $v'\in
E(w)$ by Proposition~\ref{neighbor_plaquette_prop}, we obtain for
sufficiently large $d$,
\begin{equation*}
\P\big(S(w)\cap E(w)^{\operatorname{s}}=\emptyset\big)\le \left(1-\frac{15\log d}{P_\Gamma(w)\sqrt{d}}\right)^{P_\Gamma(w)-1}\le 1-\frac{15}{\sqrt{d}}.
\end{equation*}

Finally, letting $N:=|\{w\in B(v_2)\ |\ S(w)\cap E(w)^{\operatorname{s}}\neq\emptyset\}|$, it follows that $N$ stochastically dominates a $\Bin(|B(v_2)|,\frac{15}{\sqrt{d}})$ random variable. Using \eqref{Bv_size}, \eqref{Ews_domination} and standard properties of binomial RV's, we deduce that for large enough $d$,
\begin{equation}\label{v2_prob_estimate}
\P\big(|S(v_2)\cap E_{1-j}\cap S(E_{j}^{\operatorname{s}})|<\sqrt{d}\big)\le \P(N<\sqrt{d})\le \frac{1}{d^3}.
\end{equation}

Having now part (d) of the Proposition in mind, we let $v_3\in E_{j,\cdot\le d}$ satisfy $|A_2(v_3)|\ge\frac{\Delta(G)}{2}$. Let $1\le i\le \Delta(G)$ be such that $v_3+f_i\in E_{1-j}$. Let $1\le i'\le \Delta(G)$ be such that $v_3+f_i+f_{i'}\in\CC_{1-j}$, such $i'$ exists by \eqref{gamma_non-triviality_conseq}. It follows from \eqref{parity_neighborhood_contained} that $S(v_3+f_i+f_{i'})\subseteq\CC_{1-j}$.
Let $i_1,\ldots, i_{\lceil\Delta(G)/2\rceil}$ be such that $v_3+f_{i_k}\in A_2(v_3)$ for all $k$. Again, \eqref{parity_neighborhood_contained} implies that $S(v_3+f_{i_k})\subseteq \CC_j$ for all $k$.
We deduce that for all $k$, $v_3+f_i+f_{i_k}\in A_3(v_3)$ and $v_3+f_i+f_{i_k}+f_{i'}\in E_{1-j}$. Furthermore, by Proposition~\ref{interior_P_gamma_estimate}
and the definition of $A_2(v_3)$ (with $v_3+f_{i_k}$ as $u$ and $v_3+f_i+f_{i_k}$ as $v$), $P_\Gamma(v_3+f_i+f_{i_k})<\sqrt{d}$. Hence, by Proposition~\ref{neighbor_plaquette_prop}, $P_\Gamma(v_3+f_i+f_{i_k}+f_{i'})\ge \Delta(G)-\sqrt{d}$. We deduce that
\begin{align}
\nonumber\P\big(A_3(v_3)\cap S(E_{1-j}^{\operatorname{s}})=\emptyset\big)&\le \P\big((v_3+f_i+f_{i_k}+f_{i'})_{k=1}^{\lceil\Delta(G)/2\rceil}\cap E_{1-j}^{\operatorname{s}}=\emptyset\big)\le\\
&\le \left(1-\frac{30\log d}{d}\right)^{\Delta(G)/2}\le \frac{1}{d^3}.\label{v3_prob_estimate}
\end{align}
We now aim to ``correct'' the sets $E_j^{\operatorname{s}}$ by enlarging them slightly to create new sets $E_j^{\operatorname{t}}$ which will satisfy the requirements of the proposition. Defining
\begin{equation*}
\begin{split}
B_{j,1}&:=\left\{v\in E_j\ \big|\ |A_1(v)|\ge \frac{1}{5}d^{3/2},\ E_{j}^{\operatorname{s}}\cap A_1(v)=\emptyset\right\},\\
B_{j,2}&:=\left\{v\in E_{j,\cdot\ge\Delta(G)/2}\ \big|\ |S(v)\cap E_{1-j}\cap S(E_{j}^{\operatorname{s}})|<\sqrt{d}\right\},\\
B_{j,3}&:=\left\{v\in E_{j,\cdot\le\sqrt{d}}\ \big|\ |A_2(v)|\ge\frac{\Delta(G)}{2},\ A_3(v)\cap S(E_{1-j}^{\operatorname{s}})=\emptyset\right\},
\end{split}
\end{equation*}
and using the three probabilistic estimates \eqref{v1_prob_estimate}, \eqref{v2_prob_estimate} and \eqref{v3_prob_estimate}, we see that
\begin{equation}\label{B_exp_size}
\max\big(\E|B_{j,1}|, \E|B_{j,2}|, \E|B_{j,3}|\big)\le \frac{|E_j|}{d^3}.
\end{equation}
Let $M:=\max_{\substack{j\in\{0,1\}\\k\in\{1,2,3\}}}|B_{j,k}|$. For $j\in\{0,1\}$, we let $E_j^{\operatorname{t}}:=E_j^{\operatorname{s}}\cup D_j$ where the $D_j$ satisfy $D_j\subseteq E_j$ and $|D_j|\le 3M$ and are chosen in such a way that parts (b), (c) and (d) of the proposition hold. The exact choice of $D_j$ does not matter and for sufficiently large $d$, one may take, for example, $D_j$ to be $B_{j,1}\cup B_{j,2}$ union with a set containing a neighbor in $E_j$ for each $v\in B_{1-j,3}$.
It follows directly that for each $j\in \{0,1\}$,
\begin{equation*}
R_\Gamma(E_{j}^{\operatorname{t}})\le R_\Gamma(E_{j}^{\operatorname{s}}) + R_\Gamma(D_j)\le R_\Gamma(E_{j}^{\operatorname{s}}) + 3M\frac{\Delta(G)}{2}\le R_\Gamma(E_{j}^{\operatorname{s}}) + 3dM.
\end{equation*}
Hence it is sufficient to show that with positive probability $\max_{j\in\{0,1\}} R_\Gamma(E_{j}^{\operatorname{s}})\le \frac{C|\Gamma|\log d}{d^{3/2}}$ and $M\le \frac{C|\Gamma|}{d^{3}}$ for some $C>0$. Using \eqref{B_exp_size}, Markov's inequality and the fact that $|E_j|\le |\Gamma|$ we have $\P\big(M\ge \frac{20|\Gamma|}{d^3}\big)\le \P\big(\sum_{\substack{j\in\{0,1\}\\k\in\{1,2,3\}}} |B_{j,k}|\ge \frac{20|E_j|}{d^3}\big)<\frac{1}{3}$.
Combined with \eqref{R_Gamma_prob_estimate} and a union bound, this proves the proposition.\qedhere
\end{proof}

For $\Gamma\in\OMCut(x,b)$, $v\in V[G]$ and $E\subseteq V[G]$, define $N_\Gamma(v)\in\{0,1\}^{\Delta(G)}$ by
\begin{equation*}
N_\Gamma(v)_i:=1_{(v+f_i\in\CC_1)}\text{ and }N_\Gamma(E):=(N_\Gamma(v))_{v\in E}.
\end{equation*}
The next proposition formalizes the fact that for a non-trivial $\Gamma\in\OMCut(x,b)$ knowing only the $(E_{j}^{\operatorname{t}})_{j\in\{0,1\}}$ of Proposition~\ref{dom_set_prop2} and $(N_\Gamma(E_{j}^{\operatorname{t}}))_{j\in\{0,1\}}$, we can determine a set $E$ satisfying $E_{1,\cdot<\Delta(G)-\sqrt{d}}\subseteq E\subseteq\CC_1$. $E$ is determined by the following algorithm:
\begin{enumerate}
\item For $j\in\{0,1\}$, let
\begin{enumerate}
\item $R_j^{\operatorname{a}}$ be all $v\in V[G]$ satisfying that there exist $v'\in E_{1-j}^{\operatorname{t}}$ and $1\le i\le\Delta(G)$ such that $N_\Gamma(v')_i=j$ and $v=v'+f_i$.
\item $R_j^{\operatorname{b}}$ be all $v\in V[G]$ satisfying that there exist $v'\in E_{j}^{\operatorname{t}}$ and $1\le i\le\Delta(G)$ such that $N_\Gamma(v')_i=j$ and $v\adj{G}v'+f_i$.
\end{enumerate}
\item For $j\in \{0,1\}$, let $V_j:=\{v\in V[G]\ \big|\ |S(v)\cap R_{1-j}^{\operatorname{a}}|<\sqrt{d}\}$ and define
\begin{equation*}
U:=\left\{u\in V_0\setminus R_0^{\operatorname{b}}\ \big|\ S(u)\cap V_1\cap R_1^{\operatorname{a}}\neq\emptyset\right\}.
\end{equation*}
Set $E:=R_1^{\operatorname{b}}\cup S(U)$.
\end{enumerate}
\begin{proposition}\label{E_alg_prop}
For any non-trivial $\Gamma\in\OMCut(x,b)$, the set $E$ obtained from the previous algorithm, taking as input the sets $(E_{j}^{\operatorname{t}})_{j\in\{0,1\}}$ of Proposition~\ref{dom_set_prop2} and $(N_\Gamma(E_{j}^{\operatorname{t}}))_{j\in\{0,1\}}$, satisfies
\begin{equation*}
E_{1,\cdot<\Delta(G)-\sqrt{d}}\subseteq E\subseteq\CC_1.
\end{equation*}
In other words, $E$ is an interior approximation to $\Gamma$.
\end{proposition}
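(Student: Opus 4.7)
My plan is to prove the two inclusions $E\subseteq\CC_1$ and $E_{1,\cdot<\Delta(G)-\sqrt{d}}\subseteq E$ separately, using the structural guarantees on $E_0^t,E_1^t$ furnished by Proposition~\ref{dom_set_prop2} together with the pushing property \eqref{parity_neighborhood_contained} of odd cutsets.

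For the upper inclusion $E\subseteq\CC_1$, I would first observe by \eqref{parity_neighborhood_contained} that $R_1^b\subseteq\CC_1$ and, symmetrically, $R_0^b\subseteq\CC_0$: whenever $v'\in E_j^t\subseteq E_j$ and $N_\Gamma(v')_i=j$, one has $v'+f_i\in\CC_j$ with $\{v',v'+f_i\}\notin\Gamma$, forcing $S(v'+f_i)\subseteq\CC_j$. It then suffices to show every $u\in U$ lies in the strict interior of $\CC_1$, i.e.\ $u\in\CC_1$ with $P_\Gamma(u)=0$, so that $S(u)\subseteq\CC_1$. I would argue by contradiction using the defining witness $w\in S(u)\cap V_1\cap R_1^a$ coming from some $v'\in E_0^t$ with $w=v'+f_i$. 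If $u\in E_0$, then \eqref{parity_neighborhood_contained} applied symmetrically at $v'$, together with the two-step neighborhood structure near $v'$ and $w$, would place $u$ in $R_0^b$, contradicting $u\notin R_0^b$. If instead $u\in E_1$ (so $P_\Gamma(u)\geq 1$), I would invoke Part (c) of Proposition~\ref{dom_set_prop2} (when $P_\Gamma(u)\geq\Delta(G)/2$) or the $A_1(u)$-counting estimate described below (when $P_\Gamma(u)<\Delta(G)/2$) to produce at least $\sqrt{d}$ neighbors of $u$ in $R_1^a$, contradicting $u\in V_0$.

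For the lower inclusion $E_{1,\cdot<\Delta(G)-\sqrt{d}}\subseteq E$, I would fix $v\in E_1$ with $P_\Gamma(v)<\Delta(G)-\sqrt{d}$ and case split on $P_\Gamma(v)$. When $P_\Gamma(v)>\sqrt{d}$, Proposition~\ref{Gamma_1_degree_prop} shows that $v$ has $\gtrsim d^{3/2}$ $\Gamma$-neighbors, all lying in $A_1(v)$, so Part (b) of Proposition~\ref{dom_set_prop2} supplies $v'\in A_1(v)\cap E_1^t$; unpacking the existence of $u\in\CC_1$ with $v\adj{G}u\adj{G}v'$, writing $u=v'+f_{-\ell}$, we have $N_\Gamma(v')_{-\ell}=1$ (since $u\in\CC_1$) and $v\adj{G}v'+f_{-\ell}$, placing $v\in R_1^b$. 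When $P_\Gamma(v)\leq\sqrt{d}$ and $|A_2(v)|<\Delta(G)/2$, most $\CC_1$-neighbors of $v$ have $\geq\sqrt{d}$ neighbors in $E_1$, and summing over them (with multiplicity bounded by Proposition~\ref{2nd_neighborhood_prop}) gives $|A_1(v)|\gtrsim d^{3/2}$; the same argument then places $v\in R_1^b$. Finally, when $P_\Gamma(v)\leq\sqrt{d}$ and $|A_2(v)|\geq\Delta(G)/2$, Part (d) of Proposition~\ref{dom_set_prop2} produces an interior witness $u$ near $v$ that I would verify lies in $U$ with $v\in S(u)$, so that $v\in S(U)$.

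The main obstacle I anticipate is the careful verification in this last sub-case that the witness vertex $u$ supplied by Part (d) actually meets all three conditions defining $U$, namely $u\in V_0$, $u\notin R_0^b$, and having a neighbor in $V_1\cap R_1^a$. Matching $u\in V_0$ to the output of Part (d) reflects that $u$ sits deep inside $\CC_1$ where $E_1$ is locally sparse; checking $u\notin R_0^b$ and producing the $V_1\cap R_1^a$-neighbor both require tracking how $E_0^t$ and $E_1^t$ interact with the local geometry near $v$. A similar delicate balancing of the $\sqrt{d}$ and $\Delta(G)/2$ thresholds across parts (b)-(d) is what makes the two contradictions in the upper inclusion go through, and this is where the bulk of the bookkeeping will live.
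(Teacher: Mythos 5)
Your overall architecture — verify $R_j^a\subseteq E_j$, $R_j^b\subseteq\CC_j$, then handle $U$ and the lower inclusion by separate plaquette-counting arguments — matches the paper's proof (which is organized into six numbered claims). The lower inclusion is planned essentially as the paper does it, with the hard sub-case correctly flagged as requiring verification.

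However, there is a genuine gap in your upper-inclusion argument for $U\subseteq\CC_1$. You claim that if $u\in E_0$ then the witness structure near $v'$ and $w$ ``would place $u$ in $R_0^b$, contradicting $u\notin R_0^b$.'' This is not true in general. The set $R_0^b$ is only guaranteed to capture vertices of $E_0$ in the middle plaquette range $\sqrt{d}<P_\Gamma(u)<\Delta(G)-\sqrt{d}$ (this is the content of Claim~2, via Proposition~\ref{Gamma_1_degree_prop} and part~(b) of Proposition~\ref{dom_set_prop2}). When $P_\Gamma(u)\le\sqrt{d}$ or $P_\Gamma(u)\ge\Delta(G)-\sqrt{d}$, the $\Gamma$-neighbor count of $u$ may be far below $d^{3/2}$, so part~(b) gives nothing and $u\in R_0^b$ need not hold. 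The paper handles those cases differently: when $P_\Gamma(u)\ge\Delta(G)/2$, part~(c) of Proposition~\ref{dom_set_prop2} forces $u\notin V_0$ (contradicting $u\in U$); and when $P_\Gamma(u)\le\sqrt{d}$, the chosen neighbor $v\in S(u)\cap V_1\cap R_1^a$ satisfies $P_\Gamma(v)\ge\Delta(G)-\sqrt{d}$ by Proposition~\ref{neighbor_plaquette_prop}, hence $v\notin V_1$ (again by part~(c)), contradicting $v\in V_1$. Your proposal uses $V_0$ and the plaquette inequality only in your (unnecessary) $u\in E_1$ case, but the place they are actually needed is the $u\in E_0$ case. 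Until you replace ``$u\in E_0\Rightarrow u\in R_0^b$'' with this three-way split on $P_\Gamma(u)$, the argument does not close.

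A secondary point: the case $u\in E_1$ that you treat separately cannot occur. Since $u$ has a neighbor in $R_1^a\subseteq E_1\subseteq V^\odd$, $u\in V^\even$, so if $u\in\CC_1$ then automatically $P_\Gamma(u)=0$ (the interior boundary $E_1$ lies in $V^\odd$). The paper disposes of this in one sentence at the start of its Claim~4.
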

To gain some intuition for the above algorithm, one should have in mind the following claims which are used in the proof of the proposition. $R_j^{\operatorname{a}}$ and $R_j^{\operatorname{b}}$ consist of vertices we know are in $E_j$ and $\CC_j$, respectively, directly from the definitions of $(E_{j}^{\operatorname{t}})_{j\in\{0,1\}}$ and $(N_\Gamma(E_{j}^{\operatorname{t}}))_{j\in\{0,1\}}$. $E_{1,\sqrt{d}<\cdot<\Delta(G)-\sqrt{d}}$ is seen to be a subset of $R_1^{\operatorname{b}}$ in a relatively straightforward manner and our main difficulty lies in showing that vertices of $E_{1,\cdot\le\sqrt{d}}$ can also be recovered from the given input. To this end, we define $V_j$ which is shown to be disjoint from $E_{j,\cdot\ge\Delta(G)/2}$. We deduce that $U$ consists only of vertices in $\CC_1\cap V^\even$. It follows from the definition of $\OMCut$ that $S(U)\subseteq \CC_1$. Finally, we are able to show that if $v\in E_{1,\cdot\le\sqrt{d}}\setminus R_1^{\operatorname{b}}$ then $v\in S(U)$.
\begin{proof}[Proof of Proposition~\ref{E_alg_prop}]
The proof is via several claims.

{\bf Claim 1:} $R_j^{\operatorname{a}}\subseteq E_j$ and $R_j^{\operatorname{b}}\subseteq \CC_j$ for $j\in\{0,1\}$.

We prove the claim for $R_0^{\operatorname{a}}$ and $R_0^{\operatorname{b}}$. The proofs for $R_1^{\operatorname{a}}$ and $R_1^{\operatorname{b}}$ are similar. Let $v\in R_0^{\operatorname{a}}$ and $v'\in E_1^{\operatorname{t}}$ be such that $v=v'+f_i$ and $N_\Gamma(v')_i=0$. Then $v\in E_0$ by definition of $N_\Gamma(v')$ and $E_0$. Let $v\in R_0^{\operatorname{b}}$ and $v'\in E_0^{\operatorname{t}}$ be such that $v\adj{G}v'+f_i$ and $N_\Gamma(v')_i=0$.
Then $v\in\CC_0$ by definition of $N_\Gamma(v')$ and \eqref{parity_neighborhood_contained}.

{\bf Claim 2:} For $j\in\{0,1\}$, $E_{j,\sqrt{d}<\cdot<\Delta(G)-\sqrt{d}}\subseteq R_j^{\operatorname{b}}$.

Fix $j\in\{0,1\}$ and $v\in E_{j,\sqrt{d}<\cdot<\Delta(G)-\sqrt{d}}$. By Proposition~\ref{Gamma_1_degree_prop} we know that $v$ has at least $\sqrt{d}(\Delta(G)-\sqrt{d})-\sqrt{d}\ge \frac{1}{2}d^{3/2}$ $\Gamma$-neighbors. Since all these neighbors are in $A_1(v)$, part (b) of Proposition~\ref{dom_set_prop2} implies that there exists $v'\in E_{j}^{\operatorname{t}}\cap A_1(v)$.
Hence $v\in R_j^{\operatorname{b}}$.

{\bf Claim 3:} For $j\in\{0,1\}$, $E_{j,\cdot\ge \Delta(G)/2}\cap V_j=\emptyset$.

Fix $j\in\{0,1\}$ and $v\in E_{j,\cdot\ge \Delta(G)/2}$. Any vertex in $S(v)\cap E_{1-j}\cap S(E_{j}^{\operatorname{t}})$ is in $R_{1-j}^{\operatorname{a}}$. Thus the claim follows from part (c) of Proposition~\ref{dom_set_prop2}.

{\bf Claim 4:} $U\subseteq\CC_1\cap V^\even$.

Let $u\in U$. $u\in V^\even$ since $S(u)\cap R_1^{\operatorname{a}}\neq\emptyset$ and
$R_1^{\operatorname{a}}\subseteq E_1\subseteq V^\odd$ by Claim 1 and the definition
of $\OMCut$. Assume, in order to get a contradiction, that
$u\notin\CC_1$. Since $S(u)\cap R_1^{\operatorname{a}}\neq\emptyset$ and
$R_1^{\operatorname{a}}\subseteq E_1$ by Claim~1, it follows that $u\in E_0$. If
$\sqrt{d}<P_\Gamma(u)<\Delta(G)-\sqrt{d}$ then $u\in R_0^{\operatorname{b}}$ by Claim
2, contradicting the definition of $U$. If
$P_\Gamma(u)\ge\Delta(G)/2$ we have $u\notin V_0$ by Claim 3,
contradicting again the definition of $U$. Finally, if
$P_\Gamma(u)\le \sqrt{d}$, let $v\in S(u)\cap V_1\cap R_1^{\operatorname{a}}$ (which
exists by the definition of $U$) and note that by Claim 1 and
Proposition~\ref{neighbor_plaquette_prop}, $P_\Gamma(v)\ge
\Delta(G)-\sqrt{d}\ge \frac{\Delta(G)}{2}$. It follows from Claim 3
that $v\notin V_1$, a contradiction. The contradiction proves the
claim.

{\bf Claim 5:} $S(U)\subseteq\CC_1$.

This follows immediately from Claim 4 since $E_1$, the boundary of $\CC_1$, is a subset of $V^\odd$.

{\bf Claim 6:} $E_{1,\cdot\le\sqrt{d}}\subseteq E$.

Let $v\in E_{1,\cdot\le\sqrt{d}}$. We distinguish two cases:
\begin{enumerate}
\item $|A_2(v)| < \frac{\Delta(G)}{2}$. We note that by definition of $A_2(v)$, for any $1\le i\le \Delta(G)$ such that $v+f_i\in\CC_1\setminus A_2(v)$, we have at least $\sqrt{d}$ vertices $v'\in E_1$ of the form $v'=v+f_i+f_k$ for some $k$ ($v$ being one of these vertices). Since $|S(v)\cap(\CC_1\setminus A_2(v))|\ge \frac{\Delta(G)}{2}-\sqrt{d}$ by our assumption, we see using Proposition~\ref{2nd_neighborhood_prop} that $|A_1(v)|\ge \frac{1}{2}(\frac{\Delta(G)}{2}-\sqrt{d})(\sqrt{d}-1)\ge \frac{d^{3/2}}{5}$ for large enough $d$. Hence, by part (b) of Proposition~\ref{dom_set_prop2}, $E_{1}^{\operatorname{t}}\cap A_1(v)\neq \emptyset$ implying that $v\in R_1^{\operatorname{b}}$.

\item $|A_2(v)|\ge \frac{\Delta(G)}{2}$. In this case, by part (d) of Proposition~\ref{dom_set_prop2} there exists $v'\in A_3(v)\cap S(E_{0}^{\operatorname{t}})$ implying that $v'\in R_1^{\operatorname{a}}$. By definition of $A_3(v)$, we may write $v'=v+f_i+f_k$ for some $1\le i,k\le \Delta(G)$ where $u:=v+f_i\in A_2(v)$. Using that $R_1^{\operatorname{a}}\subseteq E_1$ and $R_0^{\operatorname{b}}\subseteq \CC_0$ by Claim~1 and using the definition of $A_2(v)$ we deduce $u\in V_0\setminus R_0^{\operatorname{b}}$. Proposition~\ref{interior_P_gamma_estimate} implies that $P_\Gamma(v')<\sqrt{d}$ by definition of $A_2(v)$. Hence, since $R_0^{\operatorname{a}}\subseteq E_0$ by Claim 1, we have $v'\in V_1$. It follows that $u\in U$ and hence $v\in S(U)$.
\end{enumerate}
Claims 1,2,5 and 6 prove the proposition.
\end{proof}

\begin{lemma}\label{8_connect_lemma}
For all non-trivial $\Gamma\in\OMCut(x,b)$, denoting $F:=E_0\cup E_1$ and $F^{\operatorname{t}}:=E_{0}^{\operatorname{t}}\cup E_{1}^{\operatorname{t}}$ for the $(E_{j}^{\operatorname{t}})_{j\in\{0,1\}}$ of Proposition~\ref{dom_set_prop2}, if $F^{\operatorname{c}}$ is a $\GC$-connected component of $F$ then
\begin{enumerate}
\item[(a)] $F^{\operatorname{c}}\cap F^{\operatorname{t}}\neq\emptyset$.
\item[(b)] For every $v\in F^{\operatorname{c}}$, $(F^{\operatorname{t}}\cap F^{\operatorname{c}})\cup \{v\}$ is connected in $G^{\otimes 8}$.
\end{enumerate}
\end{lemma}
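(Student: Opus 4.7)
The plan is to prove (b) directly, from which (a) follows by choosing any $v\in F^c$. Two elementary observations drive the argument: (i) any $G$-edge between two vertices $w_1,w_2\in F$ is automatically a $\GC$-edge inside $F$, because for any direction $f_\ell$ not parallel to the connecting edge the basic 4-cycle $w_1,w_2,w_2+f_\ell,w_1+f_\ell$ witnesses $\GC$-adjacency regardless of whether its other corners lie in $F$; and (ii) consequently, any $G$-path contained in $F$ is a $\GC$-path in $F$, while any basic 4-cycle whose two ``opposite'' corners both lie in $F$ provides a $\GC$-edge inside $F$. Granting the key construction --- for every $v\in F^c$ one exhibits $v^t\in F^t$ with $d_G(v,v^t)\le 3$ that is joined to $v$ by a $\GC$-path inside $F$ --- the $G^{\otimes 8}$-connectivity claim follows: given two hubs $w_1^t,w_2^t\in F^t\cap F^c$, fix a $\GC$-path $x_1,\ldots,x_m$ inside $F^c$ between them, replace each $x_i$ by its hub $x_i^t\in F^t\cap F^c$, and bound consecutive hub distances by $d_G(x_i^t,x_{i+1}^t)\le 3+2+3=8$.

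The construction of $v^t$ proceeds by case analysis on $P_\Gamma(v)$, writing $v\in E_j$. When $P_\Gamma(v)\ge\Delta(G)/2$, property (c) of Proposition~\ref{dom_set_prop2} furnishes $w\in S(v)\cap E_{1-j}\cap S(E_j^t)$ together with a matching $v^t\in E_j^t$, yielding the $G$-path $v,w,v^t$ of length $2$ in $F$. When $\sqrt{d}<P_\Gamma(v)<\Delta(G)/2$, a count via Proposition~\ref{Gamma_1_degree_prop} (as in Claim 2 of the proof of Proposition~\ref{E_alg_prop}) gives $|A_1(v)|\ge d^{3/2}/5$, so property (b) delivers $v^t=v+f_i+f_k\in A_1(v)\cap E_j^t$; the pair $v,v^t$ is either $\GC$-adjacent directly (when $f_i\ne f_k$, via the basic 4-cycle $v,v+f_i,v^t,v+f_k$), or in the degenerate case $f_i=f_k$ one picks a $\Gamma$-direction $f_m\notin\{\pm f_i\}$ of $v$ (available since $P_\Gamma(v)>\sqrt{d}$) and uses \eqref{parity_neighborhood_contained} to check that $v+f_m+f_i\in E_j\subseteq F$, producing the two-step $\GC$-path $v\sim_\GC v+f_m+f_i\sim_\GC v^t$ inside $F$.

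When $P_\Gamma(v)\le\sqrt{d}$, the sub-case $|A_2(v)|<\Delta(G)/2$ again forces $|A_1(v)|\ge d^{3/2}/5$ (via the counting in the proof of Claim 6, case 1 of Proposition~\ref{E_alg_prop}) and is handled exactly as the middle-degree case. The sub-case $|A_2(v)|\ge\Delta(G)/2$ invokes property (d): one obtains $u\in A_2(v)$, $v''=u+f_k\in A_3(v)$ and $v^t\in E_{1-j}^t$ with $v''\adj{G}v^t$. Proposition~\ref{interior_P_gamma_estimate} together with the definition of $A_2(v)$ forces $v''\in E_j\subseteq F$, so in the generic situation $v$ and $v''$ are $\GC$-adjacent via a basic 4-cycle with both opposite corners in $F$, and the chain $v\sim_\GC v''\sim_\GC v^t$ is a $\GC$-path in $F$ of length $2$ with $d_G(v,v^t)\le 3$.

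The principal obstacle is the degenerate configuration in which properties (b) or (d) return a hub of the form $v+2f_i$ while $v$ lacks a $\Gamma$-direction outside $\{\pm f_i\}$ to supply the intermediate $v+f_m+f_i$; this pathological branch is isolated to $P_\Gamma(v)=1$ with the unique $\Gamma$-edge in direction $-f_i$. The plan is to invoke Proposition~\ref{neighbor_plaquette_prop} at that edge to produce a neighbor $w=v-f_i\in E_{1-j}$ with $P_\Gamma(w)\ge\Delta(G)-1$, reduce to the high-degree case applied to $w$, and observe that the $G$-edge $v\adj{G}w$ keeps the entire resulting path inside $F$ with $d_G(v,v^t)\le 3$.
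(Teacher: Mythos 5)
Your proposal is essentially correct, but it takes a substantially more roundabout route than the paper's proof, and in the end you rediscover the paper's key idea as a fallback without noticing it would have handled everything.

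The paper's argument uses only property (c) of Proposition~\ref{dom_set_prop2} together with Proposition~\ref{neighbor_plaquette_prop}. For $v\in E_{j,\cdot\ge\Delta(G)/2}\cap F^c$, property (c) directly gives a $G$-path $v,w,v^t$ of length $2$ entirely in $F$ with $v^t\in F^t$ (so all three are $\GC$-connected by $G$-edges, and $v^t\in F^c$). For $v\in E_{j,\cdot<\Delta(G)/2}\cap F^c$, Proposition~\ref{neighbor_plaquette_prop} applied to any $\Gamma$-edge at $v$ yields a $G$-neighbor $w\in E_{1-j,\cdot\ge\Delta(G)/2}$ (automatically in $F$ and hence in $F^c$), and one prepends the edge $\{v,w\}$. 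That is the entire construction of the hub: a $G$-path of length $\le 3$ inside $F$, no $\GC$-subtleties, no case splits on $|A_2(v)|$, no degenerate $v+2f_i$ configurations.

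By contrast, you mimic the finer structure of Claims 2 and 6 in the proof of Proposition~\ref{E_alg_prop}, invoking properties (b) and (d) and the sets $A_1,A_2,A_3$. This is more delicate precisely because the witness vertex $u$ in the definition of $A_1(v)$ (and in $A_3$) lies in $\CC_j$ rather than in $F$, so the natural $G$-path through $u$ leaves $F$; you correctly compensate by routing through basic $4$-cycles whose two relevant corners lie in $F$, but this forces you to track the degenerate case $v^t=v+2f_i$, and for $P_\Gamma(v)=1$ you then fall back to Proposition~\ref{neighbor_plaquette_prop} followed by property (c) --- which is exactly the paper's uniform treatment of \emph{all} vertices with $P_\Gamma(v)<\Delta(G)/2$. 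So your proof works (modulo fully writing out the final degenerate branch), but the crucial structural observation --- that Proposition~\ref{neighbor_plaquette_prop} alone lets you promote any low-degree boundary vertex to a high-degree $G$-neighbor inside $F$ at cost $1$ --- would have collapsed your entire case analysis into two lines and eliminated properties (b), (d) from the argument altogether.
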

\begin{proof}
Fix a non-trivial $\Gamma\in\OMCut(x,b)$ and a $\GC$-connected component $F^{\operatorname{c}}$ of $F$. By part (c) of Proposition~\ref{dom_set_prop2}, for any $v\in E_{j,\cdot\ge\Delta(G)/2}\cap F^{\operatorname{c}}$ for some $j\in \{0,1\}$ we have $d_G(v,F^{\operatorname{t}}\cap F^{\operatorname{c}})\le 2$. For any $v\in E_{j,\cdot<\Delta(G)/2}\cap F^{\operatorname{c}}$  for some $j\in\{0,1\}$ we have by Proposition~\ref{neighbor_plaquette_prop} that $S(v)\cap E_{1-j,\cdot\ge\Delta(G)/2}\neq\emptyset$. Thus $d_G(v,F^{\operatorname{t}}\cap F^{\operatorname{c}})\le 3$ for all $v\in F^{\operatorname{c}}$. In particular, $F^{\operatorname{t}}\cap F^{\operatorname{c}}\neq\emptyset$ since $F^{\operatorname{c}}$ is non-empty, proving part (a) of the lemma.

Fix $v\in F^{\operatorname{c}}$, $v_s,v_t\in (F^{\operatorname{t}}\cap F^{\operatorname{c}})\cup\{v\}$ and let $v_s=v_1,v_2,\ldots, v_m=v_t$ be a $\GC$-path of vertices of $F^{\operatorname{c}}$. For each $2\le i\le m-1$, let $v_i'\in F^{\operatorname{t}}\cap F^{\operatorname{c}}$ be such that $d_G(v_i',v_i)\le 3$. We also take $v_1'=v_s$ and $v_m'=v_t$. It follows that for each $1\le i\le m-1$,
\begin{equation*}
d_G(v_i', v_{i+1}')\le d_G(v_i', v_i) + d_G(v_i, v_{i+1}) + d_G(v_{i+1},v_{i+1}')\le 3+2+3=8.
\end{equation*}
Hence $v_s=v_1',v_2',\ldots,v_m'=v_t$ is a $G^{\otimes 8}$-walk, proving part (b) of the lemma.
\end{proof}

\begin{lemma}\label{n_gamma_count_lemma}
Given $M,R\in \N$ and $E\subseteq V[G]$ with $|E|=M$, we have
\begin{equation*}
\big|\big\{N_\Gamma(E)\ \big|\ \Gamma\in\OMCut(x,b)\text{ satisfies }R_\Gamma(E)=R\big\}\big|\le (3d)^{M+R}.
\end{equation*}
\end{lemma}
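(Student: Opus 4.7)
The plan is to bound the count by a purely combinatorial argument that ignores all cutset-specific structure, enumerating the possible $0/1$-valued vectors $(N_\Gamma(v))_{v\in E}$ subject only to the constraint $R_\Gamma(E)=R$.

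The key observation is that $R_\Gamma(v)$ can be read off from $N_\Gamma(v)$ alone. Set $k_v:=|\{i:N_\Gamma(v)_i=1\}|$. If $v\in\CC_1$ then $N_\Gamma(v)_i=0$ precisely for the $P_\Gamma(v)$ indices $i$ with $\{v,v+f_i\}\in\Gamma$, giving $P_\Gamma(v)=\Delta(G)-k_v$; if $v\notin\CC_1$ then analogously $P_\Gamma(v)=k_v$. Either way, $R_\Gamma(v)=\min(k_v,\Delta(G)-k_v)$ equals the size of the minority coordinate set of $N_\Gamma(v)$. I would then encode each $N_\Gamma(v)$ by a pair $(b_v,S_v)$, where $b_v\in\{0,1\}$ is its majority bit (ties broken, say, in favor of $1$) and $S_v:=\{i:N_\Gamma(v)_i\neq b_v\}$ is the minority-position set, so $|S_v|=R_\Gamma(v)$ and $\sum_{v\in E}|S_v|=R$. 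This encoding is injective, so the number of possible $N_\Gamma(E)$ is at most the number of admissible tuples $((b_v),(S_v))_{v\in E}$: the majority bits contribute a factor of $2^M$, while the tuple $(S_v)$ is in bijection with size-$R$ subsets of $E\times\{1,\ldots,\Delta(G)\}$, contributing $\binom{M\Delta(G)}{R}$ choices. Using $\Delta(G)\le 2d$, the count is at most $2^M\binom{2dM}{R}$.

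It remains to verify the numerical inequality $2^M\binom{2dM}{R}\le(3d)^{M+R}$. The cases $M=0$ or $R=0$ are immediate; for $M,R\ge 1$ I would apply $\binom{n}{k}\le(en/k)^k$ with $n=2dM$, $k=R$ to reduce the claim to $(2/(3d))^M(2eM/(3R))^R\le 1$. Maximizing the $R$-factor over $R>0$ (the maximum is attained at $R=2M/3$ with logarithm $2M/3$), it suffices that $\log(3d/2)\ge 2/3$, which holds for $d\ge 2$. The main obstacle is this final numerical check, which is elementary but relies on $d\ge 2$ --- a condition that is automatic in the setting of the paper.
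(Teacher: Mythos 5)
Your argument is correct, and it takes a genuinely different route from the paper's. Both proofs hinge on the same key observation --- that $R_\Gamma(v)$ equals the size of the minority coordinate set of $N_\Gamma(v)$ --- but the decomposition after that point differs. The paper first enumerates the vector $(P_\Gamma(v))_{v\in E}\in\{0,\dots,\Delta(G)\}^E$ satisfying the constraint $\sum_v\min(P_\Gamma(v),\Delta(G)-P_\Gamma(v))=R$, bounding the number of such vectors crudely by $(2d+1)^M$ (not actually using the constraint for this bound); then, for each such vector, it bounds the number of $N_\Gamma(v)$'s by $\prod_v\binom{\Delta(G)}{P_\Gamma(v)}\le(2d)^{R}$, arriving at $(2d)^R(2d+1)^M\le(3d)^{M+R}$, valid for all $d\ge 1$. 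You instead encode $N_\Gamma(E)$ directly by the majority bits $(b_v)$ and the union $S=\{(v,i):i\in S_v\}\subseteq E\times\{1,\dots,\Delta(G)\}$ of minority-position sets, and you use the constraint $|S|=R$ to get the exact count $2^M\binom{M\Delta(G)}{R}$, which you then verify is at most $(3d)^{M+R}$. This buys a somewhat tighter intermediate bound (you exploit the constraint where the paper discards it), at the price of a slightly more delicate numerical verification that needs $d\ge 2$; since the lemma is only applied in the regime $d\ge d_0$ for a large absolute constant $d_0$, this restriction is harmless, as you note. The rest of your reasoning --- that $R_\Gamma(v)=\min(k_v,\Delta(G)-k_v)$ with $k_v=|\{i:N_\Gamma(v)_i=1\}|$ in both the $v\in\CC_1$ and $v\notin\CC_1$ cases, the injectivity of the $(b_v,S_v)$ encoding, the bijection with size-$R$ subsets, and the optimization of $R\mapsto R\log(2eM/(3R))$ at $R=2M/3$ --- is all sound.
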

\begin{proof}
We use the fact that for $v\in E$, given $P_\Gamma(v)$, the number of possibilities for $N_\Gamma(v)$ is at most $\binom{\Delta(G)}{P_\Gamma(v)}$ since we need only choose the directions $1\le i\le \Delta(G)$ for which $v+f_i\in\comp(\Gamma,x)$ and in the case that $v\in V^\odd$, these are the directions for which $\{v,v+f_i\}\notin \Gamma$ and in the case that $v\in V^\even$, these are the directions for which $\{v,v+f_i\}\in\Gamma$. Let
\begin{equation*}
\Omega:=\left\{X\in \{0,\ldots, \Delta(G)\}^E\ \big|\ \sum_{v\in E} \min(X_v,\Delta(G)-X_v)=R\right\}.
\end{equation*}
Then if $\Gamma\in \OMCut(x,b)$ satisfies $R_\Gamma(E)=R$ then $P_\Gamma(E)\in\Omega$. Hence
\begin{equation*}
\begin{split}
\big|\big\{N_\Gamma(E)&\ \big|\ \Gamma\in\OMCut(x,B)\text{ satisfies }R_\Gamma(E)=R\big\}\big|\le\\
&\le \sum_{X\in\Omega} \prod_{v\in E} \binom{\Delta(G)}{X_v}\le \sum_{X\in\Omega} \prod_{v\in E} (2d)^{\min(X_v,\Delta(G)-X_v)}=\\
&= (2d)^R|\Omega|\le (2d)^R(2d+1)^M\le (3d)^{M+R}.\qedhere
\end{split}
\end{equation*}
\end{proof}

We are finally ready for
\begin{proof}[Proof of Theorem~\ref{interior_approximation_theorem}]
Fix $L\in\N$ and define
\begin{equation*}
\OMCut(x,b,L):=\big\{\Gamma\in\OMCut(x,b)\ \big|\ |\Gamma|=L\big\}.
\end{equation*}
By Proposition~\ref{trivial_gamma_prop}, if $\OMCut(x,b,L)\neq\emptyset$ we must either have $L=\Delta(G)$ in which case $|\OMCut(x,b,L)|=2$ or $L\ge\frac{\Delta(G)^2}{2}$. The theorem follows simply when $L=\Delta(G)$ by taking $\mathcal{E}:=\{E_1(\Gamma)\ |\ \Gamma\in\OMCut(x,b,\Delta(G))\}$. Thus we assume henceforth that $L\ge\frac{\Delta(G)^2}{2}$. We note that then $\OMCut(x,b,L)$ consists only of non-trivial cutsets.

Define a function $S$ on $\OMCut(x,b,L)$ by
\begin{equation*}
S(\Gamma) := (E_0^{\operatorname{t}}, E_1^{\operatorname{t}}, N_\Gamma(E_0^{\operatorname{t}}), N_\Gamma(E_1^{\operatorname{t}}))
\end{equation*}
where $E_0^{\operatorname{t}}, E_1^{\operatorname{t}}$ are some sets satisfying the requirements of Proposition~\ref{dom_set_prop2} (arbitrarily chosen from the possible sets) and $N_\Gamma(E_0^{\operatorname{t}}), N_\Gamma(E_1^{\operatorname{t}})$ are defined after Proposition~\ref{dom_set_prop2}.
We shall use the notation $E_{0,\Gamma}^{\operatorname{t}}, E_{1,\Gamma}^{\operatorname{t}}, N_\Gamma(E_{0,\Gamma}^{\operatorname{t}}), N_\Gamma(E_{1,\Gamma}^{\operatorname{t}})$ for the components of $S(\Gamma)$. We define $\mathcal{E}$ to be the family of sets $E$ obtained by running the algorithm appearing before Proposition~\ref{E_alg_prop} on each vector in $S(\OMCut(x,b,L))$. Proposition~\ref{E_alg_prop} ensures that the $\mathcal{E}$ thus defined satisfies the requirements of the theorem. Since $|\mathcal{E}|\le |S(\OMCut(x,b,L))|$, the rest of the proof is devoted to bounding $|S(\OMCut(x,b,L))|$.

We start by partitioning $\OMCut(x,b,L)$ into \emph{types}. We say that $\Gamma\in\OMCut(x,b,L)$ has type $\gamma$, where $\gamma:=(k,(M_i)_{i=1}^k,(M_i^{\operatorname{t}})_{i=1}^k, (R_i^{\operatorname{t}})_{i=1}^k)$ for integers $k,(M_i)_{i=1}^k,(M_i^{\operatorname{t}})_{i=1}^k, (R_i^{\operatorname{t}})_{i=1}^k$, if $E_0(\Gamma)\cup E_1(\Gamma)$ has exactly $k$ $\GC$-connected components $F_1,\ldots, F_k$ (ordered in some predetermined manner) and for each $1\le i\le k$ we have $|F_i|=M_i$, $|F_i\cap (E_{0,\Gamma}^{\operatorname{t}}\cup E_{1,\Gamma}^{\operatorname{t}})|=M_i^{\operatorname{t}}$ and $R_\Gamma(F_i\cap (E_{0,\Gamma}^{\operatorname{t}}\cup E_{1,\Gamma}^{\operatorname{t}}))=R_i^{\operatorname{t}}$. Let $\TT$ be the set of possible types for $\Gamma\in\OMCut(x,b,L)$ and for $\gamma\in\TT$, denote $\OMCut(x,b,L,\gamma):=\{\Gamma\in\OMCut(x,b,L)\ |\ \text{$\Gamma$ has type $\gamma$}\}$. The following sequence of claims proves the theorem (it follows directly from claim 5).

{\bf Claim 1:} For any $\gamma=(k,(M_i)_{i=1}^k,(M_i^{\operatorname{t}})_{i=1}^k, (R_i^{\operatorname{t}})_{i=1}^k)\in\TT$ and any $1\le i\le k$ we have $R_i^{\operatorname{t}}\ge M_i^{\operatorname{t}}\ge 1$.

Fix $\gamma=(k,(M_i)_{i=1}^k,(M_i^{\operatorname{t}})_{i=1}^k, (R_i^{\operatorname{t}})_{i=1}^k)\in\TT$, $\Gamma\in\OMCut(x,b,L,\gamma)$ and $1\le i\le k$. Part (a) of Lemma~\ref{8_connect_lemma} implies that $M_i^{\operatorname{t}}\ge 1$. Since $E_{j,\Gamma}^{\operatorname{t}}\subseteq E_j(\Gamma)$ for $j\in\{0,1\}$ by Proposition~\ref{dom_set_prop2}, we have $P_\Gamma(v)\ge 1$ for all $v\in E_{0,\Gamma}^{\operatorname{t}}\cup E_{1,\Gamma}^{\operatorname{t}}$. Hence, part 2 of Proposition~\ref{trivial_gamma_prop} implies that $R_i^{\operatorname{t}}\ge M_i^{\operatorname{t}}$.

{\bf Claim 2:} $|\TT|\le 2L^3\exp\left(\frac{C\log d}{d^{3/2}}L\right)$ for some $C>0$.

For every $\gamma=(k,(M_i)_{i=1}^k,(M_i^{\operatorname{t}})_{i=1}^k, (R_i^{\operatorname{t}})_{i=1}^k)\in\TT$ and $\Gamma\in\OMCut(x,b,L,\gamma)$ we obtain using Claim~1,
\begin{align}
\sum_{i=1}^k M_i&\le L,\label{M_i_inequality}\\
\sum_{i=1}^k M_i^{\operatorname{t}}&\le \sum_{i=1}^k R_i^{\operatorname{t}}= R_\Gamma(E_{0,\Gamma}^{\operatorname{t}}\cup E_{1,\Gamma}^{\operatorname{t}})\le \frac{C'\log d}{d^{3/2}}L,\label{M_i_t_inequality}\\
\sum_{i=1}^k R_i^{\operatorname{t}}&=R_\Gamma(E_{0,\Gamma}^{\operatorname{t}}\cup E_{1,\Gamma}^{\operatorname{t}})\le \frac{C'\log d}{d^{3/2}}L\label{R_i_t_inequality}
\end{align}
for some $C'>0$, where we used part (a) of Proposition~\ref{dom_set_prop2} to bound $R_\Gamma(E_{0,\Gamma}^{\operatorname{t}}\cup E_{1,\Gamma}^{\operatorname{t}})$. These inequalities imply that the number of $\gamma\in\TT$ having $k=1$ is at most $L^3$ (for $d$ sufficiently large). Next, we note that if $\gamma$ has $k\ge2$ then $M_i\ge\alpha$ for all $1\le i\le k$ by Proposition~\ref{full_proj_prop}. We also note that for all $\gamma$ and $1\le i\le k$, $R_i^{\operatorname{t}}\ge M_i^{\operatorname{t}}\ge 1$ by Claim~1. Hence, applying Proposition~\ref{L_decomp_prop} with $s_1=\alpha, s_2=\alpha+1$ to \eqref{M_i_inequality} and applying it again with $s_1=1, s_2=2$ to \eqref{M_i_t_inequality} and \eqref{R_i_t_inequality}, we see that the number of $\gamma\in\TT$ having $k\ge 2$ is at most $L^3\exp\left(\frac{6L\log(\alpha+1)}{\alpha+1}+\frac{C''L\log d}{d^{3/2}}\right)\le L^3\exp\left(\frac{2C''L\log d}{d^{3/2}}\right)$ for some $C''>0$ and $d$ sufficiently large. Together with the bound on the number of $\gamma\in\TT$ having $k=1$, this proves the claim.

{\bf Claim 3:} For every $M>0$, there exists $A\subseteq V[G]$ with $|A|\le 40Mn_d^{1(M\ge\alpha)}$ such that for every $\Gamma\in\OMCut(x,b,L)$ and every $\GC$-connected component $F^{\operatorname{c}}$ of $E_0(\Gamma)\cup E_1(\Gamma)$ with $|F^{\operatorname{c}}|\le M$, we have $F^{\operatorname{c}}\cap A\neq\emptyset$.

The claim follows directly from Proposition~\ref{point_in_cutset_comp_prop} by noting that each such $F^{\operatorname{c}}$ contains a $\GC$-connected component of $E_1(\Gamma)$.

{\bf Claim 4:} There exists $C>0$ such that for each $\gamma\in \TT$,
\begin{equation*}
|S(\OMCut(x,b,L,\gamma))|\le L n_d^{\left\lfloor\frac{L}{\alpha}\right\rfloor}\exp\left(\frac{C\log^2 d}{d^{3/2}}L\right).
\end{equation*}

Denote $\gamma:=(k, (M_i)_{i=1}^k, (M_i^{\operatorname{t}})_{i=1}^k, (R_i^{\operatorname{t}})_{i=1}^k)$ and for $1\le i\le k$, let $A_i$ be the set of Claim~3 corresponding to $M=M_i$. For $p:=(E_0^{\operatorname{t}},E_1^{\operatorname{t}},N_0,N_1)\in S(\OMCut(x,b,L,\gamma))$ we pick an arbitrary $\Gamma(p)\in \OMCut(x,b,L,\gamma)$ such that $S(\Gamma(p))=p$. Let $F_1(p),\ldots, F_k(p)$ be the $\GC$-connected components (ordered in the same predetermined manner as before) of $E_0(\Gamma(p))\cup E_1(\Gamma(p))$. The vector $p$ is uniquely described by specifying the following for each $1\le i\le k$:
\begin{enumerate}
\item A point $v_i\in A_i\cap F_i(p)$.
\item The set $F_i^{\operatorname{t}}:=F_i(p)\cap (E_0^{\operatorname{t}}\cup E_1^{\operatorname{t}})$ (which has $|F_i^{\operatorname{t}}|=M_i^{\operatorname{t}}$ and $R_\Gamma(F_i^{\operatorname{t}})=R_i^{\operatorname{t}}$).
\item For each $v\in F_i^{\operatorname{t}}$, whether it is in $E_0^{\operatorname{t}}$ or in $E_1^{\operatorname{t}}$.
\item The set $N_{\Gamma(p)}(F_i^{\operatorname{t}})$.
\end{enumerate}
Hence we may bound $|S(\OMCut(x,b,L,\gamma))|$ by bounding the number of possibilities for each item of the above list, given its predecessors. For fixed $1\le i\le k$, we have at most $|A_i|\le 40M_i n_d^{1(M_i\ge\alpha)}$ possibilities for the first item. By part (b) of Lemma~\ref{8_connect_lemma} and Proposition~\ref{span_tree_prop}, we have at most $(2d)^{18M_i^{\operatorname{t}}}$ possibilities for the second item (given the point $v_i$). We have at most $2^{M_i^{\operatorname{t}}}$ possibilities for the third item. By Lemma~\ref{n_gamma_count_lemma}, we have at most $(3d)^{M_i^{\operatorname{t}}+R_i^{\operatorname{t}}}$ possibilities for the fourth item. Thus, for a given $1\le i\le k$, we have at most
\begin{equation*}
40M_i n_d^{1(M_i\ge\alpha)}(2d)^{18M_i^{\operatorname{t}}}2^{M_i^{\operatorname{t}}}(3d)^{M_i^{\operatorname{t}}+R_i^{\operatorname{t}}}\le M_i n_d^{\left\lfloor\frac{M_i}{\alpha}\right\rfloor}\exp(CR_i^{\operatorname{t}}\log d)
\end{equation*}
possibilities for the above list for some $C>0$, where we used that $R_i^{\operatorname{t}}\ge M_i^{\operatorname{t}}\ge 1$ by Claim~1. Hence, multiplying over all $i$, denoting $R^{\operatorname{t}}:=\sum_{i=1}^k R_i^{\operatorname{t}}=R_{\Gamma(p)}(E_0^{\operatorname{t}}\cup E_1^{\operatorname{t}})$ and noting that $R^{\operatorname{t}}\le \frac{C'\log d}{d^{3/2}}L$ for some $C'>0$ by Proposition~\ref{dom_set_prop2}, we find
\begin{equation*}
\begin{split}
|S(&\OMCut(x,b,L,\gamma))|\le \prod_{i=1}^k M_i n_d^{{\left\lfloor\frac{M_i}{\alpha}\right\rfloor}}\exp(CR_i^{\operatorname{t}}\log d)\le\\
&\le \left(\prod_{i=1}^k M_i\right) n_d^{\left\lfloor\frac{M}{\alpha}\right\rfloor}\exp(CR^{\operatorname{t}}\log d)\le \left(\prod_{i=1}^k M_i\right) n_d^{\left\lfloor\frac{L}{\alpha}\right\rfloor}\exp\left(\frac{C''\log^2 d}{d^{3/2}}L\right)
\end{split}
\end{equation*}
for some $C''>0$. Finally, noting that $\sum_{i=1}^k M_i\le L$ and that if $k\ge 2$ then by Proposition~\ref{full_proj_prop}, $M_i\ge\alpha\ge 2^{d-1}$ for all $1\le i\le k$, we deduce that
\begin{equation*}
\left(\prod_{i=1}^k M_i\right)\le L\left(\prod_{i=2}^k M_i\right)\le L\exp\left(\frac{\tilde{C}\log^2 d}{d^{3/2}}\sum_{i=2}^k M_i\right)\le L\exp\left(\frac{\tilde{C}\log^2 d}{d^{3/2}}L\right)
\end{equation*}
for some $\tilde{C}>0$ and sufficiently large $d$, from which the claim follows.

{\bf Claim 5:} There exists $C>0$ such that
\begin{equation*}
|S(\OMCut(x,b,L))|\le
n_d^{\left\lfloor\frac{L}{\alpha}\right\rfloor}\exp\left(\frac{C\log^2
d}{d^{3/2}}L\right).
\end{equation*}

By Claims 2 and 4 we have
\begin{equation*}
\begin{split}
|S(\OMCut&(x,b,L))|= \sum_{\gamma\in\TT} |S(\OMCut(x,b,L,\gamma))|\le\\
&\le L n_d^{\left\lfloor\frac{L}{\alpha}\right\rfloor}\exp\left(\frac{C'\log^2 d}{d^{3/2}}L\right)|\TT|\le 2L^4 n_d^{\left\lfloor\frac{L}{\alpha}\right\rfloor}\exp\left(\frac{C''\log^2 d}{d^{3/2}}L\right)
\end{split}
\end{equation*}
for some $C', C''>0$. The claim follows since $L\ge\frac{\Delta(G)^2}{2}\ge\frac{d^2}{2}$.\qedhere
\end{proof}

\paragraph{Proof of Theorem~\ref{count_level_sets_thm}}
Throughout the proof, we fix $L\in\N$, $x\in V[G]$, legal boundary conditions $(B,\mu)$ with non-positive $\mu$ and $g\in\Hom(G,B,\mu)$ and we shall always assume that $d\ge d_0$ for some large constant $d_0$.
An important step in proving our theorem is to prove a slightly
stronger version of it for the case $B=\{b\}$ for some $b\in V[G]$.
We start with two definitions. As in the previous section, we set
$\OMCut(x,B,L):=\big\{\Gamma\in\OMCut(x,B)\ \big|\
|\Gamma|=L\big\}$. We also set, for $v\in V[G]$, $\triv_v$ to be the
set of edges incident to $v$ (so that $|\triv_v|=\Delta(G)$). We
then have
\begin{proposition}\label{count_level_sets_prop}
There exists $C>0$ such that if $B=\{b\}$ for some $b\in V[G]$ then
\begin{equation*}
|\PLS(g,x,B,L)\setminus \{\triv_x\}|\le
n_d^{\left\lfloor\frac{L}{\alpha}\right\rfloor}\exp\left(\frac{C\log^2
d}{d^{3/2}}L\right).
\end{equation*}
\end{proposition}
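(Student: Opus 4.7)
\textbf{Proof plan for Proposition~\ref{count_level_sets_prop}.}
The plan is to apply Theorem~\ref{interior_approximation_theorem} and then a deterministic reconstruction procedure from $g$ and an interior approximation. Invoke Theorem~\ref{interior_approximation_theorem} with the given $x$ and $b$ to obtain a family $\mathcal{E}$ of subsets of $V[G]$ with
\[
|\mathcal{E}| \le 2 n_d^{\lfloor L/\alpha\rfloor}\exp\!\left(\frac{C\log^2 d}{d^{3/2}}L\right),
\]
such that every $\Gamma\in\OMCut(x,b)$ with $|\Gamma|=L$ has an interior approximation in $\mathcal{E}$. I then define a map $\Phi:\mathcal{E}\to\OMCut(x,b)\cup\{*\}$ as follows. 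For $E\in\mathcal{E}$ set $W_E:=\{v\in V[G]\,:\,g(v)=1\}\cup E$; let $A_E$ be the connected component of $b$ in $V[G]\setminus W_E$ and $A_{E,x}^c$ the connected component of $x$ in $V[G]\setminus A_E$; let $\Phi(E)$ be the set of edges between $A_E$ and $A_{E,x}^c$, with $\Phi(E):=*$ whenever the construction is ill-defined or produces a set not in $\OMCut(x,b,L)$.

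The key claim is: for every $\Gamma\in \PLS(g,x,\{b\},L)\setminus\{\mathrm{Triv}_x\}$, if $E\in\mathcal{E}$ is an interior approximation to $\Gamma$, then $\Phi(E)=\Gamma$. Given this, the bound follows since
\[
|\PLS(g,x,\{b\},L)\setminus\{\mathrm{Triv}_x\}|\le|\Phi(\mathcal{E})|\le|\mathcal{E}|.
\]
To verify the claim, fix $f\in\Omega_{x,L}$ with $\Gamma=\LS(f,x,\{b\})$ and $g$ an $(x,\{b\})$-interior modification of $f$; let $A_f$ denote the connected component of $b$ in $\{v:f(v)\le 0\}$, so $\Gamma$ is the edge set between $A_f$ and the connected component of $x$ in $V[G]\setminus A_f$. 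I first check $E_1(\Gamma)\subseteq W_E$: non-exposed vertices lie in $E$ by the interior-approximation property, while for $v\in E_{1,e}(\Gamma)$ one has $g(v)=1$ by definition of interior modification. Next I check $A_f\cap W_E=\emptyset$: since $E\subseteq\comp(\Gamma,x)$ and $A_f\subseteq\comp(\Gamma,b)$, we have $E\cap A_f=\emptyset$; and for $v\in A_f$, the relation $v\notin\comp(\Gamma,x)$ gives $g(v)=f(v)\le 0$, so $v\notin\{g=1\}$.

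The main obstacle, and the heart of the proof, is establishing the equality $A_E=A_f$. The inclusion $A_f\subseteq A_E$ is immediate from the disjointness $A_f\cap W_E=\emptyset$ together with connectedness of $A_f$ and the fact that $b\in A_f$. For the reverse inclusion, take any path in $V[G]\setminus W_E$ starting at $b$; I argue that it stays inside $\{v:f(v)\le 0\}$, because exiting this set necessitates crossing a vertex with $f(v)=1$, i.e.\ a vertex of $E_1(\Gamma)$, and this is excluded by $E_1(\Gamma)\subseteq W_E$. Hence the path remains in $A_f$, giving $A_E\subseteq A_f$. With $A_E=A_f$ in hand, the outer boundaries coincide as well and $\Phi(E)=\Gamma$. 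The exclusion of $\mathrm{Triv}_x$ is needed precisely because in that case $E_1(\Gamma)=\{x\}$ is a single exposed vertex and $\comp(\Gamma,x)=\{x\}$, so the only admissible interior approximation is the empty set (which may also approximate non-trivial cutsets), and the reconstruction need no longer single out $\mathrm{Triv}_x$; this single excluded cutset anyway contributes nothing to the stated bound.
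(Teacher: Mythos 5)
Your approach is essentially the paper's: invoke Theorem~\ref{interior_approximation_theorem} and then show that $\Gamma$ can be reconstructed as a deterministic function of $g$ and an interior approximation $E$. Your definition of $A_E$ via $W_E=\{g=1\}\cup E$ is equivalent to the paper's ``connected component of $b$ in $\{v\in V[G]\setminus E \mid g(v)\le 0\}$,'' since a $g$-path from $b$ avoiding $\{g=1\}$ cannot reach values $\ge 1$. Two points need repair, however.

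First, your argument for $A_E\subseteq A_f$ misstates the key dichotomy. You claim a vertex $w$ with $f(w)=1$ adjacent to $A_f$ must lie in $E_1(\Gamma)$; this is false in general. Such a $w$ lies in $V[G]\setminus A_f$, which may have connected components other than $\comp(\Gamma,x)$, so $w$ need not be in $\comp(\Gamma,x)$ at all. The correct dichotomy (and the one the paper uses) is: either $w\in\comp(\Gamma,x)$, in which case the edge to $A_f$ forces $P_\Gamma(w)>0$ and hence $w\in E_1(\Gamma)\subseteq W_E$; or $w\notin\comp(\Gamma,x)$, in which case $g(w)=f(w)=1$ by the definition of interior modification, so $w\in\{g=1\}\subseteq W_E$. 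Both cases give $w\in W_E$, so your conclusion holds, but the justification you gave covers only the first case and asserts something false about the second.

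Second, you invoke $|\mathcal{E}|\le 2\,n_d^{\lfloor L/\alpha\rfloor}\exp\bigl(\tfrac{C\log^2 d}{d^{3/2}}L\bigr)$ and then assert ``the bound follows,'' but the proposition's right-hand side has no factor of $2$. You need to absorb the $2$ into the exponential, which requires $L$ bounded below. The paper handles this by splitting cases: if $L=\Delta(G)$ then $\OMCut(x,b,L)\subseteq\{\text{Triv}_x,\text{Triv}_b\}$ so $|\PLS\setminus\{\text{Triv}_x\}|\le 1$ and the bound is trivial; otherwise Proposition~\ref{trivial_gamma_prop} forces $L\ge\Delta(G)^2/2\ge d^2/2$ (or $\OMCut(x,b,L)=\emptyset$), and then $2\le\exp\bigl(\tfrac{(C'-C)\log^2 d}{d^{3/2}}L\bigr)$ for any $C'>C$ and $d$ large. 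Your proof omits this case split. Relatedly, your reason for excluding $\text{Triv}_x$ is not quite right: for $\Gamma=\text{Triv}_x$ the interior modification forces $g(x)=f(x)=1$ and $g=f$ everywhere, so the reconstruction actually does recover $\text{Triv}_x$ correctly; the exclusion is a convenience connected to how the proposition is later applied (in Lemma~\ref{PLS_gamma_estimate_lem}) and to the $L=\Delta(G)$ case, not because the reconstruction fails.
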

\begin{proof}
Fix $b\in V[G]$ and assume $B=\{b\}$. If $L=\Delta(G)$ then by
Proposition~\ref{trivial_gamma_prop}, $\OMCut(x,b,L)$ contains at
most two elements: $\triv_x$ and $\triv_b$. Since
$\PLS(g,x,B,L)\subseteq\OMCut(x,b,L)$, the proposition follows.

Assume now that $L\neq \Delta(G)$. Using Proposition~\ref{trivial_gamma_prop} again, we see that we may assume that $L\ge\frac{\Delta(G)^2}{2}$ since otherwise $\OMCut(x,b,L)=\emptyset$. Assume this and let $f\in\Omega_{x,L}$ be such that $g$ is a $(x,B)$-interior modification of $f$. Denote $\Gamma:=\LS(f,x,B)$. We claim that given any set $E\subseteq V[G]$ which is an interior approximation to $\Gamma$, we may recover $\Gamma$ as a function only of $g$ and $E$. Letting $\mathcal{E}$ be the family of Theorem~\ref{interior_approximation_theorem}, this implies that for some $C,C'>0$,
\begin{equation*}
|\PLS(g,x,B,L)|\le |\mathcal{E}|\le 2n_d^{\left\lfloor\frac{L}{\alpha}\right\rfloor}\exp\left(\frac{C\log^2 d}{d^{3/2}}L\right)\le n_d^{\left\lfloor\frac{L}{\alpha}\right\rfloor}\exp\left(\frac{C'\log^2 d}{d^{3/2}}L\right),
\end{equation*}
since $L\ge\frac{\Delta(G)^2}{2}$, proving the proposition. To see this claim, fix an interior approximation $E$ to $\Gamma$. Let $A$ be the connected component of $b$ in $\{v\in V[G]\ |\ f(v)\le 0\}$ and $A'$ be the connected component of $b$ in $\{v\in V[G]\setminus E\ |\ g(v)\le 0\}$. Since $\Gamma$ is, by definition, all edges between $A$ and the connected component of $x$ in $V[G]\setminus A$ and since $A'$ is determined solely from $g$ and $E$, it is sufficient to show that $A=A'$.
To see this, recall that $g(v)=f(v)$ for every $v\notin\comp(\Gamma,x)$ and $g(v)=f(v)=1$ for $v\in \Eonee(\Gamma)$. This implies $A'\supseteq A$ since $A\cap\comp(\Gamma,x)=\emptyset$ and $E\subseteq\comp(\Gamma,x)$. Next, note that by $\Gamma$'s definition, every $w\in V[G]\setminus A$ such that $w\adj{G} v$ for some $v\in A$ satisfies $f(w)=1$ and either $w\notin\comp(\Gamma,x)$, $w\in \Eonee(\Gamma)$ or $w\in E_1(\Gamma)\setminus \Eonee(\Gamma)$. In the first two cases, $g(w)=1$ implying $w\notin A'$ and in the third case, $w\notin A'$ since $E_1(\Gamma)\setminus \Eonee(\Gamma)\subseteq E$. Thus $A'\subseteq A$, as required.
\end{proof}

We proceed to prove the theorem for the case of general $B$. As in Section~\ref{count_cutsets_thm_proof}, we fix a total order $\prec$ on $V[G]$ and for $\Gamma\in\OMCut(x,B)$, define inductively $m(\Gamma)$ and a vector $(b^i)_{i=1}^{m(\Gamma)}\subseteq B$ as follows: $b^1$ is the $\prec$-smallest element of $B$. Assuming that $(b^1,\ldots, b^\ell)$ have already been defined, we set $m(\Gamma):=\ell$ if $B\subseteq\cup_{i=1}^\ell \comp(\Gamma,b^i)$ or otherwise set $b^{\ell+1}$ to be the $\prec$-smallest element of $B\setminus \cup_{i=1}^\ell\comp(\Gamma,b^i)$. For $1\le i\le m(\Gamma)$, we also set $\Gamma^i:=\subcut(\Gamma,b^i)$ and $L^i:=|\Gamma^i|$. Note that $\Gamma^i\in\OMCut(x,b^i)$ by the remark after the definition of $\OMCut$ in Section~\ref{preliminaries_sec}. In this section, we define the type of $\Gamma$ to be the vector $\gamma:=(m(\Gamma),(L^i)_{i=1}^{m(\Gamma)})$. As in the previous section, we set
\begin{equation*}
\begin{split}
&\TT:=\text{set of possible types for $\Gamma\in\OMCut(x,B,L)$},\\
&\OMCut(x,B,L,\gamma):=\{\Gamma\in\OMCut(x,B,L)\ \big|\ \text{$\Gamma$ has type $\gamma$}\} \quad \text{for $\gamma\in\TT$.}
\end{split}
\end{equation*}
We note that since for any $\Gamma\in\OMCut(x,B,L)$ with type $(m(\Gamma),(L^i)_{i=1}^{m(\Gamma)})$ we have $\sum_{i=1}^{m(\Gamma)} L^i = L$ and for each $i$, $L^i=\Delta(G)$ or $L^i\ge\frac{\Delta(G)^2}{2}$ by Proposition~\ref{trivial_gamma_prop}, it follows from Proposition~\ref{L_decomp_prop} with $s_1=\Delta(G)$ and $s_2=\left\lceil\frac{\Delta(G)^2}{2}\right\rceil$ that
\begin{equation}\label{type_bound}
|\TT|\le \exp\left(\frac{30\log d}{d^2}L\right)
\end{equation}
for sufficiently large $d$. We proceed to state and prove several lemmas from which the theorem will follow.
\begin{lemma}\label{local_transformation_hereditary_lem}
If $g$ is a $(x,B)$-interior modification of some $f\in\Omega_{x,L}$, then $g$ is also a $(x,b)$-interior modification of $f$ for every $b\in B$.
\end{lemma}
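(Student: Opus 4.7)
\emph{Proof proposal.} The plan is to directly verify the two defining properties of a $(x,\{b\})$-interior modification, relying on the elementary observation that enlarging the ``boundary set'' from $\{b\}$ to $B$ shrinks the interior component of $x$. Write $\Gamma := \LS(f,x,B)$ and $\Gamma_b := \LS(f,x,\{b\})$ for $b\in B$. Let $A$ be the union of connected components of $B$ in $\{v\in V[G] : f(v)\le 0\}$ and let $A_b$ be the connected component of $b$ in the same set, so $A_b\subseteq A$. Since $f\in\Omega_{x,L}$ forces $\Gamma\ne\emptyset$ we have $x\notin A$, hence $x\notin A_b$, which ensures $\Gamma_b\ne\emptyset$ and makes the notion of $(x,\{b\})$-interior modification meaningful. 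Also $g\in\Hom(G,B,\mu)\subseteq\Hom(G,\{b\},\mu|_{\{b\}})$, so the boundary condition in the one-point version is automatically satisfied.

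The key inclusion is $\comp(\Gamma,x)\subseteq\comp(\Gamma_b,x)$: the left side equals the connected component of $x$ in $V[G]\setminus A$, while the right side equals the connected component of $x$ in the strictly larger set $V[G]\setminus A_b$. From this, property (1) of the $(x,\{b\})$-interior modification---that $g(v)=f(v)$ whenever $v\notin\comp(\Gamma_b,x)$---is immediate, since such a $v$ is automatically outside $\comp(\Gamma,x)$ and we may invoke the $(x,B)$-interior modification hypothesis on $g$.

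For property (2), I would fix $v\in E_{1,e}(\Gamma_b)$ and split according to whether $v\in\comp(\Gamma,x)$. If $v\notin\comp(\Gamma,x)$, property (1) gives $g(v)=f(v)$, and $f(v)=1$ follows from the $\OMCut$ structure: $\Gamma_b\in\OMCut(x,\{b\})$ by Proposition~\ref{level_set_OMCut_prop}, so $v\in E_1(\Gamma_b)\subseteq V^{\odd}$, and $v$ is adjacent to some vertex of $A_b\subseteq\{f\le0\}$, which with the homomorphism property and the parity of $f$ forces $f(v)=1$. If $v\in\comp(\Gamma,x)$, I would show $v\in E_{1,e}(\Gamma)$, after which $g(v)=1$ by hypothesis. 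For this, note that for any neighbor $u$ of $v$ we have $\{u,v\}\in\Gamma_b$ iff $u\in A_b$, and $\{u,v\}\in\Gamma$ iff $u\in A$; since $A_b\subseteq A$, this yields $P_\Gamma(v)\ge P_{\Gamma_b}(v)\ge\Delta(G)-\sqrt{d}$, so $v$ is exposed also for $\Gamma$.

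The proof is essentially bookkeeping around the single containment $A_b\subseteq A$, and I do not anticipate any serious obstacle. The one point that deserves slightly careful attention is verifying that $f(v)=1$ in the first sub-case of property (2), which hinges on Proposition~\ref{level_set_OMCut_prop} and the parity condition built into a legal homomorphism boundary condition.
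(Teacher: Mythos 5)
Your proof is correct, and it proceeds more carefully at one point than the paper's own argument. The paper's proof opens by asserting $\Gamma_b\subseteq\Gamma$, and from this deduces $P_{\Gamma_b}(v)\le P_\Gamma(v)$ for every $v\in V[G]$ and that every $v\in E_{1,e}(\Gamma_b)$ lies in $\comp(\Gamma,x)$; but the cutset inclusion $\Gamma_b\subseteq\Gamma$ can actually fail when $B$ meets several components of $\{f\le 0\}$ (edges of $\Gamma_b$ leaving $A_b$ towards a region that reaches $x$ only by passing through $A\setminus A_b$ need not lie in $\Gamma$, so the asserted inequality $P_{\Gamma_b}(v)\le P_\Gamma(v)$ is not valid globally). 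What always holds, and what you establish directly from $A_b\subseteq A$, is the component containment $\comp(\Gamma,x)\subseteq\comp(\Gamma_b,x)$, which is all that property (1) requires. Your case split for property (2) is exactly the right device: for $v\notin\comp(\Gamma,x)$ you read off $g(v)=f(v)=1$ from property (1) together with the fact that $f\equiv 1$ on $E_1(\Gamma_b)$, and for $v\in\comp(\Gamma,x)$ you observe that every $\Gamma_b$-edge at $v$ joins $v$ to a vertex of $A_b\subseteq A$ and hence is also a $\Gamma$-edge, which yields $P_\Gamma(v)\ge P_{\Gamma_b}(v)$ locally at the single vertex where it is needed rather than everywhere. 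You reach the same conclusion with the same ingredients, but your argument is sound exactly where the paper's intermediate claim is overstated.
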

\begin{proof}
Let $f\in\Omega_{x,L}$ be such that $g$ is a $(x,B)$-interior modification of $f$ and let $b\in B$. Denoting $\Gamma:=\LS(f,x,B)$ and $\Gamma_b:=\LS(f,x,b)$ we have $\Gamma_b\subseteq\Gamma$ and hence $\comp(\Gamma_b,x)\supseteq\comp(\Gamma,x)$. Since $g(v)=f(v)$ for every $v\notin\comp(\Gamma,x)$, this holds in particular for every $v\notin\comp(\Gamma_b,x)$. Furthermore, for every $v\in V[G]$ we have $P_{\Gamma_b}(v)\le P_\Gamma(v)$. Hence, if $v\in \Eonee(\Gamma_b)$ then $P_\Gamma(v)\ge\Delta(G)-\sqrt{d}$. Such a $v$ must belong to $\comp(\Gamma,x)$ by Proposition~\ref{subcut_identical_prop} (one can also see this since $\Gamma_b\in\OMCut(x,b)$ and $v\in V^\odd$). Thus, $v\in \Eonee(\Gamma)$ implying $g(v)=f(v)=1$, as required.
\end{proof}

\begin{lemma}\label{b_determined_lem2}
There is a function satisfying that for every $\Gamma\in\OMCut(x,B)$, the function takes as input $1\le j<m(\Gamma)$, $(b^1,\ldots, b^j)$ and $(\subcut(\Gamma,b^i))_{i\le j}$ and returns $b^{j+1}$.
\end{lemma}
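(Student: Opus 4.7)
The plan is to exhibit the function explicitly using the inductive definition of $b^{j+1}$. Recall that $b^{j+1}$ is by construction the $\prec$-smallest element of $B\setminus\bigcup_{i=1}^j\comp(\Gamma,b^i)$. Since $B$ and $\prec$ are fixed throughout this section, it suffices to show that the sets $\comp(\Gamma,b^i)$ for $i\le j$ are already determined by the input data $(b^1,\ldots,b^j)$ and $(\subcut(\Gamma,b^i))_{i\le j}$; once these sets are in hand, one simply reads off $b^{j+1}$ from the above description.

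The key step is therefore to recover $\comp(\Gamma,b^i)$ from $b^i$ and $\subcut(\Gamma,b^i)$ alone. I claim that
\begin{equation*}
\comp(\Gamma,b^i)=\text{connected component of $b^i$ in the graph $(V[G],E[G]\setminus\subcut(\Gamma,b^i))$}.
\end{equation*}
The inclusion $\subseteq$ is immediate, since $\subcut(\Gamma,b^i)\subseteq\Gamma$, so removing fewer edges can only enlarge the connected component of $b^i$. For the reverse inclusion, note that by definition $\subcut(\Gamma,b^i)$ consists of \emph{all} edges of $G$ with exactly one endpoint in $\comp(\Gamma,b^i)$; hence any path in $G$ starting at $b^i$ and never using an edge of $\subcut(\Gamma,b^i)$ cannot cross the boundary of $\comp(\Gamma,b^i)$, so it remains inside $\comp(\Gamma,b^i)$.

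Given this, the desired function is: first, for each $1\le i\le j$, compute $\comp(\Gamma,b^i)$ as the connected component of $b^i$ in $G\setminus\subcut(\Gamma,b^i)$ using the formula above; then return the $\prec$-smallest vertex of $B\setminus\bigcup_{i=1}^j\comp(\Gamma,b^i)$. The assumption $j<m(\Gamma)$ guarantees that this last set is non-empty, and by the inductive definition of $(b^\ell)$ the output equals $b^{j+1}$. There is no real obstacle here; the content of the lemma is just the observation that an $\OMCut$ subcut $\subcut(\Gamma,b^i)$, together with the distinguished vertex $b^i$, encodes the full component $\comp(\Gamma,b^i)$.
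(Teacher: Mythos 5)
Your proof is correct and takes the same approach as the paper's: recover each $\comp(\Gamma,b^i)$ from $b^i$ and $\subcut(\Gamma,b^i)$, then read off $b^{j+1}$ as the $\prec$-smallest vertex of $B\setminus\bigcup_{i\le j}\comp(\Gamma,b^i)$. You simply spell out in more detail why the subcut and its base point determine the component, a step the paper leaves as an assertion.
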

\begin{proof}
Knowing $(\subcut(\Gamma,b^i))_{i\le j}$ determines $\comp(\Gamma,b^i)$ for all $1\le i\le j$. By our definitions, $b^{j+1}$ is the $\prec$-smallest element of $B\setminus \cup_{i=1}^j\comp(\Gamma,b^i)$.
\end{proof}

\begin{lemma}\label{PLS_gamma_estimate_lem}
There exists $C>0$ such that for all $\gamma\in\TT$ we have
\begin{equation*}
|\PLS(g,x,B,L)\cap \OMCut(x,B,L,\gamma)|\le 2n_d^{\left\lfloor\frac{L}{\alpha}\right\rfloor}\exp\left(\frac{C\log^2 d}{d^{3/2}}L\right).
\end{equation*}
\end{lemma}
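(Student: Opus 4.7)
The plan is to bound the count by decomposing each admissible $\Gamma$ as $\Gamma = \sqcup_{i=1}^m \Gamma^i$ with $\Gamma^i := \subcut(\Gamma, b^i) \in \OMCut(x, b^i)$ of size $L^i$, and then describing $\Gamma$ via a list $(E^1, \ldots, E^m)$ of interior approximations supplied by Theorem~\ref{interior_approximation_theorem}. Disjointness of the decomposition follows from Proposition~\ref{subcut_identical_prop} together with the construction of the $b^i$'s (which guarantees the sets $\comp(\Gamma, b^i)$ are pairwise distinct), while the covering $\Gamma \subseteq \cup_i \Gamma^i$ follows from the minimality of $\Gamma$: each edge lies on a path from $x$ to $B$ crossing $\Gamma$ exactly once, placing its even endpoint in some $\comp(\Gamma, b)$ and hence the edge in the corresponding $\Gamma^j$.

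Writing $A_{b^i}$ for the connected component of $b^i$ in $\{v \in V[G]\ :\ f(v) \le 0\}$ (where $f$ is the function producing $\Gamma$), I would recover $\Gamma$ from $(E^1, \ldots, E^m)$ and $g$ by computing at each step $A_{b^i}' := $ the connected component of $b^i$ in $\{v \in V[G] \setminus E^i : g(v) \le 0\}$ and arguing $A_{b^i}' = A_{b^i}$. The verification adapts the argument in Proposition~\ref{count_level_sets_prop}: each neighbor $w$ of $A_{b^i}$ outside $A_{b^i}$ must have $f(w) = 1$, and is either excluded by $E^i$ (when $w \in E_1(\Gamma^i) \setminus E_{1,e}(\Gamma^i)$), or forced to satisfy $g(w) = 1$ (when $w \in E_{1,e}(\Gamma^i) \subseteq E_{1,e}(\Gamma)$, via the interior-modification definition of $g$), or has $g(w) = f(w) = 1$ (when $w$ lies outside $\comp(\Gamma, x)$), in each case blocking $A_{b^i}'$ from extending past $A_{b^i}$. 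The $b^i$'s are reconstructed sequentially via Lemma~\ref{b_determined_lem2}, which applies with $A_{b^j}$ in place of $\comp(\Gamma, b^j)$ because $B \cap \comp(\Gamma, b^j) = B \cap A_{b^j}$. Finally, setting $A := \cup_j A_{b^j}$ and letting $A^c_x$ be the connected component of $x$ in $V[G] \setminus A$, one recovers $\Gamma^i$ as the set of edges between $A_{b^i}$ and $A^c_x$.

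The main difficulty is controlling the factor $2^m$ in the naive product $\prod_i |\mathcal{E}_i|$, which cannot be absorbed into $\exp(C \log^2 d / d^{3/2}\,L)$ in general. I would resolve this by splitting indices by size using Proposition~\ref{trivial_gamma_prop}: a trivial piece ($L^i = \Delta(G)$) satisfies $\Gamma^i \in \{\text{Triv}_x, \text{Triv}_{b^i}\}$, while a non-trivial piece has $L^i \ge \Delta(G)^2/2$. Disjointness of the $\Gamma^i$'s permits at most one trivial $\Gamma^i$ to equal $\text{Triv}_x$, so the number of joint configurations of all trivial pieces is at most $m + 1$, a polynomial factor absorbable into the exponential. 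For non-trivial pieces, their count $m_2 \le 4L/d^2$ makes $2^{m_2} \le \exp(CL/d^2)$ comfortably absorbed, since $1/d^2 \ll \log^2 d / d^{3/2}$ for large $d$. Combining with $\sum_i L^i = L$ and $\sum_i \lfloor L^i/\alpha \rfloor \le \lfloor L/\alpha \rfloor$ then yields the stated bound $2\, n_d^{\lfloor L/\alpha \rfloor} \exp(C \log^2 d / d^{3/2}\,L)$.
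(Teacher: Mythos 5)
Your plan mirrors the paper's: decompose $\Gamma$ into the subcuts $\Gamma^i = \subcut(\Gamma, b^i)$, reconstruct the $b^i$'s sequentially via Lemma~\ref{b_determined_lem2}, and count each piece using the interior-approximation machinery. The disjointness and covering observations are correct, and your inlined recovery argument (in particular the containment $E_{1,e}(\Gamma^i) \subseteq E_{1,e}(\Gamma)$, which is essentially Lemma~\ref{local_transformation_hereditary_lem}) is the right way to see that $g$ together with $E^i$ determines $\Gamma^i$.

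There is, however, a genuine gap in the absorption step. You claim the factor $m+1$ from the trivial pieces is ``a polynomial factor absorbable into the exponential.'' It is not, in the regime that matters. Consider the case where all $m$ pieces are trivial, so $L = m\Delta(G)$: then $\lfloor L/\alpha\rfloor = 0$ for large $d$, and the exponent $C\log^2 d \cdot L/d^{3/2} \le 2Cm\log^2 d/\sqrt{d}$ tends to $0$ as $d\to\infty$ with $m$ fixed. The target bound is therefore $2\exp(o(1))$, which tends to $2$, whereas your factor $m+1$ is already $3$ once $m\ge 2$. No choice of the constant $C$ rescues this, since the exponent vanishes.

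The missing observation — which is exactly how the paper avoids the problem — is that if $\subcut(\Gamma, b^i) = \text{Triv}_x$, then the edges leaving $\comp(\Gamma, b^i)$ are precisely those incident to $x$, so $\comp(\Gamma, b^i) = V[G]\setminus\{x\} \supseteq B$ and hence $m(\Gamma) = 1$. Thus for $m\ge 2$ no trivial piece can be $\text{Triv}_x$ and the joint configuration count for trivial pieces is $1$, while for $m=1$ it is at most $2$; the correct uniform bound is $2$, not $m+1$. The paper packages this cleanly by invoking Proposition~\ref{count_level_sets_prop}, whose bound $|\PLS(g,x,b,L')\setminus\{\text{Triv}_x\}|\le n_d^{\lfloor L'/\alpha\rfloor}\exp(C\log^2 d\,L'/d^{3/2})$ has no leading factor of $2$ (the trivial-$L'$ case contributes at most $1$ there), and treating the $\text{Triv}_x$-containing case as a separate, trivial two-element case at the start of the proof. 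Your argument needs this refinement to close.
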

\begin{proof}
Fix $\gamma:=(k,(L^i)_{i=1}^k)\in\TT$. We start by assuming that
some $\Gamma\in\OMCut(x,B,L,\gamma)$ has $\triv_x\subseteq \Gamma$.
It follows that $\Gamma=\triv_x$ by minimality of $\Gamma$, $k=1$
and $L=\Delta(G)$. Let also $b\in B$. Since for any
$\Gamma\in\OMCut(x,B,L,\gamma)$ we have $\subcut(\Gamma,b)\subseteq
\Gamma$ and $\subcut(\Gamma,b)\in\OMCut(x,b)$ (see
Proposition~\ref{subcut_identical_prop} and remark after the
definition of $\OMCut$ in Section~\ref{preliminaries_sec}), it
follows from Proposition~\ref{trivial_gamma_prop} that
$\OMCut(x,B,L,\gamma)$ can contain at most two elements: $\triv_x$
and $\triv_b$. Thus the lemma follows in this case.

We assume henceforth that no $\Gamma\in\OMCut(x,B,L,\gamma)$ has
$\triv_x\subseteq \Gamma$. We note the following facts: A $\Gamma\in
\PLS(g,x,B,L)\cap \OMCut(x,B,L,\gamma)$ is uniquely described by
specifying $(\subcut(\Gamma,b^i))_{i\le k}$. By definition, $b^1$ is
the $\prec$-smallest element of $B$ and by
Lemma~\ref{b_determined_lem2}, for each $1\le j< k$, $b^{j+1}$ is
determined as a function of $(b^i)_{i\le j}$ and
$(\subcut(\Gamma,b^i))_{i\le j}$. By
Lemma~\ref{local_transformation_hereditary_lem} and
Proposition~\ref{count_level_sets_prop}, for each $1\le i\le k$, the
number of possibilities for $\subcut(\Gamma,b^i)$, other than
$\triv_x$, given $g$, $b^i$ and $L^i$ is at most
\begin{equation*}
n_d^{\left\lfloor\frac{L^i}{\alpha}\right\rfloor}\exp\left(\frac{C\log^2 d}{d^{3/2}}L^i\right)
\end{equation*}
for some $C>0$. Putting these facts together, we see that
\begin{equation*}
\begin{split}
|\PLS(g,x,B,L)\cap \OMCut(x,B,L,\gamma)|&\le \prod_{i=1}^k n_d^{\left\lfloor\frac{L^i}{\alpha}\right\rfloor}\exp\left(\frac{C\log^2 d}{d^{3/2}}L^i\right)\le\\
&\le n_d^{\left\lfloor\frac{L}{\alpha}\right\rfloor}\exp\left(\frac{C\log^2 d}{d^{3/2}}L\right)
\end{split}
\end{equation*}
as required.
\end{proof}

\begin{proof}[Proof of Theorem~\ref{count_level_sets_thm}]
By Lemma~\ref{PLS_gamma_estimate_lem} and \eqref{type_bound}, we have
\begin{equation*}
\begin{split}
|\PLS(g,x,B,L)|&=\sum_{\gamma\in\TT} |\PLS(g,x,B,L)\cap\OMCut(x,B,L,\gamma)|\le\\
&\le 2n_d^{\left\lfloor\frac{L}{\alpha}\right\rfloor}\exp\left(\frac{C\log^2 d}{d^{3/2}}L\right)|\TT|\le 2n_d^{\left\lfloor\frac{L}{\alpha}\right\rfloor}\exp\left(\frac{C'\log^2 d}{d^{3/2}}L\right)
\end{split}
\end{equation*}
for some $C'>0$, as required.
\end{proof}

\subsection{Proof of Theorem~\ref{transformation_thm}}\label{proof_of_transform_theorem_sec}
In this section we prove the first part of Theorem~\ref{transformation_thm} for the transformation $T$ of Section~\ref{T_def_section}. The second part was proved in Section~\ref{T_def_section}.

Fix $d$ large enough for the arguments of the section, a non-linear torus $G$, legal boundary conditions $(B,\mu)$ with non-positive $\mu$, $x\in V[G], L\in\N$ and $\emptyset\neq \Omega\subseteq\Omega_{x,L}$. For $f\in \Omega$, introduce the notation $E_{1,1}(f)$ and $\Eonee(f)$ for $E_{1,1}(\LS(f,x,B))$ and $\Eonee(\LS(f,x,B))$ respectively.
Recall the role of $\lambda$ from $\eqref{T_def}$. Denote $M_\lambda:=\left\lceil\left(1-\frac{\lambda}{\log^2 d}\right)\frac{L}{\Delta(G)}\right\rceil$ and for $0\le k\le m <M_\lambda$, let
\begin{equation*}
\begin{split}
\Omega_{x,L,1}&:=\left\{f\in\Omega_{x,L}\ |\ |\Eonee(f)|\ge M_\lambda\right\},\\
\Omega_{x,L,2,m,k}&:=\{f\in \Omega_{x,L}\ |\ |\Eonee(f)|=m,\ |E_{1,1}(f)\cap \Eonee(f)|=k\}.
\end{split}
\end{equation*}
Note that $\Omega_{x,L}=\Omega_{x,L,1}\cup\left(\cup_{0\le k\le m<M_\lambda} \Omega_{x,L,2,m,k}\right)$. From \eqref{T_def} and \eqref{T_image_size_eq}, we have
\begin{align}
T(f) &= T_1(f)\text{ for $f\in\Omega_{x,L,1}$},\\
T(f) &= T_2(f)\text{ for $f\in\Omega_{x,L,2,m,k}$},\\
|T(f)| &= \begin{cases}2^{\frac{L}{\Delta(G)}}&\text{if }f\in\Omega_{x,L,1}\\2^{\frac{L}{\Delta(G)}-k}&\text{if }f\in\Omega_{x,L,2,m,k}\label{T_size_on_Omega}\end{cases}.
\end{align}
We note that
\begin{equation}\label{Omega_decomp}
\begin{split}
\frac{|\Omega|}{|T(\Omega)|}&\le \frac{|\Omega\cap \Omega_{x,L,1}|}{|T_1(\Omega\cap\Omega_{x,L,1})|} + \sum_{0\le k\le m<M_\lambda}\frac{|\Omega\cap \Omega_{x,L,2,m,k}|}{|T_2(\Omega\cap\Omega_{x,L,2,m,k})|}
\end{split}
\end{equation}
where, as before, we interpret $T(\Omega):=\cup_{f\in\Omega} T(f)$ and $\frac{0}{0}=0$. We also have the simple
\begin{lemma}\label{transformation_inverse_lemma}
Let $N,M>0$, $X,Y$ be finite sets and $R:X\to\PP(Y)$ a function satisfying for each $f\in X$ and $g\in Y$,
\begin{equation*}
|R(f)|\ge N\text{ and }|\{h\in X\ |\ g\in R(h)\}|\le M.
\end{equation*}
Then for each $\emptyset\neq X'\subseteq X$ we have $\frac{|X'|}{|\cup_{f\in X'}R(f)|}\le \frac{M}{N}$.
\end{lemma}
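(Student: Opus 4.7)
The plan is to prove this by a straightforward double-counting of incidences, which is the standard technique whenever one has a set-valued map with lower bounds on the out-degree and upper bounds on the in-degree. Set $Y' := \bigcup_{f \in X'} R(f)$; the goal is to bound $|X'|/|Y'|$.

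First I would count the number of incidence pairs $I := |\{(f,g) \in X' \times Y' : g \in R(f)\}|$ in two ways. Summing over $f$ first, the hypothesis $|R(f)| \ge N$ for every $f \in X$ (and in particular for every $f \in X'$) gives the lower bound $I = \sum_{f \in X'} |R(f)| \ge N |X'|$. Next, summing over $g$ first, note that for every $g \in Y'$ the set $\{h \in X' : g \in R(h)\}$ is a subset of $\{h \in X : g \in R(h)\}$, which by hypothesis has at most $M$ elements; hence $I = \sum_{g \in Y'} |\{h \in X' : g \in R(h)\}| \le M |Y'|$. Combining the two estimates yields $N|X'| \le M|Y'|$, so $|X'|/|Y'| \le M/N$, as required.

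There is no real obstacle here: the lemma is purely combinatorial, the hypotheses are applied directly with no need for any structural input from the torus, the cutset theorems, or the transformation $T$. The only minor thing to keep track of is that $Y'$ may be a strict subset of $Y$ so we must apply the in-degree hypothesis in the form stated (over all of $X$), which is exactly how it is given.
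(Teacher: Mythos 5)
Your proof is correct and is just the fully written-out version of the double-counting argument the paper dismisses in one line as ``straightforward''; both arrive at $N|X'|\le M|\bigcup_{f\in X'}R(f)|$ and divide. No discrepancy to report.
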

\begin{proof}
It is straightforward that $|\cup_{f\in X'}R(f)|\ge \frac{N}{M}|X'|$, implying the lemma.
\end{proof}
Aiming to use this lemma to estimate the RHS of \eqref{Omega_decomp}, we will show
\begin{proposition}\label{Omega_x_L_1_bound_prop}
For $\lambda\ge \frac{\sqrt d \log^2 d}{\Delta(G)}$ and $g\in T_1(\Omega_{x,L,1})$, we have $|\{f\in \Omega_{x,L,1}\ |\ g\in T_1(f)\}|\le \left(1+\lambda L\right)^2 \exp\left(\frac{C\lambda L}{d}\right)$ for some $C>0$.
\end{proposition}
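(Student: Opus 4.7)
The plan is to bound the preimage count by the number of possible level sets $\Gamma=\LS(f,x,B)$ of $f$, and then enumerate those cutsets via Theorem~\ref{count_cutsets_thm}, exploiting the hypothesis $\lambda \ge \sqrt d \log^2 d/\Delta(G)$ to force both the roughness parameter $R_\Gamma$ and the interior-boundary size $|E_1(\Gamma)|$ to be small.

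\textbf{Step 1 (Reduction to a cutset count).} Lemma~\ref{flip_possible_lemma} combined with \eqref{Gamma_boundary_values} gives $S(f)(v) = f(v+e_1) - 1 = -1$ for every $v \in E_{1,1}(f)$, while off $E_{1,1}(f)$ the definition of $T_1$ forces $S(f) = g$. Hence, once $\Gamma := \LS(f,x,B)$ is known, the sets $E_{1,1}(f)$ and $\CC_1$ are determined, so $S(f)$ is recovered from $g$ by resetting its values on $E_{1,1}(f)$ to $-1$, and $f$ is then recovered by inverting the shift $S$ on $\CC_1$. Therefore
\[
|\{f \in \Omega_{x,L,1}\ :\ g \in T_1(f)\}| \le |\{\Gamma \in \OMCut(x,B)\ :\ |\Gamma| = L,\ |E_{1,e}(\Gamma)| \ge M_\lambda\}|.
\]

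\textbf{Step 2 (Simultaneous bounds on $R$ and $M$).} For such a $\Gamma$, set $m = |E_{1,e}(\Gamma)|$, $M = |E_1(\Gamma)|$, $R = R_\Gamma(E_1(\Gamma))$. On exposed vertices one has $R_\Gamma(v) \le \sqrt d$; on the remaining vertices of $E_1(\Gamma)$ I use the cheap-but-crucial inequality $R_\Gamma(v) \le P_\Gamma(v)$. Combined with $\sum_{v \in E_1(\Gamma)} P_\Gamma(v) = L$ and $P_\Gamma(v) \ge \Delta(G) - \sqrt d$ on exposed vertices, this yields $R \le L - m(\Delta(G) - 2\sqrt d)$; a parallel count, using that each non-exposed vertex contributes at least one edge, gives $M \le L - m(\Delta(G) - \sqrt d - 1)$. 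Substituting $m \ge M_\lambda \ge (1 - \lambda/\log^2 d)L/\Delta(G)$ and using $\sqrt d/\Delta(G) \le \lambda/\log^2 d$ (exactly the hypothesis on $\lambda$) produces the key estimates
\[
\max(R,M) \le C_0\,\lambda L/\log^2 d.
\]

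\textbf{Step 3 (Enumeration).} By Theorem~\ref{count_cutsets_thm}, for each admissible pair $(M,R)$,
\[
|\OMCut(x,B,M,R)| \le n_d^{\lfloor M/\alpha\rfloor}\exp\!\left(\frac{C_1 \log^2 d}{d}\,R\right) \le \exp(C_2\lambda L/d),
\]
where the second inequality combines the bounds from Step~2 with the non-linearity estimate $\log n_d \le \alpha/(d\log^3 d)$: since $M \le C_0\lambda L/\log^2 d$, the factor $n_d^{\lfloor M/\alpha\rfloor}$ is at most $\exp(O(\lambda L/(d\log^5 d)))$, which is negligible compared to $\exp(C\lambda L/d)$ for large $d$. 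Summing over the at most $1 + C_0\lambda L/\log^2 d$ admissible values of each of $M$ and $R$ contributes a prefactor bounded by $(1+\lambda L)^2$ for large $d$, yielding the required estimate.

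\textbf{Main obstacle.} The heart of the argument is the tight simultaneous estimate $\max(R,M) \le C_0\lambda L/\log^2 d$ in Step~2, which rests on the inequality $R_\Gamma(v) \le P_\Gamma(v)$ for non-exposed vertices; the easier bound $R_\Gamma(v) \le d$ would inflate the final exponent by a factor of $d$ and ruin the argument. The same tight bound on $M$ is precisely what allows the torus non-linearity to suppress the $n_d^{\lfloor M/\alpha\rfloor}$ factor in the cutset count; without it, that factor would also spoil the target exponential rate.
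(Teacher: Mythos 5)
Your proposal is correct and follows essentially the same strategy as the paper's proof: reduce the preimage count to counting level sets $\Gamma$, bound the pair $(|E_1(\Gamma)|, R_\Gamma(E_1(\Gamma)))$ by $O(\lambda L/\log^2 d)$ using the hypothesis on $\lambda$ and the roughness forced by having many exposed vertices, and then invoke Theorem~\ref{count_cutsets_thm} together with the non-linearity of the torus. The only cosmetic difference is that the paper bounds $M(\Gamma)+R(\Gamma)$ in a single computation (via $\sum_i (1+\min(i,\Delta(G)-i))a_i(\Gamma)$), whereas you bound $M$ and $R$ separately, but the underlying inequalities are the same.
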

\begin{proposition}\label{Omega_x_L_2_bound_prop}
For $0\le k\le m<M_\lambda$ and $g\in T_2(\Omega_{x,L,2,m,k})$, we have $|\{f\in \Omega_{x,L,2,m,k}\ |\ g\in T_2(f)\}|\le 2^{m-k+1}\exp\left(\frac{CL}{d\log^3 d}\right)$ for some $C>0$.
\end{proposition}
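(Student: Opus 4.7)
Fix $\tilde g\in T_2(\Omega_{x,L,2,m,k})$ and set $\Omega_{\tilde g}:=\{f\in\Omega_{x,L,2,m,k}\ |\ \tilde g\in T_2(f)\}$. The strategy is to enumerate first over the possible level sets $\Gamma=\LS(f,x,B)$ using Theorem~\ref{count_level_sets_thm}, and then for each fixed $\Gamma$ to bound the number of $f\in\Omega_{\tilde g}$ with that level set by analyzing how much information is lost in the flip step \eqref{flip_T_def}. The bulk of the work is already carried out in Section~\ref{partial_edge_cutsets_section}; the remaining argument is a careful accounting of the degrees of freedom of the flip.

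\textbf{Step 1 (counting level sets).} Since $\tilde g\in T_2(f)$ is an $(x,B)$-interior modification of $f$ (as recalled just before Theorem~\ref{count_level_sets_thm}), any $f\in\Omega_{\tilde g}$ has $\LS(f,x,B)\in\PLS(\tilde g,x,B,L)$. Theorem~\ref{count_level_sets_thm} gives
\[
|\PLS(\tilde g,x,B,L)|\le 2\,n_d^{\lfloor L/\alpha\rfloor}\exp\!\left(\tfrac{C\log^2 d}{d^{3/2}}L\right).
\]
Because $G$ is non-linear, \eqref{non_linearity_cond} yields $n_d^{L/\alpha}\le \exp(L/(d\log^3 d))$, and for $d$ sufficiently large $C\log^2 d/d^{3/2}\le 1/(d\log^3 d)$, so the above is at most $2\exp(C_1 L/(d\log^3 d))$ for some $C_1>0$.

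\textbf{Step 2 (fixed level set).} Fix $\Gamma\in\PLS(\tilde g,x,B,L)$ and count the $f\in\Omega_{\tilde g}$ with $\LS(f,x,B)=\Gamma$. Observe that for such an $f$, the intermediate function $g\in T_1(f)$ producing $\tilde g$ determines $f$ completely once $\Gamma$ is known: on $V[G]\setminus\CC_1(\Gamma)$ we have $f=g$, while on $\CC_1(\Gamma)$ we invert the shift via $f(v+e_1)=g(v)+1$ for $v\in\CC_1\setminus E_{1,1}(\Gamma)$ (with $f=1$ on the ``leftmost'' $E_1$-vertices). Moreover $g$ and $\tilde g$ share the same zero set (flipping preserves zeros), so the components $R_v$ of the non-zero set for $v\in E_{1,e}(\Gamma)$ are already determined by $\tilde g$; the only ambiguity in recovering $g$ from $\tilde g$ lies in which of these \emph{relevant} components were sign-flipped.

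The crucial observation is that not all such flip choices affect the reconstructed $f$. For $v\in E_{1,1}(\Gamma)\cap E_{1,e}(\Gamma)$, all neighbours of $v$ lie in $V^{\even}$, hence outside $E_{1,1}\subseteq V^{\odd}$, so by Lemma~\ref{flip_possible_lemma} they all have $g$-value $0$; thus $R_v=\{v\}$. These $k$ singleton components are ``invisible'' to the reconstruction: flipping changes $g(v)$ for $v\in E_{1,1}$, and the shift-inversion formula $f(v+e_1)=g(v)+1$ does not apply here because $v+e_1\in E_0\not\subseteq\CC_1$. Every remaining relevant component, on the other hand, must contain at least one vertex of $E_{1,e}(\Gamma)\setminus E_{1,1}(\Gamma)$, so there are at most $m-k$ of them, and each such component does affect $f$ (since $w\in E_{1,e}\setminus E_{1,1}$ satisfies $w+e_1\in\CC_1$, so flipping $g(w)$ changes $f(w+e_1)$). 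Therefore the number of $f\in\Omega_{\tilde g}$ with $\LS(f,x,B)=\Gamma$ is at most $2^{m-k}$.

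\textbf{Combination.} Multiplying the two bounds gives $|\Omega_{\tilde g}|\le 2^{m-k+1}\exp(C_1 L/(d\log^3 d))$, as required. The main obstacle is Step~1, whose proof is the content of Theorem~\ref{count_level_sets_thm} and ultimately of the structure theorem for interior approximations (Theorem~\ref{interior_approximation_theorem}); the flip accounting in Step~2 is elementary but requires the parity observation (that $E_{1,1}\cap E_{1,e}$ yields only trivial components) to avoid an unnecessary factor of $2^k$.
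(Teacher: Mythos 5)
Your proof is correct and follows essentially the same route as the paper: both arguments bound the count by $|\PLS(g,x,B,L)|\cdot 2^{m-k}$, with Theorem~\ref{count_level_sets_thm} and non-linearity supplying the first factor and the parity observation (that $E_{1,1}\cap E_{1,e}$ yields only singleton components, which are invisible to the inverse shift) supplying the second. The only cosmetic difference is that the paper parameterizes by a sign vector $s\in\{-1,1\}^{E_{1,e}(\Gamma)\setminus E_{1,1}(\Gamma)}$ after normalizing the intermediate $h\in T_1(f)$ to equal $1$ on $E_{1,1}$, whereas you enumerate flip patterns of the relevant components directly; these are equivalent.
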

We remark that these propositions are the only place in our proof
where the non-linearity of $G$ is used. Let us first show how these
propositions can be used to prove the (first part of the) theorem
and then proceed to prove them. The propositions along with
\eqref{T_size_on_Omega}, \eqref{Omega_decomp} and
Lemma~\ref{transformation_inverse_lemma} imply that for $\frac{\sqrt
d \log^2 d}{\Delta(G)}\le \lambda\le 1$ we have
\begin{equation*}
\begin{split}
\frac{|\Omega|}{|T(\Omega)|}&\le \frac{\left(1+\lambda L\right)^2 \exp\left(\frac{C\lambda L}{d} \right)}{2^{\frac{L}{\Delta(G)}}} + \sum_{0\le k\le m<M_\lambda} \frac{2^{m-k+1}\exp\left(\frac{CL}{d\log^3 d}\right)}{2^{\frac{L}{\Delta(G)}-k}}\le\\
&\le 4L^2\left(\exp\left(\frac{C\lambda L}{d}\right) + \exp\left(\frac{CL}{d\log^3 d}\right)2^{M_\lambda-1}\right)2^{-\frac{L}{\Delta(G)}}\le\\
&\le 4L^2\left(\exp\left(\frac{C\lambda L}{d}\right)2^{-\frac{L}{\Delta(G)}} + \exp\left(\frac{CL}{d\log^3 d}\right)2^{-\frac{\lambda L}{\Delta(G)\log^2 d}}\right).
\end{split}
\end{equation*}
Hence if $\lambda$ is a small enough constant (independent of $d$) and $d$ is sufficiently large, we have
\begin{equation*}
\frac{|\Omega|}{|T(\Omega)|}\le 8L^2 \exp\left(-\frac{cL}{d\log^2 d}\right)\le d^3\exp\left(-\frac{c'L}{d\log^2 d}\right)
\end{equation*}
for some $c,c'>0$, proving the theorem.
\begin{proof}[Proof of Proposition~\ref{Omega_x_L_1_bound_prop}]
Fix $g\in T_1(\Omega_{x,L,1})$. We note that for any $\Gamma\in\OMCut(x,B)$ with $|\Gamma|=L$ there is at most one $f\in \Omega_{x,L,1}$ such that $\LS(f,x,B)=\Gamma$ and $g\in T_1(f)$. This follows from the fact that if $f$ satisfies these two properties then we may recover it from $g$ by performing the inverse of the shift transformation $\shift$, i.e.,
\begin{equation}\label{inverse_shift_eq}
f(v)=\begin{cases}g(v-e_1)+1 &\text{for }v\in \CC_1\\g(v) &\text{otherwise}\end{cases}
\end{equation}
where $\CC_1:=\comp(\Gamma,x)$.
Let us verify this claim. First note that $g$ may differ from $\shift(f)$ only on $E_{1,1}(\Gamma)=\{v\in \CC_1\ |\ v+e_1\notin \CC_1\}$ and the values of $g$ on these points are not used in \eqref{inverse_shift_eq}. Next, note that for all $v\notin \CC_1$ we have $g(v)=f(v)$ and for all $v\in \CC_1$ such that $v-e_1\in \CC_1$ we have $g(v-e_1) = f(v)-1$ by definition of $\shift$. Finally note that if $v\in \CC_1$ is such that $v-e_1\notin \CC_1$ then necessarily $f(v)=1$ and $f(v-e_1)=g(v-e_1)=0$ by definition of $\LS(f,x,B)$. These facts prove \eqref{inverse_shift_eq}.

We deduce that
\begin{equation}\label{Omega_x_L_1_bound}
|\{f\in \Omega_{x,L,1}\ |\ g\in T_1(f)\}|\le |\{\LS(f,x,B)\ |\ f\in\Omega_{x,L,1}\}|.
\end{equation}
This is a rough bound since the RHS is independent of $g$, but we will see that it will suffice for this proposition because of the irregularities in $\LS(f,x,B)$ for $f\in\Omega_{x,L,1}$. For $\Gamma\in \OMCut(x,B)$, recalling the definition of $R_\Gamma$ from Section~\ref{rough_bdry_counting_sec}, we denote $M(\Gamma):=|E_1(\Gamma)|$, $R(\Gamma):=R_\Gamma(E_1(\Gamma))=\sum_{v\in E_1(\Gamma)} \min(P_\Gamma(v),\Delta(G)-P_\Gamma(v))$ and for $1\le i\le\Delta(G)$, $a_i(\Gamma):=|\{v\in E_1(\Gamma)\ |\ P_\Gamma(v)=i\}|$. Let $O:=\{\Gamma\in\OMCut(x,B)\ |\ |\Gamma|=L, |\Eonee(\Gamma)|\ge (1-\frac{\lambda}{\log^2 d})\frac{L}{\Delta(G)}\}$. By definition of $\Omega_{x,L,1}$ we have
\begin{equation}\label{O_dominates}
\{\LS(f,x,B)\ |\ f\in\Omega_{x,L,1}\}\subseteq O.
\end{equation}
We continue by estimating $M(\Gamma)+R(\Gamma)$ for $\Gamma\in O$. Note that for $\Gamma\in O$, $\sum_{i=1}^{\Delta(G)} ia_i(\Gamma) = L$ and $\sum_{i=\lceil \Delta(G)-\sqrt{d}\rceil}^{\Delta(G)} a_i(\Gamma)\ge (1-\frac{\lambda}{\log^2 d})\frac{L}{\Delta(G)}$. Hence for $\Gamma\in O$,
\begin{equation*}
\begin{split}
M(\Gamma)&+R(\Gamma)=\sum_{i=1}^{\Delta(G)} (1+\min(i,\Delta(G)-i))a_i(\Gamma)\le\\
&\le 2\sum_{i=1}^{\lceil \Delta(G)-\sqrt{d}\rceil-1} ia_i(\Gamma) + \sum_{i=\lceil \Delta(G)-\sqrt{d}\rceil}^{\Delta(G)} (1+\Delta(G)-i)a_i(\Gamma)\le\\
&\le 2\left(L-\sum_{i=\lceil \Delta(G)-\sqrt{d}\rceil}^{\Delta(G)}ia_i\right)+(1+\sqrt{d})\frac{L}{\Delta(G)-\sqrt{d}}\le\\
&\le 2L\left(1-\left(1-\frac{\lambda}{\log^2 d}\right)\frac{(\Delta(G)-\sqrt{d})}{\Delta(G)}\right)+(1+\sqrt{d})\frac{L}{\Delta(G)-\sqrt{d}}\le\\
&\le \left(\frac{2\lambda }{\log^2 d} + \frac{6\sqrt d}{\Delta(G)}\right)L\le \frac{8\lambda L}{\log^2 d},
\end{split}
\end{equation*}
taking $\lambda\ge \frac{\sqrt d \log^2 d}{\Delta(G)}$ in the last step.

By Theorem~\ref{count_cutsets_thm} and using that $G$ is non-linear, if $M,R\ge 0$ satisfy $M+R\le \frac{8\lambda L}{\log^2 d}$ then
\begin{equation*}
\begin{split}
|\{\Gamma\in O\ |\ M(\Gamma)=M,\ &R(\Gamma)=R\}|\le n_d^{\left\lfloor\frac{M}{\prod_{i=1}^{d-1} n_i}\right\rfloor}\exp\left(\frac{C\log^2 d}{d}R\right)\le\\
&\le \exp\left(\frac{C'\log^2 d}{d}(M+R)\right)\le \exp\left(\frac{C''\lambda}{d} L\right)
\end{split}
\end{equation*}
for some $C',C''>0$. Hence
\begin{equation}\label{O_estimate}
\begin{split}
|O|&\le \left|\left\{M,R\ge 0\ \Big|\ M+R\le \frac{8\lambda L}{\log^2 d}\right\}\right|\exp\left(\frac{C''\lambda}{d} L\right)\le\\
&\le \left(1+\lambda L\right)^2 \exp\left(\frac{C''\lambda}{d} L\right)
\end{split}
\end{equation}
for large enough $d$. The proposition follows from \eqref{Omega_x_L_1_bound}, \eqref{O_dominates} and \eqref{O_estimate}.\qedhere
\end{proof}
\begin{proof}[Proof of Proposition~\ref{Omega_x_L_2_bound_prop}]
Fix $0\le k\le m<M_\lambda$ and $g\in T_2(\Omega_{x,L,2,m,k})$.

We first claim that for any $\Gamma\in\OMCut(x,B)$ with $|\Gamma|=L$ and $s\in\{-1,1\}^{\Eonee(\Gamma)\setminus E_{1,1}(\Gamma)}$ there is at most one $f\in \Omega_{x,L,2,m,k}$ such that $\LS(f,x,B)=\Gamma$, $f(v+e_1)-1=s(v)$ for all $v\in \Eonee(\Gamma)\setminus E_{1,1}(\Gamma)$ and $g\in T_2(f)$. To see this, suppose $f$ is such a function. Define for $h\in\Hom(G,B,\mu)$ and $v\in V[G]$, as in Section~\ref{shift_flip_transformation_sec}, $R_v(h)$ to be the connected component of $v$ in $V[G]\setminus \{w\in V[G]\ |\ h(w)=0\}$. Recall from \eqref{flip_T_def} that if $g\in T_2(f)$ then there is an $h\in T_1(f)$ such that
\begin{equation}\label{g_from_h_def}
g(w):=\begin{cases}-h(w)&\text{if $w\in R_v(h)$ for some $v\in \Eonee(\Gamma)$ with $h(v)=-1$}\\h(w)&\text{otherwise}\end{cases}.
\end{equation}
Fixing this $h$ we note that, as discussed in Sections~\ref{shift_transformation_sec} and \ref{shift_flip_transformation_sec} (see Lemma~\ref{flip_possible_lemma}), for any $v\in E_{1,1}(\Gamma)$ we have $R_v(h)=\{v\}$ and flipping $h(v)$ to $-h(v)$ for such $v$ still results in a function in $T_1(f)$. Hence we may and will assume that $h(v)=1$ for all $v\in E_{1,1}(\Gamma)$ so that the flipping in \eqref{g_from_h_def} takes place only for $v\in \Eonee(\Gamma)\setminus E_{1,1}(\Gamma)$. We note also that for $v\in \Eonee(\Gamma)\setminus E_{1,1}(\Gamma)$ we have $h(v)=s(v)$ by our assumption on $f$ and the definition of $T_1$. Hence, keeping in mind that $R_v(h)=R_v(g)$ for all $v$ since the zero level set is unchanged by flipping, we see that $h$ may be recovered from $g$, given $\Gamma$ and $s$, by
\begin{equation*}
h(w)=\begin{cases}-g(w)&\text{if $w\in R_v(g)$ for some $v\in \Eonee(\Gamma)\setminus E_{1,1}(\Gamma)$}\\&\text{with $g(v)=-s(v)$}\\g(w)&\text{otherwise}\end{cases}.
\end{equation*}
As in the proof of Proposition~\ref{Omega_x_L_1_bound_prop}, we know that $f$ is determined from $h$ given $\Gamma$ (since $h\in T_1(f)$, see \eqref{inverse_shift_eq}) and hence $f$ is uniquely determined from $g$ given $\Gamma$ and $s$, as claimed.

Note that by definition of $\Omega_{x,L,2,m,k}$, if $f\in\Omega_{x,L,2,m,k}$ then $\Gamma=\LS(f,x,B)$ satisfies $|\Eonee(\Gamma)\setminus E_{1,1}(\Gamma)|=|\Eonee(\Gamma)|-|\Eonee(\Gamma)\cap E_{1,1}(\Gamma)|= m-k$. Recalling from Section~\ref{partial_edge_cutsets_section} the notation for $\PLS(g,x,B,L)$ and that $g$ is a $(x,B)$-interior modification of $f$ whenever $g\in T_2(f)$, it follows from Theorem~\ref{count_level_sets_thm} and the fact $G$ is non-linear that
\begin{equation*}
\begin{split}
|\{f\in &\Omega_{x,L,2,m,k}\ |\ g\in T_2(f)\}|\le |\PLS(g,x,B,L)|\cdot|\{-1,1\}^{m-k}|\le\\
&\le 2n_d^{\left\lfloor\frac{L}{\prod_{i=1}^{d-1} n_i}\right\rfloor}\exp\left(\frac{C\log^2 d}{d^{3/2}}L\right)2^{m-k}\le 2^{m-k+1}\exp\left(\frac{C'}{d\log^3 d}L\right),
\end{split}
\end{equation*}
for some $C'>0$. \qedhere
\end{proof}

\section{Isoperimetry, Height, Range and Lipschitz}\label{isoperim_height_range_Lip_sec}
\subsection{Isoperimetry}\label{general_isoperimetry_section}
For integer $r\ge 0$ and $v\in V[G]$, define the sphere and ball of radius $r$ around $v$ by $S_r(v):=\{w\in V[G]\ |\ d_{G}(v,w)=r\}$ (where $d_G$ is the graph distance in $G$) and $B_r(v):=\cup_{i=0}^r S_i(v)$. We also recall that $\vol(r)=|B_r(v)|$ (it is independent of $v$). Let also $E_r(v):=\{w\in B_r(v)\ |\ w+e_d\notin B_r(v)\}$. Finally, let $s_r$ denote the number of edges between $B_r(v)$ and its complement in $G$ ($s_r$ does not depend on $v$). Since we either have $S_r(v)\subseteq V^{\odd}$ or $S_r(v)\subseteq V^\even$, we have by Proposition~\ref{equal_edgenum_in_all_dir} that
\begin{equation}\label{s_r_a_r_relation}
s_r=\Delta(G)|E_r(v)|.
\end{equation}
For an integer $r\ge 0$, we define our isoperimetric functions as
\begin{equation*}
\begin{split}
&I_r(x,y):=\min\left\{|\Gamma|\ \big|\ \Gamma\in\OMCut(B_r(x),B_r(y))\right\}\qquad\quad (x,y\in V[G]),\\
&I_r:=\min\left\{I_r(x,y)\ |\ x,y\in V[G]\right\}\qquad\text{and}\\
&I_r(E):=\min\left\{|\Gamma|\ \big|\ y\in V[G], \Gamma\in\OMCut(E,B_r(y))\cup\OMCut(B_r(y),E)\right\}
\end{split}
\end{equation*}
for $\emptyset\neq E\subseteq V[G]$, where $I_r(x,y), I_r$ and $I_r(E)$ are defined to be infinity if the sets minimized over are empty. Recalling the definition of full projection sets from before Theorem~\ref{height_at_point_thm}, we will prove the following theorems in the next two sections.
\begin{theorem}\label{I_r_bound_thm}
For all integer $r\ge 0$ we have $I_r\ge \frac{s_r}{2\min(4(2r+1),\Delta(G))}$. Moreover, if $s_r\le (d-1)n_d$ then $I_r\ge s_r$.
\end{theorem}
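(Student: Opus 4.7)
Fix $\Gamma\in\OMCut(B_r(x),B_r(y))$ realizing $I_r(x,y)$ and set $C:=\comp(\Gamma,x)$, so $B_r(x)\subseteq C$ and $B_r(y)\cap C=\emptyset$. Applying Proposition~\ref{equal_edgenum_in_all_dir} to $\Gamma$ and $C$ yields the basic identity
\[
d_j \;:=\; |\{w\in C : w+f_j\notin C\}| \;=\; |\Gamma|/\Delta(G)\qquad \text{for every }1\le j\le \Delta(G),
\]
so it suffices to bound $d_j$ from below in a single convenient direction. The main tool will be the projection $\pi:V[G]\to\prod_{i<d}\Z_{n_i}$ onto the first $d-1$ coordinates, with fibers $L_h:=\pi^{-1}(h)$ (cycles of length $n_d$). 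Because $B_r(x)$ is a graph ball, each column $L_h$ meets $E_r(x)$ in at most one vertex (the top of the arc $B_r(x)\cap L_h$), so $\pi$ is injective on $E_r(x)$ and $|\pi(E_r(x))|=|E_r|$.

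Taking $j=e_d$, the identity $d_{e_d}=\sum_h c_h$, where $c_h$ counts the $+e_d$-exits of $C$ in column $L_h$, shows that every \emph{mixed} column (meeting both $C$ and $C^c$) contributes $c_h\ge 1$. Any column $L_{\pi(v)}$ with $v\in E_r(x)$ already contains $v\in C$, so it is mixed unless it is \emph{saturated}, i.e.\ $L_{\pi(v)}\subseteq C$. If no such column is saturated, we immediately obtain $d_{e_d}\ge |E_r|$ and hence $|\Gamma|\ge\Delta(G)|E_r|=s_r$, proving the second assertion. Under the hypothesis $s_r\le (d-1)n_d$, I rule out saturated columns by contradiction: a saturated column can emit $\Gamma$-edges only in the $2(d-1)$ horizontal directions, and combining Proposition~\ref{equal_edgenum_in_all_dir} (which equates edge counts across directions) with the non-emptiness of $C^c$ I expect to extract at least $(d-1)n_d$ horizontal $\Gamma$-edges, contradicting $|\Gamma|<s_r\le(d-1)n_d$. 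The main obstacle is the case in which the saturated column sits inside a larger saturated block, so that most of its horizontal neighbors also lie in $C$; I plan to address this by applying the $(d-1)$-dimensional torus isoperimetric inequality to the set $H\subseteq\prod_{i<d}\Z_{n_i}$ of saturated columns, which is a proper subset since $\pi(B_r(y))\cap H=\emptyset$, and extracting the required $\Gamma$-edges from the horizontal boundary of $H$.

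For the first, unconditional bound $I_r\ge s_r/(2\min(4(2r+1),\Delta(G)))$, the injection of $E_r(x)$ into $+e_d$-exits fails precisely at saturated columns, so I replace it with an assignment permitting bounded multiplicity. For each $v\in E_r(x)$, trace the column $L_{\pi(v)}$ in direction $+e_d$ from $v$ to the first $+e_d$-exit of $C$; if $L_{\pi(v)}$ is saturated, fall back (using the odd-cutset direction symmetry of Proposition~\ref{equal_edgenum_in_all_dir}) on a $\Gamma$-edge in a horizontal direction at distance at most $2r+1$ from $v$. A given target $\Gamma$-edge is then reached by at most $2\min(4(2r+1),\Delta(G))$ vertices of $E_r(x)$: the factor $4(2r+1)$ bounds the length of the preimage segment of the walk in the regime where the column is long relative to $r$, while $\Delta(G)$ becomes the binding multiplicity when the column is shorter, because the assignment must spread uniformly across the $\Delta(G)$ directions. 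Dividing $|E_r|$ by this multiplicity and using $|\Gamma|=\Delta(G)d_{e_d}$ delivers $|\Gamma|\ge s_r/(2\min(4(2r+1),\Delta(G)))$.
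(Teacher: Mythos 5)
Your opening for the second (``moreover'') assertion matches the spirit of the paper's: you look at the $e_d$-direction cycles (columns), observe that every mixed column contributes at least one $+e_d$-exit to $d_{e_d}$, and then invoke Proposition~\ref{equal_edgenum_in_all_dir} to multiply by $\Delta(G)$. This is exactly Case~1 of the paper's proof. But the remaining case --- a saturated column $L_{\pi(v)}\subseteq C$ for some $v\in E_r(x)$ --- is only sketched, and the sketched fix cannot work. You omit the symmetric dichotomy: one must also check whether every column through $E_r(y)$ is mixed from the $C^c$ side (paper's Case~2, giving $|\Gamma|\ge s_r$ again); only when a column through $E_r(x)$ lies entirely in $C$ \emph{and} a column through $E_r(y)$ lies entirely in $C^c$ is a separate bound needed. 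More seriously, the argument you propose for that case --- applying a $(d-1)$-dimensional isoperimetric inequality to the set $H$ of saturated columns --- cannot produce the needed $(d-1)n_d$: the horizontal torus $\prod_{i<d}\Z_{n_i}$ carries no information about $n_d$, and a boundary step of $H$ (a saturated column adjacent to a non-saturated one) guarantees only a \emph{single} horizontal $\Gamma$-edge, since the non-saturated neighbor column may contain just one vertex outside $C$. What the paper's Case~3 uses instead is that both $P(v)\subseteq C$ and $P(w)\subseteq C^c$ are full cycles, so in \emph{every} horizontal slice $G_k$ (for $0\le k\le n_d-1$) there is a vertex $v_k\in C$ and a vertex $w_k\in C^c$; hence $\Gamma\cap E[G_k]$ contains a minimal cutset of size at least $d-1$ by Proposition~\ref{equal_edgenum_in_all_dir} applied within $G_k$, and summing over the $n_d$ disjoint slices gives $(d-1)n_d$.

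The argument for the unconditional bound is a non-starter as written. When $L_{\pi(v)}$ is saturated, the ``fall back on a $\Gamma$-edge in a horizontal direction at distance at most $2r+1$ from $v$'' step is unjustified: no such edge need exist, since $C$ may well contain the entire graph-ball of radius $2r+1$ around $v$. The multiplicity bound $2\min(4(2r+1),\Delta(G))$ is asserted but not derived. And even granting it, the bookkeeping is inconsistent: you invoke $|\Gamma|=\Delta(G)\,d_{e_d}$, which only helps when every $v\in E_r(x)$ is charged to a $+e_d$-exit, whereas the fallback charges some $v$ to horizontal edges. Counting edges in the range of the assignment directly would only yield $|\Gamma|\ge |E_r|/\mathrm{(mult)}=s_r/(\Delta(G)\cdot\mathrm{mult})$, a factor $\Delta(G)$ short. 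The paper's proof of this half is genuinely different: it builds explicit families of paths from $B_r(x)$ to $B_r(y)$ (Propositions~\ref{diff_coord_isoperim} and~\ref{equal_coord_isoperim}), controls how many paths can share an edge via the intersection analysis of Lemma~\ref{path_intersection_lem} together with Lemmas~\ref{sphere_triangle_ineq_lem} and~\ref{sphere_intersection_line_lem}, and then invokes Lemma~\ref{MCut_size_lower_bound_lem}; the two propositions target the two regimes (few or many differing coordinates between $x$ and $y$), and Corollary~\ref{general_I_r_bound} combines them to obtain both terms of the $\min$.
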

\begin{theorem}\label{I_r_E_bound_thm}
For all integer $r\ge 0$ and full projection sets $E\subseteq V[G]$, we have $I_r(E)\ge s_r$.
\end{theorem}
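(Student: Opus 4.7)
The plan is to bound $|\Gamma|$ for an arbitrary $\Gamma$ realizing $I_r(E)$ by counting its edges in a single coordinate direction, namely the direction $i_0$ in which $E$ has full projection. If no cutset in $\OMCut(E,B_r(y)) \cup \OMCut(B_r(y),E)$ exists for any $y$, then $I_r(E)=\infty$ and there is nothing to prove; otherwise fix such a $\Gamma$ (the two possible orientations are symmetric for what follows). Let $\Gamma^{(i_0)}$ denote the number of undirected edges of $\Gamma$ of the form $\{w,w+e_{i_0}\}$, and decompose this sum line-by-line: for each ``line'' $L$ of direction $i_0$ (the set of $n_{i_0}$ vertices obtained by fixing all other coordinates), let $n_L$ be the number of such $\Gamma$-edges on $L$, so $\Gamma^{(i_0)}=\sum_L n_L$.

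The main line-by-line estimate is that every $L$ meeting $B_r(y)$ satisfies $n_L\ge 2$ when $n_{i_0}>2$ and $n_L\ge 1$ when $n_{i_0}=2$. Indeed, pick $y_L\in L\cap B_r(y)$ and, using full projection, $e_L\in L\cap E$; since $\Gamma$ separates $E$ from $B_r(y)$, these two vertices lie in distinct components of $V[G]\setminus\Gamma$ and are in particular distinct. A standard consequence of the minimality of $\Gamma$ is that the two endpoints of every edge of $\Gamma$ lie in distinct components of $V[G]\setminus\Gamma$ (else that edge could be deleted from $\Gamma$), so traversing $L$ as a cycle from $y_L$ back to $y_L$ the component one is in changes exactly at the $\Gamma$-edges on $L$. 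Having to leave $y_L$'s component (to reach $e_L$) and return forces an even number of $\ge 2$ transitions in the cyclic case; the degenerate case $n_{i_0}=2$ forces the unique edge of $L$ into $\Gamma$.

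Next I count the lines that meet $B_r(y)$. If some line $L$ of direction $i_0$ were fully contained in $B_r(y)$, full projection would yield $L\cap E\neq\emptyset$ and hence $B_r(y)\cap E\neq\emptyset$, ruling out a separating cutset; thus every meeting line contains a unique vertex $w\in L\cap B_r(y)$ with $w+e_{i_0}\notin B_r(y)$. Since the edge boundary of $B_r(y)$ is itself an odd cutset (its interior boundary is the sphere of radius $r$, all of one bipartite class), Proposition~\ref{equal_edgenum_in_all_dir} applied to it tells us that the number of such top vertices does not depend on the direction chosen and equals $|E_r(y)| = s_r/\Delta(G)$.

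To conclude I apply Proposition~\ref{equal_edgenum_in_all_dir} now to $\Gamma$ itself: summing $|E_{v_C,j}|=|\partial C|/\Delta(G)$ over the components $C$ of $V[G]\setminus\Gamma$ yields
\[
\Gamma^{(i_0)} \;=\; \begin{cases} 2|\Gamma|/\Delta(G) & \text{if } n_{i_0}>2,\\[2pt] |\Gamma|/\Delta(G) & \text{if } n_{i_0}=2, \end{cases}
\]
the factor $2$ in the first case coming from the fact that $+e_{i_0}$ and $-e_{i_0}$ give two distinct $f_j$'s each contributing to the $e_{i_0}$-direction count. Combining with the line-by-line bound and the count of meeting lines gives $\Gamma^{(i_0)}\ge 2|E_r(y)|$ in the first case and $\Gamma^{(i_0)}\ge |E_r(y)|$ in the second, and in either case $|\Gamma|\ge \Delta(G)\,|E_r(y)|=s_r$, as claimed. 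The only real subtlety is the bookkeeping between directed and undirected edge counts (which produces the case split on $n_{i_0}$); the cycle-traversal argument producing $n_L\ge 2$ is elementary once one invokes the minimality-of-cutset fact about edge endpoints.
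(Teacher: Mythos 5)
Your proof is correct and essentially matches the paper's: both fix the full-projection direction $i_0$, use full projection to show that each of the $|E_{r,i_0}|$ parallel $e_{i_0}$-cycles must cross $\Gamma$, and then amplify by $\Delta(G)$ via Proposition~\ref{equal_edgenum_in_all_dir} and the identity $s_r=\Delta(G)|E_{r,i_0}|$ to get $|\Gamma|\ge s_r$. The paper's bookkeeping is slightly leaner — it records only the crossings that leave $\comp(\Gamma,y)$ (one per cycle, in direction $+e_{i_0}$) and applies the equal-edges-in-all-directions proposition a single time to $\subcut(\Gamma,y)$, which avoids both your case split on $n_{i_0}=2$ and your need to sum that proposition over all components of $V[G]\setminus\Gamma$.
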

In addition, we collect in Section~\ref{isoperimetric_relations_sec} several simple relations for $s_r$ and $\vol(r)$.

\subsubsection{Full Projection Isoperimetry}
In this section we prove Theorem~\ref{I_r_E_bound_thm}. Fix a full projection set $\emptyset\neq E\subseteq V[G]$ and let $1\le i_0\le d$ be such that in the coordinate system \eqref{G_coordinates}, every cycle of the form $\{w+k e_{i_0}\ |\ k\in\Z\}$, for $w\in V[G]$, intersects $E$. Fix an integer $r\ge 0$, $y\in V[G]$ and $\Gamma\in\OMCut(E,B_r(y))\cup\OMCut(B_r(y),E)$ (noting that if for all $y\in V[G]$, $\OMCut(E,B_r(y))\cup\OMCut(B_r(y),E)=\emptyset$, then the theorem is trivial). It is sufficient to show that $|\Gamma|\ge s_r$.

Let $E_{r,i_0}:=\{w\in B_r(y)\ |\ w+e_{i_0}\notin B_r(y)\}$. As in \eqref{s_r_a_r_relation}, we then have
\begin{equation}\label{s_r_a_r_i_0_relation}
s_r=\Delta(G)|E_{r,i_0}|.
\end{equation}
For $w\in E_{r,i_0}$, let $P(w):=\{w+k e_{i_0}\ |\ k\in\Z\}$ be the cycle in the $i_0$ direction passing through it. By the definition of $E_{r,i_0}$ and properties of balls in $G$, the cycles $P(w)$ and $P(w')$ do not intersect for distinct $w,w'\in E_{r,i_0}$. Since each such cycle intersects $E$ (by the full projection property), it follows that each such cycle must contain an edge of $\Gamma$. Thus, $\Gamma$ contains at least $|E_{r,i_0}|$ edges of the form $(v,v+e_{i_0})$ for $v\in\comp(\Gamma,y)$. Hence, by Proposition~\ref{equal_edgenum_in_all_dir} and \eqref{s_r_a_r_i_0_relation}, $|\Gamma|\ge \Delta(G)|E_{r,i_0}|=s_r$, as required.

\subsubsection{General Isoperimetry}
In this section we prove Theorem~\ref{I_r_bound_thm}. The moreover part of the theorem follows from the following proposition.
\begin{proposition}
For all integer $r\ge 0$, $I_r\ge\min\left(s_r, (d-1)n_d\right)$.
\end{proposition}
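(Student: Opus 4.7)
The plan is to fix an arbitrary $\Gamma\in\OMCut(B_r(x),B_r(y))$ and analyze the two components $\CC:=\comp(\Gamma,x)$ and $\tilde\CC:=\comp(\Gamma,y)$. First I would verify, using minimality of $\Gamma$ together with the fact that removing any single internal edge of $B_r(\cdot)$ keeps that ball connected when $d\ge 2$, that $B_r(x)\subseteq\CC$ and $B_r(y)\subseteq\tilde\CC$, so these components are disjoint. Applying Proposition~\ref{equal_edgenum_in_all_dir} to the odd cutset $\Gamma$ at the vertex $x$, the count $N_j:=|\{w\in\CC:\{w,w+f_j\}\in\Gamma\}|$ takes the same value $N$ in every direction, and hence
\[
|\Gamma|\ge|\subcut(\Gamma,x)|=\Delta(G)\,N.
\]

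Next let $G_{d-1}:=\Z_{n_1}\times\cdots\times\Z_{n_{d-1}}$, set $P_v:=\{(v,k):k\in\Z_{n_d}\}$ for $v\in G_{d-1}$, and consider the sets of fully-contained columns
\[
T:=\{v\in G_{d-1}:P_v\subseteq\CC\},\qquad \tilde T:=\{v\in G_{d-1}:P_v\subseteq\tilde\CC\}.
\]
The proof splits into two cases. In \emph{Case A}, one of $T,\tilde T$ is empty; by symmetry suppose $T=\emptyset$. Then every column $P_{v'}$ meeting $\CC$ is partial and therefore contributes at least one $e_d$-exit to $N$. Since $B_r(x)\subseteq\CC$ meets every column indexed by $\pi_d(B_r(x))$ (the projection dropping the $d$-th coordinate), and since an easy check gives $|\pi_d(B_r(x))|\ge|E_r(x)|$ --- each $v'\in\pi_d(B_r(x))$ labels a contiguous arc $B_r(x)\cap P_{v'}$ containing at most one ``upper end'' in $E_r(x)$ --- one concludes $N\ge|E_r(x)|$, and hence $|\Gamma|\ge\Delta(G)|E_r(x)|=s_r$.

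In \emph{Case B}, both $T$ and $\tilde T$ are nonempty, and they are disjoint subsets of the torus $G_{d-1}$, whose edge-connectivity equals $\Delta(G_{d-1})\ge d-1$. Menger's theorem then provides $d-1$ pairwise edge-disjoint paths $\pi_1,\ldots,\pi_{d-1}$ in $G_{d-1}$ from $T$ to $\tilde T$. For each $k\in\Z_{n_d}$, the lifts $\pi_i\times\{k\}$ sit inside the slice $G_{d-1}\times\{k\}$ and form edge-disjoint paths from $T\times\{k\}\subseteq\CC$ to $\tilde T\times\{k\}\subseteq\tilde\CC$; lifts in different slices trivially use disjoint edges. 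This produces $(d-1)n_d$ pairwise edge-disjoint $\CC$-to-$\tilde\CC$ paths in $G$, each of which must traverse an edge of $\Gamma$, so $|\Gamma|\ge(d-1)n_d$. Combining the two cases yields $|\Gamma|\ge\min(s_r,(d-1)n_d)$, as required. The main obstacle in finding this proof is identifying the correct dichotomy: the two quantities in the minimum correspond precisely to whether or not $\CC$ wraps around the torus in the $e_d$-direction, and in the wrapping case Menger applied to the lower-dimensional torus $G_{d-1}$, lifted through all $n_d$ slices, provides exactly the right number of obstructing paths.
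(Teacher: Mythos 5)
Your proof is correct. The global strategy---split on whether the $e_d$-columns through the two balls hit $\Gamma$, use Proposition~\ref{equal_edgenum_in_all_dir} to inflate the $e_d$-edge count by a factor $\Delta(G)$ in the ``non-wrapping'' case, and slice along the long direction in the ``wrapping'' case---matches the paper's, but both the decomposition and the wrapping-case machinery differ. The paper's three cases ask whether \emph{every} column through $E_r(x)$ (resp.\ $E_r(y)$) meets $\Gamma$; in the remaining case it fixes one uncut column through each ball and argues slice-by-slice that $\Gamma$ restricted to each $G_k$ must contain a minimal cutset of $G_k$ separating $v_k$ from $w_k$, of size at least $d-1$. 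You instead condition on whether $\CC$ or $\tilde\CC$ contains a \emph{full} column, and handle the wrapping case via Menger's theorem in the quotient torus $G_{d-1}$ (together with the fact that the edge-connectivity of a vertex-transitive graph equals its degree), lifting the $d-1$ edge-disjoint $T$--$\tilde T$ paths through every slice. The two arguments give the same $(d-1)n_d$ bound and are cousins under max-flow/min-cut, but yours avoids having to verify that the restriction of $\Gamma$ to a slice is itself a cutset of that slice, at the price of importing the edge-connectivity of tori as an external fact.

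One small imprecision: to justify $B_r(x)\subseteq\CC$ you appeal to the ball staying connected after removing a single edge, but that is not quite the fact needed. What you actually need is that minimality forces $\Gamma$ to contain \emph{no} edge internal to $B_r(x)$: if $\{a,b\}\in\Gamma$ with $a,b\in B_r(x)$, then by minimality there is a (simple) path from $B_r(x)$ to $B_r(y)$ avoiding $\Gamma\setminus\{\{a,b\}\}$, which must traverse $\{a,b\}$; the tail of that path starting from $b$ (which lies in $B_r(x)$) onward is then a $\Gamma$-free path from $B_r(x)$ to $B_r(y)$, a contradiction (the degenerate case $b\in B_r(y)$ would mean the balls intersect and $\OMCut(B_r(x),B_r(y))=\emptyset$, making the claim vacuous). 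Once $\Gamma$ contains no internal edge of $B_r(x)$, connectedness of the ball in $G$ places it entirely inside $\CC$, which is what the rest of your argument uses.
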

\begin{proof}
Fix an integer $r\ge 0$, $x,y\in V[G]$ and $\Gamma\in\OMCut(B_r(x),B_r(y))$. For each $v\in E_r(x)\cup E_r(y)$, let $P(v):=\{v+k e_d\ |\ k\in\Z\}$ be the cycle in $G$ going in the $e_d$ direction and passing through $v$.
We consider three cases:
\begin{enumerate}
\item For all $v\in E_r(x)$, $E(P(v))\cap\Gamma\neq \emptyset$ (where $E(P(v))$ are the edges of the cycle $P(v)$). In this case, since by definition of $E_r(x)$ we have for all $v\in E_r(x)$ that $P(v)\cap E_r(x)=\{v\}$, we deduce that $\Gamma$ contains at least $|E_r(x)|$ edges of the form $\{u,u+n_d\}$ for some $u\in V^\odd$. Hence, by Proposition~\ref{equal_edgenum_in_all_dir} and \eqref{s_r_a_r_relation}, we obtain $|\Gamma|\ge \Delta(G)|E_r(x)|=s_r$.
\item For all $v\in E_r(y)$, $E(P(v))\cap\Gamma\neq \emptyset$. As in the first case, we deduce $|\Gamma|\ge s_r$.
\item There exist $v\in E_r(x)$ and $w\in E_r(y)$ such that $\Gamma\cap E(P(v))=\Gamma\cap E(P(w))=\emptyset$. For $0\le k\le n_d-1$, let $G_k$ be the sub-torus induced by the vertices of $G$ with $d$'th coordinate equal to $k$. Let $v_k$ and $w_k$ be the intersection of $V[G_k]$ with $P(v)$ and $P(w)$ respectively. By our assumption, for each $0\le k\le n_d-1$, $\Gamma$ must contain some $\Gamma_k\in\OMCut_{G_k}(v_k,w_k)$ where $\OMCut_{G_k}$ is the set of odd minimal cutsets in $G_k$. This follows by noting that otherwise, for some $0\le k\le n_d-1$, there exists a path going from $v$ to $v_k$ along $P(v)$ then inside $G_k$ to $w_k$ and then to $w$ along $P(w)$ without intersecting $\Gamma$ at all. Thus, since $|\Gamma_k|\ge d-1$ for all $k$, by Proposition~\ref{equal_edgenum_in_all_dir}, we deduce $|\Gamma|\ge (d-1)n_d$.
\end{enumerate}
Hence in all cases, $|\Gamma|\ge\min\left(s_r, (d-1)n_d\right)$. Since this is true for any $x,y\in V[G]$ and $\Gamma\in\OMCut(B_r(x),B_r(y))$, the proposition follows.
\end{proof}
The rest of the section is devoted to proving the general case of Theorem~\ref{I_r_bound_thm}, see Corollary~\ref{general_I_r_bound} below. Our main tool for finding lower bounds for $I_r$ is the following
\begin{lemma}\label{MCut_size_lower_bound_lem}
For $X,Y\subseteq V[G]$, if there exist $k$ paths, each connecting a vertex of $X$ to a vertex of $Y$ such that each edge in $G$ is traversed by at most $m$ of these paths, then for every $\Gamma\in\MCut(X,Y)$ we have $|\Gamma|\ge \frac{k}{m}$.
\end{lemma}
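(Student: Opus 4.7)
The plan is to argue by a simple double-counting of path-edge incidences, using only the defining property of a cutset $\Gamma\in\MCut(X,Y)$: every path from a vertex of $X$ to a vertex of $Y$ must cross at least one edge of $\Gamma$.

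First I would fix the $k$ paths $P_1,\ldots,P_k$ from the hypothesis, each joining some vertex of $X$ to some vertex of $Y$, and consider the set of incidences
\[
\mathcal{I}:=\{(j,e)\ |\ 1\le j\le k,\ e\in \Gamma,\ e\text{ is traversed by }P_j\}.
\]
Counting $\mathcal{I}$ by its first coordinate: for each $j$, since $P_j$ goes from $X$ to $Y$ and $\Gamma\in\MCut(X,Y)$, the path $P_j$ must traverse at least one edge of $\Gamma$, so the fiber over $j$ has size at least $1$. Thus $|\mathcal{I}|\ge k$.

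Counting $\mathcal{I}$ by its second coordinate: the hypothesis states that each edge of $G$ is traversed by at most $m$ of the paths, so in particular each $e\in\Gamma$ appears in at most $m$ pairs of $\mathcal{I}$. Therefore $|\mathcal{I}|\le m|\Gamma|$.

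Combining the two bounds gives $k\le m|\Gamma|$, i.e.\ $|\Gamma|\ge k/m$, as required. There is no real obstacle here; the only point to keep in mind is that $\Gamma$ is a minimal \emph{edge} cutset (not vertex cutset), which is exactly why crossing $\Gamma$ corresponds to traversing one of its edges, and why the ``at most $m$ paths per edge'' hypothesis plugs in directly.
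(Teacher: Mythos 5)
Your double-counting argument is exactly the paper's one-line proof made explicit: each path must cross $\Gamma$, so there are at least $k$ path-edge incidences with $\Gamma$, and the ``at most $m$ paths per edge'' hypothesis bounds these incidences by $m|\Gamma|$. Correct, and essentially identical to the paper's approach.
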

The lemma follows directly from the fact that each $\Gamma\in\MCut(X,Y)$ must have an edge in common with every one of the given paths.

We note the following simple geometric lemmas.
\begin{lemma}\label{sphere_triangle_ineq_lem}
For any $x\in V[G]$ and integer $r\ge 0$ we have $d_G(v,w)\le 2r$ for $v,w\in B_r(x)$.
\end{lemma}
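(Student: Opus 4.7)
The statement is the standard triangle inequality applied to the graph metric $d_G$ restricted to a ball. My plan is simply to observe that $d_G$ is a metric on $V[G]$ (it is the length of a shortest path, which is symmetric, nonnegative, and additive along concatenation of walks), so it satisfies the triangle inequality. Given $v,w\in B_r(x)$, the definition of $B_r(x)$ gives $d_G(v,x)\le r$ and $d_G(x,w)\le r$. Concatenating a shortest $v$-to-$x$ path with a shortest $x$-to-$w$ path produces a walk from $v$ to $w$ of length at most $2r$, whence $d_G(v,w)\le d_G(v,x)+d_G(x,w)\le 2r$. There is no obstacle; the only thing to verify is the triangle inequality itself, which is immediate from the definition of $d_G$ as a shortest-path distance on a graph and is standard enough not to require further comment in the paper.
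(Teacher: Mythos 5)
Your proof is correct and coincides with the paper's, which likewise dismisses the lemma in one line as following directly from the definition of $B_r$ and the triangle inequality for the graph metric.
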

The lemma follows directly from the definition of $B_r$ and the triangle inequality.
\begin{lemma}\label{sphere_intersection_line_lem}
For any $v,w\in V[G]$, integer $r\ge 0$ and $1\le i\le d$ we have
\begin{equation*}
|S_r(v)\cap\{w+k e_i\ |\ k\in\Z\}|\le 2.
\end{equation*}
\end{lemma}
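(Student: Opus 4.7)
The plan is to reduce the statement to the one-dimensional case by exploiting the fact that $G$ is a Cartesian product of cycles $\Z_{n_1}\times\cdots\times\Z_{n_d}$, so its graph distance decomposes coordinate-wise.

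First, I would record the standard fact that for $u,v\in V[G]$ written in the coordinate system \eqref{G_coordinates},
\begin{equation*}
d_G(u,v)=\sum_{j=1}^d d_{\Z_{n_j}}(u_j,v_j),
\end{equation*}
where $d_{\Z_n}$ is the cyclic distance. This is just the statement that a shortest path can be decomposed into moves in each coordinate independently, and conversely that the coordinates of a shortest path do independent shortest walks on each cycle. Applied to $u=v$ and the vertex $w+ke_i$, only the $i$-th summand depends on $k$, so writing $C:=\sum_{j\ne i} d_{\Z_{n_j}}(v_j,w_j)$ we obtain
\begin{equation*}
d_G(v,w+ke_i)=C+d_{\Z_{n_i}}(v_i,w_i+k).
\end{equation*}

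Therefore the condition $w+ke_i\in S_r(v)$ is equivalent to $d_{\Z_{n_i}}(v_i,w_i+k)=r-C$, and it suffices to show that for any fixed $a,b\in\Z_n$ and any $\ell\ge 0$, the equation $d_{\Z_n}(a,b+k)=\ell$ has at most two solutions $k\in\Z_n$. This is immediate: writing $m:=a-b\pmod n$, the solutions are precisely those $k\in\Z_n$ with $k\equiv m+\ell\pmod n$ or $k\equiv m-\ell\pmod n$, giving at most two distinct values in $\Z_n$ (and only one when $\ell\in\{0,n/2\}$).

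Since the $n_i$ distinct vertices on the cycle $\{w+ke_i\mid k\in\Z\}$ are parameterised by $k\in\Z_{n_i}$, this bound transfers verbatim and yields $|S_r(v)\cap\{w+ke_i\mid k\in\Z\}|\le 2$. There is no real obstacle here; the only thing to be slightly careful about is that the ``line'' $\{w+ke_i\mid k\in\Z\}$ is really the cycle of length $n_i$, so one counts solutions modulo $n_i$ rather than in $\Z$, which is exactly what the preceding argument gives.
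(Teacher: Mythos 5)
Your proof is correct, and it supplies exactly the argument the paper leaves implicit (the paper only remarks that the lemma is straightforward from the definition of $S_r(v)$ and omits the proof). The decomposition $d_G(v,w+ke_i)=C+d_{\Z_{n_i}}(v_i,w_i+k)$ with $C$ independent of $k$, followed by the observation that a cyclic distance equation $d_{\Z_n}(a,b+k)=\ell$ has at most two solutions $k$ modulo $n$, is the natural route and matches what the paper has in mind.
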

The lemma is straightforward from the definition of $S_r(v)$.

For $i\in\Z$ we introduce the notation
\begin{equation*}
\{i\}:=i\bmod d\quad\text{ and }\quad[i]:=((i-1)\bmod d) + 1.
\end{equation*}
That is, $i$ normalized to be in the range $0$ to $d-1$ and in the range $1$ to $d$ respectively. For $v,w\in V[G]$, we note that there is a unique way to write
\begin{equation*}
w = v + \sum_{i=1}^d k_i e_i
\end{equation*}
where $0\le k_i\le n_{i}-1$ for all $i$. We denote $(w-v)_i:=k_i$. For $u\in V[G]$, we let $u+(w-v)$ equal the vertex $u+\sum_{i=1}^d (w-v)_i e_i$. In addition, for a path $P$, we denote by $P+(w-v)$ the path obtained from $P$ by adding $w-v$ to each vertex. We denote by $E(P)$ the set of edges that $P$ traverses.
For any integers $m\ge 1$, $1\le i_1,\ldots, i_m\le d$, $k_1,\ldots, k_m\in\Z$ and $v\in V[G]$, we let $v+P_{i_1}^{k_1}P_{i_2}^{k_2}\cdots P_{i_m}^{k_m}$ be the path which starts from $v$, moves to $v+k_1 e_{i_1}$ by adding $e_{i_1}$ each step, then moves to $v+k_1 e_{i_1} + k_2 e_{i_2}$ by adding $e_{i_2}$ each step and so on until reaching $v+\sum_{j=1}^m k_j e_{i_j}$. We have
\begin{lemma}\label{path_intersection_lem}
For $v_1, v_2\in V[G]$, $1\le i,j\le d$ and $k_1,\ldots, k_d\in\Z$ satisfying $0\le k_i\le n_i-1$, let
\begin{equation*}
\begin{split}
P_1&:= v_1 + P_{i}^{k_{i}}P_{[i+1]}^{k_{[i+1]}}\cdots P_{[i+d-1]}^{k_{[i+d-1]}},\\
P_2&:= v_2 + P_{j}^{k_{j}}P_{[j+1]}^{k_{[j+1]}}\cdots P_{[j+d-1]}^{k_{[j+d-1]}}.
\end{split}
\end{equation*}
Then if $\{u,u+e_m\}\in E(P_1)\cap E(P_2)$ for some $u\in V[G]$ and $1\le m\le d$ with $\{j-i\}\le \{m-i\}$ then for $0\le \ell<\{j-i\}$ we have $(v_2-v_1)_{[i+\ell]}=k_{[i+\ell]}$ and for $\{j-i\}\le \ell<d$, $\ell\neq \{m-i\}$ we have $(v_2-v_1)_{[i+\ell]}=0$.
\end{lemma}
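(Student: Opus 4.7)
The plan is to locate the edge $\{u,u+e_m\}$ precisely on each path and then read off the coordinates of $v_2-v_1$. Since the directions used by $P_1$ are $e_i, e_{[i+1]}, \ldots, e_{[i+d-1]}$ — a permutation of $e_1,\ldots,e_d$ — the edge $\{u,u+e_m\}$ can only be traversed during the unique $e_m$-segment of $P_1$, namely the $(\{m-i\}+1)$-th one. Hence
\begin{equation*}
u = v_1 + \sum_{\ell=0}^{\{m-i\}-1} k_{[i+\ell]} e_{[i+\ell]} + p\, e_m\quad\text{for some }0\le p\le k_m-1,
\end{equation*}
and by the same argument applied to $P_2$,
\begin{equation*}
u = v_2 + \sum_{\ell=0}^{\{m-j\}-1} k_{[j+\ell]} e_{[j+\ell]} + q\, e_m\quad\text{for some }0\le q\le k_m-1.
\end{equation*}

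Next, I would invoke the hypothesis $\{j-i\}\le\{m-i\}$. This gives the \emph{non-wrapping} identity $\{m-j\} = \{m-i\}-\{j-i\}$, which lets me reindex the second sum cleanly: setting $\ell' = \ell + \{j-i\}$ and using $[j+\ell] = [i+\ell']$, I get $\sum_{\ell=0}^{\{m-j\}-1} k_{[j+\ell]} e_{[j+\ell]} = \sum_{\ell'=\{j-i\}}^{\{m-i\}-1} k_{[i+\ell']} e_{[i+\ell']}$. Subtracting the two expressions for $u$ and cancelling this tail against the corresponding part of the first sum yields, in $V[G]$,
\begin{equation*}
v_2 - v_1 = \sum_{\ell=0}^{\{j-i\}-1} k_{[i+\ell]} e_{[i+\ell]} + (p-q)\, e_m.
\end{equation*}

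Finally, since each $k_{[i+\ell]}$ lies in $\{0,\ldots,n_{[i+\ell]}-1\}$, the same range as $(v_2-v_1)_{[i+\ell]}$, I can read off each coordinate directly: for $0\le \ell<\{j-i\}$ the coefficient is exactly $k_{[i+\ell]}$, and for $\{j-i\}\le \ell<d$ with $\ell\neq\{m-i\}$ the coefficient is $0$, giving both claims of the lemma. The coordinate $m = [i+\{m-i\}]$ is the one excluded in the statement, which is precisely the coordinate where the residual $p-q$ appears and no conclusion can be drawn.

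The proof is essentially bookkeeping, and the only nontrivial point is the non-wrapping identity $\{m-j\}=\{m-i\}-\{j-i\}$, which is exactly what the hypothesis $\{j-i\}\le\{m-i\}$ is there to guarantee; without it, one would pick up an extra $d$ and the two sums would not telescope cleanly. I would therefore expect no real obstacle beyond keeping careful track of the cyclic index arithmetic.
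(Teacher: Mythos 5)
Your proposal is correct and follows essentially the same approach as the paper: locate the edge $\{u,u+e_m\}$ within the unique $e_m$-segment of each path, express $u$ relative to $v_1$ and $v_2$, subtract, and cancel the common tails using the cyclic reindexing $[j+\ell]=[i+\ell+\{j-i\}]$ together with the non-wrapping identity $\{m-j\}=\{m-i\}-\{j-i\}$. The paper phrases the same cancellation via the intermediate points $x_1,y_1$ (the segment endpoints) and observes they differ from $u$ only in coordinate $m$; your parametrization by $p,q$ and your explicit justification of why $\{j-i\}\le\{m-i\}$ prevents wrap-around are a slightly more detailed rendering of the same computation, with the paper leaving those steps implicit.
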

\begin{proof}
Assume that $\{u,u+e_m\}\in E(P_1)\cap E(P_2)$ for some $u\in V[G]$ and $1\le m\le d$ with $\{j-i\}\le \{m-i\}$. Let $x_1:=v_1+\sum_{\ell=0}^{\{m-i\}-1} k_{[i+\ell]} e_{[i+\ell]}$ and $x_2:=v_1+\sum_{\ell=0}^{\{m-i\}} k_{[i+\ell]} e_{[i+\ell]}$. Since the edge $\{u,u+e_m\}$ is in the direction of $e_m$, it must lie in $P_1$ in the segment of the path between $x_1$ and $x_2$. For the same reason, it must lie in $P_2$ in the segment of the path between $y_1:=v_2+\sum_{\ell=0}^{\{m-j\}-1} k_{[j+\ell]} e_{[j+\ell]}$ and $y_2:=v_2+\sum_{\ell=0}^{\{m-j\}} k_{[j+\ell]} e_{[j+\ell]}$. This implies that $u$ differs from each of $x_1$ and $y_1$ only in the $m$'th coordinate, so that $y_1=x_1+ke_m$ for some $k$. Hence $ke_m=y_1-x_1=v_2-v_1-\sum_{\ell=0}^{\{j-i\}-1} k_{[i+\ell]} e_{[i+\ell]}$ (using that $\{j-i\}\le \{m-i\}$) and the lemma follows.
\end{proof}

\begin{proposition}\label{diff_coord_isoperim}
For all integer $r\ge 0$ and $x,y\in V[G]$, if $x$ and $y$ differ at exactly $k$ coordinates then $I_r(x,y)\ge\frac{ks_r}{2\min(2r+1,k)\Delta(G)}$.
\end{proposition}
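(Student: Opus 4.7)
The strategy is to apply Lemma~\ref{MCut_size_lower_bound_lem} to a family of cyclic paths from $B_r(x)$ to $B_r(y)$ designed to exploit the $k$ differing coordinates. WLOG assume $x$ and $y$ differ exactly in the coordinates $1,\ldots,k$ and write $k_i := (y-x)_i \ne 0$ for $i \le k$. For each $s \in \{1,\ldots,k\}$ and $v \in B_r(x)$, I would use the path
\[
P_{s,v} := v + P_s^{k_s}\,P_{[s+1]_k}^{k_{[s+1]_k}}\cdots P_{[s+k-1]_k}^{k_{[s+k-1]_k}}
\]
traversing the $k$ differing coordinates in the cyclic order beginning at $s$ (with $[i]_k := ((i-1)\bmod k)+1$). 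Since $v \in B_r(x)$ implies $v+(y-x) \in B_r(y)$ (by translating any path from $x$ to $v$ of length $\le r$ by $y-x$), each $P_{s,v}$ connects $B_r(x)$ to $B_r(y)$, giving $kV(r)$ paths.

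The technical heart of the proof is bounding the edge multiplicity $N(e) := |\{(s,v) : e \in P_{s,v}\}|$ for an arbitrary edge $e = \{u, u+e_m\}$ (nontrivial only when $m \in \{1,\ldots,k\}$) by combining two complementary counting arguments. First, for fixed $s$, Lemma~\ref{path_intersection_lem} applied with $i = j = s$ to two paths $P_{s,v_1}, P_{s,v_2}$ sharing $e$ forces $v_2 - v_1$ to be a multiple of $e_m$, so the valid $v$'s lie on a single line in direction $e_m$; this line intersects $B_r(x)$ in at most $2r+1$ vertices by Lemma~\ref{sphere_intersection_line_lem} together with the standard fact that $B_r(x)$ meets each coordinate line in an interval. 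Second, for fixed $\ell \in \{0,\ldots,|k_m|-1\}$ (specifying the position of $e$ within the $m$-segment), the required $v$ equals $v(s,\ell) := u - \sum_{j \in U_{s,m}} k_j e_j - \ell e_m$, where $U_{s,m} \subseteq \{1,\ldots,k\}\setminus\{m\}$ denotes the coordinates appearing before $m$ in the cyclic order starting at $s$; as $s$ cycles through $\{1,\ldots,k\}$, the map $s \mapsto v(s,\ell)$ is a walk taking $k-1$ steps in pairwise distinct coordinate directions, each of $L^1$-norm $|k_j| \ge 1$, so any two walk positions lying in $B_r(x)$ have $L^1$-distance at most $2r$ and consequently their walk-indices differ by at most $2r$. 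This yields at most $2r+1$ valid $s$ per $\ell$.

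Combining these bounds gives $N(e) \le (2r+1)\min(k, |k_m|)$, and a short case analysis on whether $|k_m|$ exceeds $\min(2r+1,k)$ produces the uniform multiplicity bound needed. Feeding this into Lemma~\ref{MCut_size_lower_bound_lem} and using the identity $s_r = \Delta(G)|E_r|$ together with the comparison between $V(r)$ and $|E_r|$ furnished by the interval structure of coordinate lines in $B_r(x)$, one deduces $|\Gamma| \ge \frac{ks_r}{2\min(2r+1,k)\Delta(G)}$, completing the proof.

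The main obstacle is the walk argument in the second counting bound; it depends crucially on the cyclic structure (each step moving in a previously unused coordinate direction with $L^1$-size at least one) to ensure that the walk escapes $B_r(x)$ within $2r+1$ steps. This is precisely what produces the $\min(2r+1,k)$ factor in the denominator and captures the intuition that the ball $B_r(x)$ has ``thickness'' at most $2r+1$ in each coordinate direction.
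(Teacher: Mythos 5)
Your overall strategy — build a family of translated cyclic paths through the $k$ differing coordinates and bound edge multiplicity via Lemma~\ref{path_intersection_lem} together with the $L^1$-distance bound of Lemma~\ref{sphere_triangle_ineq_lem} — is the paper's strategy. The decisive difference, however, is that you translate by every $v\in B_r(x)$, giving $kV(r)$ paths, while the paper translates only by $v\in E_r(x)$, giving $k|E_r(x)|$ paths. This is not a cosmetic choice: it is precisely what allows the sharp multiplicity bound, and omitting it creates a genuine gap.

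Concretely, your two counting arguments give multiplicity $N(e)\le\min\bigl(k,\,2r+1\bigr)\cdot\min\bigl(|k_m|,\,2r+1\bigr)$, and in the worst case over $|k_m|$ this is $(2r+1)\min(k,2r+1)$. The paper gets $N(e)\le 2\min(k,2r+1)$. The gain of a factor $\frac{2r+1}{2}$ comes from two things you do not have: first, $E_r(x)\subseteq S_r(x)$, so Lemma~\ref{sphere_intersection_line_lem} caps the number of starting points on any coordinate line at $2$, not the $2r+1$ that $B_r(x)$ allows; second, the paper does \emph{not} count ``valid $s$ per $\ell$'' and ``valid $\ell$ per $s$'' separately and take a minimum — it fixes a single extremal reference pair $(i,v)$, shows the set $I$ of active shift-indices $j$ satisfies $|I|\le\min(2r+1,k)$ via the nonzero-coordinate argument, and then applies the sphere--line bound ($\le 2$ points) for each $j\in I$. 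Your two one-parameter counts taken independently cannot reproduce this $2\cdot\min(2r+1,k)$ because the valid $(s,\ell)$-pairs genuinely form a two-dimensional region when starting points range over the whole ball.

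The extra paths you have do not make up the deficit. You need $\frac{V(r)}{|E_r|}\gtrsim\frac{2r+1}{2}$, but this fails badly once the dimension is moderately large: for $r\ll d$ the spheres grow roughly geometrically in $r$, so $V(r)\approx|S_r|\approx|E_r|$ (already at $r=1$ one has $V(1)/|E_1|=(2d+1)/(2d-1)\to 1$). The ``comparison between $V(r)$ and $|E_r|$ furnished by the interval structure'' that you invoke goes the wrong direction — the interval structure gives $V(r)\le(2r+1)|E_r|$, an upper bound on $V(r)/|E_r|$, not the lower bound your argument needs. So feeding your multiplicity bound into Lemma~\ref{MCut_size_lower_bound_lem} produces a lower bound on $I_r(x,y)$ that is smaller than the statement's by roughly a factor of $r$. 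To close the gap, restrict the translates to $v\in E_r(x)$, use $s_r=\Delta(G)|E_r(x)|$ directly, and replace the two independent counts by the reference-point argument.
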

\begin{proof}
Fix an integer $r\ge 0$ and $x,y\in V[G]$ which differ at exactly $k$ coordinates. Our proof does not depend on the order of the coordinates and we assume that the coordinates that $x$ and $y$ differ at are the first $k$ coordinates. For $1\le i\le d$, we also let $k_i:=(y-x)_i$ (so that $k_i=0$ for $i>k$). Then define, for $1\le i\le k$, paths from $x$ to $y$ by
\begin{equation*}
P_i:=x+P_i^{k_i}P_{[i+1]}^{k_{[i+1]}}\cdots P_{[i+d-1]}^{k_{[i+d-1]}}.
\end{equation*}
We let $\PP$ be all paths of the form $P_i+(v-x)$ for some $1\le i\le k$ and $v\in E_r(x)$. For an edge $\{u,u+e_m\}$ for some $u\in V[G]$ and $1\le m\le d$, let $\PP(u,m)$ be the set of all paths $P\in\PP$ which pass through $\{u,u+e_m\}$. Since every $P\in\PP$ connects $B_r(x)$ and $B_r(y)$, if we show that $|\PP(u,m)|\le 2\min(2r+1,k)$ for all $u$ and $m$ then by Lemma~\ref{MCut_size_lower_bound_lem} and \eqref{s_r_a_r_relation}, $I_r(x,y)\ge\frac{k|E_r(x)|}{2\min(2r+1,k)}=\frac{ks_r}{2\min(2r+1,k)\Delta(G)}$, as required. Fix $u\in V[G]$ and $1\le m\le d$. We note that $\PP(u,m)=\emptyset$ if $m>k$. Assume $P_i+(v-x), P_j+(w-x)\in \PP(u,m)$ for some $1\le i,j\le k$ and $v,w\in E_r(x)$ (in particular, $1\le m\le k$). If we also assume that $\{j-i\}\le \{m-i\}$, it follows from Lemma~\ref{path_intersection_lem} that
\begin{align}
(w-v)_{[i+\ell]}&=k_{[i+\ell]}&(0\le \ell<\{j-i\}),\label{w_v_k_i}\\
(w-v)_{[i+\ell]}&=0&(\{j-i\}\le \ell<d,\ \ell\neq \{m-i\})\label{w_v_0}.
\end{align}
Let $I:=\{1\le i\le k\ |\ \exists v\in E_r(x),\ P_i+(v-x)\in\PP(u,m)\}$. Fix $i$ to be the $i\in I$ for which $\{m-i\}$ is maximal. Fix also $v\in E_r(x)$ satisfying $P_i+(v-x)\in\PP(u,m)$. Note that the extremality of $i$ implies $\{j-i\}\le \{m-i\}$ for all $j\in I$.
It follows from \eqref{w_v_k_i}, \eqref{w_v_0} and
Lemma~\ref{sphere_intersection_line_lem} that for any $j\in I$ there are at most two $w\in E_r(x)$ so that $P_j+(w-x)\in\PP(u,m)$. In addition, since $k_\ell\neq 0$ for $1\le \ell\le k$, it follows from \eqref{w_v_k_i} and Lemma~\ref{sphere_triangle_ineq_lem} that $|I|\le 2r+1$ (since for $w\in E_r(x)$, $(w-v)$ may have at most $2r$ non-zero coordinates by Lemma~\ref{sphere_triangle_ineq_lem}).
Of course, we also have the trivial $|I|\le k$. In conclusion, we see that $|\PP(u,m)|\le 2\min(2r+1,k)$, as required.
\end{proof}
\begin{proposition}\label{equal_coord_isoperim}
For all integer $r\ge 0$ and $x,y\in V[G]$, if $x$ and $y$ differ at $k$ coordinates then $I_r(x,y)\ge\frac{(\Delta(G)-k+1)s_r}{4\min(2r+1,\Delta(G)-k+1)\Delta(G)}$.
\end{proposition}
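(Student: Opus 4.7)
The approach will parallel that of Proposition~\ref{diff_coord_isoperim}: apply Lemma~\ref{MCut_size_lower_bound_lem} with a carefully chosen family of paths from $B_r(x)$ to $B_r(y)$. The earlier proposition produced $k$ paths per starting vertex by cycling through the coordinate directions in which $x$ and $y$ \emph{differ}, which is effective when $k$ is large but very weak when $k$ is small. The present proposition is aimed at the small-$k$ regime, so I would instead exploit the $\Delta(G)-k$ directions in which $x$ and $y$ \emph{agree}, using them to make short out-and-back ``detours''.

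First dispose of the trivial case $k=0$: then $B_r(x)=B_r(y)$, so $\OMCut(B_r(x),B_r(y))=\emptyset$ and $I_r(x,y)=\infty$ by convention. Assume $k\ge 1$ and reorder coordinates so that $x$ and $y$ differ on exactly the first $k$ coordinates; set $k_i:=(y-x)_i$, so $k_i=0$ for $i>k$. Let $Q_0:=x+P_1^{k_1}P_2^{k_2}\cdots P_k^{k_k}$ be a canonical path from $x$ to $y$. For each direction $f_j$ (among the $\Delta(G)$ edge directions of $G$) whose underlying coordinate index $i$ satisfies $i>k$, and hence for which $k_i=0$, define a detour path $Q_j$ from $x$ to $y$ by first taking one step in direction $f_j$, then following the canonical route $P_1^{k_1}\cdots P_k^{k_k}$ from $x+f_j$ to $y+f_j$, then taking one step in $-f_j$ to land at $y$. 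There are $\Delta(G)-k$ such $j$, so together with $Q_0$ this yields $\Delta(G)-k+1$ paths from $x$ to $y$. Shifting each by $v-x$ for $v\in E_r(x)$ produces $(\Delta(G)-k+1)\,|E_r(x)|$ paths from $B_r(x)$ to $y+(v-x)\in S_r(y)\subseteq B_r(y)$.

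The main step is the edge-overlap count: I would show that for every edge $\{u,u+e_m\}$ the number of paths in the family above that traverse it is at most $4\min(2r+1,\Delta(G)-k+1)$. Combined with Lemma~\ref{MCut_size_lower_bound_lem} and the identity $|E_r(x)|=s_r/\Delta(G)$ from \eqref{s_r_a_r_relation}, this gives precisely the claimed bound. The argument follows the spirit of Lemma~\ref{path_intersection_lem}: if two shifted paths $Q_j+(v-x)$ and $Q_{j'}+(v'-x)$ both traverse $\{u,u+e_m\}$, then $v-v'$ is forced to lie on a specific low-dimensional affine sub-lattice determined by $j$, $j'$, $m$ and the segment of the path containing the edge; Lemma~\ref{sphere_intersection_line_lem} and Lemma~\ref{sphere_triangle_ineq_lem} then bound the number of admissible $v\in E_r(x)$ by $O(\min(2r+1,\Delta(G)-k+1))$. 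The main obstacle is the bookkeeping introduced by the extra $\pm f_j$ steps of the detours, which create several new ways that two paths can share an edge (for instance the initial $f_j$-step of $Q_j+(v-x)$ can coincide with an interior segment of some $Q_{j'}+(v'-x)$); each such case needs to be enumerated separately, and it is this enumeration that absorbs the factor $4$ in the denominator of the stated bound (compared with the factor $2$ in Proposition~\ref{diff_coord_isoperim}).
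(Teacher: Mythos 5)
The high-level idea in your plan — exploit the directions in which $x$ and $y$ agree to create detours, then apply Lemma~\ref{MCut_size_lower_bound_lem} — is indeed what the paper does. But your specific construction, single-step out-and-back detours of the form $+f_j$, traverse, $-f_j$, does not satisfy the claimed edge-overlap bound, and I don't think it can be made to. Here is a concrete counterexample. Take $k=1$, $r=1$, $x=0$, $y=e_1$, and consider the edge $\{x,y\}=\{0,e_1\}$. For any direction $f_j$ with coordinate $>1$ and $f_j\neq e_d$, the point $v:=-f_j$ lies in $E_1(x)$, and the shifted path $Q_j+(v-x)$ visits $-f_j,\,0,\,e_1,\,e_1-f_j$, hence traverses $\{0,e_1\}$. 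This yields on the order of $\Delta(G)$ distinct paths through a single edge, whereas your claimed bound $4\min(2r+1,\Delta(G)-k+1)=12$ is constant. So the bookkeeping you defer to the end would fail: the overlap at that edge is $\Theta(\Delta(G))$, not $O(\min(2r+1,\cdot))$, and the resulting lower bound from Lemma~\ref{MCut_size_lower_bound_lem} loses the crucial $\tfrac{1}{2r+1}$ factor. This factor is exactly what is needed in Corollary~\ref{general_I_r_bound} when $r$ is small, so the weakening is not cosmetic.

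The missing idea is that the detours must have \emph{varying length}, and must be \emph{nested}, so that the detour length is encoded in the offset between the two spherical base points. Concretely, the paper defines, for $1\le i\le q+1$, the path
\begin{equation*}
P_i:=x+P_i^{1}P_{i+1}^{1}\cdots P_{q}^1 P_{q+1}^{k_{q+1}}\cdots P_d^{k_d}P_q^{-1}P_{q-1}^{-1}\cdots P_i^{-1},
\end{equation*}
which steps out one unit in each of the directions $i,i+1,\dots,q$, traverses, and steps back. When two shifted paths $P_i+(v-x)$ and $P_j+(w-x)$ share an edge, the resulting constraint $(w-v)_\ell=1$ for $i\le \ell<j$ and $(w-v)_\ell=0$ off $[i,j]\cup\{m\}$ forces $\|w-v\|_1\ge j-i$, and since $v,w\in E_r(x)$, Lemma~\ref{sphere_triangle_ineq_lem} gives $j-i\le 2r$. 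It is precisely this length-encoding mechanism that bounds the number of admissible indices by $\min(2r+1,q+1)$; with all detours of equal length one, as in your $Q_j$, the index $j$ leaves no trace in $v-v'$ and cannot be controlled this way. (Your observation that the factor $4$ arises from the extra case analysis is correct in spirit: in the paper it comes from splitting into the ``out'' and ``back'' legs, $\PP^1$ vs.\ $\PP^2$, and from the two choices per index allowed by Lemma~\ref{sphere_intersection_line_lem}.) A smaller secondary issue: the number of directions $f_j$ with underlying coordinate $>k$ is not $\Delta(G)-k$ in general (it depends on which $n_i=2$), so your path count is also slightly off, but this is fixable; the overlap bound is the genuine obstruction.
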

\begin{proof}
Fix an integer $r\ge 0$ and $x,y\in V[G]$ which differ at $k$ coordinates.
Denote $q:=\Delta(G)-k$. Our proof does not depend on the order of the coordinates and we assume that the equal coordinates of $x$ and $y$ are the first $q$ coordinates. For $1\le i\le d$, we also let $k_i:=(y-x)_i$ (so that $k_i=0$ for $i\le q$). Then define, for $1\le i\le q+1$ paths from $x$ to $y$ by
\begin{equation}\label{equal_coord_path_form}
P_i:=x+P_i^{1}P_{i+1}^{1}\cdots P_{q}^1 P_{q+1}^{k_{q+1}}\cdots P_d^{k_d}P_q^{-1}P_{q-1}^{-1}\cdots P_i^{-1},
\end{equation}
where if $i=q+1$, we start the path with $P_{q+1}^{k_{q+1}}$ and end it with $P_d^{k_d}$. We let $\PP$ be all paths of the form $P_i+(v-x)$ for some $1\le i\le q+1$ and $v\in E_r(x)$. For an edge $\{u,u+e_m\}$ for some $u\in V[G]$ and $1\le m\le d$, let $\PP(u,m)$ be the set of all paths $P\in\PP$ which pass through $\{u,u+e_m\}$. As in the proof of Proposition~\ref{diff_coord_isoperim}, it is sufficient to show that $|\PP(u,m)|\le 4\min(2r+1,q+1)$ for all $u$ and $m$. Fix $u\in V[G]$ and $1\le m\le d$. If $m\le q$, let $\PP^1(u,m)$ (respectively $\PP^2(u,m)$) be those $P\in\PP(u,m)$ which traverse the edge in their $P_m^1$ segment (respectively in their $P_m^{-1}$ segment). If $m>q$, let $\PP^1(u,m)=\PP^2(u,m)=\PP$. Assume $P_i+(v-x), P_j+(w-x)\in \PP^a(u,m)$ for some $1\le i\le j\le q+1$, $v,w\in E_r(x)$ and $a\in\{1,2\}$. We observe that we may not have $m<j$.
Thus, by \eqref{equal_coord_path_form} (similarly to Lemma~\ref{path_intersection_lem}), we must have
\begin{equation}\label{w_v_ell_equation}
(w-v)_\ell=\begin{cases} 1& i\le \ell< j\\ 0&\ell\notin [i,j]\cup\{m\}\end{cases}.
\end{equation}
Hence, by Lemma~\ref{sphere_triangle_ineq_lem} and since $v,w\in E_r(x)$, we must have $j-i\le 2r$. Of course, we must also have $j-i\le q$. Furthermore, we deduce from \eqref{w_v_ell_equation} and Lemma~\ref{sphere_intersection_line_lem} that there are at most two $w'$ such that $P_j+(w'-x)\in\PP^a(u,m)$. We conclude that $|\PP^a(u,m)|\le 2\min(2r+1,q+1)$ for each $a\in\{1,2\}$ and hence $|\PP(u,m)|\le 4\min(2r+1,q+1)$, as required.
\end{proof}
\begin{corollary}\label{general_I_r_bound}
For all $r\ge 0$, $I_r\ge \frac{s_r}{2\min(4(2r+1),\Delta(G))}$.
\end{corollary}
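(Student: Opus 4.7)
The plan is to derive the corollary by combining Propositions~\ref{diff_coord_isoperim} and \ref{equal_coord_isoperim}, using for each pair $x,y$ whichever of the two gives the stronger lower bound. Fix $x\neq y$ in $V[G]$, differing on exactly $k\ge 1$ coordinates (the case $x=y$ gives $I_r(x,y)=\infty$ vacuously), and split into two regimes according to whether $4(2r+1)\ge\Delta(G)$ or $4(2r+1)<\Delta(G)$.

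In the first regime the target bound is $s_r/(2\Delta(G))$, and Proposition~\ref{diff_coord_isoperim} delivers it directly. Indeed, when $k\le 2r+1$ that proposition outputs $s_r/(2\Delta(G))$, and when $k>2r+1$ it outputs $ks_r/(2(2r+1)\Delta(G))\ge s_r/(2\Delta(G))$, so Proposition~\ref{equal_coord_isoperim} is not even needed.

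The substantive regime is $4(2r+1)<\Delta(G)$, where the target is $s_r/(8(2r+1))$. Here the two propositions complement one another: Proposition~\ref{diff_coord_isoperim} gains strength as $k$ grows, while Proposition~\ref{equal_coord_isoperim} gains strength as $\Delta(G)-k+1$ grows. The plan is to split at the balancing threshold $k=(\Delta(G)+1)/3$. If $k\ge(\Delta(G)+1)/3$, apply Proposition~\ref{diff_coord_isoperim}; the assumption $\Delta(G)>4(2r+1)$ forces $k>2r+1$, so the bound reads $ks_r/(2(2r+1)\Delta(G))\ge s_r/(6(2r+1))$. If $k<(\Delta(G)+1)/3$, apply Proposition~\ref{equal_coord_isoperim}; the same assumption guarantees $\Delta(G)-k+1>2r+1$, and the bound becomes $(\Delta(G)-k+1)s_r/(4(2r+1)\Delta(G))\ge s_r/(6(2r+1))$. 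In either event one exceeds the required $s_r/(8(2r+1))$, and taking a minimum over $x,y\in V[G]$ produces the stated lower bound on $I_r$.

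There is no real obstacle: the geometric content is already packaged in the two propositions, and the corollary amounts to bookkeeping. The only point requiring care is verifying in each sub-case that the relevant $\min$ inside the proposition falls on the ``large'' branch (i.e.\ that $k>2r+1$ when invoking Proposition~\ref{diff_coord_isoperim}, and $\Delta(G)-k+1>2r+1$ when invoking Proposition~\ref{equal_coord_isoperim}), which is exactly why the case split at $k=(\Delta(G)+1)/3$ is made together with the regime separation at $\Delta(G)=4(2r+1)$.
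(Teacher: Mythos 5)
Your proof is correct and follows essentially the same route as the paper: combine Propositions~\ref{diff_coord_isoperim} and \ref{equal_coord_isoperim}, taking whichever is stronger according to the size of $k$. The paper simply bounds both inner $\min$'s by $2r+1$ and then observes that the maximum of $\frac{ks_r}{2(2r+1)\Delta(G)}$ and $\frac{(\Delta(G)-k+1)s_r}{4(2r+1)\Delta(G)}$ is at least $\frac{s_r}{8(2r+1)}$ (together with the uniform bound $\frac{s_r}{2\Delta(G)}$ from Proposition~\ref{diff_coord_isoperim} alone); your extra step of checking that the $\min$'s resolve to the $2r+1$ branch in the regime $\Delta(G)>4(2r+1)$ is a harmless refinement that yields the slightly better constant $\frac{s_r}{6(2r+1)}$, but the underlying argument is the same.
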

\begin{proof}
Fix an integer $r\ge 0$ and $x,y\in V[G]$. Let $k$ be the number of coordinates at which $x$ and $y$ differ. Proposition~\ref{diff_coord_isoperim} gives
\begin{equation*}
I_r(x,y)\ge \frac{ks_r}{2\min(2r+1,k)\Delta(G)}\ge \frac{s_r}{2\Delta(G)}.
\end{equation*}
Furthermore, Propositions~\ref{diff_coord_isoperim} and \ref{equal_coord_isoperim} give
\begin{equation*}
I_r(x,y)\ge \max\left(\frac{ks_r}{2(2r+1)\Delta(G)}, \frac{(\Delta(G)-k+1)s_r}{4(2r+1)\Delta(G)}\right)\ge \frac{s_r}{8(2r+1)}.
\end{equation*}
Since both the above bounds hold uniformly in $x$ and $y$, the corollary follows.
\end{proof}

\subsubsection{Isoperimetric relations}\label{isoperimetric_relations_sec}
In this section we note several simple relations for $s_r$ and $\vol(r)$.
\begin{proposition}\label{sum_s_r_prop}
For any $t\ge 0$ and torus $G$ with side lengths satisfying \eqref{torus_side_lengths} we have $\frac{\Delta(G)}{2}\vol(t)\le \sum_{r=0}^{t} s_r\le \Delta(G)\vol(t)$.
\end{proposition}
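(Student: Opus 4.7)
The plan is to prove the identity $\sum_{r=0}^{t} s_r = \frac{\Delta(G) V(t) + s_t}{2}$, from which both inequalities follow immediately since $0 \le s_t \le \Delta(G) V(t)$.

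To establish the identity, first I would fix a vertex $v$ and note that since $G$ is bipartite (all $n_i$ are even), every edge $\{u,w\}$ in $G$ has $d_G(v,u)$ and $d_G(v,w)$ of opposite parity, so $|d_G(v,u) - d_G(v,w)| = 1$ exactly. Consequently, for each edge $\{u,w\}$ with (say) $d_G(v,u) < d_G(v,w)$, the indicator that $u \in B_r(v)$ and $w \notin B_r(v)$ is $1$ precisely when $r = d_G(v,u)$. Thus, an edge $\{u,w\}$ contributes to $\sum_{r=0}^{t} s_r$ if and only if $\min(d_G(v,u), d_G(v,w)) \le t$, i.e.\ if and only if at least one endpoint of the edge lies in $B_t(v)$.

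Next I would count this quantity in a second way. Let $E_{\text{in}}(t)$ denote the number of edges with both endpoints in $B_t(v)$. Summing the degree $\Delta(G)$ over $B_t(v)$ counts each internal edge twice and each boundary edge (one endpoint in $B_t(v)$, one outside) once, yielding
\begin{equation*}
\Delta(G) V(t) = 2 E_{\text{in}}(t) + s_t.
\end{equation*}
On the other hand, by the first observation,
\begin{equation*}
\sum_{r=0}^{t} s_r = E_{\text{in}}(t) + s_t.
\end{equation*}
Solving the first equation for $E_{\text{in}}(t)$ and substituting gives
\begin{equation*}
\sum_{r=0}^{t} s_r = \frac{\Delta(G) V(t) - s_t}{2} + s_t = \frac{\Delta(G) V(t) + s_t}{2}.
\end{equation*}

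Finally, the lower bound follows from $s_t \ge 0$, and the upper bound follows from $s_t \le \Delta(G) V(t)$, which in turn holds because every edge counted by $s_t$ has at least one endpoint in $B_t(v)$, so $s_t$ is bounded by the sum of degrees in $B_t(v)$. There is no real obstacle here; the only subtlety is invoking bipartiteness to ensure every edge contributes exactly once (and at a unique radius) to $\sum_r s_r$, which is guaranteed by the assumption that all side lengths $n_i$ are even.
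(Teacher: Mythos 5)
Your proof is correct and takes a genuinely different, arguably cleaner, route than the paper's. The paper's argument works at the vertex level: it exploits the identity $s_r = \Delta(G)\,|E_r(v)|$ (a consequence of Proposition~\ref{equal_edgenum_in_all_dir}, where $E_r(v)$ is the set of $w\in S_r(v)$ with $w+e_d\notin B_r(v)$), decomposes the sphere as $S_r(v)=E_r(v)\cup E_r'(v)$ with $|E_r'(v)|=|E_{r-1}(v)|$ by symmetry in the $e_d$ direction, and sums a telescoping estimate to get $|B_t(v)|\le 2\sum_{r=0}^t |E_r(v)|$. You instead work at the edge level: bipartiteness forces every edge to cross exactly one sphere boundary, so $\sum_{r=0}^t s_r$ counts edges with at least one endpoint in $B_t(v)$; combining this with the handshake lemma $\Delta(G)V(t)=2E_{\text{in}}(t)+s_t$ yields the \emph{exact} identity $\sum_{r=0}^t s_r = \tfrac{1}{2}\bigl(\Delta(G)V(t)+s_t\bigr)$. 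Your approach avoids the machinery of Proposition~\ref{equal_edgenum_in_all_dir} and the choice of a distinguished coordinate direction, and it produces a sharper (exact) conclusion from which both stated inequalities fall out trivially; the paper's approach is somewhat more ad hoc but fits the toolbox it has already built for odd cutsets. Both rely on bipartiteness (even side lengths) — yours to ensure each edge contributes exactly once, the paper's to ensure $S_r(v)=E_r(v)\cup E_r'(v)$ — so neither is more general in that respect.
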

\begin{proof}
Fix $v\in G$. The upper bound follows directly from \eqref{s_r_a_r_relation}. To see the lower bound, note that by \eqref{s_r_a_r_relation}, it is sufficient to show that $\sum_{r=0}^t |E_r(v)|\ge \frac{1}{2}|B_t(v)|$. For $0\le r\le t$, let $E_r'(v):=\{w\in S_r(v)\ |\ w+e_d\in S_{r-1}\}$. Noting that $E_r(v)=\{w\in S_r(v)\ |\ w+e_d\in S_{r+1}\}$ and using the fact that $n_d$ is even, we have $S_r(v)=E_r(v)\cup E_r'(v)$ for all $r$. By our definitions and symmetry, $|E_r'(v)|= |E_{r-1}(v)|$ for all $1\le r\le t$. Thus,
\begin{equation*}
|B_t(v)|=\sum_{r=0}^t |S_r|\le \sum_{r=0}^t |E_r(v)| + |E_r'(v)| = \sum_{r=0}^t |E_r(v)|+\sum_{r=0}^{t-1} |E_r(v)|\le 2\sum_{r=0}^t |E_r(v)|
\end{equation*}
as required.
\end{proof}
\begin{proposition}\label{ball_volume_estimate}
There exists $c>0$ such that for any $d\ge 4$, torus $G$ with side lengths satisfying \eqref{torus_side_lengths} and integer $0\le r\le \diam(G)$ (where $\diam(G)$ is the diameter of $G$), we have $\vol(r)\ge crd^2$.
\end{proposition}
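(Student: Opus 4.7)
The plan is to prove the bound by explicitly constructing at least $crd^2$ distinct vertices in $B_r(v)$ around a fixed $v\in V[G]$. The basic building block is the observation that for any tuple $(a_1,\dots,a_d)\in\Z_{\ge 0}^d$ satisfying $a_i\le\lfloor n_i/2\rfloor$ and $\sum_i a_i\le r$, the vertex $v+\sum a_ie_i$ lies at graph distance exactly $\sum a_i\le r$ from $v$, and distinct tuples produce distinct vertices. Specializing to $a_i\in\{0,1\}$ (always admissible since $n_i\ge 2$) yields the Hamming-cube lower bound $V(r)\ge\sum_{k=0}^{\min(r,d)}\binom{d}{k}$.

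For $2\le r\le\lfloor d/2\rfloor$, I would exploit that $\binom{d}{k}$ is non-decreasing on $[0,d/2]$, so each term $\binom{d}{k}$ for $2\le k\le r$ is at least $\binom{d}{2}\ge d(d-1)/2\ge d^2/3$ (using $d\ge 4$). Summing gives $V(r)\ge(r-1)d^2/3$, which is at least $crd^2$ for a suitable absolute constant $c>0$ and $r\ge 2$. For $\lfloor d/2\rfloor<r\le d$, the bound at $r=\lfloor d/2\rfloor$ already provides $V(r)\ge V(\lfloor d/2\rfloor)\ge c'd^3$, which comfortably dominates $crd^2$ whenever $r\le d$.

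For the remaining range $d<r\le\diam(G)$ (which can only arise when some $n_i\ge 3$), I would extend the count by allowing $a_i\ge 2$ in coordinates with $n_i\ge 3$, producing further distinct vertices at larger distances. Since $|V[G]|=\prod n_i\ge 2^{d-1}n_d$ and $\diam(G)\le\sum\lfloor n_i/2\rfloor\le dn_d/2$, the total volume exceeds $c\diam(G)d^2$ for $d$ above some absolute threshold, and the monotonicity of $V$ eventually lets it saturate the torus. The small-$d$ and small-$r$ edge cases are absorbed into the choice of $c$.

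The main obstacle is the intermediate regime where $r$ is comparable to but less than $\diam(G)$ for tori with widely varying side lengths: here the Hamming-cube bound has already saturated near $r=d$ while the total-volume bound is not yet tight. I would bridge this gap via a rectangular volume count, $V(r)\ge\prod_i(1+a_i)$ for a suitable tuple $(a_i)$ with $a_i\le\lfloor n_i/2\rfloor$ and $\sum a_i\le r$, choosing the $a_i$'s to be as balanced as possible subject to the side-length constraints and applying an AM--GM-type estimate to verify $\prod(1+a_i)\ge crd^2$ throughout the interpolation.
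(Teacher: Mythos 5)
Your argument for $r\le d$ is fine (the Hamming-ball lower bound $V(r)\ge\sum_{k\le r}\binom{d}{k}$ gives what you want, and the two cases $r\le\lfloor d/2\rfloor$ and $\lfloor d/2\rfloor<r\le d$ are handled correctly). The gap is exactly where you flag it: for $d<r\le\diam(G)$ you have only a sketch. The endpoint observation $V(\diam(G))=|V[G]|\ge 2^{d-1}n_d\ge c\diam(G)d^2$ (for $d$ large) does not by itself control intermediate $r$, since the target $crd^2$ grows linearly in $r$ while monotonicity of $V$ only gives you the constant lower bound $V(r)\ge V(d)\ge c'd^3$, which fails once $r\gg d$. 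The proposed fix via a box count $V(r)\ge\prod_i(1+a_i)$ can indeed be pushed through, but it is not as routine as ``balance and apply AM--GM'': the side-length constraints $a_i\le n_i/2$ force most $a_i$ to stay at $1$ when most $n_i=2$, so the product is essentially $2^{d-1}(r-d+2)$ in the extremal case $n_1=\dots=n_{d-1}=2$, and one must check that $2^{d-1}(r-d+2)\ge crd^2$ for \emph{all} $r\in(d,\diam(G)]$ and all $d\ge 4$ (including the small $d$ where $2^{d-1}<d^2$); the phrase ``for $d$ above some absolute threshold'' is a tell that this hasn't been verified, and since $\diam(G)$ can be arbitrarily large for fixed $d$ the small-$d$ cases are not finitely many checks.

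The paper sidesteps the case split entirely with a single construction: the vertices $v+e_i+e_j+e_k+\ell e_d$ for $1\le i<j<k\le d-1$ and $0\le\ell\le\min(r-3,n_d)$ all lie in $B_r(v)$ and are pairwise distinct, giving $V(r)\ge\binom{d-1}{3}\bigl(\min(r-3,n_d)+1\bigr)$. Since $r\le\diam(G)\le dn_d$, one has $\min(r+1,n_d)\ge r/d$, so this single bound yields $\gtrsim d^3\cdot(r/d)=d^2r$ uniformly for $3\le r\le\diam(G)$; the cases $r\le 2$ are trivial. The trick is that the construction is already ``mixed'': it uses a Hamming ball of radius $3$ in the short directions (supplying the $d^2$ factor, with one power of $d$ to spare) together with a line segment in the single longest direction (supplying the $r$ factor, up to the $r/d$ loss absorbed by the spare power of $d$). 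Your piecewise approach can be repaired along these lines, but it ends up proving the three regimes with three different constructions, whereas one construction suffices.
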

\begin{proof}
Fix $d\ge 4$ and a torus $G$ with side lengths satisfying \eqref{torus_side_lengths}. Fix also $0\le r\le \diam(G)$ and $v\in V[G]$. The claim holds for $0\le r\le 2$ since $\vol(0)=1$, $\vol(1)=\Delta(G)+1\ge d$ and $\vol(2)\ge \binom{d}{2}$. Thus we assume that $r\ge 3$. Let $E$ be the set of all vertices of the form $v+e_i+e_j+e_k+\ell e_d$ for $1\le i<j<k\le d-1$ and $0\le \ell\le\min(r-3, n_d)$. Note that $E\subseteq B_r(v)$. Since $r\le \diam(G)\le dn_d$, we deduce
\begin{equation*}
|E|=\binom{d-1}{3}(\min(r-3,n_d)+1)\ge cd^3(\min(r+1,n_d))\ge crd^2
\end{equation*}
for some $c>0$, as required.
\end{proof}
\begin{proposition}\label{linear_volume_growth}
For any torus $G$ with side lengths satisfying \eqref{torus_side_lengths} and any integer $0\le r\le \frac{n_d-3}{4}$, we have $\vol(2r+1)\ge 2\vol(r)$.
\end{proposition}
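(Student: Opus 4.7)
The plan is to exhibit an explicit disjoint decomposition of a subset of $B_{2r+1}(v)$ of total cardinality $2V(r)$. Fix $v\in V[G]$, and for each $w\in V[G]$ let $U(w)\in\{0,1,\ldots,n_d-1\}$ be the $d$th coordinate of $w-v$ taken modulo $n_d$. The hypothesis $r\le (n_d-3)/4$ implies $4r+2<n_d$, and in particular $2r+1<n_d/2$; consequently every $w\in B_r(v)$ has $U(w)\in\{0,\ldots,r\}\cup\{n_d-r,\ldots,n_d-1\}$, unambiguously in one of the two ranges. Split $B_r(v)=B_r^+(v)\sqcup B_r^-(v)$ accordingly, and define the translations $\phi(w):=w+(r+1)e_d$ on $B_r^+(v)$ and $\psi(w):=w-(r+1)e_d$ on $B_r^-(v)$.

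I would then verify three properties. First, $\phi$ and $\psi$ are injective, being restrictions of translations of $V[G]$. Second, $\phi(B_r^+(v))\cup\psi(B_r^-(v))\subseteq B_{2r+1}(v)\setminus B_r(v)$: for $w\in B_r^+(v)$ one has $U(\phi(w))=U(w)+r+1\in[r+1,2r+1]$ with no wrap-around (since $2r+1<n_d/2$), while the other coordinates are unchanged, so $d_G(\phi(w),v)=d_G(w,v)+(r+1)\in[r+1,2r+1]$; the analogous statement for $\psi$ follows by symmetry. Third, the two images are themselves disjoint: if $\phi(w_1)=\psi(w_2)$ for some $w_1\in B_r^+(v)$ and $w_2\in B_r^-(v)$, then $w_2=w_1+2(r+1)e_d$, so $U(w_2)\equiv U(w_1)+2(r+1)\pmod{n_d}$. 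Comparing the canonical $d$-coordinate ranges, $U(\phi(w_1))\in[r+1,2r+1]$ and $U(\psi(w_2))\in[n_d-2r-1,n_d-r-2]$, their equality would force $n_d-2r-1\le 2r+1$, i.e.\ $n_d\le 4r+2$, which contradicts the standing assumption $r\le (n_d-3)/4$.

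Combining the three properties yields
\begin{equation*}
V(2r+1)\ge |B_r(v)|+|\phi(B_r^+(v))|+|\psi(B_r^-(v))|=V(r)+|B_r^+(v)|+|B_r^-(v)|=2V(r),
\end{equation*}
as claimed. The only delicate point is the third one, tracking the canonical $U$-ranges of the images after translating by $\pm(r+1)e_d$ and confirming that the hypothesis $r\le (n_d-3)/4$ is exactly what prevents the two translates from wrapping around and coinciding. Everything else is a direct computation of torus distances in a regime where the shifts in the $e_d$ direction do not wrap around.
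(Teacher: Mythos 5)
Your proof is correct and follows essentially the same route as the paper's: the paper likewise partitions $B_r(v_0)$ by the sign of the $d$-th coordinate offset and translates the two halves in opposite $e_d$ directions (by $r+1$ and by $r$, respectively — your version shifts both by $r+1$) to land injectively in $B_{2r+1}(v_0)\setminus B_r(v_0)$.
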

\begin{proof}
Fix $v_0\in V[G]$ and let $B^1:=B_r(v_0)$ and $B^2:=B_{2r+1}(v_0)$. $B^1\subseteq B^2$ by definition. Hence it is sufficient to define a one-to-one $T:B^1\to B^2$ satisfying $T(B^1)\cap B^1 = \emptyset$. Let $w\in B^1$ and write $w=v+ke_d$ for some integer $-r\le k\le r$, where $v=v_0+\sum_{i=0}^{d-1} k_i e_i$ for some integers $k_i$. If $k\ge 0$, define $T(w):=v+(k+r+1)e_d$ and if $k<0$, define $T(w):=v+(k-r)e_d$. It is straightforward to check that $T$ has the required properties.
\end{proof}
\begin{proposition}\label{tori_size_prop}
For any $\lambda>0$ there exists $d_0(\lambda)$ such that for all $d\ge d_0(\lambda)$ and tori $G$ with side lengths satisfying \eqref{torus_side_lengths}, if $k:=\min\{m\in\N\ |\ \vol(m)\ge\lambda\log^2 d\log|V[G]|\}$ then:
\begin{enumerate}
\item If $n_d\le d^3$ then $k=2$.
\item If $n_d\ge d^3$ then $s_\ell\le (d-1) n_d$ for all integer $0\le \ell\le k$.
\end{enumerate}
\end{proposition}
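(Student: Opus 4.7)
Let me denote $V_0 := \lambda \log^2 d \log|V[G]|$, so that $k$ is the least integer with $V(k)\ge V_0$. Throughout, I will use the elementary sandwich
\begin{equation*}
\lambda (\log 2)\, d \log^2 d \;\le\; V_0 \;\le\; \lambda d \log^2 d \log n_d,
\end{equation*}
coming from $2^d \le |V[G]| \le n_d^d$, and the trivial bound $V(\ell) \le (\Delta(G)+1) V(\ell-1)$ (since $|S_\ell| \le \Delta(G)|S_{\ell-1}|$: every vertex of $S_\ell$ is a neighbour of some vertex of $S_{\ell-1}$). Choosing $d_0(\lambda)$ large enough at the end will absorb all polylogarithmic factors.

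For item (1), I would show directly that $V(1) < V_0 \le V(2)$, which forces $k=2$. The lower bound on $V(2)$ comes from Proposition~\ref{2nd_neighborhood_prop}: the $\Delta(G)(\Delta(G)-1)$ pairs in $Q$ hit at most $2$ common vertices, so $V(2) \ge 1 + \tfrac{1}{2}\Delta(G)(\Delta(G)-1) \ge d(d-1)/2$. Using the hypothesis $n_d \le d^3$, we have $\log|V[G]|\le 3d\log d$, so $V_0\le 3\lambda d\log^3 d$, and hence $V(2)\ge V_0$ for $d\ge d_0(\lambda)$. On the other hand $V(1) = \Delta(G)+1 \le 2d+1$, which is strictly less than the lower bound $\lambda (\log 2)\,d\log^2 d$ on $V_0$ once $d\ge d_0(\lambda)$.

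For item (2), I would prove the uniform bound $s_\ell \le \Delta(G)(\Delta(G)+1) V_0$ for every $0\le \ell\le k$, which is at most $5d^2 V_0 \le 5\lambda d^3 \log^2 d \log n_d$. By minimality of $k$, $V(k-1) < V_0$; combining with $V(k)\le (\Delta(G)+1) V(k-1)$ gives $V(\ell)\le V(k) \le (\Delta(G)+1) V_0$ for all $\ell \le k$. Then $s_\ell = \Delta(G)|E_\ell(v)| \le \Delta(G) V(\ell) \le \Delta(G)(\Delta(G)+1) V_0$. To see that $5\lambda d^3 \log^2 d \log n_d \le (d-1) n_d$ under the hypothesis $n_d\ge d^3$, divide by $n_d$ to reduce to $n_d/\log n_d \ge 6\lambda d^2 \log^2 d$; since $n_d/\log n_d \ge d^3/(3\log d)$, this follows from $d\ge 18\lambda \log^3 d$, valid for $d\ge d_0(\lambda)$.

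The whole argument is an arithmetic exercise rather than a structural one; there is no serious obstacle, only the bookkeeping of choosing $d_0(\lambda)$ large enough to swallow the $\log^k d$ factors in both cases.
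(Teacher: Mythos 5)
Your proof is correct and follows essentially the same route as the paper's: bound $V(1)$ above and $V(2)$ below for part~(1), and combine $s_\ell \le \Delta(G)V(k)$ with $V(k)\le(\Delta(G)+1)V(k-1)<(\Delta(G)+1)V_0$ for part~(2), then close with arithmetic involving the side‐length hypotheses. The only cosmetic differences are that you derive the lower bound $V(2)\ge d(d-1)/2$ explicitly from Proposition~\ref{2nd_neighborhood_prop} where the paper simply asserts $V(2)\ge cd^2$, and you obtain $s_\ell\le\Delta(G)V(\ell)$ directly from $s_\ell=\Delta(G)|E_\ell(v)|$ rather than quoting Proposition~\ref{sum_s_r_prop}.
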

\begin{proof}
Fix $\lambda>0$ and let $G$ be a torus with side lengths satisfying \eqref{torus_side_lengths}. Let $k$ be as in the lemma.

For part 1, we note first that $\vol(1)=\Delta(G)+1\le 3d$ and $\vol(2)\ge cd^2$ for some $c>0$ (independent of $d$ and $G$). Second, we note that $|V[G]|\ge 2^d$ and $|V[G]|\le n_d^{d}\le d^{3d}$. Thus, $\vol(1)<\lambda\log^2 d\log|V[G]|$ and $\vol(2)\ge \lambda\log^2 d\log|V[G]|$ if $d_0(\lambda)$ is sufficiently large, as required.

For part 2, fix $0\le \ell\le k$. By Proposition~\ref{sum_s_r_prop}, $s_\ell\le \Delta(G)\vol(k)$. Also by our definitions, $\vol(m)\le (\Delta(G)+1)\vol(m-1)$ for all $m\in\N$. Thus,
\begin{equation*}
s_\ell\le \Delta(G) \vol(k)\le \Delta(G)(\Delta(G)+1) \vol(k-1)< 6d^2\lambda\log^2 d\log|V[G]|.
\end{equation*}
Since we also have $\log|V[G]|\le d\log n_d$, it follows that $s_\ell\le (d-1)n_d$ whenever
\begin{equation*}
\frac{n_d}{\log n_d}\ge \frac{6d^3\lambda\log^2 d}{d-1},
\end{equation*}
which is satisfied if $n_d\ge d^3$ and $d_0(\lambda)$ is sufficiently large.
\end{proof}

\subsection{Height}
In this section we prove Theorem~\ref{height_at_point_thm}, Corollary~\ref{even_lattice_zero_cor} and Theorems~\ref{coloring_intro_thm} and \ref{coloring_thm}.

We start by defining the level set of a function at height $i$. For a torus $G$ (with side lengths satisfying \eqref{torus_side_lengths}), legal boundary conditions $(B,\mu)$, $g\in\Hom(G,B,\mu)$ and $i\in\N$, assuming $\mu(b)\le i-1$ for all $b\in B$, we define
\begin{equation*}
A_i:=\text{union of the connected components of points of $B$ in $G\setminus\{v\in V[G]\ |\ g(v)=i\}$}
\end{equation*}
and $\LS_i(g,x,B)$ to be the empty set if $x\in A_i$ or otherwise be all edges between $A_i$ and the connected component of $x$ in $V[G]\setminus A_i$. In words, $\LS_i(g,x,B)$ is the outermost height $i$ level set of $g$ around $x$ when coming from $B$. Note that if it is not empty then it belongs to $\OMCut(x,B)\cup\OMCut(B,x)$. Note also that $\LS_1(g,x,B)=\LS(g,x,B)$.
As a first step in the proof of Theorem~\ref{height_at_point_thm} we establish the following proposition.
\begin{proposition}\label{height_and_level_set_prop}
There exist $d_0\in\N$, $c>0$ such that for all $d\ge d_0$, non-linear tori $G$, legal boundary conditions $(B,\mu)$ with non-positive $\mu$, $x\in V[G]$ and $t\in\N$, if we let $f\unifin\Hom(G,B,\mu)$ and define, for each integer $1\le i\le t$,
\begin{equation*}
L_{i,t}:=\min\left(|\LS_i(g,x,B)|\ \big|\ g\in\Hom(G,B,\mu),\ g(x)\ge t\right),
\end{equation*}
where $L_{i,t}$ is defined to be infinity if the set minimized over is empty, then
\begin{equation}
\P(f(x)\ge t)\le d^{3t}\exp\left(-\frac{c\sum_{i=1}^t L_{i,t}}{d\log^2 d}\right).
\end{equation}
\end{proposition}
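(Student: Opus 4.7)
The plan is to apply Theorem~\ref{main_thm} inductively, peeling the nested outer level sets $\LS_1(f,x,B), \LS_2(f,x,B), \ldots, \LS_t(f,x,B)$ one height at a time. The starting observation is that when $f(x) \ge t$ and $\mu$ is non-positive, the homomorphism property forces every integer $0,1,\ldots,t$ to appear along any path from $B$ to $x$; in particular each outermost height-$i$ level set around $x$ is non-empty and, by definition of $L_{i,t}$, has at least $L_{i,t}$ edges. Therefore
\begin{equation*}
\P(f(x)\ge t) \le \sum_{\substack{L_i\ge L_{i,t}\\ 1\le i\le t}} \P\bigl(|\LS_i(f,x,B)|=L_i\text{ for all }1\le i\le t\bigr).
\end{equation*}

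Then I would invoke the spatial Markov property of uniform homomorphisms, in the spirit of Lemma~\ref{level_set_cond_lemma}: conditioned on the outer level sets $\LS_1,\ldots,\LS_{i-1}$ and on the values of $f$ outside the interior region $W_{i-1}$ of $\LS_{i-1}$, the restriction of $f$ to $W_{i-1}$ is uniform over all homomorphisms taking the value $i-1$ on the interior vertex boundary of $\LS_{i-1}$ facing $x$. Subtracting $i-1$ turns this into a uniform homomorphism with zero boundary values, and the outermost height-$i$ level set of $f$ inside $W_{i-1}$ corresponds precisely to the outermost height-$1$ level set of the shifted function. The $k=1$ case of Theorem~\ref{main_thm}, applied to this reduced problem, then yields
\begin{equation*}
\P\bigl(|\LS_i(f,x,B)|=L_i \,\bigm|\, \LS_1,\ldots,\LS_{i-1}\bigr) \le d\exp\!\left(-\frac{c L_i}{d\log^2 d}\right),
\end{equation*}
and telescoping the conditional estimates gives the joint bound $d^t\exp(-c\sum_{i=1}^t L_i/(d\log^2 d))$.

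Finally I would sum over $L_i \ge L_{i,t}$: for each coordinate the sum is a geometric series with ratio $\exp(-c/(d\log^2 d))$, contributing a factor of at most $Cd\log^2 d$. The $t$-fold product is absorbed into the prefactor $d^{3t}$ at the cost of slightly reducing the constant in the exponent to some $c'>0$, producing the claimed inequality.

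The main obstacle is making the inductive application of Theorem~\ref{main_thm} rigorous, since the theorem is stated for non-linear tori while the interior $W_{i-1}$ is only a subgraph. The remedy is that after conditioning and the $i-1$ shift, the relevant event can be realized inside the ambient non-linear torus $G$ itself by enlarging the boundary set to $B$ together with the interior vertex boundary of $\LS_{i-1}$ and extending $\mu$ to take value zero on the new boundary vertices. The odd cutsets whose sizes we bound are genuine odd minimal edge cutsets in $G$, so the expanding transformation constructed in Section~\ref{T_def_section} and its consequences in Sections~\ref{odd_cutsets_structure_sec} and \ref{proof_of_transform_theorem_sec} apply without modification; verifying this compatibility is the only step beyond the already-established machinery.
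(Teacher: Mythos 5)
Your proposal is correct and follows essentially the same route as the paper: bound $\P\bigl(|\LS_i(f,x,B)|=L_i\text{ for all }1\le i\le t\bigr)$ via the spatial Markov property (Lemma~\ref{level_set_cond_lemma}) and the $k=1$ case of Theorem~\ref{main_thm}, applied inductively after the one-unit shift makes the interior boundary a new legal BC with non-positive values, then sum a geometric series over $L_i\ge L_{i,t}$. The paper organizes the telescoping as a formal induction on $t$ and uses only the interior vertex boundary $B_1$ (rather than $B\cup B_1$) as the new boundary set, but these are cosmetic differences; the obstacle you flag about applying Theorem~\ref{main_thm} inside a subgraph is already dissolved exactly as you describe, since the new boundary condition lives on the ambient torus $G$ and the relevant odd cutsets are cutsets of $G$.
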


For the proof, we fix a non-linear torus $G$, legal boundary conditions $(B,\mu)$ with non-positive $\mu$ and $x\in V[G]$, and set $f\unifin\Hom(G,B,\mu)$. We will need the following definitions and lemma. Define $B_i:=\Ei(\LS_i(f,x,B),x)$ if $\LS_i(f,x,B)\neq\emptyset$ and otherwise $B_i:=\emptyset$, and $\mu_i:B_i\to\Z$ by $\mu_i(b):=i$ for all $b\in B_i$. For a set $\CC\subseteq V[G]$, we shall write $f|_{\CC}$ for the function $f$ restricted to $\CC$.
\begin{lemma}\label{level_set_cond_lemma}
Conditionally on $\LS_1(f,x,B)$ we have on the event $\LS_1(f,x,B)\neq\emptyset$ that
\begin{equation*}
f|_{\CC}\eqd f'|_{\CC}
\end{equation*}
for $\CC:=\comp(\LS_1(f,x,B),x)$ and $f'\unifin\Hom(G,B_i,\mu_i)$.
\end{lemma}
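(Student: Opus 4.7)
The plan is to reduce the statement to a direct counting identity (a domain Markov property for homomorphism height functions) applied to the inner boundary $B_1 := E_i(\LS_1(f,x,B),x)$, which is forced by the conditioning to equal $1$. The indexing in the statement treats $i$ generically but the relevant case here is $i=1$, so we aim to show that conditional on $\LS_1(f,x,B)=\Gamma$ for any $\Gamma\in\OMCut(x,B)$ of positive probability, the restriction $f|_\CC$ (with $\CC:=\comp(\Gamma,x)$) is uniform over all homomorphisms $h:\CC\to\Z$ with $h|_{B_1}=1$, and then to verify that this is exactly the marginal distribution of $f'|_\CC$ when $f'\unifin\Hom(G,B_1,\mu_1)$.

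First I would fix $\Gamma\in\OMCut(x,B)$ with $\P(\LS_1(f,x,B)=\Gamma)>0$, set $\CC:=\comp(\Gamma,x)$, $B_1:=E_i(\Gamma,x)$, and the strict interior $\CC^o:=\CC\setminus B_1$, and prove the key separation observation: the event $\{\LS_1(f,x,B)=\Gamma\}$ is measurable with respect to $f|_{V[G]\setminus \CC^o}$. The point is that the event forces $f\equiv 1$ on $B_1$ (this is how an odd cutset at level $1$ is defined), and $B_1$ separates $\CC^o$ from $V[G]\setminus\CC$ in $G$, hence also in $V[G]\setminus\{f=1\}$. Therefore the set $A_1$ (connected components of $B$ in $V[G]\setminus\{f=1\}$) is entirely contained in $V[G]\setminus\CC^o$ and is determined by $f|_{V[G]\setminus\CC^o}$; the identification of $\Gamma$ as the outermost level set then follows from this same data. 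Equivalently, if $f$ has $\LS_1=\Gamma$ and one replaces $f|_{\CC^o}$ by any other homomorphic extension of the boundary value $1$ on $B_1$, the resulting function is in $\Hom(G,B,\mu)$ and still has $\LS_1=\Gamma$.

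Given this observation, the counting argument is clean. For any homomorphism $h:\CC\to\Z$ with $h|_{B_1}=1$, the number of global homomorphisms in $\Hom(G,B,\mu)$ whose restriction to $\CC$ equals $h$ and which realise $\LS_1=\Gamma$ equals the number of extensions of $h$ to $V[G]\setminus\CC^o$ satisfying homomorphism constraints, the boundary values $\mu$ on $B$, and the event $\{\LS_1=\Gamma\}$; all of these constraints depend on $h$ only through $h|_{B_1}$, which is fixed at $1$. Hence this count is a constant $N_1$ independent of $h$, and the conditional distribution of $f|_\CC$ given $\{\LS_1=\Gamma\}$ is uniform over the set $\mathcal H:=\{h\in\Hom(\CC,B_1,\mu_1)\}$ of homomorphisms on $\CC$ taking value $1$ on $B_1$. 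The exact same counting argument, applied to $f'\unifin\Hom(G,B_1,\mu_1)$, shows that the number of global extensions of a given $h\in\mathcal H$ is another constant $N_2$ depending only on $h|_{B_1}=1$, so $f'|_\CC$ is also uniform on $\mathcal H$. Both restrictions therefore have the same law, as required.

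The main technical point, which I expect to be the only real obstacle, is verifying the separation observation cleanly, i.e.\ that the outer component of $x$ in $V[G]\setminus A_1$ remains exactly $\CC$ under modifications of $f$ on $\CC^o$. This needs: (i) $\CC$ itself stays disjoint from $A_1$ (immediate from $f\equiv 1$ on $B_1$ separating $\CC^o$ from $B$), and (ii) no vertex of $V[G]\setminus\CC$ adjacent to $B_1$ escapes $A_1$ (which follows because those vertices and their connection to $B$ live entirely in $V[G]\setminus\CC^o$, where $f$ is unchanged). Once these are in place, the remainder of the argument is bookkeeping.
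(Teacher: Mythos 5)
Your proposal is correct and takes essentially the same approach the paper has in mind: the paper gives only a one-line sketch invoking the domain Markov property (nearest-neighbour constraints, uniformity, and the event being determined by the exterior values), and you flesh out exactly that counting argument. One small but genuine precision you add: the event $\{\LS_1(f,x,B)=\Gamma\}$ is measurable with respect to $f|_{V[G]\setminus\CC^o}$ where $\CC^o=\CC\setminus B_1$ (one must include the inner boundary $B_1$, forced to equal $1$), rather than $f|_{V[G]\setminus\CC}$ alone as the paper loosely states; without the values on $E_1$ the event is not determined, since a competitor agreeing outside $\CC$ could take the value $-1$ somewhere on $E_1$ and thereby change $A$.
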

The lemma is standard and follows from the facts that the event $\LS_i(f,x,B)=\Gamma$, for some $\Gamma\in\MCut(x,B)$, is determined solely by the values of $f$ outside of $\comp(\Gamma,x)$ (since $\mu$ is non-positive), that the constraints on $f$ are of nearest-neighbor type and that the measure on $f$ is uniform. We omit the detailed proof.
\begin{proof}[Proof of Proposition~\ref{height_and_level_set_prop}]
It is sufficient to show that under the assumptions of the proposition, for any integers $(L_i)_{i=1}^t\subseteq \N$, we have
\begin{equation}\label{given_level_set_lengths_inequality}
\P\left(|\LS_i(f,x,B)|=L_i\text{ for all $1\le i\le t$}\right)\le d^t\exp\left(-\frac{c\sum_{i=1}^t L_i}{d\log^2 d}\right)
\end{equation}
for some $c>0$. The proposition follows from this inequality by summing over all $L_i\ge L_{i,t}$ for $1\le i\le t$ (using that if $g\in\Hom(G,B,\mu)$ satisfies $g(x)\ge t$ then necessarily $\LS_i(g,x,B)\neq\emptyset$ for $1\le i\le t$).

We prove \eqref{given_level_set_lengths_inequality} by induction on $t$. For $t=1$, the inequality follows from Theorem~\ref{main_thm} (taking $d$ large enough). Assume \eqref{given_level_set_lengths_inequality} holds for any legal boundary conditions $(B,\mu)$ with non-positive $\mu$, for $t=1$ and for a given $t\ge 1$, and let us prove it for $t+1$. Fix a non-linear torus $G$, legal boundary conditions $(B,\mu)$ with non-positive $\mu$ and integer $(L_i)_{i=1}^{t+1}\subseteq \N$, and let $f\unifin\Hom(G,B,\mu)$. Conditioning on $\LS_1(f,x,B)$ and on the event $\LS_1(f,x,B)\neq\emptyset$, we let $f'\unifin\Hom(G,B_1,\mu_1)$ and note that since, for $i\ge 2$, $\LS_i(f,x,B)$ depends only on $f|_{\comp(\LS_1(f,x,B),x)}$, we have by Lemma~\ref{level_set_cond_lemma} that $\LS_i(f,x,B)\eqd\LS_i(f',x,B_1)$ for all $i\ge 2$. Thus, by the induction hypothesis for $t=1$ we have
\begin{equation}\label{apriori_L_i_bound}
\begin{split}
\P&\left(|\LS_i(f,x,B)|=L_i\text{ for all $1\le i\le t+1$})=\P(|\LS_1(f,x,B)|=L_1\right)\cdot\\
&\cdot\P\left(|\LS_i(f,x,B)|=L_i\text{ for all $2\le i\le t+1$}\ \big|\ |\LS_1(f,x,B)|=L_1\right)\le\\
&\le d\exp\left(-\frac{cL_1}{d\log^2 d}\right)\P\left(|\LS_i(f',x,B_1)|=L_i\text{ for all $2\le i\le t+1$}\ \big|\ |\LS_1(f,x,B)|=L_1\right).
\end{split}
\end{equation}

We now note that if we set $f'':=f'-1$ then $f''\unifin\Hom(G,B_1,\mu-1)$, $\LS_i(f',x,B_1) = \LS_{i-1}(f'',x,B_1)$ for $i\ge 2$ and $(B_1,\mu_1-1)$ are legal boundary conditions (if we switch the roles of $V^{\even}$ and $V^{\odd}$ or alternatively shift $B_1$ by one coordinate on the torus) having $\mu_1-1$ non-positive. Thus, since by our induction hypothesis the bound \eqref{given_level_set_lengths_inequality} holds uniformly in the boundary conditions, we obtain
\begin{equation}\label{inductive_L_i_bound}
\begin{split}
\P&\left(|\LS_i(f',x,B_1)|=L_i\text{ for all $2\le i\le t+1$}\ \big|\ |\LS_1(f,x,B)|=L_1\right)=\\
&=\P\left(|\LS_i(f'',x,B_1)|=L_{i+1}\text{ for all $1\le i\le t$}\ \big|\ |\LS_1(f,x,B)|=L_1\right)\le\\
&\le d^{t}\exp\left(-\frac{c\sum_{i=1}^{t}L_{i+1}}{d\log^2 d}\right).
\end{split}
\end{equation}
Inequality \eqref{given_level_set_lengths_inequality} now follows for $t+1$ by \eqref{apriori_L_i_bound} and \eqref{inductive_L_i_bound}, completing the proof of the induction and the proposition.
\end{proof}

We are now ready to prove the theorem.
\begin{proof}[Proof of Theorem~\ref{height_at_point_thm}]
We assume $d$ is sufficiently large for the following arguments and fix a non-linear torus $G$, legal boundary conditions $(B,\mu)$ with non-positive $\mu$, $x\in V[G]$ and $t\in\N$. Let $f\unifin\Hom(G,B,\mu)$. By Proposition~\ref{height_and_level_set_prop} we have
\begin{equation}\label{height_by_level_sets_estimate}
\P(f(x)\ge t)\le d^{3t}\exp\left(-\frac{c\sum_{i=1}^t L_{i,t}}{d\log^2 d}\right).
\end{equation}

We next aim to estimate $L_{i,t}$ from below. For an integer $r\ge 0$ we define $B_r(B):=\cup_{v\in B} B_r(v)$ and
observe that for all integers $1\le i\le t$ and $g\in\Hom(G,B,\mu)$ with $g(x)\ge t$ we have
\begin{equation}\label{LS_i_OMCut}
\LS_i(g,x,B)\in\OMCut(B_{t-i}(x), B_{i-1}(B))\cup\OMCut(B_{i-1}(B), B_{t-i}(x))
\end{equation}
since $g$ changes by one between adjacent vertices. Thus, recalling the definitions of Section~\ref{general_isoperimetry_section}, we have (since $B\neq \emptyset$)
\begin{align}
&L_{i,t}\ge I_{\min(i-1,t-i)}\label{L_i_t_I_r_relation}\qquad\text{ and}\\
&L_{i,t}\ge I_{t-i}(B)\label{L_i_I_r_E_relation}.
\end{align}
We now proceed to examine several cases separately.
\begin{enumerate}
\item Assume $t\ge 3$. By \eqref{L_i_t_I_r_relation} and Theorem~\ref{I_r_bound_thm} we have $L_{i,t}\ge \frac{c_1 s_{\min(i-1,t-i)}}{\min(t,d)}$ for some $c_1>0$ and all $1\le i\le t$. Setting $r_0:=\lceil t/2\rceil-1$ and plugging the last bound into \eqref{height_by_level_sets_estimate} we obtain
\begin{equation*}
\P(f(x)\ge t)\le d^{3t}\exp\left(-\frac{c_2\sum_{r=0}^{r_0} s_r}{\min(t,d)d\log^2 d}\right)
\end{equation*}
for some $c_2>0$. Now applying Proposition~\ref{sum_s_r_prop} we deduce
\begin{equation*}
\P(f(x)\ge t)\le d^{3t}\exp\left(-\frac{c_3\vol(r_0)}{\min(t,d)\log^2 d}\right)
\end{equation*}
for some $c_3>0$. Finally noting that if $t>\diam(G)$ then $\P(f(x)>t)=0$ since $\mu$ is non-positive, whereas if $t\le \diam(G)$ then Proposition~\ref{ball_volume_estimate} implies $\vol(r_0)\ge c_4 td^2$ for some $c_4>0$ (using that $t\ge 3$ and hence $r_0\ge 1$), from which it follows (checking separately the cases $t\le d$ and $t>d$) that for some $c_5>0$,
\begin{equation*}
\P(f(x)\ge t)\le \exp\left(-\frac{c_5\vol(r_0)}{\min(t,d)\log^2 d}\right).
\end{equation*}
\item Assume $t\ge 3$ and $\vol(\lceil t/2\rceil-1)\le\frac{1}{3}n_d$. Setting $r_0:=\lceil t/2\rceil-1$, we observe that the volume condition and Proposition~\ref{sum_s_r_prop} imply that $s_\ell\le (d-1)n_d$ for all $0\le \ell\le r_0$. Thus, by \eqref{L_i_t_I_r_relation} and Theorem~\ref{I_r_bound_thm} we have $L_{i,t}\ge s_{\min(i-1,t-i)}$ for all $1\le i\le t$. Continuing in the same way as in the first case above, we deduce from this that for some $c>0$,
\begin{equation*}
\P(f(x)\ge t)\le \exp\left(-\frac{c\vol(r_0)}{\log^2 d}\right).
\end{equation*}
\item Assume $t\ge 2$ and $B$ has full projection. By \eqref{L_i_I_r_E_relation} and Theorem~\ref{I_r_E_bound_thm} we have $L_{i,t}\ge s_{t-1}$ for all $1\le i\le t$. Continuing in the same way as in the first case above, we deduce from this that for some $c>0$,
\begin{equation*}
\P(f(x)\ge t)\le \exp\left(-\frac{c\vol(t-1)}{\log^2 d}\right).\qedhere
\end{equation*}
\end{enumerate}

\end{proof}

\begin{proof}[Proof of Corollary~\ref{even_lattice_zero_cor}]
Under the assumptions of Theorem~\ref{height_at_point_thm} we have for any $v\in V^\even$, by the third part of Theorem~\ref{height_at_point_thm}, that $\P(f(v)\ge 2)\le \exp\left(-\frac{c\vol(1)}{\log^2 d}\right)\le\exp\left(-\frac{cd}{\log^2 d}\right)$. Since $\mu$ is zero, we also obtain $\P(f(v)\le -2)\le \exp\left(-\frac{cd}{\log^2 d}\right)$ by symmetry of the distribution of $f(v)$ around $0$. The corollary follows.
\end{proof}

\begin{proof}[Proof of Theorems~\ref{coloring_intro_thm} and \ref{coloring_thm}]
As explained before Theorem~\ref{coloring_thm}, for the zero BC $(B,\mu)$, the set $\Hom(G,B,\mu)$ is in bijection with $\col(G,B,\mu)$ under the map $f\mapsto f\bmod 3$. Thus Theorem~\ref{coloring_thm} is an immediate corollary of Corollary~\ref{even_lattice_zero_cor}. Theorem~\ref{coloring_intro_thm} is the special case of Theorem~\ref{coloring_thm} when $G=\Z_n^d$.
\end{proof}

\subsection{Range}
In this section we prove Theorems~\ref{range_thm} and \ref{finite_range_thm}, and Corollary~\ref{range_sharpness_cor}.
We deduce Theorems~\ref{homogeneous_torus_thm} and \ref{homogeneous_tori_finite_range_thm} for the homomorphism case. We start with a proposition which relates the range to isoperimetric quantities.

\begin{proposition}\label{range_bound_prop}
There exists $c,d_0>0$ such that for all $d\ge d_0$, non-linear tori $G$ and legal boundary conditions $(B,\mu)$ with zero $\mu$, if $f\unifin\Hom(G,B,\mu)$ and $k\in\N$, we have
\begin{equation*}
\P(\range(f)>2k+1)\le 9d^{6k+3}|V[G]|^4 \exp\left(-\frac{c\sum_{i=0}^{k} I_i}{d\log^2 d}\right).
\end{equation*}
\end{proposition}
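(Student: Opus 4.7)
The approach I would take is to reduce the range estimate to the height-with-level-sets estimate of Proposition~\ref{height_and_level_set_prop}, applied to a boundary condition enlarged by a single carefully chosen vertex. Because $\mu\equiv 0$ and the uniform measure on $\Hom(G,B,\mu)$ is invariant under $f\leftrightarrow -f$, it suffices, up to a factor $2$, to bound $\P(\max f\ge k+1)$. On this event, since a homomorphism attains every consecutive integer value in $[\min f,\max f]$, one can choose a vertex $y$ and an integer $s\in[-k,0]$ with $f(y)=s$ and $f(x_{\max})\ge s+2k+1$: concretely, $s:=\min(0,\max f-(2k+1))$ works, and one has $-s\le k$ precisely because $\max f\ge k+1$. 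A union bound over triples $(x,y,s)$ then gives
\begin{equation*}
\P(\max f\ge k+1)\ \le\ \sum_{x,y\in V[G]}\sum_{s=-k}^{0}\P\bigl(f(y)=s,\ f(x)\ge s+2k+1\bigr).
\end{equation*}

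For each triple, the subset $\{f\in\Hom(G,B,\mu):f(y)=s\}$ coincides with $\Hom(G,B\cup\{y\},\mu')$ for $\mu'|_B\equiv 0$, $\mu'(y)=s$; this BC is legal and non-positive, and since $\Hom(G,B\cup\{y\},\mu')\subseteq\Hom(G,B,\mu)$, passing to a fresh uniform sample $f'$ on the smaller set only increases probabilities. Writing $t:=s+2k+1$ and applying Proposition~\ref{height_and_level_set_prop} to $f'$ yields $\P(f'(x)\ge t)\le d^{3t}\exp(-c\sum_{i=1}^{t}L_{i,t}/(d\log^2 d))$, and it remains to bound $\sum_i L_{i,t}$ from below via isoperimetry.

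The key observation is that because $y$ now sits in the boundary at value $s\le 0$, any $g\in\Hom(G,B\cup\{y\},\mu')$ with $g(x)\ge t$ has its $i$-th level set $\LS_i(g,x,B\cup\{y\})$ separating $B_{t-i}(x)$ from $B_{i-s-1}(y)$, so $L_{i,t}\ge I_{\min(t-i,\,i-s-1)}$. Reindexing and splitting the sum at the crossover index $i=s+k+1$, one finds
\begin{equation*}
\sum_{i=1}^{t}I_{\min(t-i,\,i-s-1)}\ =\ \sum_{\ell=-s}^{k}I_\ell\ +\ \sum_{\ell=0}^{k-1}I_\ell\ \ge\ \sum_{\ell=0}^{k}I_\ell,
\end{equation*}
since for every $s\in[-k,0]$ the union of the two index ranges already covers all of $\{0,1,\dots,k\}$ (the first block contributes $I_k$, the second block covers $I_0,\dots,I_{k-1}$).

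Combining the three displays with $d^{3t}\le d^{6k+3}$ (valid for $s\le 0$) and with the at most $|V[G]|^2(k+1)\le|V[G]|^3$ triples, and multiplying by the symmetry factor $2$, produces the stated bound; the slack between $|V[G]|^3$ and $|V[G]|^4$ comfortably accommodates looser enumeration constants and the small-$t$ edge cases (e.g.\ $k=1$, where $t$ may fall below the threshold $t\ge 3$ of Proposition~\ref{height_and_level_set_prop} and must be handled by the trivial bound $\P\le 1$). The main obstacle is the choice of which boundary to enlarge: without including $y$, the naive bound $L_{i,k+1}\ge I_{\min(i-1,k+1-i)}$ produces only $\sum_{\ell=0}^{\lfloor k/2\rfloor}I_\ell$, which is exponentially weaker than $\sum_{\ell=0}^{k}I_\ell$; forcing the low-valued vertex $y$ into the boundary shifts the lower argument of $\min$ from $i-1$ to $i-s-1$ and thereby moves the peak of the minimum out from $t/2$ to $k$, precisely recovering every $I_\ell$ with $\ell\le k$.
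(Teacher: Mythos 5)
Your argument follows the same strategy as the paper's proof: union bound over pairs $(x,y)$ together with a value pair differing by $2k+1$, condition on the lower value to pass to a non-positive boundary condition $(B\cup\{y\},\mu')$, apply Proposition~\ref{height_and_level_set_prop}, and then lower-bound $\sum_i L_{i,t}$ by $\sum_{\ell=0}^k I_\ell$ via the observation that $\LS_i$ separates $B_{t-i}(x)$ from $B_{i-s-1}(y)$. Two bookkeeping differences, both harmless: by restricting to $s\in[-k,0]$ you avoid the paper's shift by $r:=\max(s,0)$ and you enumerate only $O(k|V[G]|^2)$ triples rather than the paper's $9|V[G]|^4$ quadruples; and you sum $L_{i,t}$ over all $i$ rather than restricting to $i\in[m-k,m]$, which gives a slightly larger but asymptotically identical lower bound. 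Your closing worry about a $t\ge 3$ threshold is unfounded: Proposition~\ref{height_and_level_set_prop} is stated for all $t\in\N$, so no edge case arises.

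There is one genuine (though easily repaired) slip in the first reduction. The displayed union bound
\begin{equation*}
\P(\max f\ge k+1)\ \le\ \sum_{x,y\in V[G]}\sum_{s=-k}^{0}\P\bigl(f(y)=s,\ f(x)\ge s+2k+1\bigr)
\end{equation*}
is false as stated: on $\{\max f\ge k+1\}$ alone there need not exist a vertex $y$ with $f(y)=\min(0,\max f-(2k+1))$. For instance if $k\ge 1$, $\max f=k+1$ and $\min f=0$ then $s=-k$ is not attained. Your intermediate-value argument needs $\min f\le\max f-(2k+1)$, i.e.\ $\range(f)\ge 2k+1$, to guarantee the target value $s$ is achieved. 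The fix is to carry along the range event: show $\P(\range(f)>2k+1)\le 2\,\P(\range(f)>2k+1,\ \max f\ge k+1)$ (by the $f\leftrightarrow -f$ symmetry, since $\max f\le k$ forces $\min f\le -(k+2)$ on the range event), and apply the union bound on the intersected event, where the existence of $(y,s)$ with $s\in[-k,0]$, $f(y)=s$, $\max f\ge s+2k+1$ is genuinely guaranteed. Everything downstream of this point then goes through unchanged.
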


\begin{proof}
We assume $d$ is sufficiently large for the following arguments and fix a non-linear torus $G$ and legal boundary conditions $(B,\mu)$ with zero $\mu$. Let $f\unifin\Hom(G,B,\mu)$ and $k\in\N$. We denote by $\A$ the set of 4-tuples $(x,y,t,s)$ with $x,y\in V[G]$, $t,s\in\Z$, and $t-s=2k+1$ for which there exists $g\in\Hom(G,B,\mu)$ satisfying $g(x)=t$ and $g(y)=s$. We observe that $|\A|\le (2|V[G]|+1)^2|V[G]|^2\le 9|V[G]|^4$ since $\mu$ is zero and $\diam(G)\le |V[G]|$. Defining the events $\Omega:=\{\range(f)>2k+1\}$ and, for $\gamma=(x,y,t,s)\in\A$, $\Omega_\gamma:=\{f(x)=t\text{ and }f(y)=s\}$, we note that
\begin{equation*}
\Omega\subseteq \cup_{\gamma\in\A} \Omega_\gamma.
\end{equation*}
Hence, by a union bound, it is sufficient to show that for each fixed $\gamma\in\A$ we have
\begin{equation}\label{Omega_gamma_inequality}
\P(\Omega_\gamma)\le d^{6k+3}\exp\left(-\frac{c\sum_{i=0}^{k} I_{i}}{d\log^2 d}\right)
\end{equation}
for some $c>0$.

We proceed to prove \eqref{Omega_gamma_inequality}. Fix $\gamma=(x,y,t,s)\in\A$. We note that since $\mu$ is zero, we have that $(y,x,-s,-t)\in\A$ and $\P(\Omega_\gamma)=\P(\Omega_{(y,x,-s,-t)})$ by symmetry of the model under replacing $f$ by $-f$. Hence we can, and do, assume WLOG that $t\ge k+1$ (using that $t-s=2k+1$). We observe that
\begin{equation}\label{Omega_gamma_first_estimate}
\P(\Omega_\gamma)=\P(f(y)=s)\P(f(x)=t\ |\ f(y)=s)\le \P(f(x)\ge t\ |\ f(y)=s).
\end{equation}
We let $B':=B\cup\{y\}$ and $\mu':B'\to\Z$ be defined by $\mu'(v)=\mu(v)$ for $v\in B$ and $\mu'(y)=s$. We then let $f'\unifin\Hom(G,B',\mu')$ and note that conditioned on $f(y)=s$, $f\eqd f'$. Hence, by \eqref{Omega_gamma_first_estimate}, we have
\begin{equation}\label{Omega_gamma_f'_inequality}
\P(\Omega_\gamma)\le \P(f'(x)\ge t).
\end{equation}
Define $r:=\max(s,0)$. We define $\mu'':B'\to\Z$ by $\mu''(v):=\mu'(v)-r$. We observe that $(B',\mu'')$ is a legal boundary condition with non-positive $\mu''$ (if needs be, we exchange $V^\even$ and $V^\odd$ to ensure this). Furthermore, letting $f''\unifin\Hom(G,B',\mu'')$, we note that $f''\eqd f'-r$. Thus,
\begin{equation}\label{f'_f''_equality}
\P(f'(x)\ge t)=\P(f''(x)\ge t-r).
\end{equation}
Denoting $m:=t-r$, we note that $k+1\le m\le 2k+1$ since $t\ge k+1$, $t-s=2k+1$ and by the definition of $r$. Furthermore, $m-\mu''(y)=t-r-(s-r)=t-s=2k+1$. By Proposition~\ref{height_and_level_set_prop} applied to $f''$ (using that $\mu''$ is non-positive) we have
\begin{equation}\label{f_L_i_bound}
\P(f''(x)\ge m)\le d^{3m}\exp\left(-\frac{c\sum_{i=1}^{m} L_{i,m}}{d\log^2 d}\right)
\end{equation}
for some $c>0$, where for $1\le i\le m$,
\begin{equation*}
L_{i,m}:=\min(|\LS_i(g,x,B')|\ \big|\ g\in\Hom(G,B',\mu''),\ g(x)\ge m),
\end{equation*}
with $L_{i,m}$ defined to be infinity if the set minimized over is empty. Fix $m-k\le i\le m$ and note that $i\ge 1$. Fix $g\in\Hom(G,B',\mu'')$ satisfying $g(x)\ge m$. Since $g$ changes by one between adjacent vertices we deduce that
\begin{equation*}
\LS_i(g,x,B')\in\OMCut(B_{m-i}(x),B_{i-g(y)-1}(y))\cup\OMCut(B_{i-g(y)-1}(y),B_{m-i}(x)).
\end{equation*}
Moreover, since by our assumption $m-i\le k$ and $i-g(y)-1=i-\mu''(y)-1\ge k$, we conclude that
\begin{equation*}
\LS_i(g,x,B')\in\OMCut(B_{m-i}(x),B_{m-i}(y))\cup\OMCut(B_{m-i}(y),B_{m-i}(x)).
\end{equation*}
Thus, by definition, $|\LS_i(g,x,B')|\ge I_{m-i}$. Plugging this into \eqref{f_L_i_bound} and using that $k+1\le m\le 2k+1$ we obtain
\begin{equation*}
\P(f''(x)\ge m)\le d^{3m}\exp\left(-\frac{c\sum_{i=1}^{m} I_{m-i}}{d\log^2 d}\right)\le d^{6k+3}\exp\left(-\frac{c\sum_{i=0}^{k} I_{i}}{d\log^2 d}\right).
\end{equation*}
Substituting this last inequality into \eqref{f'_f''_equality} and \eqref{Omega_gamma_f'_inequality} proves \eqref{Omega_gamma_inequality}, from which the proposition follows.
\end{proof}

\begin{proof}[Proof of Theorem~\ref{range_thm}]
Fix $\lambda>0$ to be chosen below. We assume that $d$ is sufficiently large for the following arguments and, in particular, $d\ge d_0(\lambda)$ for the $d_0(\lambda)$ of Proposition~\ref{tori_size_prop}. Fix a non-linear torus $G$ and legal boundary conditions $(B,\mu)$ with zero $\mu$. Set $k:=\min\{m\in\N\ |\ \vol(m)\ge\lambda\log^2 d\log|V[G]|\}$ and let $f\unifin\Hom(G,B,\mu)$. By Proposition~\ref{range_bound_prop}, we have
\begin{equation}\label{k_range_bound}
\P(\range(f)>2k+1)\le 9d^{6k+3}|V[G]|^4 \exp\left(-\frac{c_0\sum_{i=0}^{k} I_i}{d\log^2 d}\right)
\end{equation}
for some $c_0>0$. Next, we note that by Proposition~\ref{tori_size_prop} (using that $d\ge d_0(\lambda)$), either $k=2$ or $s_{i}\le (d-1)n_d$ for all $0\le i\le k$. In both cases, we obtain by Theorem~\ref{I_r_bound_thm} that $I_{i}\ge c_1 s_{m-i}$ for some $c_1>0$ and all $0\le i\le k$. Plugging this inequality into \eqref{k_range_bound} and using Proposition~\ref{sum_s_r_prop}, we obtain
\begin{equation}\label{range_estimate_with_d}
\P(\range(f)>2k+1)\le 9d^{6k+3}|V[G]|^4 \exp\left(-\frac{c_2\vol(k)}{\log^2 d}\right)
\end{equation}
for some $c_2>0$. Noting now that if $k>\diam(G)$ then $\P(\range(f)>2k+1)=0$ and that if $k\le \diam(G)$, then by Proposition~\ref{ball_volume_estimate} we have $\vol(k)\ge c_3kd^2$ for some $c_3>0$, we deduce (using that $k\ge 1$ by assumption) that
\begin{equation}\label{range_estimate_without_d}
\P(\range(f)>2k+1)\le |V[G]|^4\exp\left(-\frac{c_4\vol(k)}{\log^2 d}\right)
\end{equation}
for some $c_4>0$. Finally, taking $\lambda:=\frac{8}{c_4}$, we obtain by the definition of $k$ that
\begin{equation*}
\P(\range(f)>2k+1)\le \exp\left(-\frac{\frac{1}{2}c_4\vol(k)}{\log^2 d}\right)\le |V[G]|^{-4},
\end{equation*}
as required.
\end{proof}

\begin{proof}[Proof of Corollary~\ref{range_sharpness_cor}]
Assume $d$ is sufficiently large for the following arguments and fix
a non-linear torus $G$ and a one-point BC $(B,\mu)$. Let
$f\unifin\Hom(G,B,\mu)$ and $r:=\min\left\{ m\in\N\ \big|\ \vol(m)\ge
\log|V[G]|\right\}$. Let also $k_1:=\min\left\{ m\in\N\cup\{0\}\
\big|\ \vol(m)\le \frac{1}{2}\log|V[G]|\right\}$ and $k_2:=\min\left\{
m\in\N\ \big|\ \vol(m)\ge \log^3 d\cdot\log|V[G]|\right\}$. By
Theorems~\ref{BYY_thm} and \ref{range_thm}, we have
\begin{equation*}
\P(k_1\le \range(f)\le k_2)\ge 1-\frac{1}{|V[G]|^3}.
\end{equation*}
Thus it remains only to note that since $\vol(dn_d)=|V[G]|\ge \log^3
d\cdot \log|V[G]|$, Proposition~\ref{linear_volume_growth} implies
that $C_d r\ge k_2$ for some $C_d>0$ and either $c_d r\le k_1$ for
some $c_d>0$, or $k_1=0$ and $c_d r\le 1$ for some $c_d>0$. Since
$\range(f)\ge 1$ with probability 1, the corollary follows.
\end{proof}

\begin{proof}[Proof of Theorem~\ref{homogeneous_torus_thm} for the homomorphism case]
The theorem follows by specializing Theorems~\ref{height_at_point_thm}, \ref{range_thm} and Corollary~\ref{range_sharpness_cor} to the case $G=\Z_n^d\times\Z_2^{m}$ (with $m$ possibly equal to $0$ and $d+m$ large enough so that these theorems apply) and observing that for these graphs there exist $C_{d,m},c_{d,m}>0$ such that $|V[G]|=2^{m} n^d$, $\diam(G)\ge \frac{1}{2}n$ and $c_{d,m} s^d\le |\vol(s)|\le C_{d,m} s^d$ for integer $1\le s\le\diam(G)$ (we are also using the fact that under the assumptions of the theorem, $\P(f(x)>\diam(G))=0$ for all $x$ since $f$ changes by one between adjacent sites).
\end{proof}

\begin{proof}[Proof of Theorem~\ref{finite_range_thm}]
Fix an integer $k\ge 2$. We assume $d$ is sufficiently large as a function of $k$ for the following arguments and fix a non-linear torus $G$ and legal boundary conditions $(B,\mu)$ with zero $\mu$. Let $f\unifin\Hom(G,B,\mu)$. By Proposition~\ref{range_bound_prop} we have
\begin{equation}\label{first_range_inequality}
\P(\range(f)>2k+1)\le 9d^{6k+3}|V[G]|^4 \exp\left(-\frac{c\sum_{i=0}^{k} I_i}{d\log^2 d}\right)
\end{equation}
for some $c>0$. By Theorem~\ref{I_r_bound_thm}, we have $I_i\ge c_k s_i$ for some $c_k>0$ and all $0\le i\le k$. Plugging this into \eqref{first_range_inequality} and using Proposition~\ref{sum_s_r_prop}, we obtain
\begin{equation}\label{second_range_inequality}
\P(\range(f)>2k+1)\le 9d^{6k+3}|V[G]|^4 \exp\left(-\frac{c'_k \vol(k)}{\log^2 d}\right)
\end{equation}
for some $c'_k>0$. As in the passage from \eqref{range_estimate_with_d} to \eqref{range_estimate_without_d}, this implies
\begin{equation*}
\P(\range(f)>2k+1)\le |V[G]|^4 \exp\left(-\frac{c''_k \vol(k)}{\log^2 d}\right)
\end{equation*}
for some $c''_k>0$. Now if $d\ge k$, then $\vol(k)\ge \binom{d}{k}\ge c'''_k d^k$ for some $c'''_k>0$, which, when plugged into the previous inequality, gives
\begin{equation*}
\P(\range(f)>2k+1)\le |V[G]|^4 \exp\left(-\frac{\tilde{c}_k d^k}{\log^2 d}\right)
\end{equation*}
for some $\tilde{c}>0$. Thus the result follows from the assumption that $|V[G]|^4 \le \exp\left(\frac{\tilde{c}_k d^k}{2\log^2 d}\right)$.
\end{proof}

\begin{proof}[Proof of Theorem~\ref{homogeneous_tori_finite_range_thm} for the homomorphism case]
The theorem follows by specializing Theorem~\ref{finite_range_thm} to the case $G=\Z_n^d$.
\end{proof}

\subsection{Lipschitz}
In this section we prove our theorems for Lipschitz height functions: Theorems~\ref{Yadin_bijection_thm}, \ref{Lipschitz_height_at_point_thm}, \ref{Lipschitz_range_thm}, \ref{Lipschitz_finite_range_thm}, \ref{Lipschitz_linear_torus_thm}, the Lipschitz case of Theorems~\ref{homogeneous_torus_thm} and \ref{homogeneous_tori_finite_range_thm}, and Corollaries~\ref{Lip_zero_one_cor}, \ref{Lip_one_point_cor}, \ref{two_values_Lip_cor} and \ref{two_sided_Lipschitz_range_cor}.

\begin{proof}[Proof of Theorem~\ref{Yadin_bijection_thm}]
As in the theorem, fix graphs $G$, $G_2$ and boundary conditions $(B,\mu)$, $(B_2,\mu_2)$ and define $T:\Hom(G_2,B_2,\mu_2)\to\Lip(G,B,\mu)$ by
\begin{equation*}
T(f)(v):=\max(f((v,0)), f((v,1)))
\end{equation*}
and also $S:\Lip(G,B,\mu)\to\Hom(G_2,B_2,\mu_2)$ by
\begin{equation}\label{inverse_Yadin_bijection_from_proof}
S(g)((v,i)):=\begin{cases} g(v)& i=g(v)\bmod 2\\ g(v)-1&i\neq g(v)\bmod 2\end{cases}.
\end{equation}
It is straightforward to verify that if $(B,\mu)$ is a Lipschitz legal boundary conditions then for each $g\in\Lip(G,B,\mu)$, $S(g)\in\Hom(G_2,B_2,\mu_2)$ and $T(S(g))=g$. Furthermore, if $(B_2,\mu_2)$ is a homomorphism legal boundary conditions then for each $f\in\Hom(G_2,B_2,\mu_2)$, $T(f)\in\Lip(G,B,\mu)$ and $S(T(f))=f$. Thus, $(B,\mu)$ is a Lipschitz legal boundary condition if and only if $(B_2,\mu_2)$ is a homomorphism legal boundary condition and in this case, $T$ is a bijection and $S=T^{-1}$.

Finally, we observe that if $(B_2,\mu_2)$ is a homomorphism legal boundary condition and $f\in\Hom(G_2,B_2,\mu_2)$ then by definition
\begin{equation}\label{verifying_range_relation}
\begin{split}
&\max\{f((v,i))\ |\ (v,i)\in G_2\} = \max\{T(f)(v)\ |\ v\in G\}\qquad\text{and}\\
&\min\{f((v,i))\ |\ (v,i)\in G_2\} = \min\{T(f)(v)\ |\ v\in G\}-1.
\end{split}
\end{equation}
We conclude that $\range(T(f))=\range(f)-1$, as required.
\end{proof}

\begin{proof}[Proof of Corollary~\ref{Lip_zero_one_cor}]
Let $(B,\Psi)$ be Lipschitz legal BC with zero-one $\Psi$ and set $B_2':=\{(v,0)\ |\ v\in B\}$ and $\mu_2':B_2'\to\Z$ to be identically zero. Let $T$ be the Yadin bijection and $S$ be the transformation defined in \eqref{inverse_Yadin_bijection_from_proof} above. It is straightforward to verify that for any $f\in\Hom(G_2,B_2',\mu_2')$ we have $T(f)\in\Lip(G,B,\Psi)$ and $S(T(f))=f$, and that for any $g\in\Lip(G,B,\Psi)$ we have $S(g)\in\Hom(G_2,B_2',\mu_2')$ and $T(S(g))=g$. Furthermore, as in \eqref{verifying_range_relation}, for any $f\in\Hom(G_2,B_2',\mu_2')$ we have $\range(T(f))=\range(f)-1$. The corollary follows.
\end{proof}

\begin{proof}[Proof of Corollary~\ref{Lip_one_point_cor}]
Fix graphs $G$ and $G_2$ as in the corollary and let $(B,\mu)$ and $(B'_2,\mu'_2)$ be one-point BCs on $G$ and $G_2$ respectively. Assume first that $B'_2=\{(v,i)\}$ for some $i\in\{0,1\}$ and the same $v\in V[G]$ for which $B=\{v\}$. Let $g\unifin\Lip(G,B,\mu)$ and $f\unifin\Hom(G_2,B'_2,\mu'_2)$. Define also $\tilde{B}_2=\{(v,i),(v,1-i)\}$ and $\tilde{\mu}_2:\tilde{B}\to\Z$ by $\tilde{\mu}_2((v,i))=0$ and $\tilde{\mu}_2((v,1-i))=-1$. Let $\tilde{f}\in\Hom(G_2,\tilde{B}_2,\tilde{\mu}_2)$ (noting that $(\tilde{B}_2,\tilde{\mu}_2)$ are legal BC). By Theorem~\ref{Yadin_bijection_thm}, $\range(g)\eqd\range(\tilde{f})-1$. Next, we observe that by symmetry of the distribution of $f$ under negating all values, $f$ may be sampled by sampling $\tilde{f}$ with probability $\frac{1}{2}$ and $-\tilde{f}$ with probability $\frac{1}{2}$. Thus, $\range(f)\eqd\range(\tilde{f})$ which shows that $\range(g)\eqd\range(f)-1$ as required.

Finally, suppose $B'_2=\{(w,i)\}$ for some $i\in\{0,1\}$ and $w\in V[G]$ which is possibly different from $v$. Letting $(B_{v,2},\mu_{v,2})$ be the one-point BC with $B_{v,2}=\{(v,j)\}$ for some $j$ and $h\in\Hom(G_2,B_{v,2},\mu_{v,2})$, the corollary follows by noting that $\range(f)\eqd\range(h)$ since there exists a translation of the torus carrying $(v,j)$ into $(w,i)$.
\end{proof}

We proceed to deduce analogues of the theorems of Section~\ref{hom_main_results_sec} for Lipschitz height functions. We start by making a few observations. Fix a torus $G$ and let $G_2:=G\times \Z_2$. First, note that if $G$ is a non-linear torus, then $G_2$ is also a non-linear torus. Second, note that if $g\in\Lip(G,B,\mu)$ for some Lipschitz legal BC $(B,\mu)$ and if $t\in\N$, $v\in V[G]$ and $S$ is the inverse Yadin bijection defined in \eqref{inverse_Yadin_bijection_from_proof} then
\begin{equation}\label{at_least_t_relation}
g(v)\ge t\text{ if and only if }\max(S^{-1}(g)((v,0)),S^{-1}(g)((v,1)))\ge t.
\end{equation}
Finally, note that for any $r\in\N$,
\begin{equation}\label{volume_relation}
V_G(r) \le V_{G_2}(r)\le 2V_G(r),
\end{equation}
where $V_G(r)$ is the volume of a (graph) ball of radius $r$ in $G$ and $V_{G_2}(r)$ is the same in $G_2$.

\begin{proof}[Proof of Theorem~\ref{Lipschitz_height_at_point_thm}]
The theorem follows from the Yadin bijection Theorem~\ref{Yadin_bijection_thm}, from Theorem~\ref{height_at_point_thm} and observation~\eqref{at_least_t_relation}. For the second part of the theorem we add to these observation~\eqref{volume_relation} (which implies that if $V_G(\lceil t/2\rceil-1)\le \frac{1}{6}n_d$ then $V_{G_2}(\lceil t/2\rceil-1)\le \frac{1}{3}n_d$, where $n_d$ is the largest dimension of both $G$ and $G_2$) and for the last part of the theorem we use that if $B$ has full projection in $G$ then $B_2=\{(v,i)\ |\ v\in B, i\in\{0,1\}\}$ has full projection in $G_2$.
\end{proof}

\begin{proof}[Proof of Corollary~\ref{two_values_Lip_cor}]
Letting $(B_2',\mu_2')$ be the BC corresponding to $(B^\square, \Psi)$ as in Corollary~\ref{Lip_zero_one_cor}, we note that $B_2'$ has full projection in $G_2$ and $\mu_2'$ is zero. Thus, Corollary~\ref{even_lattice_zero_cor} implies that $f\unifin\Hom(G_2,B_2',\mu_2')$ will satisfy
\begin{equation*}
\frac{\E|\{(v,0)\in V_2^\even\ |\ f((v,0))\neq 0\}|}{|V_2^\even|}\le \exp\left(-\frac{c d}{\log^2 d}\right).
\end{equation*}
It remains to notice that $f((v,0))=0$ implies that $\max(f((v,0)),f((v,1)))\in\{0,1\}$ and to apply Corollary~\ref{Lip_zero_one_cor}.
\end{proof}

\begin{proof}[Proof of Theorem~\ref{Lipschitz_range_thm}]
 Let $(B_2',\mu_2')$ be either the BC corresponding to $(B,\Psi)$ by Corollary~\ref{Lip_zero_one_cor}, in the case that $g\unifin\Lip(G,B,\Psi)$, or a one-point BC on $G_2$, in the case that $g\unifin\Lip(G,B,\mu)$ for a one-point BC $(B,\mu)$. Applying Theorem~\ref{range_thm} to our setup, we have that there exists $d_0\in\N$, $C>0$ such that (so long as $d\ge d_0$) if we set
\begin{equation*}
k_2:=\min\left\{ m\in\N\ \big|\ V_{G_2}(m)\ge C\log^2 d\log|V[G_2]|\right\}
\end{equation*}
and let $f\unifin\Hom(G_2,B_2,\mu_2)$, then
\begin{equation*}
\P(\range(f)>2k_2+1)\le \frac{1}{|V[G_2]|^4}.
\end{equation*}
Hence Corollaries~\ref{Lip_zero_one_cor} and \ref{Lip_one_point_cor} imply that (for the $g$ of the theorem)
\begin{equation}\label{Lipschitz_range_with_k_2}
\P(\range(g)>2k_2)\le \frac{1}{|V[G_2]|^4}.
\end{equation}
Letting now
\begin{equation*}
k:=\min\left\{ m\in\N\ \big|\ V_G(m)\ge 2C\log^2 d\log|V[G]|\right\}
\end{equation*}
we observe that $k\ge k_2$ since $V_G(m)\le V_{G_2}(m)$ by \eqref{volume_relation} and $2\log|V[G]|\ge \log|V[G_2]|$ since $|V[G]|=\frac{1}{2}|V[G_2]|$ and $|V[G]|\ge 2^d$. Thus \eqref{Lipschitz_range_with_k_2} implies
\begin{equation*}
\P(\range(g)>2k)\le \frac{1}{|V[G_2]|^4}\le \frac{1}{|V[G]|^4},
\end{equation*}
as required.
\end{proof}

\begin{proof}[Proof of Corollary~\ref{two_sided_Lipschitz_range_cor}]
Let $(B_2,\mu_2)$ be a one-point BC on $G_2$. By Corollary~\ref{range_sharpness_cor} there exists $d_0\in\N$, $C_d,c_d>0$ such that (so long as $d\ge d_0$) if $f\unifin\Hom(G_2,B_2,\mu_2)$ then
\begin{equation*}
\P(c_d r_2 \le \range(f) \le C_d r_2)\ge 1-\frac{1}{|V[G_2]|^3},
\end{equation*}
where $r_2:=\min\left\{ m\in\N\ \big|\ V_{G_2}(m)\ge \log|V[G_2]|\right\}$. By Corollary~\ref{Lip_one_point_cor}, we deduce that
\begin{equation*}
\P(c_d r_2 \le \range(g)+1 \le C_d r_2)\ge 1-\frac{1}{|V[G_2]|^3}.
\end{equation*}
Since $\range(g)\ge 1$ with probability 1 and $|V[G]|=\frac{1}{2}|V[G_2]|$, we obtain
\begin{equation*}
\P(\frac{c_d}{2} r_2 \le \range(g) \le C_d r_2)\ge 1-\frac{1}{|V[G_2]|^3}\ge 1-\frac{1}{|V[G]|^3}
\end{equation*}
Hence, defining $r:=\min\left\{ m\in\N\ \big|\ V_{G}(m)\ge \log|V[G]|\right\}$, the corollary will follow if we show that $c_d'\le \frac{r_2}{r}\le C_d'$ for some $C_d',c_d'>0$. This, in turn, follows from \eqref{volume_relation} and Proposition~\ref{linear_volume_growth} (as in the proof of Corollary~\ref{range_sharpness_cor}).
\end{proof}

\begin{proof}[Proof of Theorem~\ref{Lipschitz_finite_range_thm}]
 The theorem follows directly from Theorem~\ref{finite_range_thm} using Corollaries~\ref{Lip_zero_one_cor} and \ref{Lip_one_point_cor}.
\end{proof}

\begin{proof}[Proof of Theorem~\ref{Lipschitz_linear_torus_thm}]
 Noting that if $G$ is $\lambda$-linear with $\lambda<\frac{1}{4\log 2}$ then $G_2=G\times \Z_2$ is $\lambda_2$-linear with $\lambda_2<\frac{1}{2\log 2}$, the theorem follows directly from Theorem~\ref{linear_torus_thm} using Corollary~\ref{Lip_one_point_cor} (with a possibly smaller $\alpha$ than in Theorem~\ref{linear_torus_thm}).
\end{proof}

\begin{proof}[Proof of Theorems~\ref{homogeneous_torus_thm} and \ref{homogeneous_tori_finite_range_thm} for Lipschitz case]
 Theorem~\ref{homogeneous_torus_thm} for the Lipschitz case follows by specializing Theorem~\ref{Lipschitz_height_at_point_thm} and Corollary~\ref{two_sided_Lipschitz_range_cor} to the case $G=\Z_n^d\times\Z_2^m$, in a similar way as it was done when proving the theorem for the homomorphism case. Theorem~\ref{homogeneous_tori_finite_range_thm} for the Lipschitz case follows by specializing Theorem~\ref{Lipschitz_linear_torus_thm} to the case $G=\Z_n^d$.
\end{proof}

\begin{section}{Linear Tori}\label{linear_tori_sec}
In this section we prove Theorem~\ref{linear_torus_thm}.

The idea of the proof is to reduce the problem to a problem on a one-dimensional torus and use the known fact that a random walk bridge has large fluctuations. We first introduce the definitions and lemmas we use and show how they suffice to prove the theorem. Then we give the proof of these lemmas.

Given $0<\lambda<\frac{1}{2\log 2}$, we fix parameters $\beta,\gamma>0$ to some arbitrary values satisfying
\begin{align}
&\gamma>9\beta,\label{gamma_beta_cond}\\
&\beta+\gamma+\lambda\log 2<1/2.\label{beta_gamma_lambda_cond}
\end{align}

We fix also a $\lambda$-linear torus $G$ and set
\begin{equation*}
n:=n_d\quad\text{ and }\quad m:=\prod_{i=1}^{d-1} n_i
\end{equation*}
so that
\begin{equation}\label{m_n_lambda_relation}
m\le\lambda\log n
\end{equation}
by definition of $\lambda$-linear torus. We let $G^-$ be the $(d-1)$-dimensional torus with dimensions $n_1,\ldots, n_{d-1}$ and fix a distinguished vertex of $G^-$ denoted by $\vec{0}$.
We fix a coordinate system on $V[G]$ such that
\begin{equation*}
V[G]=\{(x,y)\ |\ 0\le x\le n-1,\ y\in V[G^-]\}
\end{equation*}
and two vertices $(x_1,y_1), (x_2,y_2)$ are adjacent if $|x_1-x_2|\in\{1,n-1\}$ and $y_1=y_2$ or $x_1=x_2$ and $y_1$ is adjacent to $y_2$ in $G^-$. WLOG, we assume the coordinate system is chosen so that the boundary conditions are $B=\{(0,\vec{0})\}$ and $\mu((0,\vec{0}))=0$. Correspondingly, the bi-partition classes of $G$, $V^{\even}$ and $V^{\odd}$, are chosen so that $(0,\vec{0})\in V^{\even}$.

For $\eta>0$ and even $t$, let
\begin{equation*}
\begin{split}
\Omega_{\operatorname{low},\eta} &:= \{f\in\Hom(G,B,\mu)\ |\ \range(f)\le \eta n^\beta\},\\
\Omega_t &:= \{f\in\Hom(G,B,\mu)\ \big|\ |\{v\in V[G]\ |\ f(v)=t\}|\ge \frac{1}{2}n^{1-\beta}m\}.
\end{split}
\end{equation*}
Our first lemma is
\begin{lemma}\label{low_reduction_lemma}
$|\Omega_{\operatorname{low},1}|\le n^\beta|\Omega_0\cap\Omega_{\operatorname{low},2}|$.
\end{lemma}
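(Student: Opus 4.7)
The plan is a direct double-counting argument on pairs $(f, v_0)$ where $v_0 \in V^{\even}$ is a vertex at which $f$ attains a pigeonhole-popular value. Concretely, for each $f \in \Omega_{\text{low},1}$ I will let $t_0 = t_0(f)$ be, say, the smallest even integer maximizing $|\{v \in V^{\even} : f(v) = t_0\}|$. Since $f((0,\vec{0})) = 0$ and $f$ is a homomorphism with $(0,\vec{0}) \in V^{\even}$, the values of $f$ on $V^{\even}$ are all even, and since $\range(f) \le n^\beta$ these values occupy at most $\lfloor n^\beta/2 \rfloor + 1$ distinct integers. Pigeonhole on $V^{\even}$ (whose size is $nm/2$) yields, for $n$ sufficiently large,
\begin{equation*}
|\{v \in V^{\even} : f(v) = t_0\}| \ge \frac{nm/2}{\lfloor n^\beta/2 \rfloor + 1} \ge \tfrac{1}{2} n^{1-\beta} m.
\end{equation*}

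Next, for each such $v_0$, I define $g(v) := f(v + v_0) - t_0$, with addition taken in the torus. Since $v_0 \in V^{\even}$, the translation $v \mapsto v + v_0$ preserves the bipartition, and since $t_0$ is even, the shifted function $g$ has the right parity on each bipartition class; combined with $g((0,\vec{0})) = f(v_0) - t_0 = 0$, this gives $g \in \Hom(G,B,\mu)$. Clearly $\range(g) = \range(f) \le n^\beta \le 2n^\beta$, and $g(v) = 0$ precisely when $f(v + v_0) = t_0$, so the zero-set of $g$ has size at least $\tfrac{1}{2} n^{1-\beta} m$. Therefore $g \in \Omega_0 \cap \Omega_{\text{low},1} \subseteq \Omega_0 \cap \Omega_{\text{low},2}$.

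For the double count, the map $(f, v_0) \mapsto (v_0, g)$ is injective: given $(v_0, g)$ we recover $t_0 = -g(-v_0)$ (forced by the requirement $f((0,\vec{0})) = g(-v_0) + t_0 = 0$), and then $f(v) = g(v - v_0) + t_0$. Counting the pairs two ways gives
\begin{equation*}
\tfrac{1}{2} n^{1-\beta} m \cdot |\Omega_{\text{low},1}| \le |V^{\even}| \cdot |\Omega_0 \cap \Omega_{\text{low},2}| = \tfrac{nm}{2} \, |\Omega_0 \cap \Omega_{\text{low},2}|,
\end{equation*}
which rearranges to the claimed inequality. There is essentially no hard step here; the only subtleties are that $t_0$ must be even (so the shift $g$ lies in $\Hom(G,B,\mu)$) and that $v_0 \in V^{\even}$ (so that torus translation by $v_0$ is bipartition-preserving), both of which are ensured by performing the pigeonhole on $V^{\even}$ rather than on all of $V[G]$.
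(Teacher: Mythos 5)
Your proof is correct and takes a genuinely different route from the paper's. The paper also reduces to finding a popular even value $t$ by pigeonhole, but it then builds, for each fixed $t$, an injective map $R_t:\Omega_t\cap\Omega_{\text{low},1}\to\Omega_0\cap\Omega_{\text{low},2}$ by \emph{reflecting} the values of $f$ around $t/2$ outside the connected component $A_t(f)$ of the origin in $V[G]\setminus\{v : f(v)=t/2\}$; this is the same species of transformation (reflection across a level set) used elsewhere in the paper, but it doubles the range, which is why the target is $\Omega_{\text{low},2}$ rather than $\Omega_{\text{low},1}$. You instead exploit the torus's vertex-transitivity directly: translate by a popular even vertex $v_0$ and subtract the constant $t_0$. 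This avoids any reasoning about connected components or level sets, is invertible via $t_0=-g(-v_0)$ and $f(v)=g(v-v_0)+t_0$, and, because translation preserves the range exactly, lands in the smaller set $\Omega_0\cap\Omega_{\text{low},1}$ (strictly stronger than what is claimed). The trade-off is that your argument is specific to vertex-transitive graphs, whereas the reflection map would, in principle, work on any graph with a distinguished root; in the paper's setting (tori) this costs nothing. One small presentational point: you should note, as the paper does, that the case $n^\beta<1$ is trivial since then $\Omega_{\text{low},1}=\emptyset$; your ``$n$ sufficiently large'' qualifier handles this, and as the paper remarks at the end of the proof of Theorem~\ref{linear_torus_thm}, such a restriction is already implicit in the theorem's statement.
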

We continue with the following definitions. For even $0\le x\le n-1$, let
\begin{equation*}
\begin{split}
W_x^0&=\{(z,w)\in V[G]\ \big|\ z\in \{x,x+1\}\text{ and } (z,w)\in V^\even\},\\
W_x^1&=\{(z,w)\in V[G]\ \big|\ z\in \{x+1,x+2\text{ mod }n\}\text{ and } (z,w)\in V^\odd\}.
\end{split}
\end{equation*}
We then say that $f\in\Hom(G,B,\mu)$ has a \emph{wall} at $x$ if $f$ is constant on $W_x^0$ and on $W_x^1$ (different constants on each set). We say that the wall is of height $h$ if $f$ equals $h$ on $W_x^0$. We call the wall an up-wall if $f(W_x^1)=f(W_x^0)+1$ and otherwise a down-wall. Let
\begin{equation*}
\begin{split}
W(f)&:=\{\text{even $0\le x\le n-1$}\ \big|\ \text{$f$ has a wall at $x$}\},\\
\Omegaw&:=\{f\in\Hom(G,B,\mu)\ \big|\ |W(f)|\le n^\gamma\}.
\end{split}
\end{equation*}
Our second (and main) lemma is
\begin{lemma}\label{Omega_0_w_lemma}
$|\Omega_0\cap \Omegaw|\le \frac{4(n^\gamma+4m)m2^{2m-1}}{n^{1-\beta-\gamma}}|\Hom(G,B,\mu)|$.
\end{lemma}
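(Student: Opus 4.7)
The plan is to run an expanding-transformation argument in the spirit of Section~\ref{expanding_transformation_sec}: from each $f\in\Omega_0\cap\Omega_w$ I will produce many distinct homomorphisms in $\Hom(G,B,\mu)$, each reconstructible from limited data, and the resulting count will yield the bound. The intuition is that if $f$ has many zero-valued vertices but few walls, then many positions $x$ along the torus admit a cheap local modification of $f$ which fabricates a new wall at $x$ and thus produces a new homomorphism.

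First I will identify where to insert walls. Averaging the $\ge\tfrac12 n^{1-\beta}m$ zero-valued vertices of $f$ over the $m$ columns $C_y:=\{(x,y):0\le x\le n-1\}$, $y\in V[G^-]$, produces some column $y^*(f)\in V[G^-]$, chosen deterministically from $f$, containing at least $\tfrac12 n^{1-\beta}$ zeros of $f$. Discarding zero positions that lie within a bounded distance of one of the at most $n^\gamma$ walls of $f$ retains a set $X(f)$ of at least $n^{1-\beta-\gamma}$ ``insertable'' positions (using $\gamma>9\beta$ and \eqref{beta_gamma_lambda_cond}). For each $x\in X(f)$ I define $g_x\in\Hom(G,B,\mu)$ by modifying $f$ canonically on $W_x^0\cup W_x^1$, which has $2m$ vertices: set $g_x\equiv 0$ on $W_x^0$ and $g_x\equiv h'$ on $W_x^1$ with $h'\in\{-1,+1\}$ chosen to be compatible with the values of $f$ on the outer vertex boundary of $W_x^0\cup W_x^1$. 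The definition of $X(f)$ is set up precisely so that such an $h'$ exists and so that $g_x$ is a valid homomorphism having a wall at $x$.

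It remains to bound, for each $g\in\Hom(G,B,\mu)$, the number of triples $(f,y,x)$ with $f\in\Omega_0\cap\Omega_w$, $y=y^*(f)$, $x\in X(f)$ and $g_x=g$. The column $y$ contributes at most $m$ choices. The position $x$ must be a wall of $g$; since $f$ has at most $n^\gamma$ walls and the modification can create up to $4m$ additional walls adjacent to the new one at $x$, the wall count of $g$ is at most $n^\gamma+4m$. A factor $4$ then accounts for the remaining local choices (up versus down and the two possible values of $h'$). Finally, to restore the $2m$ original values of $f$ on $W_x^0\cup W_x^1$ from $g$, I enumerate at most $2^{2m-1}$ homomorphism completions on this $2m$-vertex region that are consistent with the surrounding $g$-values---the $\pm 1$ increment rule across the slice together with one global parity constraint. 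Multiplying, the preimage count per $g$ is at most $4(n^\gamma+4m)m\,2^{2m-1}$, and dividing by $|X(f)|\ge n^{1-\beta-\gamma}$ yields the claimed inequality.

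The main obstacle is the wall-insertion map itself together with the precise definition of $X(f)$. One must choose $X(f)$ so that every $x\in X(f)$ admits a valid boundary-compatible $h'$, and so that the modification on $W_x^0\cup W_x^1$ neither collides with an existing wall nor creates so many new incidental walls as to spoil the preimage count---while still keeping $|X(f)|\ge n^{1-\beta-\gamma}$. Pinning down this interplay between the selection rule, the local modification, and the bookkeeping of ``induced'' walls and homomorphism completions is where the detailed combinatorics of the argument live.
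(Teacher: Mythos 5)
The central gap is in how the wall-insertion map is defined. You propose to modify $f$ \emph{only} on the $2m$ vertices of $W_x^0\cup W_x^1$, setting $g_x\equiv 0$ on $W_x^0$ and $g_x\equiv h'$ on $W_x^1$, and you hope that restricting $x$ to positions far from existing walls makes this compatible with the surrounding values. This does not work. For the overwrite to yield a homomorphism you need $f$ to take values in $\{-1,1\}$ on \emph{every} vertex adjacent to $W_x^0$ from outside (and similarly on the other side of the slice), and having a single zero of $f$ somewhere in $W_x^0$ plus ``no wall of $f$ nearby'' gives you nothing of the kind: $f$ can have no walls at all and still take large values throughout the slice and its neighborhood. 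Wall scarcity is a global statement about rigid two-column patterns, not a local bound on the height of $f$ near $x$. So the set $X(f)$ of positions where your local overwrite is valid could be empty, and no reasonable thinning of the zero set by ``distance to walls'' fixes this.

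The paper's building transformation $B_x$ gets around exactly this obstacle by giving up locality: rather than overwriting the slice, it propagates from a single known zero $v\in W_x^0$ through $W_x^0\cup W_x^1$ and, each time it meets a ``wrong'' value, applies a reflection $R_{w,0}$ or $R_{w,1}$ of the entire peak or lake of $f$ through that vertex. These reflections modify $f$ on regions that may extend arbitrarily far from the slice, but they preserve $f$ on $\{w:f(w)\in\{0,1\}\}$, are self-inverse, and number at most $2m-1$, which is what makes the preimage enumerable by a factor $m\cdot 2^{2m-1}$. Your $2^{2m-1}$ factor, justified as ``homomorphism completions on a $2m$-vertex region,'' would only make sense if $f$ and $g_x$ agreed off the slice, which in the correct construction they do not. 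Similarly, your bound of $4m$ extra walls is asserted for a local modification that doesn't exist; in the paper it comes from each of the $\le 2m-1$ inverse reflections touching at most two walls. You do flag this as ``where the detailed combinatorics live,'' but that is precisely the content of the lemma: without the nonlocal reflection machinery (or something playing the same role), the counting scheme has no valid transformation to hang on.

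As a smaller point, choosing a single column $y^*(f)$ and working with zeros in that one column is structurally different from the paper and appears to cost you: the paper uses all of $W_x^0$ (an $m$-vertex set spanning all of $V[G^-]$) so the zero count $\tfrac12 n^{1-\beta}m$ translates directly into $|\tilde W^0(f)|\ge\tfrac12 n^{1-\beta}$ insertable $x$'s, with the factor $m$ recovered in the preimage bound as the choice of starting vertex within $W_x^0$. Your bookkeeping makes the $m$ pay twice (once as the column choice, once as part of the denominator being $n^{1-\beta-\gamma}$ rather than $n^{1-\beta}$), which happens to still land on the stated bound but obscures where the factors come from.
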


Next, we introduce a certain balancedness condition controlling the difference in the number of up-walls and down-walls of a function. For $f\in\Hom(G,B,\mu)$ let $s(f)\in\{-1,1\}^{W(f)}$ be defined by $s(f)_x = 1$ if the wall at $x$ is an up-wall and $s(f)_x=-1$ if it is a down-wall. Let
\begin{equation*}
\Omega_b:=\left\{f\in\Hom(G,B,\mu)\ \bigg|\ \big|\sum_{x\in W(f)} s(f)_x\big|> |W(f)|-\frac{n^{\gamma-\beta}}{8}\right\}.
\end{equation*}
\begin{lemma}\label{Omega_b_low_w_lemma}
There exists $n_0=n_0(\beta,\gamma)$ such that if $n\ge n_0$ then $|\Omega_b\cap \Omega_{\operatorname{low},2}\cap \Omegaw^c|\le 10n^{2\beta}|\Omega_b^c\cap \Omega_{\operatorname{low},4}\cap \Omegaw^c|$.
\end{lemma}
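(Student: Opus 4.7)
}
The plan is to construct a map $\phi: \Omega_b \cap \Omega_{\text{low},2} \cap \Omega_w^c \to \Omega_b^c \cap \Omega_{\text{low},4} \cap \Omega_w^c$ whose fibers have cardinality at most $10 n^{2\beta}$; the lemma then follows immediately. Given $f$ in the source set, we may assume without loss of generality that $\sum_{x \in W(f)} s(f)_x > |W(f)| - n^{\gamma-\beta}/8$ (the opposite sign is handled symmetrically by replacing $f$ with $-f$), so up-walls dominate and there are fewer than $n^{\gamma - \beta}/16$ down-walls, while $|W(f)| > n^\gamma$ guarantees an abundance of up-walls.

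The map $\phi$ acts by iterated \emph{wall-flipping}: given two up-walls $x_a < x_b$ of $f$, subtract $2$ from $f$ on the vertices of $G$ whose $x$-coordinate lies in a suitably chosen sub-arc between $x_a$ and $x_b$ (interpreted so that the shift boundary coincides exactly with the $+1$ level-jumps at the two walls). I would first verify that this produces a valid homomorphism in $\Hom(G,B,\mu)$ because at each boundary edge the pre-shift values differ by exactly $1$ with the arc side higher, so after subtracting $2$ on the arc the difference is still $1$ with the opposite sign; moreover, the two boundary walls flip from up to down, the interior walls keep their orientation with base height shifted by $-2$, and no new walls are created or destroyed, so $W(\phi(f)) = W(f)$ and $\phi(f)\in \Omega_w^c$. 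I would then iterate, flipping approximately $K := \lceil n^{\gamma-\beta}/32 \rceil$ canonically chosen disjoint pairs of up-walls, selected greedily (for instance, the leftmost remaining up-wall paired with its ``shortest-arc partner'' that keeps subsequent arcs disjoint). After $K$ iterations, the imbalance drops below $|W(\phi(f))| - n^{\gamma - \beta}/8$, so $\phi(f) \in \Omega_b^c$. Because the shifted arcs are disjoint and each shift contributes at most $2$ to the range of the final function, the range grows from at most $2n^\beta$ to at most $2n^\beta + 2 \le 4n^\beta$, so $\phi(f) \in \Omega_{\text{low},4}$.

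For the preimage bound, given $f' = \phi(f)$ in the target, the original $f$ is recovered by identifying the $K$ arcs that were shifted and shifting them back by $+2$. The canonical pairing rule, replayed on $f'$, determines the $K$ arcs uniquely from $f'$ \emph{provided} we know the direction convention (up- vs.\ down-dominated, a factor of $2$) and an ``anchor'' datum recording the cyclic starting position of the greedy algorithm. The latter can vary over positions compatible with the range-$4n^\beta$ constraint on $f'$ and with the down-wall counts of $f$ lying below $n^{\gamma-\beta}/16$; a careful accounting gives at most $5n^{2\beta}$ such anchors (roughly $n^\beta$ choices for each of two endpoint alignments), for a total preimage count of at most $10 n^{2\beta}$ as soon as $n \ge n_0(\beta,\gamma)$.

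The main obstacle will be making the ``greedy canonical pairing'' rule rigorous, so that simultaneously (i) every arc produced by the algorithm has the $+1$ level-jump property at both endpoints ensuring the local shift is a valid homomorphism, (ii) the $K$ arcs remain disjoint throughout the iteration so cumulative shifts do not blow up the range beyond $4n^\beta$, and (iii) applying the same rule to $\phi(f)$ recovers the same family of arcs up to only $O(n^{2\beta})$ ambiguity (since the up/down labels of the chosen walls have been inverted by the flipping). The hypothesis $\gamma > 9\beta$ from~\eqref{gamma_beta_cond} should enter precisely in step (ii), ensuring that there is enough ``room'' on the cycle to place $K \sim n^{\gamma-\beta}$ disjoint short arcs without any single column being shifted more than once.
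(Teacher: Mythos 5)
Your proposal contains a genuine gap: the basic ``wall-flipping'' operation you use is not well-defined for a pair of up-walls at \emph{different} heights. Suppose the up-walls at $x_a<x_b$ sit at heights $h_a$ and $h_b$ respectively. To flip the wall at $x_a$ from up to down you need to subtract $2$ from one side of that wall, and to flip the wall at $x_b$ you need to subtract $2$ from one side of that wall. If you subtract $2$ precisely on the arc between them, then either some slab $W^0_x$ or $W^1_x$ of one of the walls becomes non-constant (destroying the wall), or else an edge straddling the boundary of the arc acquires a jump of size $3$; and if you instead shift an unbounded half-arc, the consistency breaks at the wrap-around edge of the torus (this is exactly why the paper introduces the auxiliary class $\widetilde{\Hom}$ in the proof of Lemma~\ref{low_with_walls_lemma}, where the cumulative shift \emph{is} controlled by conditioning the random-walk bridge). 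The only way to flip a matched pair of walls by a purely local move and stay in $\Hom(G,B,\mu)$ is to use a height compatibility at the two endpoints, and that compatibility is precisely what the paper extracts from the hypothesis $f\in\Omega_{\text{low},2}$ via the pigeonhole principle: among $2n^\beta+1$ walls sampled at indices in $I:=\{1+i\ell\}$ (with $\ell\approx n^{\gamma-\beta}/8$), two walls $x_i,x_j$ must have the \emph{same} height $h_i=h_j$, and then one can \emph{reflect} (not shift) the region between them around that common height. The reflection $T^{i,j}$ is an involution, acts as the identity on both endpoint walls, and flips the signs of exactly the $j-i\in[\ell,\max I]$ intermediate walls, which pushes $f$ out of $\Omega_b$ while only doubling the range.

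You are also making life unnecessarily hard with the single-map-plus-preimage-count scheme. The paper does not build one global map and then bound its fibers; rather, it decomposes $\Omega_b\cap\Omega_{\text{low},2}\cap\Omega_w^c \subseteq \bigcup_{i<j\in I}\Omega_{i,j}$ over the roughly $\binom{|I|}{2}\le 10n^{2\beta}$ pairs, and for each pair $T^{i,j}$ is an \emph{injection} of $\Omega_{i,j}$ into $\Omega_b^c\cap\Omega_{\text{low},4}\cap\Omega_w^c$ (injectivity is free because $T^{i,j}\circ T^{i,j}=\mathrm{id}$). A union bound then finishes the proof. Your ``anchor'' accounting for the preimages does not survive scrutiny: after flipping, the $K\sim n^{\gamma-\beta}$ chosen walls become down-walls of $\phi(f)$ but are interspersed among the preexisting down-walls of $f$ (there can be up to $n^{\gamma-\beta}/16$ of them), and $\phi(f)\in\Omega_b^c$ can have a comparable number of down-walls elsewhere, so rerunning a greedy rule on $\phi(f)$ need not single out the flipped walls; since $K\gg n^{2\beta}$ by~\eqref{gamma_beta_cond}, even one bit of ambiguity per arc already overwhelms your target preimage bound. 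The pigeonhole/union-bound structure of the paper sidesteps this entirely: all the combinatorial ``ambiguity'' is absorbed into the $O(n^{2\beta})$ choice of $(i,j)$, and no reconstruction argument is needed.
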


We continue with one final lemma.
\begin{lemma}\label{low_with_walls_lemma}
There exists $n_0=n_0(\beta,\gamma)$ and $C>0$ such that if $n\ge n_0$ we have $|\Omega_b^c\cap\Omega_{\operatorname{low},4}|\le \frac{C}{n^{(\gamma-3\beta)/2}}|\Hom(G,B,\mu)|$.
\end{lemma}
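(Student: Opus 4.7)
The plan is to reduce the estimate to a random-walk-bridge calculation by exploiting that on a $\lambda$-linear torus, where $n = n_d$ is much larger than the cross-section size $m$, the function $f$ along the long direction behaves like a closed walk on $\Z$. I will work with the restriction $H(x) := f((x, \vec{0}))$ for $0 \le x \le n-1$, which is a closed walk on $\Z$ of length $n$ with $H(0) = 0$ and $\range(H) \le \range(f)$. Each wall at position $x \in W(f)$ with sign $s_x$ forces one of the two consecutive steps of $H$ across the wall to equal $s_x$, while the remaining steps along $H$ are ``free'' non-wall steps. Since $f \in \Omega_b^c$, we have $|W(f)| = k \ge n^{\gamma-\beta}/8$, and the balance condition unpacks to $\min(w_+, w_-) \ge n^{\gamma-\beta}/16$, where $w_\pm$ count up- and down-walls.

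Next, I would partition $\Hom(G,B,\mu)$ into equivalence classes on which the functions share every piece of data except the assignment of signs to walls. A candidate definition: fix the wall positions $W(f) = \{x_1 < \cdots < x_k\}$, fix the values of $f$ on all columns outside $\bigcup_i \{x_i, x_i+1, x_i+2\}$, and fix the wall heights $h_{x_1}, \ldots, h_{x_k}$. The remaining freedom is the sign vector $(s_{x_1}, \ldots, s_{x_k}) \in \{\pm 1\}^k$. The key step is to show that the admissible sign vectors — those yielding an element of $\Hom(G,B,\mu)$ — are precisely those satisfying $\sum_i s_{x_i} = S_0$ for an integer $S_0$ determined by the class (enforcing that $H$ is a closed walk), so that each class contains exactly $\binom{k}{w_+}$ functions.

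Given this decomposition, the condition $\range(f) \le 4n^\beta$ translates within a class to a constraint that the partial sums of the sign sequence (offset by the deterministic non-wall steps) stay within an interval of length at most $R = 4n^\beta$. A standard ballot-type / reflection-principle estimate for a $\pm 1$ walk bridge of length $k$ with prescribed endpoint shows that the number of such sign sequences is at most $C(R+1)/\sqrt{k}$ times $\binom{k}{w_+}$. Since $k \ge n^{\gamma-\beta}/8$ and $R \le 4n^\beta$, this ratio is $O(n^{-(\gamma-3\beta)/2})$, exactly matching the lemma. Summing over the equivalence classes will then yield $|\Omega_b^c \cap \Omega_{\text{low},4}| \le C n^{-(\gamma-3\beta)/2} |\Hom(G,B,\mu)|$.

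The main obstacle will be justifying that every sign vector with $\sum_i s_{x_i} = S_0$ actually yields a valid homomorphism. Flipping a single wall's sign alters $f$ on the width-$3$ slab around that wall by $\pm 2$ on half the vertices of columns $x+1$ and $x+2$, which can clash with the fixed values on column $x+3$ (or $x-1$) unless those values happen to be compatible with both sign choices. To overcome this, I expect one must refine the equivalence relation — perhaps absorbing into the ``sign data'' also some information about $f$ on the immediate neighbors of wall slabs — or replace individual sign flips by \emph{paired} flips, where one up-wall and one down-wall are flipped simultaneously while $f$ is shifted by $\pm 2$ on the arc between them, an operation that preserves the homomorphism property globally. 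Making this paired-flip picture quantitative, while still producing the $\binom{k}{w_+}$-type count and supporting the bridge estimate above, is where the technical work of the proof lies.
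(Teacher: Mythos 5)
Your proposal captures the right high-level mechanism---partition $\Hom(G,B,\mu)$ into classes indexed by wall-sign vectors with a fixed sum, then apply a $\pm 1$ bridge estimate---and you correctly identify the crux: a \emph{local} flip of one wall's sign is not a well-defined map of $\Hom(G,B,\mu)$ to itself, because the $\pm 2$ change on the wall slab clashes with the fixed values on the adjacent column. The paper's resolution is slightly different from your proposed paired-flip repair and worth noting. Rather than flipping two opposite walls together, the paper defines a \emph{global} wall shift $S_x$: flip the wall at $x$ and simultaneously shift \emph{all} of $f$ to the right of $x$ by $\mp 2$. This preserves every adjacency constraint except at the seam where the torus ``wraps around,'' so a single $S_x$ leaves $\Hom(G,B,\mu)$ but lands in the larger space $\widetilde{\Hom}(G,B,\mu)$, the homomorphisms on the cylinder $\tilde G$ obtained by deleting the edges between $W_0^0$ and $W_{n-2}^1$. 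Composing several $S_x$ accumulates a net shift proportional to the sum of flipped signs, so one returns to the torus space $\Hom(G,B,\mu)$ precisely when the total sign-sum is preserved---which gives the clean bijection of each equivalence class with $\{s\in\{\pm1\}^k : \sum s_i = \text{const}\}$ that you wanted. Your ``paired flip'' idea generates the same orbit (it is the composition $S_{x}\circ S_{y}$ for two walls of opposite sign), but starting from single global shifts through the auxiliary space makes the group action and the bijection transparent and avoids having to separately argue that paired flips generate all balanced sign-exchanges.

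One smaller point: you frame the final estimate as ``partial sums of the sign sequence stay in an interval of length $R$,'' which is a ballot-type statement. The paper needs something weaker and simpler: by tracking only the value $g(v)$ at a single midpoint $v = (y_{\lfloor k/2\rfloor}+1, 0)$, one sees $g(v)$ is (an affine function of) a $\pm1$ bridge of length $k$ evaluated at its midpoint, conditioned on endpoint $s$ with $k-|s|\ge n^{\gamma-\beta}/8$; the local-CLT bound $\P(g(v)=t)\le C/\sqrt{k-|s|}$ then gives $O(n^{-(\gamma-\beta)/2})$ per value of $t$, and one union-bounds over the at most $O(n^\beta)$ possible values of $g(v)$ compatible with $\range(g)\le 4n^\beta$ and $g(0,\vec 0)=0$. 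So your approach is on the right track and would work, but the key construction (the global shift $S_x$ and the auxiliary cylinder space $\widetilde{\Hom}(G,B,\mu)$) is still missing, and it is exactly the piece you flagged as the remaining technical work.
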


\begin{proof}[Proof of Theorem~\ref{linear_torus_thm}]
Putting the previous 4 lemmas together, we finally obtain, for $n\ge n_0(\beta,\gamma)$ for a sufficiently large $n_0(\beta,\gamma)$ and some $C,C',C''>0$,
\begin{equation*}
\begin{split}
|\Omega_{\operatorname{low},1}|&\overset{\text{Lemma~\ref{low_reduction_lemma}}}{\le} n^\beta|\Omega_{\operatorname{low},2}\cap\Omega_0|\le n^\beta\left(|\Omega_0\cap \Omegaw|+|\Omega_{\operatorname{low},2}\cap \Omegaw^c|\right)\le\\
&\le n^\beta|\Omega_0\cap \Omegaw|+n^\beta\left(|\Omega_b\cap\Omega_{\operatorname{low},2}\cap \Omegaw^c|+|\Omega_b^c\cap\Omega_{\operatorname{low},2}\cap \Omegaw^c|\right)\overset{\text{Lemma~\ref{Omega_b_low_w_lemma}}}{\le} \\
&\le n^\beta|\Omega_0\cap \Omegaw|+n^\beta(10n^{2\beta}+1)|\Omega_b^c\cap\Omega_{\operatorname{low},4}|\overset{\text{Lemmas~\ref{Omega_0_w_lemma} and \ref{low_with_walls_lemma}}}{\le} \\
&\le \left(\frac{4(n^\gamma+4m)m2^{2m-1}}{n^{1-2\beta-\gamma}}+\frac{C(10n^{2\beta}+1)}{n^{(\gamma-5\beta)/2}}\right)|\Hom(G,B,\mu)|\overset{\eqref{m_n_lambda_relation}}{\le}\\
&\le \left(\frac{4\lambda\log n(n^\gamma+4\lambda\log n)}{n^{1-2\beta-\gamma-2\lambda\log 2}}+\frac{C'}{n^{(\gamma-9\beta)/2}}\right)|\Hom(G,B,\mu)|\overset{\text{\eqref{gamma_beta_cond} and \eqref{beta_gamma_lambda_cond}}}{\le}\\
&\le C''n^{-\alpha'}|\Hom(G,B,\mu)|
\end{split}
\end{equation*}
for some $\alpha'=\alpha'(\beta,\gamma,\lambda)>0$. Hence if $f\unifin\Hom(G,B,\mu)$ then $\P(f\in\Omega_{\operatorname{low},1})\le C''n^{-\alpha'}$ proving the theorem with $\alpha=\min(\alpha', \beta)$. Note that the restriction that $n\ge n_0(\beta,\gamma)$ is implicitly imposed in the statement of the theorem since the bound \eqref{linear_tori_thm_bound} is meaningless if its right-hand side is larger than 1.
\end{proof}

\begin{proof}[Proof of Lemma~\ref{low_reduction_lemma}]
If $f\in\Omega_{\operatorname{low},1}$ then $f$ takes at most $n^\beta$ distinct values, all in $[-n^\beta+1, n^\beta-1]$. Since $|V^\even|=\frac{1}{2}nm$, it follows by the pigeonhole principle that $f$ takes some even value at least $\frac{1}{2}n^{1-\beta}m$ times. Thus,
\begin{equation*}
\Omega_{\operatorname{low},1}\subseteq \bigcup_{t\in[-n^\beta+1,n^{\beta}-1]\cap2\Z} \Omega_t.
\end{equation*}
Hence, since $|[-n^\beta+1,n^\beta-1]\cap2\Z|\le n^\beta$, the lemma will follow once we show that for all even $t\neq 0$, $|\Omega_t\cap\Omega_{\operatorname{low},1}|\le |\Omega_0\cap\Omega_{\operatorname{low},2}|$ (it is obvious for $t=0$). Fix an even $t\neq 0$. For $f\in\Hom(G,B,\mu)$, let
\begin{equation*}
A_t(f) = \text{connected component of $(0,\vec{0})$ in $V[G]\setminus \left\{v\in V[G]\ \big|\ f(v)=\frac{t}{2}\right\}$}.
\end{equation*}
We define $R_t:\Hom(G,B,\mu)\to\Hom(G,B,\mu)$ by
\begin{equation*}
R_t(f)(v) = \begin{cases}f(v)&v\in A_t(f)\\t-f(v)&v\notin A_t(f)\end{cases}
\end{equation*}
One can verify simply that for all $f\in\Hom(G,B,\mu)$, $R_t(f)\in\Hom(G,B,\mu)$ since $(0,\vec{0})\in A_t(f)$ and if $u,w\in V[G]$ satisfy $u\adj{G}w$, $u\in A_t(f)$ and $w\notin A_t(f)$ then necessarily $f(u)=\frac{t}{2}-1$ and $f(w)=\frac{t}{2}$. In addition, $R_t(f)(v)=0$ for all $v\in V[G]$ for which $f(v)=t$ (since such $v$ are never in $A_t(f)$) and hence $R_t(\Omega_t)\subseteq\Omega_0$. Furthermore, it is simple to verify that $\range(R_t(f))\le 2\range(f)$ for all $f\in\Hom(G,B,\mu)$ and hence $R_t(\Omega_t\cap\Omega_{\operatorname{low},1})\subseteq \Omega_0\cap\Omega_{\operatorname{low},2}$. Finally, it is straightforward to check that $A_t(f)=A_t(R_t(f))$ so that $R_t(R_t(f))=f$ for all $f\in\Hom(G,B,\mu)$, implying that $R_t$ is one-to-one. Hence $|\Omega_t\cap\Omega_{\operatorname{low},1}|\le |\Omega_0\cap\Omega_{\operatorname{low},2}|$ as required.
\end{proof}

\begin{proof}[Proof of Lemma~\ref{Omega_0_w_lemma}]
For an integer $0\le k\le n^{\gamma}$ and $f\in\Hom(G,B,\mu)$, let
\begin{equation*}
\begin{split}
W^0(f)&:=\{\text{even $0\le x\le n-1$}\ \big|\ \text{$f$ has an up-wall at $x$ of height 0}\},\\
\Omega^{0,k}_{\operatorname{w}}&:=\{f\in\Hom(G,B,\mu)\ \big|\ |W^0(f)|=k\}.
\end{split}
\end{equation*}
We clearly have $\Omegaw\subseteq \cup_{k=0}^{\lfloor n^\gamma\rfloor}\Omega^{0,k}_{\operatorname{w}}$ and hence it will be sufficient to show for each $0\le k\le n^{\gamma}$ that
\begin{equation}\label{Omega_w_k_bound}
|\Omega_0\cap\Omega^{0,k}_{\operatorname{w}}|\le \frac{2(k+4m)m2^{2m-1}}{n^{1-\beta}}|\Hom(G,B,\mu)|.
\end{equation}
Next, for $f\in\Hom(G,B,\mu)$ we let
\begin{equation*}
\tilde{W}^0(f):=\{\text{even $0\le x\le n-1$}\ \big|\ \text{there exists $v\in W_x^0$ such that $f(v)=0$}\}.
\end{equation*}
We then have $\tilde{W}^0(f)\supseteq W^0(f)$. We note that by the pigeon-hole principle, if $f\in \Omega_0$ then
\begin{equation}\label{many_zero_verts}
|\tilde{W}^0(f)|\ge \frac{1}{2}n^{1-\beta}
\end{equation}
where we used that $f$ can only take the value 0 on vertices of $V^{\even}$.
Points of $\tilde{W}^0(f)$ are potential ``building sites'' for walls using the transformation we will now define. First, for each even $0\le x\le n-1$ and each $v\in W_x^0$, let $s^v_x=(s^v_{x,1},\ldots, s^v_{x,2m})$ be some fixed permutation of $W^0_x\cup W_x^1$ with the properties that $s^v_{x,1}=v$ and for each $2\le i\le 2m$, $s^v_{x,i}$ is adjacent in $G$ to $s^v_{x,j}$ for some $1\le j<i$. Next, for a function $f\in \Hom(G,B,\mu), w\in V[G]$ and $t\in\Z$, define $P_{w,t}(f)$, the peak (or lake) of $f$ around $w$ from height $t$, by
\begin{equation*}
P_{w,t}(f):=\text{connected component of $w$ in $V[G]\setminus \{u\in V[G]\ |\ f(u)=t\}$}.
\end{equation*}
Then define the reflection (of the peak of $w$ around $t$) transformation $R_{w,t}$ (different from the one used in the proof of Lemma~\ref{low_reduction_lemma}) on the set of functions $f\in\Hom(G,B,\mu)$ for which $(0,\vec{0})\notin P_{w,t}(f)$ by
\begin{equation*}
R_{w,t}(f)(u) = \begin{cases}f(u)&u\notin P_{w,t}(f)\\2t-f(u)&u\in P_{w,t}(f)\end{cases}.
\end{equation*}
It is straightforward to verify that $R_{w,t}(f)\in\Hom(G,B,\mu)$ and $R_{w,t}(R_{w,t}(f))=f$ on this set of functions. Finally, let $\Omega_{x,0}:=\{f\in\Hom(G,B,\mu)\ |\ x\in\tilde{W}^0(f)\}$, fix some (arbitrary) total order on $V[G^-]$ and define the ``building transformation'' $B_x:\Omega_{x,0}\to\Hom(G,B,\mu)$ using the following algorithm:
\begin{enumerate}
\item Set $f_1:=f$ and define $v$ to be the vertex with minimal second coordinate among all $w\in W_x^0$ with $f(w)=0$. For $1\le i\le 2m$, set $w_i:=s^v_{x,i}$.
\item Iteratively for $2\le i\le 2m$ set
\begin{equation*}
f_i:=
\begin{cases}
f_{i-1}& (w_i\in W_x^0\text{ and }f_{i-1}(w_i)=0)\text{ or }\\
&(w_i\in W_x^1\text{ and }f_{i-1}(w_i)=1),\\
R_{w_i,1}(f_{i-1})& w_i\in W_x^0\text{ and }f_{i-1}(w_i)=2,\\
R_{w_i,0}(f_{i-1})& w_i\in W_x^1\text{ and }f_{i-1}(w_i)=-1.
\end{cases}
\end{equation*}
\item Set $B_x(f):=f_{2m}$.
\end{enumerate}
{\bf Claim:}\begin{enumerate}
\item $B_x(f)$ is well-defined for $f\in\Omega_{x,0}$.
\item $B_x(f)$ has an up-wall at $x$ of height $0$.
\item $B_x(f)(w)=f(w)$ for all $w\in V[G]$ such that $f(w)\in\{0,1\}$.
\end{enumerate}
The claim follows by showing that for all $1\le i\le 2m$ we have
\begin{enumerate}
\item[(a)] $f_i$ is well-defined for $f\in\Omega_{x,0}$.
\item[(b)] $f_i(w_i)=0$ if $w_i\in W_x^0$ and $f_i(w_i)=1$ if $w_i\in W_x^1$.
\item[(c)] For $i\ge 2$, $f_i(w)=f_{i-1}(w)$ for all $w\in V[G]$ such that $f_{i-1}(w)\in\{0,1\}$.
\end{enumerate}
For $i=1$, this follows from the fact that $f\in\Omega_{x,0}$ along with the fact that $w_1=s^v_{x,1}=v$. For $2\le i\le 2m$, it follows by induction on $i$ as follows. Fix $2\le i\le 2m$ and let $a\in\{0,1\}$ be such that $w_i\in W_x^a$. By definition of $s^v_x$, $w_i$ is adjacent in $G$ to $w_j$ for some $1\le j< i$. We necessarily have $w_j\in W_x^{1-a}$. By property (b) above for $j$ and property (c) above for all $j<k<i$ we see that $f_{i-1}(w_j)=1-a$. Hence $f_{i-1}(w_i)\in\{-a,2-a\}$. If $f_{i-1}(w_i)=a$ (noting that $a\in\{-a,2-a\}$) we have $f_i=f_{i-1}$ and (a), (b) and (c) follow for $i$. Otherwise, if $a=0$ and $f_{i-1}(w_i)=2$ then $P_{w_i,1}\cap\{w\ |\ f_{i-1}(w)\le 1\}=\emptyset$ and if $a=1$ and $f_{i-1}(w_i)=-1$ then $P_{w_i,0}\cap\{w\ |\ f_{i-1}(w)\ge 0\}=\emptyset$. In both cases, we deduce that (a), (b) and (c) above are satisfied for $i$.

Continuing, we will also use the fact that $B_x(f)$ is formed from $f$ by performing at most $2m-1$ reflections, each being either around 0 or around 1 (where by such reflections we mean applications of $R_{w,0}$ or $R_{w,1}$). This implies that
\begin{equation}\label{B_x_inverse_bound}
|B_x^{-1}(B_x(f))|\le m2^{2m-1}
\end{equation}
since in order to invert $B_x$, we need only know which $v\in W_x^0$ was chosen in step 1 of the definition of $B_x(f)$ and also for each of the following $2m-1$ steps, whether or not a reflection was performed.

By parts 2 and 3 of the above claim, we have that for any $f\in\Omega_{x,0}$,
\begin{equation}\label{walls_remain_after_building}
W^0(B_x(f))\supseteq (W^0(f)\cup\{x\}).
\end{equation}
In addition, we claim that
\begin{equation}\label{wall_number_after_building}
|W^0(B_x(f))|\le |W^0(f)|+4m.
\end{equation}
To see this, note that as mentioned above, we can reconstruct $f$ from $B_x(f)$ by performing at most $2m-1$ reflections around 0 and 1 (since $R_{w,t}$ is the inverse of itself). However, note that for any $g\in\Hom(G,B,\mu)$ and $w\in V[G]$, $P_{w,0}(g)$ can intersect at most two up-walls of height 0 (meaning that $P_{w,0}(g)\cap (W_x^0\cup W_x^1)$ can be non-empty for at most two values of $x\in W^0(g)$) since walls of height 0 act as a ``barrier''. Similarly $P_{w,1}(g)$ can intersect at most two up-walls of height 0. Hence, when reconstructing $f$ from $B_x(f)$ the number of up-walls can change by at most $2(2m-1)\le 4m$.

We finally arrive at the proof of \eqref{Omega_w_k_bound}. Fix an integer $0\le k\le n^\gamma$ and let $\A':=\{(f,x)\ |\ f\in\Omega_0\cap\Omega^{0,k}_{\operatorname{w}}, x\in \tilde{W}^0(f)\}$. Note that by \eqref{many_zero_verts},
\begin{equation}\label{A_prime_lower_bound}
|\A'|\ge \frac{1}{2}n^{1-\beta}|\Omega_0\cap\Omega^{0,k}_{\operatorname{w}}|.
\end{equation}
Define $T:\A'\to\Hom(G,B,\mu)$ by $T((f,x)):=B_x(f)$. We claim that for any $g\in T(\A')$ we have
\begin{equation}\label{T_inverse_size_bound}
|T^{-1}(g)|\le (k+4m)m2^{2m-1}.
\end{equation}
To see this, first note that by \eqref{walls_remain_after_building}, for any $(f,x)\in\A'$ such that $T((f,x))=g$ we have $x\in W^0(g)$. Then note that $|W^0(g)|\le k+4m$ by \eqref{wall_number_after_building} and the definition of $\Omega^{0,k}_{\operatorname{w}}$. Finally note that by \eqref{B_x_inverse_bound}, given $x\in W^0(g)$ there are at most $m2^{m-1}$ pairs $(f,x)\in\A'$ such that $B_x(f)=g$. These arguments imply \eqref{T_inverse_size_bound}. We deduce from \eqref{T_inverse_size_bound} that
\begin{equation*}
|T(\A')|\ge \frac{|\A'|}{(k+4m)m2^{2m-1}}.
\end{equation*}
Putting this bound together with \eqref{A_prime_lower_bound} we obtain
\begin{equation*}
\frac{|\Omega_0\cap\Omega^{0,k}_{\operatorname{w}}|}{|\Hom(G,B,\mu)|}\le \frac{|\Omega_0\cap\Omega^{0,k}_{\operatorname{w}}|}{|T(\A')|}\le \frac{2(k+4m)m2^{2m-1}}{n^{1-\beta}}
\end{equation*}
proving \eqref{Omega_w_k_bound}.\qedhere
\end{proof}
\begin{proof}[Proof of Lemma~\ref{Omega_b_low_w_lemma}]
Define $\Omega_1:=\Omega_b\cap \Omega_{\operatorname{low},2}\cap \Omegaw^c$ and $\Omega_2:=\Omega_b^c\cap \Omega_{\operatorname{low},4}\cap \Omegaw^c$.
Let $\ell:=\lceil\frac{n^{\gamma-\beta}}{8}\rceil$ and $I:=\{1+i\ell\ \big|\ i\in \left[0,\lceil 2n^\beta\rceil\right]\cap\Z\}$. Using \eqref{gamma_beta_cond} and the assumption that $n\ge n_0(\beta,\gamma)$ we have $\max I\le \frac{n^\gamma}{2}$ if $n_0(\beta,\gamma)$ is large enough. We also have $2n^\beta+1\le|I|\le 2n^\beta+2$. For $f\in\Omega_1$, let $k:=|W(f)|$, $x_1,\ldots, x_k$ be the elements of $W(f)$ sorted in increasing order and for $1\le i\le k$, let $h_i$ be the height of the wall at $x_i$.
Fixing an $f\in\Omega_1$ we see that $k>n^\gamma$ by definition of $\Omegaw$ implying that $k> 2\max I$. Hence, Since $f\in\Omega_{\operatorname{low},2}$ and $|I|\ge 2n^\beta+1$ there must exist distinct $i,j\in I$ such that $h_i=h_j$.
Letting $\Omega_{i,j}:=\{f\in\Omega_1\ |\ h_i=h_j\}$ we have shown that $\Omega_1\subseteq\cup_{\substack{i,j\in I\\i<j}}\Omega_{i,j}$. Hence the lemma will follow by establishing
\begin{equation}\label{Omega_i_j_Omega_2_sizes}
|\Omega_{i,j}|\le|\Omega_2|
\end{equation}
for all $i,j\in I$ satisfying $i<j$. Fix such $i,j$ and $f\in \Omega_{i,j}$. We define a new function $T^{i,j}(f)$ by reflecting the region between the walls at $x_i$ and $x_j$ around height $h_i$, that is,
\begin{equation*}
 T^{i,j}(f)((x,y)) := \begin{cases}
                      f((x,y))& x\le x_i\text{ or }x>x_j\\
                      2h_i - f((x,y))& x_i<x\le x_j
                     \end{cases}.
\end{equation*}
It is straightforward to verify that $T^{i,j}(f)\in\Hom(G,B,\mu)$ since $h_i=h_j$, that $W(T^{i,j}(f))=W(f)$ and that $s(T^{i,j}(f))(x_p)$ equals $-s(f)(x_p)$ if $i\le p<j$ and equals $s(f)(x_p)$ otherwise. Informally, $T^{i,j}$ ``flips'' $j-i$ of the walls of $f$. Since $j-i$ satisfies
\begin{equation*}
\frac{n^{\gamma-\beta}}{8}\le \ell\le j-i\le \max I\le \frac{1}{2}{k}
\end{equation*}
and $f\in\Omega_b$, it follows that $T^{i,j}(f)\in\Omega_b^c$. Checking also that $\range(T^{i,j}(f))\le2\range(f)$ we deduce that $T^{i,j}(f)\in\Omega_2$. Finally, noting that $T^{i,j}$ is one-to-one on $\Omega_{i,j}$, we arrive at \eqref{Omega_i_j_Omega_2_sizes}.
\end{proof}

For the proof of Lemma~\ref{low_with_walls_lemma}, we need the following standard claim about simple random walk.

{\bf Claim:} There exists $C>0$ such that for all integer $k,s$ and $t$ satisfying that $k-s$ is even and $k\ge |s|+2$ we have that if $X_1,\ldots, X_k\in\{-1,1\}$ are IID with $\P(X_1=1)=\frac{1}{2}$ then
\begin{equation*}
 \P\left(\sum_{i=1}^{\lfloor k/2\rfloor} X_i=t\ \Big|\ \sum_{i=1}^k X_i=s\right)\le \frac{C}{\sqrt{k-|s|}}.
\end{equation*}

\begin{proof}[Proof of Lemma~\ref{low_with_walls_lemma}]
We start by enlarging the class of functions we consider beyond $\Hom(G,B,\mu)$. We let $\widetilde{\Hom}(G,B,\mu)$ be all functions $f:V[G]\to\Z$ which satisfy $f(0,\vec{0})=0$ (recalling that in this section $B=\{(0,\vec{0})\}$ and $\mu((0,\vec{0}))=0$) and satisfy $|f(v)-f(w)|=1$ for all $v,w\in V[G]$ except possibly when $v\in W_0^0$ and $w\in W_{n-2}^1$ or when $v\in W_{n-2}^1$ and $w\in W_0^0$. In other words, $\widetilde{\Hom}(G,B,\mu)=\Hom(\tilde{G},B,\mu)$ where $\tilde{G}$ is the same graph as $G$ but with the edges between vertices of $W_0^0$ and $W_{n-2}^1$ removed. We define $W(f)$ and $s(f)$ for functions $f\in\widetilde{\Hom}(G,B,\mu)$ in exactly the same way as for functions in $\Hom(G,B,\mu)$.

Given $f\in\widetilde{\Hom}(G,B,\mu)$ and $x\in W(f)$ we define a new function $S_x(f)$ by shifting the wall of $f$ at $x$ from an up-wall to a down-wall and vice versa and correspondingly shifting the whole function $f$ to the ``right'' of $x$, as follows
\begin{equation*}
 S_x(f)(v):=\begin{cases}
                 f(v)&v\in W_y^0\text{ for some even $y\le x$ or }\\
                     &v\in W_y^1\text{ for some even $y<x$}\\
                 f((z,w))-2s(f)_x&\text{otherwise}
                \end{cases}.
\end{equation*}
We readily verify that $S_x(f)\in\widetilde\Hom(G,B,\mu)$, $W(S_x(f))=W(f)$, $S_x(S_x(f))=f$ and if $y\in W(f)$ then $s(S_x(f))_y$ equals $s(f)_y$ if $y<x$ and equals $-s(f)_y$ if $y\ge x$. In addition, we check that if $x,y\in W(f)$ then $S_x(S_y(f))=S_y(S_x(f))$. We finally check that if $f\in\Hom(G,B,\mu)$ and we have distinct $x_1,\ldots, x_\ell\in W(f)$ for some $\ell$ then $(S_{x_1}\circ\cdots\circ S_{x_\ell})(f)\in\Hom(G,B,\mu)$ iff $\sum_{i=1}^\ell s(f)_{x_i}=0$.

We define an equivalence relation $\sim$ on $\Hom(G,B,\mu)$ by $f\sim g$ iff $g=(S_{x_1}\circ\cdots\circ S_{x_\ell})(f)$ for some $\ell$ and distinct $x_1,\ldots, x_\ell\in W(f)$. Denoting the equivalence class of $f$ by $[f]$, we have by the previous paragraph that $[f]$ is in bijection with $\left\{s_1,\ldots, s_{|W(f)|}\in\{-1,1\}\ \big|\ \sum_{i=1}^{|W(f)|} s_i=\sum_{x\in W(f)} s(f)_x\right\}$ via the correspondence $s_i=s(f)_{y_i}$, where $(y_i)_{i=1}^{|W(f)|}$ is $W(f)$ sorted in increasing order. We wish to show that for some $C>0$, $|\Omega_b^c\cap\Omega_{\operatorname{low},4}|\le Cn^{(3\beta-\gamma)/2}|\Hom(G,B,\mu)|$. To this end, it is sufficient to show that for any $f\in \Omega_b^c$ we have
\begin{equation}\label{equiv_class_intersect_low}
 |[f]\cap\Omega_{\operatorname{low},4}|\le Cn^{(3\beta-\gamma)/2}|[f]|.
\end{equation}
Fix $f\in \Omega_b^c$ and let $k:=|W(f)|$ and $y_1,\ldots, y_k$ be the elements of $W(f)$ in increasing order. Fix $v:=(y_{\lfloor k/2\rfloor}+1, 0)$. Define $h:=f(v)-\sum_{i=1}^{\lfloor k/2\rfloor} s(f)_{y_i}$. Then it is straightforward to see that for each $g\in[f]$ we have
\begin{equation*}
 g(v) = h + \sum_{i=1}^{\lfloor k/2\rfloor}s(g)_{y_i}.
\end{equation*}
Let $X_1,\ldots, X_k$ be IID random variables with $\P(X_i=1)=\frac{1}{2}$ and set $s:=\sum_{i=1}^k s(f)_{y_i}$. Let $g$ be sampled uniformly at random from $[f]$. Using the bijection above we see that $g(v) = h + Z$ where the random variable $Z$ is distributed as $\sum_{i=1}^{\lfloor k/2\rfloor} X_i$ conditioned that $\sum_{i=1}^{k} X_i = s$. Using now that $f\in \Omega_b^c$ we have that $|s|\le k-\frac{n^{\gamma-\beta}}{8}$. Hence, recalling \eqref{gamma_beta_cond} and our assumption that $n\ge n_0(\beta, \gamma)$ we see that $k-|s|\ge 2$ if $n_0(\beta,\gamma)$ is large enough. Thus it follows from the Claim above that for any $t$,
\begin{equation*}
 \P(g(v)=t)\le\frac{C'}{n^{(\gamma-\beta)/2}}
\end{equation*}
for some $C'>0$. Hence, $\P(g\in\Omega_{\operatorname{low},4})\le \frac{C}{n^{(\gamma-3\beta)/2}}$ for some $C>0$, proving \eqref{equiv_class_intersect_low} and the lemma.
\end{proof}
\end{section}

\begin{section}{Open Questions}\label{open_questions_sec}
In the following questions, by the standard observables for a random function $f:V[G]\to\Z$ (for some graph $G$), we mean $\var(f(v))$ for generic vertices $v$ and $\E\range(f)$.
\begin{enumerate}
\item Two dimensions: When $G$ is the $n\times n$ torus (with, say, the one-point BC $(B,\mu)$) and $f\unifin\Hom(G,B,\mu)$, what is the order of magnitude of our standard observables?
Does $f$ converge weakly to the Gaussian free field?

\item Low dimensions: What is the smallest dimension $d$ for which the random height function is still typically flat (as in Theorem~\ref{homogeneous_torus_thm}, say)? Is it for all $d\ge 3$ (as Figure~\ref{2d_3d_100_fig} hints)?

\begin{figure}[t!]
\centering
{\includegraphics[width=\textwidth,viewport=25 130 990 640,clip]{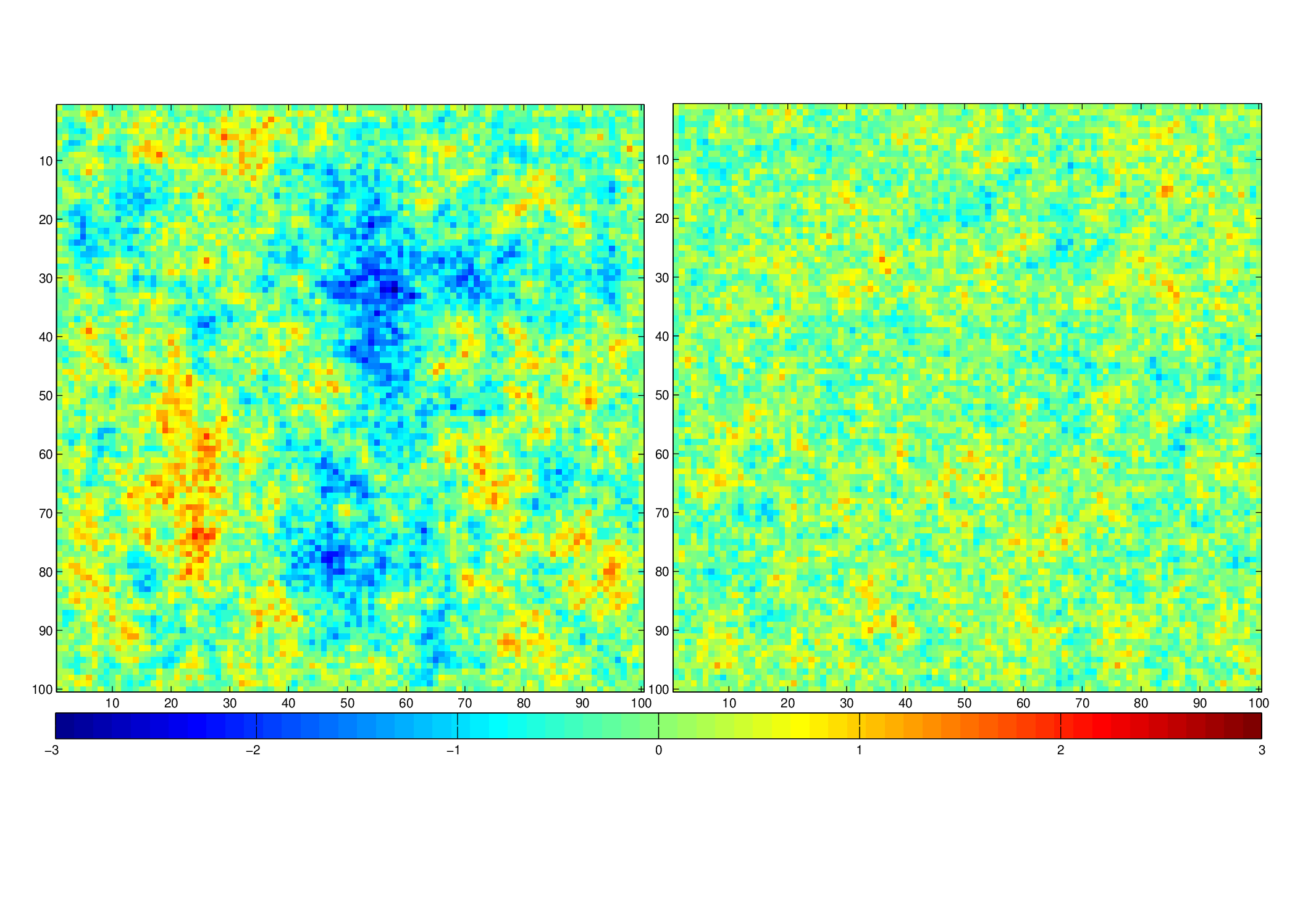}}
{\it
\caption{Samples of a uniformly random Lipschitz function taking real values which differ by at most one between adjacent vertices. The left picture shows a sample on the 100 x 100 torus and the right picture shows the middle slice (at height 50) of a sample on the $100\times100\times100$ torus, both conditioned to have boundary values in the $[-\frac{1}{2},\frac{1}{2}]$ interval. Sampled using coupling from the past \cite{PW96}.\label{2d_3d_100_cont_fig}}
}
\end{figure}

\item $M$-Lipschitz functions: For a graph $G$ and $M\in\N$, consider the model of functions $f:V[G]\to\Z$ satisfying $|f(v)-f(w)|\le M$ subject to some boundary conditions. The case $M=1$ is the case of Lipschitz functions considered in this paper. If $G=\Z_n^d$ and $f$ is sampled uniformly from such functions (say, with a one-point BC), what is the order of magnitude of our standard observables?
If one takes $M=M(d)$ large enough and considers high dimensions $d$, do these quantities behave differently (in terms of $n$) than for the Lipschitz functions considered in this paper? How do these quantities behave in dimension 2? Figure~\ref{2d_3d_100_cont_fig} shows samples of the ``limiting'' height function model: when the function $f$ is sampled uniformly from all $f:V[G]\to\R$ (that is, $\Z$ is replaced by $\R$) satisfying given boundary conditions and $|f(v)-f(w)|\le 1$ whenever $v$ is adjacent to $w$ in $G$.

\item Entropy repulsed surface: Let $G=\Z_n^d$ and $f\unifin\Hom(G,B,\mu)$ for, say, a one-point BC. Condition that $f$ is everywhere non-negative. What is the order of magnitude of our standard observables?

\item Sloped surfaces: Let $G$ be a cube in $\Z^d$ with side length $n$ (the same as $\Z_n^d$, but with non-periodic boundary) and $f\unifin\Hom(G,B,\mu)$ for boundary conditions $(B,\mu)$ which impose a slope to $f$. For example, $B$ can be the boundary defined in \eqref{zero_BC_def} and $\mu(b)$ can be defined by the closest even integer to $\alpha b_1$, where $\alpha\in(0,1)$ and $b_1$ is the first coordinate of $b$. What is the order of magnitude of the fluctuations of $f$ from the expected sloped surface?

\item Uniform 3-coloring and anti-ferromagnetic 3-state Potts models: As explained in Section~\ref{3_coloring_ice_section}, when $G=\Z_n^d$, $(B,\mu)$ is the zero BC (say) and $f\unifin\Hom(G,B,\mu)$, the model is equivalent to the uniform 3-coloring model (anti-ferromagnetic 3-state Potts model at zero temperature) with zero BC and thus we could deduce that a such a random 3-coloring will typically be nearly constant on the even sub-lattice. For which boundary conditions does this phenomenon hold (in particular, what happens for a one-point BC)? does it persist for the Potts model with small positive values of the temperature?

\item Infrared bound: Can the technique of the infrared bound be applied to the homomorphism model to obtain a simpler derivation of concentration results? For example, can one use this technique to show that the variance of the height at a generic vertex of a random homomorphism on $\Z_n^d$ (with the one-point BC, say) is bounded uniformly in $n$, when $d\ge 3$? Related questions are mentioned as an open problem in the survey on the subject by Marek Biskup \cite[Problem 8.3]{B09}.

\item Non-periodic boundary conditions: All of our results have been proved for tori $G$. Do these results extend to boxes in $\Z^d$ (with non-periodic boundary)? As explained in Section~\ref{3_coloring_ice_section}, it is of interest to make this extension since the model on such boxes (with certain boundary conditions) is equivalent to the uniform 3-coloring model. However, our methods of proof rely on the periodicity, for example, in our definition of the shift transformation and the fact that it is invertible given the location of the level set (see Figure~\ref{shift_transform_fig}, Section~\ref{shift_transformation_sec} and \eqref{inverse_shift_eq}).

\item General tori: We have shown that in high-dimensions, random homomorphism and Lipschitz height functions are typically flat on non-linear tori and typically rough on linear tori. However, not all tori fall under our definitions of non-linear and linear tori (\eqref{non_linearity_cond} and \eqref{linear_cond}). What is the typical behavior of random homomorphism and Lipschitz height functions on tori which are neither non-linear, nor linear?

\item Odd cutsets: How different are the odd cutsets introduced in this paper from ordinary cutsets?
For example, define $\MCut_L$ to be all minimal edge cutsets in
$\Z^d$ separating the origin from infinity and having exactly $L$
edges and define $\OMCut_L$ to be the subset of these which are odd
(see Section~\ref{preliminaries_sec} for more precise definitions).
For large $d$ and $L$, it is shown in \cite{BB07} (and in
\cite{LM98}) that $\exp\left(\frac{c\log d}{d} L\right)\le
|\MCut_L|\le \exp\left(\frac{C\log d}{d} L\right)$ for some $C,c>0$.
Is $|\OMCut_L|$ of the same order of magnitude or is it only of
order $\exp\left(\frac{C}{d}L\right)$? What is the scaling limit of
odd cutsets? Following \cite{S04}, it seems reasonable that the
scaling limit of a uniformly sampled cutset from $\MCut_L$ is super
Brownian motion. However, if the cutset is uniformly sampled from
$\OMCut_L$, it may well be the case that the limit is different,
with the random cutset typically containing a macroscopic cube in
its interior.

\end{enumerate}

\end{section}

\end{document}